\providecommand{\tabularnewline}{\\}
\providecommand{\algorithmname}{Algorithm}
\numberwithin{equation}{section}
\numberwithin{figure}{section}
\theoremstyle{plain}
\newtheorem{thm}{\protect\theoremname}
  \theoremstyle{plain}
  \newtheorem{lem}[thm]{\protect\lemmaname}
  \theoremstyle{remark}
  \newtheorem{rem}[thm]{\protect\remarkname}
  \theoremstyle{definition}
  \newtheorem{defn}[thm]{\protect\definitionname}
  \theoremstyle{plain}
  \newtheorem{cor}[thm]{\protect\corollaryname}
  \theoremstyle{plain}
  \newtheorem{prop}[thm]{\protect\propositionname}
  \providecommand{\corollaryname}{Corollary}
  \providecommand{\definitionname}{Definition}
  \providecommand{\lemmaname}{Lemma}
  \providecommand{\propositionname}{Proposition}
  \providecommand{\remarkname}{Remark}
\providecommand{\theoremname}{Theorem}
\begin{document}
\global\long\def\R{\mathbb{R}}
 \global\long\def\Z{\mathbb{Z}}

\global\long\def\ellOne{\ell_{1}}
 \global\long\def\ellTwo{\ell_{2}}
 \global\long\def\ellInf{\ell_{\infty}}

\global\long\def\boldVar#1{\mathbf{#1}}
\global\long\def\mvar#1{\boldVar{#1}}
\global\long\def\vvar#1{\vec{#1}}


\global\long\def\defeq{\stackrel{\mathrm{{\scriptscriptstyle def}}}{=}}
\global\long\def\E{\mathbb{E}}
\global\long\def\otilde{\tilde{O}}

\global\long\def\diag{\boldsymbol{diag}}


\global\long\def\gradient{\bigtriangledown}
 \global\long\def\grad{\gradient}
 \global\long\def\hessian{\gradient^{2}}
 \global\long\def\hess{\hessian}
 \global\long\def\jacobian{\mvar J}

 \global\long\def\setVec#1{\onesVec_{#1}}
 \global\long\def\indicVec#1{\onesVec_{#1}}

\global\long\def\specGeq{\succeq}
 \global\long\def\specLeq{\preceq}
 \global\long\def\specGt{\succ}
 \global\long\def\specLt{\prec}

\global\long\def\innerProduct#1#2{\big\langle#1 , #2 \big\rangle}
 \global\long\def\norm#1{\left\Vert #1\right\Vert }
\global\long\def\normFull#1{\left\Vert #1\right\Vert }

\global\long\def\opt{\mathrm{opt}}

\global\long\def\fopt{f^{*}}

\global\long\def\va{\vvar a}
 \global\long\def\vb{\vvar b}
 \global\long\def\vc{\vvar c}
 \global\long\def\vd{\vvar d}
 \global\long\def\ve{\vvar e}
 \global\long\def\vf{\vvar f}
 \global\long\def\vg{\vvar g}
 \global\long\def\vh{\vvar h}
 \global\long\def\vl{\vvar l}
 \global\long\def\vm{\vvar m}
 \global\long\def\vn{\vvar n}
 \global\long\def\vo{\vvar o}
 \global\long\def\vp{\vvar p}
 \global\long\def\vq{\vvar q}
 \global\long\def\vr{\vvar r}
 \global\long\def\vs{\vvar s}
 \global\long\def\vu{\vvar u}
 \global\long\def\vv{\vvar v}
 \global\long\def\vw{\vvar w}
 \global\long\def\vx{\vvar x}
 \global\long\def\vy{\vvar y}
 \global\long\def\vz{\vvar z}

\global\long\def\vpi{\vvar{\pi}}
\global\long\def\vxi{\vvar{\xi}}
\global\long\def\vchi{\vvar{\chi}}
 \global\long\def\valpha{\vvar{\alpha}}
 \global\long\def\veta{\vvar{\eta}}
 \global\long\def\vlambda{\vvar{\lambda}}
 \global\long\def\vmu{\vvar{\mu}}
\global\long\def\vdelta{\vvar{\Delta}}
 \global\long\def\vsigma{\vvar{\sigma}}
 \global\long\def\vzero{\vvar 0}
 \global\long\def\vones{\vvar 1}

\global\long\def\xopt{\vvar x^{*}}

\global\long\def\ma{\mvar A}
 \global\long\def\mb{\mvar B}
 \global\long\def\mc{\mvar C}
 \global\long\def\md{\mvar D}
\global\long\def\mE{\mvar E}
 \global\long\def\mf{\mvar F}
 \global\long\def\mg{\mvar G}
 \global\long\def\mh{\mvar H}
\global\long\def\mI{\mvar I}
 \global\long\def\mm{\mvar M}
 \global\long\def\mn{\mathbf{N}}
\global\long\def\mq{\mvar Q}
 \global\long\def\mr{\mvar R}
 \global\long\def\ms{\mvar S}
 \global\long\def\mt{\mvar T}
 \global\long\def\mU{\mvar U}
 \global\long\def\mv{\mvar V}
 \global\long\def\mw{\mvar W}
 \global\long\def\mx{\mvar X}
 \global\long\def\my{\mvar Y}
\global\long\def\mz{\mvar Z}
 \global\long\def\mproj{\mvar P}
 \global\long\def\mSigma{\mvar{\Sigma}}
 \global\long\def\mLambda{\mvar{\Lambda}}
 \global\long\def\mha{\hat{\mvar A}}
 \global\long\def\mzero{\mvar 0}
\global\long\def\mlap{\mvar{\mathcal{L}}}
\global\long\def\mpi{\mvar{\mathcal{\Pi}}}

\global\long\def\mdiag{\mvar{\texttt{d}iag}}

\global\long\def\weightVec{\vvar w}
  \global\long\def\oracle{\mathcal{O}}
 \global\long\def\moracle{\mvar O}
 \global\long\def\oracleOf#1{\oracle\left(#1\right)}
 \global\long\def\nSamples{s}
 \global\long\def\simplex{\Delta}

\global\long\def\abs#1{\left|#1\right|}

\global\long\def\capacityMatrix{\mvar U}

\global\long\def\cost{\mathrm{cost}}
\global\long\def\tr{\mathrm{tr}}

\global\long\def\timeNearlyOp{\tilde{\mathcal{O}}}
 \global\long\def\timeNearlyLinear{\timeNearlyOp}
\global\long\def\elf{\hat{f}}
\global\long\def\elfle{\hat{f}_{e}^{(L)}}
\global\long\def\elfre{\hat{f}_{e}^{(R)}}
\global\long\def\elfl{\hat{f}^{(L)}}
\global\long\def\elfr{\hat{f}^{(R)}}
\global\long\def\potdif{\Delta\hat{\phi}}
\global\long\def\potr{\hat{\phi}^{(R)}}
\global\long\def\potl{\hat{\phi}^{(L)}}
\global\long\def\pot{\hat{\phi}}
\global\long\def\deg{\textnormal{deg}}
 \global\long\def\degout{\textnormal{deg}_{\textnormal{out}}}
\global\long\def\degin{\textnormal{deg}_{\textnormal{in}}}
\global\long\def\deginG#1{\deg_{in}^{#1}}
\global\long\def\degoutG#1{\deg_{out}^{#1}}
\global\long\def\leftg{Left[G]}
\global\long\def\rightg{Right[G]}

\global\long\def\elfpe{\hat{f}_{e}^{(P)}}
\global\long\def\elfqe{\hat{f}_{e}^{(Q)}}
\global\long\def\elfp{\hat{f}^{(P)}}
\global\long\def\elfq{\hat{f}^{(Q)}}
\global\long\def\rhop{\rho^{(P)}}
\global\long\def\rhoq{\rho^{(Q)}}
\global\long\def\fp{f^{(P)}}
\global\long\def\elfn{\hat{f}^{(N)}}
\global\long\def\fn{f^{(N)}}
\global\long\def\potp{\hat{\phi}^{(P)}}
\global\long\def\potq{\hat{\phi}^{(Q)}}
\global\long\def\potn{\hat{\phi}^{(N)}}

\global\long\def\crho{c_{\rho}}
\global\long\def\cT{c_{T}}

\global\long\def\varFun{f}

\global\long\def\funLip{L}
 \global\long\def\funCon{\mu}

\global\long\def\stepOpt{T}
 \global\long\def\stepOptCoordinate#1{\stepOpt_{(#1)}}

\global\long\def\reff#1{R_{#1}^{\text{eff}}}

\global\long\def\im{\mathrm{im}}

\global\long\def\ceil#1{\left\lceil #1 \right\rceil }

\global\long\def\runtime{\mathcal{T}}
 \global\long\def\timeOf#1{\runtime\left(#1\right)}

\global\long\def\domain{\mathcal{D}}

\global\long\def\argmin{\mathrm{argmin}}
\global\long\def\argmax{\mathrm{argmax}}
\global\long\def\nnz{\mathrm{nnz}}
\global\long\def\vol{\mathrm{vol}}
\global\long\def\supp{\mathrm{supp}}
\global\long\def\dist{\mathcal{D}}

\global\long\def\ora#1{\overrightarrow{#1}}

\global\long\def\oG{\bar{G}}

\global\long\def\energy#1#2{\mathcal{E}_{#1}\left(#2\right)}

\global\long\def\vphi{\mathit{\phi}}

\global\long\def\tvphi{\tilde{\phi}}
\global\long\def\pinv#1{#1^{\dagger}}

\title{Negative-Weight Shortest Paths and Unit Capacity Minimum Cost Flow
in $\otilde\left(m^{10/7}\log W\right)$ Time}

\author{%
\begin{tabular}{cccc}
Michael B. Cohen & Aleksander M\k{a}dry & Piotr Sankowski & Adrian Vladu \tabularnewline
MIT & MIT & University of Warsaw & MIT\tabularnewline
micohen@mit.edu & madry@mit.edu & sank@mimuw.edu.pl & avladu@mit.edu\tabularnewline
\end{tabular} }
\maketitle
\begin{abstract}
In this paper, we study a set of combinatorial optimization problems
on weighted graphs: the shortest path problem with negative weights,
the weighted perfect bipartite matching problem, the unit-capacity
minimum-cost maximum flow problem and the weighted perfect bipartite
$b$-matching problem under the assumption that $\norm b_{1}=O(m)$.
We show that each one of these four problems can be solved in $\tilde{O}(m^{10/7}\log W)$
time, where $W$ is the absolute maximum weight of an edge in the
graph, which gives the first in over 25 years polynomial improvement
in their sparse-graph time complexity. 

At a high level, our algorithms build on the interior-point method-based
framework developed by M\k{a}dry (FOCS 2013) for solving unit-capacity
maximum flow problem. We develop a refined way to analyze this framework,
as well as provide new variants of the underlying preconditioning
and perturbation techniques. Consequently, we are able to extend the
whole interior-point method-based approach to make it applicable in
the weighted graph regime. 
\end{abstract}

\section{Introduction}

In 2013, M\k{a}dry \cite{Madry13} put forth an algorithm for the
maximum flow and maximum-cardinality bipartite matching problems that
improved over a long standing $O(n^{3/2})$ running time barrier for
sparse graphs. Specifically, he presented an $\tilde{O}(m^{10/7})$
time algorithm for computing maximum flow in a unit capacity network
-- which implies an $\tilde{O}(m^{10/7})$ time algorithm for the
maximum-cardinality bipartite matching problem as well. The core of
his approach is a new path-following interior-point type algorithm
for solving a certain ``bipartite $b$-matching'' problem. At a
high level, this algorithm uses electrical flow computations to improve
the maintained primal dual solution and move it along the so-called
central path, i.e., a set of primal dual solutions in which every
edge contributes approximately the same amount to the duality gap.
As M\k{a}dry has shown, one can use this framework to establish an
$O(m^{3/7})$ bound on the number of such electrical flow computations
needed to compute a (near-) optimal solution to the considered problem
and thus to improve upon the generic worst-case $O(\sqrt{m})$ bound
that all the previous interior-point-method-based algorithms provided.
The key ingredient needed to obtaining this improved bound was a technique
for perturbing and preconditioning the intermediate solutions that
emerge during the computations. Unfortunately, the technique \cite{Madry13}
used in was inherently unable to cope with large capacity or \emph{any}
weights on edges. In fact, it did not provide any meaningful result
for the unit-capacity minimum cost maximum flow problem even when
all the edge weights were equal to $1$. Consequently, it remains
an open question whether a similar running time improvement can be
achieved for either: (a) non-unit capacity networks; or (b) weighted
variants of the unit-capacity graph problems.

\subsection{Our Contribution}

In this paper, we answer the second question above affirmatively by
providing an $\tilde{O}(m^{10/7}\log W)$ time algorithm for the minimum
cost unit-capacity maximum flow problem. In addition to the improvement
for this fundamental graph problem, this result also improves several
other standard problems as immediate corollaries. Namely, by well-known
reductions, it implies $\tilde{O}(m^{10/7}\log W)$ time algorithms
for the minimum-weight bipartite perfect matching problem, the\textbf{
}minimum-weight bipartite $b$-matching problem, and, the shortest
path problem for graphs with negative weights. This constitutes the
first in more than 25 years polynomial improvement of sparse graph
time complexity for each one of these problems.

To obtain these results we simplify and extend the framework from
\cite{Madry13} by developing new preconditioning and perturbation
techniques. These techniques provide us with much better control of
the intermediate solutions that we precondition/perturb. In particular,
in stark contrast to \cite{Madry13}, the preconditioning and perturbation
steps do not lead to any changes in edge costs. Also, the resulting
changes in vertex demands are very minimal. Thus, there is no more
need for repeated fixing of these vertex demand changes throughout
the execution of the algorithm -- a single demand correction step
is performed only at the very end. Finally, our analysis of the resulting
algorithm is much more streamlined and principled compared to the
analysis performed in \cite{Madry13}, providing us with much tighter
grip of the trade-offs underlying the whole framework.

\subsection{Previous Work}

The minimum-cost flow, min-weight bipartite perfect matching as well
as the shortest path with negative weights problems are core combinatorial
optimization tasks that now have been studied for over 85 years, starting
with the work of Egerváry \cite{egervary31} from 1931. Due to immense
number of works on these topics we will not review them. Instead we
will concentrate only on the ones that are relevant to the sparse
graph case, as that is the regime where our results are of main importance.

\paragraph{Shortest Paths with Negative Weights}

\begin{table}[H]
\begin{centering}
\begin{tabular}{c|c|c|}
\cline{2-3} 
 & Complexity & Author\tabularnewline
\cline{2-3} 
 & {\small{}$O(n^{4})$} & {\small{}Shimbel (1955) \cite{shimbel55}}\tabularnewline
\cline{2-3} 
 & $O(Wn^{2}m)$ & {\small{}Ford (1956) \cite{ford56} }\tabularnewline
\cline{2-3} 
{*} & {\small{}$O(nm)$} & B{\small{}ellman (1958) \cite{bellman58}, Moore (1959) \cite{moore59} }\tabularnewline
\cline{2-3} 
 & {\small{}$O(n^{\frac{3}{4}}m\log W)$} & G{\small{}abow (1983) \cite{gabow-85}}\tabularnewline
\cline{2-3} 
 & {\small{}$O(\sqrt{n}m\log(nW))$} & G{\small{}abow and Tarjan (1989) \cite{gabow-tarjan-89} }\tabularnewline
\cline{2-3} 
{*} & {\small{}$O(\sqrt{n}m\log(W))$ } & {\small{}Goldberg (1993) \cite{goldberg93} }\tabularnewline
\cline{2-3} 
{*} & {\small{}$\tilde{O}(Wn^{\omega})$} & {\small{}Sankowski (2005) \cite{s05} Yuster and Zwick (2005) \cite{yz05}}\tabularnewline
\cline{2-3} 
{*} & {\small{}$\tilde{O}(m^{10/7}\log W)$} & this paper\tabularnewline
\cline{2-3} 
\end{tabular}
\par\end{centering}

\caption{The complexity results for the SSSP problem with negative weights
({*} indicates asymptotically the best bound for some range of parameters). }
\end{table}

A list of the complexity results on single source shortest paths with
negative weights is included in Table 1. Observe that the sparse case
can either be solved in $O(mn)$ time \cite{bellman58,moore59} or
$\tilde{O}(\sqrt{n}m\log$W) time \cite{gabow-tarjan-89,goldberg93}.
The only progress that we had since these papers was the reduction
of the problem to fast matrix multiplication \cite{s05,yz05} that
is relevant only for dense graphs with small integral weights.

\paragraph{Min-cost Perfect Bipartite Matching}

\begin{table}[H]
\begin{centering}
\begin{tabular}{c|c|c|}
\cline{2-3} 
 & Complexity & Autor\tabularnewline
\cline{2-3} 
 & $O(Wn^{2}m)$ & Egerváry (1931) \cite{egervary31}\tabularnewline
\cline{2-3} 
 & $O(n^{4})$ & Khun (1955) \cite{kuhn55} and Munkers (1957) \cite{munkres57}\tabularnewline
\cline{2-3} 
 & $O(n^{2}m)$ & Iri (1960) \cite{iri60} \tabularnewline
\cline{2-3} 
 & $O(n^{3})$ & Dinic and Kronrod (1969) \cite{dinic-kronrod-69} \tabularnewline
\cline{2-3} 
{*} & $O(nm+n^{2}\log n)$ & Edmonds and Karp (1970) \cite{edmonds-karp-72} \tabularnewline
\cline{2-3} 
 & $O(n^{\frac{3}{4}}m\log W)$ & Gabow (1983) \cite{gabow-85} \tabularnewline
\cline{2-3} 
{*} & $O(\sqrt{n}m\log(nW))$ & Gabow and Tarjan (1989) \cite{gabow-tarjan-89} \tabularnewline
\cline{2-3} 
{*} & $O(W\sqrt{n}m)$ & Kao, Lam, Sung and Ting (1999) \cite{kao-99}\tabularnewline
\cline{2-3} 
{*} & $O(Wn^{\omega})$ & Sankowski (2006) \cite{s06} \tabularnewline
\cline{2-3} 
{*} & $\tilde{O}(m^{10/7}\log W)$ & this paper\tabularnewline
\cline{2-3} 
\end{tabular}
\par\end{centering}

\caption{The complexity results for the minimum weight bipartite perfect matching
problem ({*} indicates asymptotically the best bound for some range
of parameters). }
\end{table}

The complexity survey of for the minimum weight bipartite perfect
matching problem is given in Table 2. Here, the situation is very
similar to the case of shortest paths. We have two results that are
relevant for the sparse case considered here: $O(nm+n^{2}\log n)$
time \cite{edmonds-karp-72} and $O(\sqrt{n}m\log(nW))$ time \cite{gabow-tarjan-89}.
Again the only polynomial improvement that was achieved during the
last 25 years is relevant to the dense case only \cite{s06}.

\paragraph{Minimum-cost Unit-capacity Maximum Flow Problem}

\begin{table}[H]
\begin{centering}
\begin{tabular}{c|c|c|}
\cline{2-3} 
 & Complexity & Author\tabularnewline
\cline{2-3} 
 & $O(m(m+n\log n)$ & Edmonds and Karp (1972) \cite{edmonds-karp-72}\tabularnewline
\cline{2-3} 
 & $O(n^{5/3}m^{2/3}\log(nW)$) & Goldberg and Tarjan (1987) \cite{Goldberg:1987}\tabularnewline
\cline{2-3} 
 & $O(\min(\sqrt{m},n^{2/3})m\log(nW))$ & Gabow and Tarjan (1989) \cite{gabow-tarjan-89}\tabularnewline
\cline{2-3} 
 & $\tilde{O}(m^{3/2})$ & Daitch and Spielman (2008) \cite{Daitch:2008}\tabularnewline
\cline{2-3} 
 & $\tilde{O}(\sqrt{n}m)$ & Lee and Sidford (2014) \cite{lee-sidford}\tabularnewline
\cline{2-3} 
 & $\tilde{O}(m^{10/7}\log W)$ & this paper\tabularnewline
\cline{2-3} 
\end{tabular}
\par\end{centering}

\caption{The summary of the results for the min-cost unit-capacity max-flow
problem. For simplicity we only list exact algorithms that yielded
polynomial improvement or new strongly polynomial bounds. For the
full list of complexities we refer the reader to Chapter 12 in \cite{schrijver-book}. }
\end{table}

Due to the vast number of results for this topic we restricted ourselves
to present in Table 3 only algorithms for the unit-capacity case that
yielded significant improvement. We note, however, that handling general
capacities for this problem is a major challenge and our restriction
to unit-capacity networks oversimplifies the history of this problem.
Nevertheless, for there sparse case there are two relevant complexities:
a $\tilde{O}(\sqrt{n}m)$ time bound of \cite{lee-sidford} that,
for the sparse graph case, matches the previously best known bound
$O(m^{3/2}\log(nW))$ due to \cite{gabow-tarjan-89}. We note that
there was also a limited progress \cite{6686149} on fast matrix multiplication
based algorithms for the small vertex capacity variant of this problem.

\paragraph{Minimum-cost Perfect Bipartite $\boldsymbol{b}$-Matching}

\begin{table}[H]
\begin{centering}
\begin{tabular}{c|c|c|}
\cline{2-3} 
 & Complexity & Author\tabularnewline
\cline{2-3} 
 & $O(m(m+n\log n)$ & Edmonds and Karp (1972) \cite{edmonds-karp-72}\tabularnewline
\cline{2-3} 
 & $O(m^{7/4}\log W)$ & Gabow (1985) \cite{gabow-85}\tabularnewline
\cline{2-3} 
 & $O(m^{3/2}\log(nW))$ & Gabow and Tarjan (1989) \cite{gabow-tarjan-89}\tabularnewline
\cline{2-3} 
 & $\tilde{O}(m^{10/7}\log W)$ & this paper\tabularnewline
\cline{2-3} 
\end{tabular}
\par\end{centering}

\caption{The summary of the results for the min-cost perfect bipartite $b$-matching
problem under assumption that $b(V)=O(m).$ For simplicity we only
list exact algorithms that yielded polynomial improvement or new strongly
polynomial bounds. For the full list of complexities we refer the
reader to Chapter 21 in \cite{schrijver-book}. }
\end{table}

Minimum-cost perfect bipartite $b$-matching problem bears many similarities
to the minimum-cost maximum flow problem, e.g., see our reduction
of min-cost flow to $b$-matchings in Section \ref{sec:reduction}.
Hence some of the complexities in Table 4 are the same as in Table
3. However, not all results seem to carry over as exemplified in the
tables. $b$-matchings seems slightly harder than max-flow as one
needs to route exact amount of flow between many sources and sinks.
The results relevant for sparse case are: weakly polynomial $O(m^{3/2}\log(nW))$
time algorithm \cite{gabow-tarjan-89} or strongly polynomial $O(m(m+n\log n))$
time algorithm \cite{edmonds-karp-72}.

\subsection{The Outline of Our Algorithm}

As mentioned above, our focus is on development of a faster, $\tilde{O}(m^{10/7}\log W)$-time
algorithm for the minimum-cost unit-capacity maximum flow problem,
since well-known reductions immediately yield $\tilde{O}(m^{10/7}\log W)$-time
algorithms for the remaining problems (see also Section \ref{sec:Shortest-Paths-with}).
In broad outline, our approach to solving that flow problem follows
the framework introduced by M\k{a}dry \cite{Madry13} and comprises
three major components. 

First, in Section \ref{sec:reduction}, we reduce our input minimum-cost
flow instance to an instance of the bipartite minimum-cost $b$-matching
problem. The latter instance will have a special structure. In particular,
it can be viewed as a minimum-cost flow problem instance that has
general flow demands but no capacities. 

Then, in Section \ref{sec:basic_framework}, we put forth a basic
interior-point method framework that builds on the framework of M\k{a}dry
\cite{Madry13} for solving this kind of uncapacitated minimum-cost
flow instances. This basic framework alone will not be capable of
breaking the $O(\sqrt{m})$ iteration bound. Therefore, in Sections
\ref{sec:longer_steps} and \ref{sec:preconditioning}, we develop
a careful perturbation and preconditioning technique to help us control
the convergence behavior of our interior-point method framework. An
important feature of this technique is that, in contrast to the technique
used by \cite{Madry13}, our perturbations do no alter the arc costs.
Thus they are suitable for dealing with weighted problems. We then
prove that the resulting algorithms indeed obtains the improved running
time bound of $\tilde{O}(m^{10/7}\log W)$ but the (near-)optimal
flow solution it outputs might have some of the flow demands changed.

Finally, in Section \ref{sec:repair}, we address this problem by
developing a fast procedure that relies on classic combinatorial techniques
and recovers from this perturbed (near-)optimal solution an optimal
solution to our original minimum-cost flow instance.

\section{Preliminaries}

In this section, we introduce some basic notation and definitions
that we will need later.

\subsection{Minimum Cost $\sigma$-Flow}

We denote by $G=(V,E,c)$ a directed graph with vertex set $V$, arc
set $E$ and cost function $c$. We denote by $m=\abs E$ the number
of arcs in $G$, and by $n=\abs V$ its number of vertices. An arc
$e$ of $G$ connects an ordered pair $(w,v)$, where $w$ is called
\emph{tail} and $v$ is called \emph{head}. We will be mostly working
with \emph{$\sigma$-flows} in $G$, where $\sigma\in\mathbb{R}^{n}$,
satisfying $\sum_{v}\sigma_{v}=0$, is the \emph{demand vector}. A
$\sigma$-flow in $G$ is defined to be a vector $f\in\mathbb{R}^{m}$
(assigning values to arcs of $G$) that satisfies the following \emph{flow
conservation constraint}s: 
\begin{equation}
\sum_{e\in E^{+}(v)}f_{e}-\sum_{e\in E^{-}(v)}f_{e}=\sigma_{v},\quad\text{for each vertex \ensuremath{v\in V}}.\label{eq:conservation_constraints}
\end{equation}
We denote by $E^{+}(v)$ (resp. $E^{-}(v)$) the set of arcs of $G$
that are leaving (resp. entering) vertex $v$. The above constraints
enforce that, for every $v\in V$, the total out-flow from $v$ minus
the total in-flow (i.e. the net flow) out of $v$ is equal to $\sigma_{v}$.

Furthermore, we say that a $\sigma$-flow $f$ is \emph{feasible}
in $G$ iff it satisfies \emph{non-negativity and capacity constraints}
(in this paper we are only concerned with unit capacities): 
\begin{equation}
0\leq f_{e}\leq1,\quad\text{for each arc \ensuremath{e\in E}}.\label{eq:capacity_constraints}
\end{equation}

For our interior point method the basic object is the \emph{minimum
cost }$\sigma$\emph{-flow problem} consists of finding feasible $\sigma$-flow
$f$ that minimizes \emph{the cost of the flow} $c(f)=\sum_{e}c_{e}f_{e}$.

\subsection{Electrical Flows and Potentials}

A notion that will play a fundamental role in this paper is the notion
of electrical flows. Here, we just briefly review some of the key
properties that we will need later. For an in-depth treatment we refer
the reader to \cite{Bollobas98}.

Consider an undirected graph $G$ and a vector of resistances $r\in\mathbb{R}^{m}$
that assigns to each edge $e$ its \emph{resistance} $r_{e}>0$. For
a given $\sigma$-flow $f$ in $G$, let us define its \emph{energy}
(with respect to $r$) $\energy rf$ to be 
\begin{equation}
\energy rf:=\sum_{e}r_{e}f_{e}^{2}=f^{\top}Rf,\label{eq:def_energy_flow}
\end{equation}
where $R=\textnormal{diag}(r)$ is an $m\times m$ diagonal matrix
with $R_{e,e}=r_{e}$, for each edge $e$. In order to simplify notation,
we drop the subscript or the parameter whenever it is clear from the
context.

For a given undirected graph $G$, a demand vector $\sigma$, and
a vector of resistances $r$, we define an \emph{electrical $\sigma$-flow}
in $G$ (that is \emph{determined} by the resistances $r$) to be
the $\sigma$-flow that minimizes the energy $\energy rf$ among all
$\sigma$-flows in $G$. As energy is a strictly convex function,
one can easily see that such a flow is unique. Also, we emphasize
that we do \emph{not} require here that this flow is feasible with
respect to the (unit) capacities of $G$ (cf. (\ref{eq:capacity_constraints})).
Furthermore, whenever we consider electrical flows in the context
of a directed graph $G$, we will mean an electrical flow -- as defined
above -- in the (undirected) projection $\oG$ of $G$.

One of very useful properties of electrical flows is that it can be
characterized in terms of vertex potentials inducing it. Namely, one
can show that a $\sigma$-flow $f$ in $G$ is an electrical $\sigma$-flow
determined by resistances $r$ iff there exist \emph{vertex potentials}
$\phi_{v}$ (that we collect into a vector $\vphi\in\mathfrak{\mathbb{R}}^{n}$)
such that, for any edge $e=(u,v)$ in $G$ that is oriented from $u$
to $v$, 
\begin{equation}
f_{e}=\frac{\phi_{v}-\phi_{u}}{r_{e}}.\label{eq:potential_flow_def}
\end{equation}
In other words, a $\sigma$-flow $f$ is an electrical $\sigma$-flow
iff it is \emph{induced} via (\ref{eq:potential_flow_def}) by some
vertex potentials $\vphi$. (Note that orientation of edges matters
in this definition.)

Using vertex potentials, we are able to express the energy $\energy rf$
(see (\ref{eq:def_energy_flow})) of an electrical $\sigma$-flow
$f$ in terms of the potentials $\vphi$ inducing it as 
\begin{equation}
\energy rf=\sum_{e=(u,v)}\frac{(\phi_{v}-\phi_{u})^{2}}{r_{e}}.\label{eq:def_energy_potentials}
\end{equation}

\subsection{Laplacian Solvers}

A very important algorithmic property of electrical flows is that
one can compute very good approximations of them in nearly-linear
time. Below, we briefly describe the tools enabling that.

To this end, let us recall that electrical $\sigma$-flow is the (unique)
$\sigma$-flow induced by vertex potentials via (\ref{eq:potential_flow_def}).
So, finding such a flow boils down to computing the corresponding
vertex potentials $\vphi$. It turns out that computing these potentials
can be cast as a task of solving certain type of linear system called
\emph{Laplacian} systems. To see that, let us define the \emph{edge-vertex
incidence matrix} $B$ being an $n\times m$ matrix with rows indexed
by vertices and columns indexed by edges such that 
\[
B_{v,e}=\begin{cases}
1 & \text{if \ensuremath{e\in E^{+}(v)},}\\
-1 & \text{if \ensuremath{e\in E^{-}(v)},}\\
0 & \text{otherwise.}
\end{cases}
\]

Now, we can compactly express the flow conservation constraints (\ref{eq:conservation_constraints})
of a $\sigma$-flow $f$ (that we view as a vector in $\mathbb{R}^{m}$)
as 
\[
Bf=\sigma.
\]

On the other hand, if $\vphi$ are some vertex potentials, the corresponding
flow $f$ induced by $\vphi$ via (\ref{eq:potential_flow_def}) (with
respect to resistances $r$) can be written as 
\[
f=R^{-1}B^{\top}\vphi,
\]
where again $R$ is a diagonal $m\times m$ matrix with $R_{e,e}:=r_{e}$,
for each edge $e$.

Putting the two above equations together, we get that the vertex potentials
$\vphi$ that induce the electrical $\sigma$-flow determined by resistances
$r$ are given by a solution to the following linear system 
\begin{equation}
BR^{-1}B^{\top}\vphi=L\vphi=\sigma,\label{eq:elec_flow_lin_system}
\end{equation}
where $L:=BR^{-1}B^{T}$ is the (weighted) \emph{Laplacian $L$} of
$G$ (with respect to the resistances $r$). One can easily check
that $L$ is a symmetric $n\times n$ matrix indexed by vertices of
$G$ with entries given by 
\begin{equation}
L_{u,v}=\begin{cases}
\sum_{e\in E(v)}1/r_{e} & \text{if \ensuremath{u=v},}\\
-1/r_{e} & \text{if \ensuremath{e=(u,v)\in E}, and}\\
0 & \text{otherwise.}
\end{cases}\label{eq:def_of_laplacian}
\end{equation}

One can see that the Laplacian $L$ is not invertible, but -- as long
as, the underlying graph is connected -- its null-space is one-dimensional
and spanned by the all-ones vector. As we require our demand vectors
$\sigma$ to have its entries sum up to zero (otherwise, no $\sigma$-flow
can exist), this means that they are always orthogonal to that null-space.
Therefore, the linear system (\ref{eq:elec_flow_lin_system}) has
always a solution $\vphi$ and one of these solutions\footnote{Note that the linear system (\ref{eq:elec_flow_lin_system}) will
have many solutions, but any two of them are equivalent up to a translation.
So, as the formula (\ref{eq:potential_flow_def}) is translation-invariant,
each of these solutions will yield the same unique electrical $\sigma$-flow.} is given by 
\[
\vphi=L^{+}\sigma,
\]
where $L^{+}$ is the Moore-Penrose pseudo-inverse of $L$.

Now, from the algorithmic point of view, the crucial property of the
Laplacian $L$ is that it is symmetric and \emph{diagonally dominant},
i.e., for any $v\in V$, $\sum_{u\neq v}|L_{u,v}|\leq L_{v,v}$. This
enables us to use fast approximate solvers for symmetric and diagonally
dominant linear systems to compute an approximate electrical $\sigma$-flow.
Namely, there is a long line of work \cite{SpielmanT04,KoutisMP10,KoutisMP11,KelnerOSZ13,CohenKMPPRX14,KyngLPSS16,KyngS16}
that builds on an earlier work of Vaidya \cite{Vaidya90} and Spielman
and Teng \cite{SpielmanT03}, that designed an SDD linear system solver
that implies the following theorem. 
\begin{thm}
\label{thm:vanilla_SDD_solver} For any $\epsilon>0$, any graph $G$
with $n$ vertices and $m$ edges, any demand vector $\sigma$, and
any resistances $r$, one can compute in $\tilde{O}(m\log m\log\epsilon^{-1})$
time vertex potentials $\tvphi$ such that $\|\tvphi-\vphi^{*}\|_{L}\leq\epsilon\|\vphi^{*}\|_{L}$,
where $L$ is the Laplacian of $G$, $\vphi^{*}$ are potentials inducing
the electrical $\sigma$-flow determined by resistances $r$, and
$\|\vphi\|_{L}:=\sqrt{\vphi^{\top}L\vphi}$. 
\end{thm}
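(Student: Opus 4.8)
The plan is to compute an approximate solution of the Laplacian system $L\vphi=\sigma$ by a \emph{preconditioned iterative method} whose preconditioner is itself (spectrally close to) a Laplacian that can be applied in nearly linear time. The reduction step is standard: if $M$ is symmetric positive semidefinite with the same kernel as $L$ and $L\preceq M\preceq\kappa L$, then preconditioned Chebyshev iteration (or preconditioned conjugate gradient) started from $0$ produces, after $t=O(\sqrt{\kappa}\log\epsilon^{-1})$ steps, an iterate $\tvphi$ with $\|\tvphi-\vphi^{*}\|_{L}\leq\epsilon\|\vphi^{*}\|_{L}$, where $\vphi^{*}=L^{+}\sigma$. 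Each step performs one matrix--vector product with $L$, costing $O(m)$ since $L$ has $O(m)$ nonzero entries, one application of the preconditioner (a solve with $M$), and $O(n)$ vector arithmetic. So it suffices to exhibit a preconditioner with $\kappa=\mathrm{polylog}(n)$ that can be applied in $\tilde{O}(m)$ time.

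To build the preconditioner I would start from a \emph{low-stretch spanning tree}: using a known nearly-linear-time construction one gets in $\tilde{O}(m\log n)$ time a spanning tree $T$ of $\oG$ with total stretch $\mathrm{st}_{T}(\oG)=\sum_{e}\mathrm{st}_{T}(e)=\tilde{O}(m)$. The tree alone only gives $L_{T}\preceq L\preceq\mathrm{st}_{T}(\oG)\,L_{T}$, and $\kappa=\tilde{O}(m)$ is far too large, so one adds back a random sample of the off-tree edges: scale the tree up by a factor $\kappa$ and include each off-tree edge $e$ independently with probability proportional to $\kappa^{-1}\mathrm{st}_{T}(e)\log n$ (rescaling the conductance of the edges kept). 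The resulting graph $H$ is the scaled tree plus $O(\kappa^{-1}\mathrm{st}_{T}(\oG)\log n)=\tilde{O}(m/\kappa)$ off-tree edges, and a matrix concentration (matrix Chernoff) bound shows that with high probability $L_{H}\preceq L\preceq O(\kappa)\,L_{H}$.

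Applying this preconditioner fast is done by \emph{recursion}. Since $H$ is a tree plus $\tilde{O}(m/\kappa)$ extra edges, repeatedly eliminating degree-$1$ and degree-$2$ vertices (exact partial Cholesky, which on such graphs creates no harmful fill) contracts $H$ in $\tilde{O}(m)$ time to a graph $H'$ on only $\tilde{O}(m/\kappa)$ vertices and edges, and a solve in $H$ reduces to a solve in $H'$; recurse on $H'$. Choosing $\kappa$ to be a sufficiently large polylogarithmic factor makes the sizes along the recursion shrink geometrically, so there are $O(\log n)$ levels, the total number of $L$-type matrix--vector products across all levels is $\tilde{O}(m)$, and the single $\log\epsilon^{-1}$ factor from the outermost iteration yields the claimed $\tilde{O}(m\log m\log\epsilon^{-1})$ time. (Alternatively one can skip the recursion and, following Kyng--Sachdeva, build a sparse approximate Cholesky factorization $L\approx\mathcal{L}\mathcal{L}^{\top}$ directly and use it as a fixed-operator preconditioner.)

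I expect the main obstacles to be two. First, proving the spectral bound $L_{H}\preceq L\preceq O(\kappa)L_{H}$ for the stretch-sampled graph: this needs the matrix concentration inequality together with the identification of an off-tree edge's sampling importance (its effective-resistance/leverage ratio against the scaled tree) with its stretch, and a careful choice of the sampling scale to simultaneously control the number of retained edges and the spectral distortion. Second, handling the \emph{inexactness} of the recursive preconditioner solves: because the inner solves are themselves only approximate, the preconditioner is not a fixed linear operator, so one must either use a preconditioner-robust iteration (Chebyshev with a slightly inflated $\kappa$, or preconditioned Richardson) and track how the errors compound up the recursion, or adopt the approximate-Cholesky viewpoint in which the preconditioner \emph{is} a fixed operator and the textbook analysis of preconditioned conjugate gradient applies verbatim. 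The remaining pieces --- stretch bounds, sample-size calculations, and the geometric decay of the recursion sizes --- are routine bookkeeping.
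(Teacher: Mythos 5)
The paper does not prove this statement at all: Theorem \ref{thm:vanilla_SDD_solver} is imported as a black box from the cited Laplacian-solver literature (Spielman--Teng, Koutis--Miller--Peng, Kelner et al., Kyng--Sachdeva, etc.), and your sketch is a faithful outline of exactly those cited constructions --- low-stretch spanning tree, stretch-proportional sampling of off-tree edges with a matrix Chernoff bound, partial Cholesky elimination of degree-$1$/$2$ vertices, and recursive preconditioned Chebyshev, with the Kyng--Sachdeva approximate Cholesky route as the stated alternative. Your outline is correct and you identify the genuinely hard steps (the spectral concentration bound for the sampled preconditioner and the propagation of inexactness through the recursion) accurately, but be aware that carrying those out in full is a substantial undertaking in its own right, which is precisely why the paper treats this result as a citation rather than proving it.
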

To understand the type of approximation offered by the above theorem,
observe that $\|\vphi\|_{L}^{2}=\vphi^{\top}L\vphi$ is just the energy
of the flow induced by vertex potentials $\vphi$. Therefore, $\|\tvphi-\vphi^{*}\|_{L}$
is the energy of the electrical flow $\bar{f}$ that ``corrects''
the vertex demands of the electrical $\tilde{\sigma}$-flow induced
by potentials $\tvphi$, to the ones that are dictated by $\sigma$.
So, in other words, the above theorem tells us that we can quickly
find an electrical $\tilde{\sigma}$-flow $\tilde{f}$ in $G$ such
that $\tilde{\sigma}$ is a slightly perturbed version of $\sigma$
and $\tilde{f}$ can be corrected to the electrical $\sigma$-flow
$f^{*}$ that we are seeking, by adding to it some electrical flow
$\bar{f}$ whose energy is at most $\epsilon$ fraction of the energy
of the flow $f^{*}$. (Note that electrical flows are linear, so we
indeed have that $f^{*}=\tilde{f}+\bar{f}$.) As we will see, this
kind of approximation is completely sufficient for our purposes.

\subsection{Bipartite b-Matchings}

For a given weighted bipartite graph $G=(V,E)$ with $V=P\cup Q$
-- where $P$ and $Q$ are the two sets of bipartition -- as well
as, a \emph{demand vector} $b\in\mathbb{R}_{+}^{V}$, a \emph{perfect
$b$-matching} is a vector $x\in\mathbb{R}_{+}^{V}$ such that $\sum_{e\in E(v)}x_{e}=b_{v}\text{ for all }v\in V.$
A perfect $b$-matching is a generalization of perfect matching; in
the particular case where all $b$'s equal $1$, integer $b$-matchings
(or $1$-matchings) are exactly perfect matchings. $x$ is called
\emph{$b$-matching} if the equality in the above equation is satisfied
with inequality, i.e., we have just $\sum_{e\in E(v)}x_{e}\le b_{v}\text{ for all }v\in V$.
For perfect $b$-matchings we usually require that $b(P)=b(Q)$ as
otherwise they trivially do not exist.

A \emph{weighted perfect bipartite $b$-matching problem} is a problem
in which, given a weighted bipartite graph $G=(V,E,w)$ our goal is
to either return a perfect $b$-matching in $G$ that has minimum
weight, or conclude that there is no perfect $b$-matching in $G$.
We say that a $b$-matching $x$ in a graph $G$ is \emph{$h$-near}
if the size of $x$ is within an additive factor of $h$ of the size
of a perfect \emph{$b$}-matching. The dual problem to the weighted
perfect bipartite $b$-matching is a \emph{$b$-vertex packing problem}
where we want to find a vector $y\in\mathbb{R}^{V}$ satisfying the
following LP 
\begin{eqnarray*}
\max &  & \sum_{v\in V}y_{v}b_{v},\\
 &  & y_{u}+y_{v}\le w_{uv}\quad\forall uv\in E.
\end{eqnarray*}

Also, we define the \emph{size} of a $b$-matching $x$ to be $\norm x_{1}/2$.
A \emph{$h$-near $b$-matching }is a \emph{b}-matching with size
at least $\norm x_{1}/2-h$. Finally, we observe that a bipartite$b$-matching
instance can be reinterpreted as a unit capacitated $\sigma$-flow
instance just by setting 

\[
\sigma_{v}=\begin{cases}
b_{v} & \textnormal{if }\mathrm{\mathrm{\mathit{v\in P}}},\\
-b_{v} & \textnormal{otherwise.}
\end{cases}
\]

and leaving costs unchanged. We require this alternative view, since
our interior-point algorithm will work with $\sigma$-flows, but the
rounding that will need to be performed at the end is done in the
$b$-matching view.

\section{\label{sec:reduction}Reducing Minimum-Cost Flow to Bipartite $b$-matching}

In this section we show how to convert an instance of unit capacity
min-cost flow into an instance of min-cost $b$-matching. This is
done via a simple combinatorial reduction similar to the one in \cite{Madry13}.
As noted in the preliminaries, bipartite $b$-matchings can be reinterpreted
as $\sigma$-flows. This alternative view will be useful for us, as
our interior point method will work with $\sigma$-flows, whereas
it is easier to repair a near-optimal solution in the $b$-matching
view.

We first show a straightforward reduction from min-cost flow to $b$-matching.
One desirable feature of this reduction is that the obtained $b$-matching
instance does not contain upper capacities on arcs, since these are
going to be implicitly encoded by properly setting the demands. The
part of lemma that refers to half-integral flow and half-integral
matching will be essential for the initialization step. 
\begin{lem}
Given a directed graph $G=(V,E,c)$ and a demand vector $\sigma$,
one can construct in linear time a bipartite graph $G'=(V',E',c')$,
$V'=P\cup Q$ along with a demand vector $b'$ such that given a minimum
cost $b'$-matching in $G'$, one can reconstruct a flow $f$ that
routes demand $\sigma$ in $G$ with minimum cost. Moreover, if flow
$f=\frac{1}{2}\cdot\vec{1}$ is feasible in $G$ then $x=\frac{1}{2}\cdot\vec{1}$
is a feasible fractional $b$-matching in $G'$. \label{lem:reduction-to-bmatching}\end{lem}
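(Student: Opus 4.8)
The plan is to use the arc-splitting gadget (essentially the one from \cite{Madry13}): each capacity-$1$ arc becomes a fresh degree-two vertex whose single unit of demand has to be split between ``flow on $e$'' and ``unused capacity of $e$'', so that the bounds $0\le f_e\le1$ are forced by the matching equations and no explicit capacities remain in the $b$-matching instance. Concretely, I would set $Q:=V$ and $P:=\{y_e : e\in E\}$, and for each arc $e=(u,v)$ (tail $u$, head $v$) add to $E'$ the two undirected edges $(y_e,v)$ with cost $c'_{(y_e,v)}:=c_e$ and $(y_e,u)$ with cost $c'_{(y_e,u)}:=0$. Every edge of $G'$ runs between $P$ and $Q$, so $G'$ is bipartite; it has $m+n$ vertices and $2m$ edges and is built in linear time. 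The demands are $b'_{y_e}:=1$ for each $e\in E$ and $b'_v:=\degout(v)-\sigma_v$ for each $v\in V$; note $\sum_{P}b'=m=\sum_{Q}b'$, as required for a perfect $b'$-matching to be able to exist.

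Next I would exhibit a cost-preserving bijection between feasible $\sigma$-flows in $G$ and perfect $b'$-matchings in $G'$. Given a perfect $b'$-matching $x$, the constraint at $y_e$ gives $x_{(y_e,u)}+x_{(y_e,v)}=1$ with both terms nonnegative, so $f_e:=x_{(y_e,v)}$ satisfies $f_e\in[0,1]$ and $x_{(y_e,u)}=1-f_e$; conversely, a feasible $\sigma$-flow $f$ defines $x$ by the same two equations. For a vertex $v\in V$ the matching constraint reads $\sum_{e\in E^-(v)}f_e+\sum_{e\in E^+(v)}(1-f_e)=b'_v$, and substituting $b'_v=\degout(v)-\sigma_v$ together with $|E^+(v)|=\degout(v)$ turns this into precisely the flow-conservation constraint $\sum_{e\in E^+(v)}f_e-\sum_{e\in E^-(v)}f_e=\sigma_v$; hence $x$ is a perfect $b'$-matching exactly when $f$ is a feasible $\sigma$-flow. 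Since the cost-$0$ edges contribute nothing, the cost of $x$ equals $\sum_e c_e f_e=c(f)$, so the bijection preserves objective values and a minimum-cost $b'$-matching yields, by reading off $f_e=x_{(y_e,\mathrm{head}(e))}$, a minimum-cost feasible $\sigma$-flow in linear time. One bookkeeping point: $b'$ must be nonnegative to be a legitimate demand vector, but whenever a feasible $\sigma$-flow $f$ exists we have $\sigma_v=\sum_{e\in E^+(v)}f_e-\sum_{e\in E^-(v)}f_e\le|E^+(v)|=\degout(v)$, so $b'_v\ge0$ automatically (and if some $b'_v<0$ one may immediately declare $\sigma$ infeasible).

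Finally, the half-integral claim follows by evaluating the bijection at $f=\frac{1}{2}\cdot\vec{1}$: this forces $x_{(y_e,v)}=f_e=\tfrac12$ and $x_{(y_e,u)}=1-f_e=\tfrac12$, i.e.\ $x=\frac{1}{2}\cdot\vec{1}$, and by the equivalence above $x$ is a feasible (indeed perfect) fractional $b'$-matching precisely because $\frac{1}{2}\cdot\vec{1}$ is a feasible $\sigma$-flow in $G$. The reduction itself is elementary; the only things that need care are making sure the unit demand at $y_e$ encodes the capacity constraint \emph{tightly}, so that feasible flows and perfect matchings correspond exactly with nothing lost or spuriously introduced, and checking that the induced vertex demands $b'_v$ stay nonnegative — both handled above.
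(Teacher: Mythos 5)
Your proposal is correct and uses essentially the same arc-splitting gadget as the paper: a degree-two "edge" vertex with unit demand whose matching split encodes $f_e$ versus the residual $1-f_e$, with vertex demands adjusted so the matching constraints become flow conservation. The only differences are cosmetic — you place the cost on the edge to the head rather than the tail (hence $b'_v=\degout(v)-\sigma_v$ instead of $\sigma_v+\deginG{G}(v)$) and you swap the labels of $P$ and $Q$ relative to the paper's convention (the paper keeps the original vertices in $P$, which matters for later sections but not for this lemma).
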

\begin{proof}
Let $P=V$ and $Q=E$. For each arc $(u,v)\in E$, let $e_{uv}$ be
the corresponding ``edge'' vertex in $Q$. Create arcs $(u,e_{uv})$
with cost $c_{uv}$, and $(v,e_{uv})$ with cost $0$. For each $e_{uv}\in Q$
set demand $b'(e_{uv})=1$. For each $v\in P$, set demand $b'(v)=\sigma(v)+\textnormal{deg}_{in}^{G}(v)$.
Let us now argue that the solution to $b'$-matching encodes a valid
flow in $G.$ Observe that in the $b'$-matching instance each vertex
$e_{uv}$ can be in two states: it is either matched to $u$ or to
$v$. When $e_{uv}$ is matched to $u$ we ``read'' that there is
one unit of flow on arc $(u,v)$, whereas when $e_{uv}$ is matched
to $v$ we ``read'' that there is no flow on arc $(u,v).$ With
this interpretation flow conservation constraints (\ref{eq:conservation_constraints})
are satisfied. 

Now assume that the flow $f=\frac{1}{2}\cdot\vec{1}$ is feasible
in $G$ and consider the $b$-matching $x=\frac{1}{2}\cdot\vec{1}$.
We observe that feasibility of $f$ implies that $x$ is feasible
for each vertex in $P$, as each such vertex has the same number of
incident edges and the same demand as the corresponding vertex in
$G$. On the other hand, by construction vertices in $Q$ have demand
$-1$ and two incoming edges, what settles feasibility of $x$ for
them. 
\end{proof}
The effect of this reduction on a single arc is presented in the figure
below. Note that vertices in $P$ correspond to \textit{vertices}
from the original graph, whereas vertices in $Q$ correspond to \textit{arcs}
form the original graph. Essentially, every arc $(u,v)$ in $G$ adds
a demand pair of $(1,1)$ on the pair of vertices $(v,e_{uv})$ in
the $b$-matching instance. The amount of flow routed on $(v,e_{uv})$
corresponds to the residual capacity of the arc $(u,v)$ in the original
graph. 
\begin{figure}[H]
\centering{}\includegraphics[bb=0bp 250bp 1024bp 768bp,clip,scale=0.4]{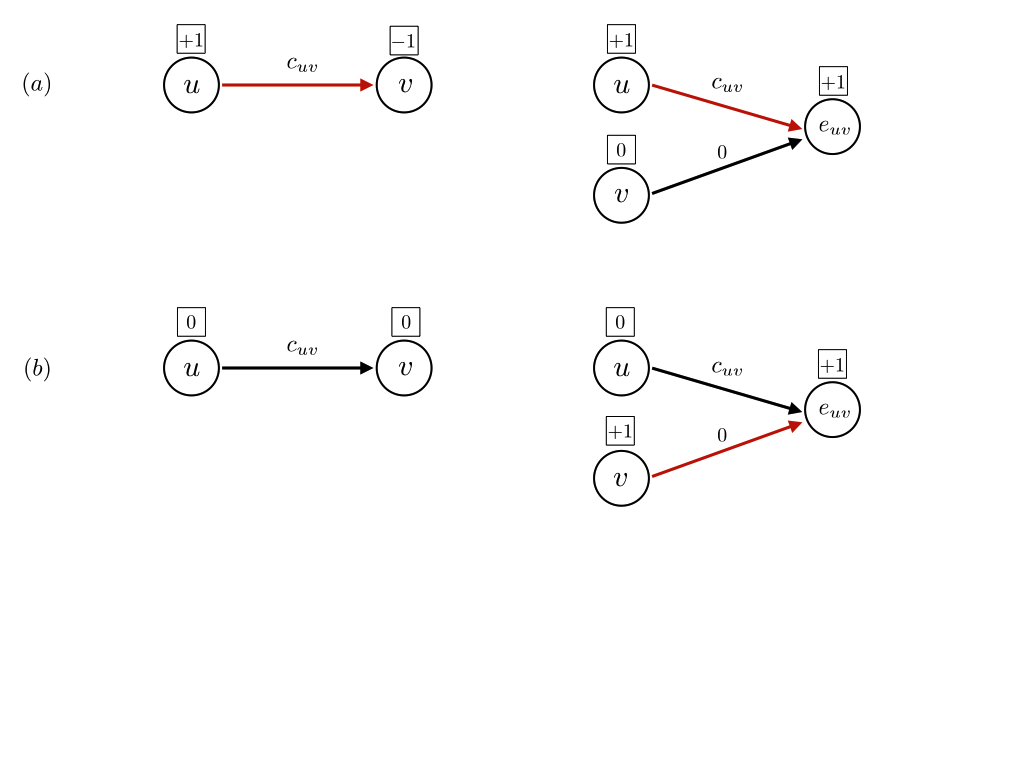}\caption{The left side contains two $\sigma$-flow instances, the right side
contains their corresponding reductions to $b$-matching. In (a) we
can see that the unit flow from $u$ to $v$ gets routed from $u$
to $e_{uv}$ in the reduction. In (b), since there is no flow from
$u$ to $v$, the new demand added by the reduction gets routed directly
from $v$ to $e_{uv}$.}
\end{figure}

While this reduction rephrases the problem into a form that is amenable
to our interior-point framework, the remaining issue is that we have
to be able to start the algorithm with a feasible solution where all
the flows on arcs are similar in magnitude (this enables us to enforce
the centrality property defined in \ref{par:Centrality}). Ideally,
we should be able to obtain a feasible instance simply by placing
half a unit of flow on every arc of the $b$-matching instance. While
doing so clearly satisfies the demands of vertices in $Q$, demand
violations might occur on vertices in $P$.

We can easily fix this problem by adding one extra vertex in the original
graph, along with a set of arcs with costs chosen sufficiently high,
that the optimal solution will never consider them. The goal is to
add the extra arcs such that flowing $\frac{1}{2}$ on every arc satisfies
the demand. This instance can then be converted to a $b$-matching
instance with the same property. The reduction is described by the
following lemma:
\begin{lem}
Given a directed graph $G=(V,E,c)$ and an integral demand vector
$\sigma$, one can construct in linear time a graph $G'=(V',E',c')$
along with an integral demand vector $\sigma'$ such that the demand
is satisfied by placing $\frac{1}{2}$ units of flow on every arc.
Furthermore, given a solution to the min-cost $\sigma'$-flow satisfying
demand $\sigma'$ in $G'$, one can reconstruct in linear time a solution
$f$ that routes demand $\sigma$ in $G$ with minimum cost.\label{lem:reduction-to-balance}\end{lem}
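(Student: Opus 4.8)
The plan is to graft a single fresh vertex $s$ onto $G$, connect it to the original vertices by parallel arcs that we make so expensive that a minimum‑cost flow avoids them whenever possible, and choose the multiplicities and orientations of these arcs so that putting $\frac12$ on every arc of the enlarged graph exactly routes the enlarged demand. Concretely, put $V'=V\cup\{s\}$ and, for each $v\in V$, set
\[
k_v\;:=\;2\sigma_v-\big(\degoutG{G}(v)-\deginG{G}(v)\big),
\]
which is an integer precisely because $\sigma$ is integral. If $k_v>0$ add $k_v$ parallel copies of the arc $(v,s)$; if $k_v<0$ add $|k_v|$ parallel copies of $(s,v)$; and, only where needed to keep $G'$ connected (which we need for the Laplacian solver), add a canceling pair $(v,s),(s,v)$. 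Every new arc is given a cost $M$ fixed so large that an \emph{integral} minimum‑cost flow never uses a new arc unless forced to — taking $M:=2mW+1$ will do. Finally set $c'_e:=c_e$ on the old arcs, $\sigma'_v:=\sigma_v$ for $v\in V$, and $\sigma'_s:=0$.

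Three routine verifications finish the first half of the statement. (i) $\sigma'$ is a legitimate integral demand: $\sum_{v\in V}k_v=2\sum_v\sigma_v-\sum_v(\degoutG{G}(v)-\deginG{G}(v))=0-0=0$, so the arcs at $s$ balance and $\sum_{v\in V'}\sigma'_v=0$. (ii) The uniform flow $\frac12\cdot\vec 1$ routes $\sigma'$: for $v\in V$ its net out‑flow is $\frac12(\degoutG{G}(v)-\deginG{G}(v))+\frac12 k_v=\sigma_v$, at $s$ it is $-\frac12\sum_{v\in V}k_v=0$, and $0\le\frac12\le1$ is immediate. (iii) The construction is linear time, because the number of new arcs is $\sum_v|k_v|=O(m+\|\sigma\|_1)$, which is $O(m)$ in the regimes we care about (unit‑capacity flow, resp. $\|b\|_1=O(m)$).

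For the reconstruction, take an \emph{integral} minimum‑cost $\sigma'$-flow $f'$ in $G'$ — unit capacities and integral demands guarantee an integral optimum, which is what we obtain after the rounding step. If $f'$ sends flow along any new arc then, being integral, it sends a whole unit, so $\cost(f')\ge M-mW>mW$; since every feasible $\sigma$-flow in $G$ has cost in $[-mW,mW]$ and extends by zeros to a feasible $\sigma'$-flow in $G'$, this means $G$ admits no feasible $\sigma$-flow and we report infeasibility. Otherwise $f'$ is supported on the old arcs, and its restriction $f:=f'|_E$ satisfies $0\le f\le1$ and $Bf=\sigma$; it has minimum cost because any cheaper $\sigma$-flow in $G$, extended by zeros, would be a cheaper $\sigma'$-flow in $G'$, contradicting optimality of $f'$. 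Forming this restriction is clearly linear time.

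The one place where care is genuinely needed is the interplay between the big cost $M$ and integrality: a \emph{fractional} minimum‑cost flow can always push an arbitrarily tiny amount along a high‑cost arc without breaking the cost bound, so no finite choice of $M$ would work if we allowed fractional solutions — hence it is essential that the $\sigma'$-flow handed to the reconstruction is integral, which is consistent with the fact that the overall algorithm rounds its output to an optimal integral flow. The other small thing to get right is the parity observation that $k_v\in\Z$, which is exactly what makes $|k_v|$ unit‑capacity arcs suffice to repair vertex $v$ while keeping $\sigma'$ integral; the rest is bookkeeping.
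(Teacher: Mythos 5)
Your construction is the paper's construction: your $k_v=2\sigma_v-(\deg_{out}^{G}(v)-\deg_{in}^{G}(v))$ is exactly the paper's $2t(v)$, the $|k_v|$ parallel arcs to a single auxiliary vertex oriented by the sign of the residual are the same, your cost $M=2mW+1$ plays the role of the paper's prohibitive cost $\|c\|_1$, and the feasibility of $\tfrac12\cdot\vec 1$ and the ``an optimal solution never touches the expensive arcs unless $G$ is infeasible'' reconstruction are argued the same way. The proof is correct and essentially identical to the paper's (your extra remark that \emph{no} finite $M$ could handle fractional optima is overly pessimistic, but it is a side comment that does not affect the argument).
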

\begin{proof}
Create one extra vertex $v_{\textnormal{aux}}$ with demand $0$.
Then, for all $v\in V$, let $t(v)=\sigma(v)+\frac{1}{2}\cdot\deginG G(v)/2-\frac{1}{2}\cdot\degoutG G(v)$
be the residual demand corresponding to the flow that has value $\frac{1}{2}$
everywhere. 

Fix this residual by creating $\abs{2t(v)}$ parallel arcs $(v_{\textnormal{aux}},v)$
with costs $\norm c_{1}$, for each vertex with $t(v)<0$, respectively
$\abs{2t(v)}$ parallel arcs $(v,v_{\textnormal{aux}})$ with costs
$\norm c_{1}$ for each vertex $v$ with $t(v)>0$. This enforces
our condition to be satisfied for all vertices in $V$.

Also note that $v_{\textnormal{aux}}$ has an equal number of arcs
entering and leaving it, since the sum of residuals at vertices in
$V$ equals the sum of degree imbalances, plus the sum of demands,
both of which are equal to $0$. More precisely, $\deg_{in}(v_{\textnormal{aux }})=\sum_{v:t(v)>0}2t(v)$,
and $\deg_{out}(v_{\textnormal{aux }})=\sum_{v:t(v)<0}-2t(v)$; since
$0=\sum_{v}t(v)=\sum_{v:t(v)>0}t(v)-\sum_{v:t(v)<0}\left(-t(v)\right)=\deginG{G'}(v_{\textnormal{aux}})/2-\degoutG{G'}(v_{\textnormal{aux}})/2$,
the vertex $v_{\textnormal{aux}}$ is balanced; hence $t(v_{\textnormal{aux}})=0$,
and the residual demand is $0$ at all vertices in the graph.

Finally, observe that a flow in $G$ satisfying $\sigma$ is a valid
flow in $G'$ for $\sigma'$ that does not use any edge incident to
$v_{\textnormal{aux}}$. This flow has cost smaller than $\norm c_{1}$.
Hence, a min-cost $\sigma'$-flow in $G'$, if it has cost smaller
than $\norm c_{1}$, gives a min-cost $\sigma$-flow in $G$.
\end{proof}
A pictorial description of the reduction is presented in the figure
below.
\begin{figure}[H]
\centering{}\includegraphics[bb=0bp 370bp 1024bp 740bp,clip,scale=0.4]{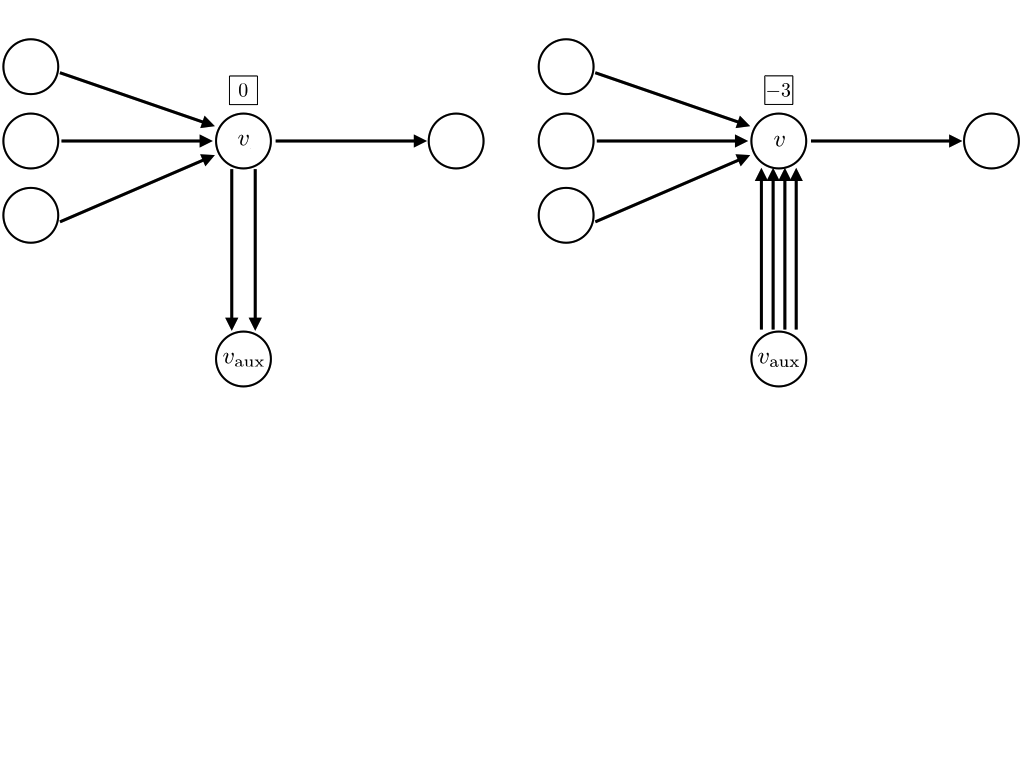}\caption{Two examples of balancing a vertex using extra arcs connected to $v_{\textnormal{aux}}$.
In the first case, adding two arcs to $v_{\textnormal{aux}}$ makes
the residual at $v$ equal to $0$, when routing $1/2$ on every arc,
since the in-degree of $v$ is equal to its out-degree, and there
is no demand on $v$. In the second case, we add four arcs from $v_{\textnormal{aux}}$
to $v$ in order for the net flow at $v$ (when routing $1/2$ on
every arc) to match the demand $-3$.}
\end{figure}

We can use these two reductions together, first making the flow $f=\frac{1}{2}\cdot\vec{1}$
feasible, then converting the instance to a $b$-matching instance
that can be directly changed back to $\sigma$-flow instance. As a
result we obtain an instance where constructing a feasible starting
solution that is required for our interior point method is straightforward.

\section{Our Interior-Point Method Framework}

\label{sec:basic_framework}

In this section we describe our interior point method framework for
solving our instance of the uncapacitated minimum-cost $\sigma$-flow
problem that results from casting the bipartite $b-$matching problem
instance we produced in Section \ref{sec:reduction} as an instance
of the minimum-cost $\sigma-$flow problem. Our basic setup is largely
following the framework used by M\k{a}dry \cite{Madry13}. The crucial
differences emerge only later on, in Sections \ref{sec:longer_steps}
and \ref{sec:preconditioning}, when we refine it to obtain our desired
running time improvements.

\subsection{Primal and Dual Solutions and Centrality}

Our framework will be inherently primal dual. That is, we will always
maintain a feasible primal dual solution $(f,y)$. Here, the primal
solution $f$ is simply a $\sigma$-flow, i.e., a flow with demands
$\sigma$ and its feasibility condition is that $f_{e}\geq0$ for
each arc $e$, i.e., the flow never flows against the orientation
of the arc. The dual solution $y$, on the other hand, corresponds
to embedding of the vertices of our graph into the line, i.e., $y$
assigns a real value $y_{v}$ to each vertex $v$. The dual feasibility
condition is that, for each arc $e=(u,v)$, we have that its \emph{slack}
$s_{e}:=c_{e}+y_{u}-y_{v}$ is non-negative, i.e., $s_{e}\geq0$,
for each arc $e$. Intuitively, this means that no arc $e$ is stretched
in the embedding defined by $y$ more than its actual cost $c_{e}$.
(Note that here we measure stretch only in the direction of the arc
orientation.)

Observe that the dual solution $y$ is define uniquely (up to a translation)
by the slack vector $s$ and the arc costs $c$. So, we will sometime
find it more convenient to represent our dual solution in terms of
the slack vector $s$ instead of the embedding vector $y$. From this
perspective, the dual feasibility condition is simply non-negativity
of all the slacks $s$. (In fact, we will use $y$ and $s$ representation
interchangeably, depending on which one of them is more convenient
in given context.)

\paragraph{Duality Gap.}

A very convenient feature of working with primal dual solutions is
that they provide a very natural way of measuring optimality: the
duality gap, i.e., the difference between the upper bound on the cost
of the optimal solution provided by current primal feasible solution
$f$ and the lower bound on the optimal cost provided by our current
dual solution $s$. It turns out that the duality gap of a primal
dual solution $(f,s)$ is exactly

\[
f^{\top}s=\sum_{e}f_{e}s_{e}=\sum_{e}\mu_{e},
\]

where $\mu_{e}:=f_{e}s_{e}$ is the contribution of the arc $e$ to
that duality gap. Observe that by our primal dual feasibility condition
we have always that $\mu_{e}\geq0$, for each arc $e$.

\paragraph{Centrality. \label{par:Centrality}}

In the light of the above, the duality gap $\norm{\mu}_{1}=\sum_{e}\mu_{e}$
provides a natural measure of progress for our algorithm. In particular,
our goal is simply to find a primal dual feasible solution $(f,s)$
with $\norm{\mu}_{1}$ sufficiently close to $0$, and, from this
perspective, any way of updating that primal dual solution that leads
to reduction of the duality gap should be desirable. 

However, for reasons that will become clear later, we will insist
on reducing the duality gap $\norm{\mu}_{1}$ in a more restricted
manner. To define this, let us first associate with each arc $e$
a value $\nu_{e}$ that we will refer to as the \emph{measure} of
\emph{$e$}. We will always make sure that $\nu_{e}\geq1$, for each
arc $e$, and that the total sum of all the measures is not too large.
Specifically, we want to maintain the following invariant that ensures
that the average measure of an arc is at most $3$.

\vspace{10bp}

\textbf{Invariant 1.} We always have that $\norm{\nu}_{1}=\sum_{e}\nu_{e}\leq3m$.

\vspace{10bp}

As it turns out, once we introduced the notion of arc measure, it
will be much more convenient to introduce analogues of the standard
$\ell_{p}$-norms that are reweighted by the measures of arcs. To
this end, for a given vector $x\in\mathbb{R^{\text{m}}}$, and measure
vector $\nu$, let us define

\begin{equation}
\|x\|_{\nu,p}:=\left(\sum_{e}\nu_{e}\abs{x_{e}}^{p}\right)^{\frac{1}{p}}\label{eq:def_nu_l_p_norm}
\end{equation}

We will sometimes extend this notation to refer to the weighted norm
of a subset of indices in the support. More specifically, given $S\subseteq\left\{ 1,\dots,m\right\} $,
we will call

\begin{equation}
\|x_{S}\|_{\nu,p}:=\left(\sum_{e\in S}\nu_{e}\abs{x_{e}}^{p}\right)^{\frac{1}{p}}\label{eq:def_nu_l_p_norm-1}
\end{equation}

Also, we will extend our notation and use $(f,s,\nu)$ to denote a
primal dual solution $(f,s)$ with its corresponding measure vector
$\nu$.

Now, we can make precise the additional constraint on our primal dual
solutions $(f,s,\nu)$ that we want to maintain. Namely, we want each
solution \textbf{$(f,s,\nu)$} to be $\widehat{\mu}-centered$ (or,
simply, \emph{centered}), i.e., we want that, for each arc $e$, 

\begin{equation}
\mu_{e}=\nu_{e}\widehat{\mu},\label{eq:def_centrality}
\end{equation}

where $\widehat{\mu}$ is a normalizing value we will refer to as
\emph{average duality gap}. Intuitively, centrality means that each
arc's contribution to the duality gap is exactly proportional to its
measure.\footnote{The framework in \cite{Madry13} works with a slightly relaxed notion
of centrality. However, we deviate from that here.}

Note that the above notions enable us to express the duality gap of
a solution $(f,y,\nu)$ as exactly $\sum_{e}\nu_{e}\widehat{\mu}$,
which by Invariant $1$ is at most $3m\widehat{\mu}$. That is, we
have that

\begin{equation}
\sum_{e}\mu_{e}=\sum_{e}\nu_{e}\widehat{\mu}\leq3m\widehat{\mu}\label{eq:average_duality_gap_bound}
\end{equation}

Consequently, we can view $\widehat{\mu}$ as a measure of our progress
- driving it to be smaller translates directly into making the duality
gap smaller too.

\subsection{Making Progress with Electrical Flow Computations\label{sub:Making-Progress}}

Once we setup basic definitions, we are ready to describe how we initialize
our framework and then how we can use electrical flow computations
to gradually improve the quality, i.e., the average duality gap $\widehat{\mu}$
of our $\widehat{\mu}$-centered primal dual solution ($f,y,\nu)$.

\paragraph{Initialization. }

As we want our solutions ($f,y,\nu)$ to be always centered, initialization
of our framework, i.e., finding the initial centered primal dual solution,
might be difficult. Fortunately, one of the important properties of
the reduction we performed in Section \ref{sec:reduction} is that
a centered primal dual feasible solution of the resulting uncapacitated
minimum-cost $\sigma$-flow instance can be specified explicitly. 
\begin{lem}
\label{lem:initial-centering}Given a bipartite graph $G=(V,E,c)$
along with an integral demand vector $\sigma$ and a subset of vertices
$P$ such that that for all $v\in P$ we have $\sigma_{v}=\deg(v)/2$,
whereas for all $v\not\in P$ we have $\deg(v)=2$, one can construct
in linear time a feasible primal-dual set of variables $(f,s)$ that
satisfy the centrality bound for $\widehat{\mu}=\norm c_{\infty}$
and $\norm{\nu}_{1}\le m$.\end{lem}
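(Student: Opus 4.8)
The goal is to exhibit, for the special instance described (bipartite $G=(V,E,c)$, integral $\sigma$ with $\sigma_v=\deg(v)/2$ for $v\in P$ and $\deg(v)=2$ for $v\notin P$), an explicit feasible centered primal dual pair $(f,s)$. The natural candidate for the primal is the all-$\tfrac12$ flow, $f=\tfrac12\cdot\vec 1$: it is feasible ($0\le f_e\le 1$), and the degree conditions are exactly what is needed for it to be a valid $\sigma$-flow — every $v\in P$ has net out-flow $\deg(v)/2=\sigma_v$ since (in the $b$-matching-as-$\sigma$-flow view) all its incident arcs leave it, and every $v\notin P$ has exactly two incident arcs, both entering, giving net flow $-1=\sigma_v=-b_v$. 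For the dual I would take the trivial embedding $y\equiv 0$, so that the slack on each arc is $s_e=c_e$, which is nonnegative provided costs are nonnegative; if the reduction can produce negative costs one first shifts by a constant per the $b$-matching structure, but since $c$ here comes from the two-reductions pipeline of Section~\ref{sec:reduction} the costs are already nonnegative, so $s=c$ works directly.

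With these choices, $\mu_e=f_es_e=\tfrac12 c_e$. To satisfy the centrality condition $\mu_e=\nu_e\widehat\mu$ with $\widehat\mu=\norm c_\infty$, I would simply \emph{define} $\nu_e:=\dfrac{\mu_e}{\widehat\mu}=\dfrac{c_e}{2\norm c_\infty}$ — wait, that is at most $1/2$, not at least $1$. So the right move is to rescale: set $\widehat\mu=\norm c_\infty$ but take $\nu_e$ to be whatever makes $\mu_e=\nu_e\widehat\mu$ hold, i.e. $\nu_e=c_e/(2\norm c_\infty)$, and then observe this may violate $\nu_e\ge 1$. The clean fix, which I expect the authors use, is to instead pick a smaller normalizer, or to pad: the honest statement only asks for $\norm\nu_1\le m$ (not Invariant~1's $3m$, and not the lower bound $\nu_e\ge1$ — note the lemma as stated only requires the centrality bound and $\norm\nu_1\le m$). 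So I would set $\widehat\mu:=\norm c_\infty$ and $\nu_e:=c_e/(2\norm c_\infty)\le \tfrac12$, giving $\norm\nu_1=\sum_e c_e/(2\norm c_\infty)\le m\cdot \tfrac12 \le m$. That checks $\norm\nu_1\le m$. If instead one genuinely needs $\nu_e\ge1$ for later invariants, the alternative is $\widehat\mu:=\tfrac12\min_e c_e$ or to argue the reduction guarantees all $c_e$ lie in a bounded ratio; I would mention both options and defer to whichever the downstream sections require.

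**Main obstacle.** The routine parts — feasibility of $f$, nonnegativity of $s$, linear-time constructibility — are immediate from the degree hypotheses and the definition of the slack. The one genuinely delicate point is reconciling the centrality equation with the side constraints on $\nu$: centrality \emph{forces} $\nu_e=\mu_e/\widehat\mu=c_e/(2\widehat\mu)$, so the only freedom is in the choice of $\widehat\mu$, and that single scalar must simultaneously make every $\nu_e$ land in the desired range and keep $\norm\nu_1$ under its budget. Since the $\nu_e$ are then proportional to the costs $c_e$, the spread of the cost vector (ratio $\norm c_\infty/\min_e c_e$) is exactly what controls whether a uniform $\nu_e\ge 1$ bound is attainable; I expect the chosen normalizer is $\widehat\mu=\norm c_\infty$ precisely because it makes the upper bound $\norm\nu_1\le m$ fall out for free, and the lemma is stated with only that bound because that is all initialization needs before the perturbation machinery of later sections takes over.

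**Write-up sketch.** First set $f=\tfrac12\vec 1$ and verify $Bf=\sigma$ using the two degree conditions (two short cases, $v\in P$ and $v\notin P$) and $0\le f_e\le 1$. Second set $y=0$, $s=c\ge 0$, noting dual feasibility is immediate. Third compute $\mu_e=\tfrac12 c_e$, set $\widehat\mu=\norm c_\infty$, \emph{define} $\nu_e=c_e/(2\norm c_\infty)$ so that \eqref{eq:def_centrality} holds by construction, and bound $\norm\nu_1=\tfrac{1}{2\norm c_\infty}\sum_e c_e\le \tfrac{m}{2}\le m$. Fourth observe each step is linear time. That completes the construction.
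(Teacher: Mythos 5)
Your primal choice $f=\tfrac12\cdot\vec 1$, your normalizer $\widehat{\mu}=\norm c_{\infty}$, and your strategy of \emph{defining} $\nu_e$ so that centrality holds are all exactly the paper's; the gap is in the dual. Setting $y\equiv 0$ and $s=c$ does not work for this instance: the reduction of Section \ref{sec:reduction} creates, for every original arc, a companion arc of cost exactly $0$ (and the original costs $c_{uv}$ may themselves be negative --- this is the negative-weight setting). So with $s=c$ a large fraction of the arcs have $s_e=0$ (or $s_e<0$), hence $\mu_e=0$ and $\nu_e=0$. That point is not in the interior of the dual feasible region, the resistances $r_e=\nu_e/f_e^2$ from (\ref{eq:def_r_e}) degenerate to $0$, and the whole framework (which needs every $\nu_e$ bounded away from $0$, indeed $\nu_e\ge 1$ up to the paper's own normalization) cannot start. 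Your own hedge --- that the lemma ``only'' asks for $\norm{\nu}_1\le m$ --- does not rescue this, because the issue is not the upper bound on $\norm{\nu}_1$ but that $\nu_e$ and $s_e$ must be strictly positive.

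The missing idea is a nontrivial dual initialization: the paper sets $y_v=\norm c_{\infty}$ for all $v\in P$ and $y_w=0$ for all $w\notin P$. Since every arc goes from $P$ to $Q$, this uniformly lifts each slack to $s_e=c_e+\norm c_{\infty}$, placing all slacks in $[\norm c_{\infty},2\norm c_{\infty}]$ and hence all measures $\nu_e=s_e/(2\norm c_{\infty})$ in $[\tfrac12,1]$, after which your centrality computation and the bound $\norm{\nu}_1\le m$ go through verbatim. So the structure of your argument is right, but the single scalar shift of the $P$-side potentials is the essential step you are missing, and without it the constructed point is not a valid interior starting point.
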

\begin{proof}
Since $\sigma_{v}=\deg(v)/2$ for all $v\in P$, while all $w\not\not\in P$
have degree precisely $2$, we can set $f=\frac{1}{2}\cdot\vec{1}$
and have all the demands satisfied exactly. Moreover, we set the dual
variables $y_{v}=\norm c_{\infty}$ for all $v\in P$, and $y_{w}=0$
for all $w\not\in P$. This way the slacks $s_{vw}=c_{vw}-y_{w}+y_{v}$
are all within the range $[\norm c_{\infty},2\norm c_{\infty}]$.
We set $\nu_{e}=\frac{s_{e}}{2\norm c_{\infty}}$ and $\widehat{\mu}=\norm c_{\infty}$
so that 

\[
\mu_{e}=f_{e}s_{e}=\frac{1}{2}s_{e}=\frac{s_{e}}{2\norm c_{\infty}}\norm c_{\infty}=\nu_{e}\widehat{\mu},
\]

and

\[
\norm{\nu}_{1}=\sum_{e}\nu_{e}=\sum_{e}\frac{s_{e}}{2\norm c_{\infty}}\le\sum_{e}1=m.
\]

\end{proof}

\paragraph{Taking an Improvement Step.\label{par:Taking-an-Improvement}}

Let us fix some $\widehat{\mu}$-centered primal dual solution ($f,y,\nu)$
and let us define resistances $r$ to be equal to 
\begin{equation}
r_{e}:=\frac{1}{\hat{\mu}}\cdot\frac{s_{e}}{f_{e}}=\frac{1}{\hat{\mu}}\cdot\frac{\mu_{e}}{f_{e}^{2}}=\frac{\nu_{e}}{f_{e}^{2}},\label{eq:def_r_e}
\end{equation}

for each arc $e$. (Note that $f$ has to always be positive due to
centrality condition, and thus these resistances are well-defined.)

The fundamental object that will drive our improvements of the quality
of our current primal dual solution will be the electrical $\sigma-$flow
$\elf$ determined by the above resistances \textbf{$r$}. For future
reference, we will call the electrical flow $\elf$ \emph{associated
with} ($f,s,\nu)$. The key property of that electrical flow is that
it will enable us to update our primal and dual solutions \emph{simultaneously}.
That is, we can use the flow itself to update the primal solution
$f$, and we can use the vertex potentials $\pot$ that induced $\elf$
to update our dual solution $s$. Specifically, our main improvement
update step, for each arc $e=(u,v)$ is:

\begin{eqnarray*}
f_{e}' & := & (1-\delta)f_{e}+\delta\hat{f}_{e},\\
s_{e}' & := & s_{e}-\frac{\delta}{(1-\delta)}\left(\widehat{\phi}_{v}-\widehat{\phi}_{u}\right),
\end{eqnarray*}

where $\delta$ is a \emph{step size} that we will choose later.
\begin{rem}
The step derived from the standard primal-dual interior-point method
computes an electrical flow along with potentials determined by resistances
$\frac{s_{e}}{f_{e}}$, which are off by precisely a factor of $\hat{\mu}$
from the resistances we consider in this paper. However, scaling all
resistances by the same factor has no effect on the electrical flow
or the potentials produced. Setting resistances the way we do in (\ref{eq:def_r_e})
has the benefit that it will enable us to relate the electrical energy
with another quantity of interest without having to carry along the
extra $\frac{1}{\hat{\mu}}$ factor, as we will see in Lemma \ref{lem:linking_rho_to_energy}.
\end{rem}
Intuitively, this update step mixes the electrical flow $\elf$ with
the current solution $f$ by taking a convex combination of them.
(Note that the resulting flow is guaranteed to be a $\sigma$- flow
in this way.) On the other hand, the dual update corresponds to updating
the line embedding of each vertex by adding an appropriately scaled
vertex potential to it. 

It is worth pointing out that the electrical flow $\elf$ is inherently
undirected. So, it is not a priori clear if the flow $f'$ resulting
from the above update is even feasible. As a result, we will need
to ensure, in particular, that the step size $\delta$ is chosen to
be small enough so as to ensure that $f'$ is still feasible. (In
fact, as we will see shortly, there are some even stronger restrictions
on the value of $\delta$. So, the feasibility will be enforced implicitly.)

\paragraph*{Congestion Vector.\label{par:Congestion-Vector.}}

A notion that will be extremely useful in analyzing our improvement
step and the performance of our algorithm in general is the notion
of congestion vectors. Specifically, given the electrical $\sigma$-flow
$\widehat{f}$ associated with our solution ($f,y,\nu)$, let us define
\emph{congestion $\rho_{e}$ }of an arc $e$ as 

\begin{equation}
\rho_{e}:=\frac{\abs{\elf_{e}}}{f_{e}}\label{eq:def_congestion_vector}
\end{equation}

Now, observe that we can express the duality contribution $\mu_{e}'$
of an arc $e$ in the new solution $(f',s')$ as

\begin{align*}
\mu_{e}' & =f_{e}'s_{e}'=\left((1-\delta)f_{e}+\delta\hat{f}_{e}\right)\left(s_{e}-\frac{\delta}{(1-\delta)}\left(\widehat{\phi}_{v}-\widehat{\phi}_{u}\right)\right)\\
 & =(1-\delta)f_{e}s_{e}-\delta f_{e}\left(\widehat{\phi}_{v}-\widehat{\phi}_{u}\right)+\delta\hat{f}_{e}s_{e}-\frac{\delta^{2}}{1-\delta}\hat{f}_{e}\left(\widehat{\phi}_{v}-\widehat{\phi}_{u}\right)\\
 & =(1-\delta)f_{e}s_{e}-\delta f_{e}\cdot\hat{f}_{e}\frac{s_{e}}{f_{e}}+\delta\hat{f}_{e}s_{e}-\frac{\delta^{2}}{1-\delta}\hat{f}_{e}\cdot\hat{f}_{e}\frac{s_{e}}{f_{e}}\\
 & =(1-\delta)\mu_{e}-\frac{\delta^{2}}{1-\delta}\mu_{e}\rho_{e}^{2}
\end{align*}

So, if we ignore the second-order term in $\delta$, the duality gap
contribution of each arc $e$ goes down at the same rate. In this
way, not only the duality gap gets reduced by a factor of $(1-\delta)$
but also the centrality of the solution would be perfectly preserved.

However, we cannot really ignore the second-order term and this term
will make our solution lose centrality. Fortunately, one can show
that as long as the total degradation of centrality condition is not
too large one can easily correct it with a small number of electrical
flow computations. Specifically, for the correction to be possible,
we need to have that the total $\ell_{2}^{2}-$norm of the degradations
(measured with respect to measure $\nu$ and normalized by the duality
gap contributions $\mu$) has to be a small constant. That is, we
need that

\begin{equation}
\sum_{e}\nu_{e}\left(\frac{\mu_{e}'}{(1-\delta)\mu_{e}}-1\right)^{2}=\sum_{e}\nu_{e}\left(\frac{\delta^{2}\mu_{e}\rho_{e}^{2}}{(1-\delta)^{2}\mu_{e}}\right)^{2}\frac{\delta^{4}}{(1-\delta)^{4}}\sum_{e}\nu_{e}\rho_{e}^{4}=\frac{\delta^{4}}{(1-\delta)^{4}}\|\rho\|_{\nu,4}^{4}\leq\frac{1}{256},\label{eq:centrality}
\end{equation}

which implies that it is sufficient to have

\[
\delta\leq\frac{1}{8\cdot\|\rho\|_{\nu,4}},
\]
i.e., that the step size $\delta$ should be bounded by the $\ell_{4}$
norm of the congestion vector $\rho$. The following theorem makes
these requirements, as well as the result of the full improvement
step, precise. Its complete proof can be found in Appendix \ref{sec:proof-progress-step}.
\begin{thm}
\label{thm:l_4_improvement_step}Let $(f,s,\nu)$ be a $\widehat{\mu}-$centered
solution and let \textbf{$\rho$} the congestion vector of the electrical
flow $\widehat{f}$ associated with that solution. For any $\delta>0$
such that 

\[
\delta\leq\min\left\{ \frac{1}{8\cdot\|\rho\|_{\nu,4}},\frac{1}{8}\right\} ,
\]

we can compute in $\tilde{O}(m)$ time a $\widehat{\mu}'-$centered
solution ($f',s',\nu')$, such that $\nu'=\nu$, $\widehat{\mu'}\leq(1-\delta)\widehat{\mu}$,
and, for each arc $e,$

\[
r_{e}'=\frac{1}{\hat{\mu}}\cdot\frac{s_{e}'}{f_{e}'}\geq\left(1+4\cdot\delta\rho_{e}+\kappa_{e}\right)^{-1}r_{e},
\]

where $\kappa$ is a vector with $\|\kappa\|_{\nu,2}\leq1$.
\end{thm}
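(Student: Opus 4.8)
The plan is to establish Theorem~\ref{thm:l_4_improvement_step} by directly analyzing the effect of the improvement step defined in Section~\ref{par:Taking-an-Improvement}, tracking the primal flow $f'$, the dual slacks $s'$, the resistances $r'$, and the induced measure. First I would verify that $f'$ is a feasible $\sigma$-flow and that $s'$ is a feasible dual: since $\widehat f$ is a $\sigma$-flow and $f$ is a $\sigma$-flow, any convex combination is again a $\sigma$-flow, so I only need to check nonnegativity, i.e.\ $f_e' = (1-\delta)f_e + \delta \widehat f_e = f_e\bigl((1-\delta) + \delta \cdot \mathrm{sign}(\widehat f_e)\rho_e\bigr) > 0$; using $\delta \le \frac{1}{8\|\rho\|_{\nu,4}} \le \frac{1}{8\rho_e}$ (since $\nu_e \ge 1$), we get $\delta\rho_e \le \tfrac18$, so $f_e' \ge f_e(1-\delta-\tfrac18) > 0$. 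The same bound controls the dual update: $s_e' = s_e - \frac{\delta}{1-\delta}(\widehat\phi_v - \widehat\phi_u) = s_e - \frac{\delta}{1-\delta}\widehat f_e r_e = s_e\bigl(1 - \frac{\delta}{1-\delta}\,\mathrm{sign}(\widehat f_e)\,\rho_e\bigr)$, using the potential characterization \eqref{eq:potential_flow_def} and the definition \eqref{eq:def_r_e} of $r_e$; again $\frac{\delta}{1-\delta}\rho_e$ is a small constant, so $s_e' > 0$.

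Next I would set $\nu' := \nu$ (as the statement demands) and define $\widehat\mu'$ by whatever normalization makes the solution centered, then argue centrality $\mu_e' = \nu_e \widehat\mu'$ is \emph{approximately} restored and can be exactly restored by the centering-correction routine alluded to just before the theorem. The computation already carried out in the excerpt gives $\mu_e' = (1-\delta)\mu_e - \frac{\delta^2}{1-\delta}\mu_e\rho_e^2 = (1-\delta)\mu_e\bigl(1 - \frac{\delta^2}{(1-\delta)^2}\rho_e^2\bigr)$. Writing $\widehat\mu' := (1-\delta)\widehat\mu$ as the target, the relative centrality defect of arc $e$ is $\frac{\mu_e'}{(1-\delta)\mu_e} - 1 = -\frac{\delta^2\rho_e^2}{(1-\delta)^2}$, and the bound \eqref{eq:centrality}, namely $\sum_e \nu_e\bigl(\frac{\delta^2\rho_e^2}{(1-\delta)^2}\bigr)^2 = \frac{\delta^4}{(1-\delta)^4}\|\rho\|_{\nu,4}^4 \le \frac{1}{256}$, is exactly what the step-size constraint $\delta \le \frac{1}{8\|\rho\|_{\nu,4}}$ guarantees (since $\frac{1}{(1-\delta)^4} \le 2$ for $\delta \le \tfrac18$, and $(1/8)^4 \cdot 2 = 2/8^4 \le 1/256$). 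I would then invoke the correction procedure — a constant number of $\tilde O(m)$-time electrical flow solves — to produce the truly centered $(f',s',\nu')$ with $\widehat\mu' \le (1-\delta)\widehat\mu$; the correction perturbs $f'$ and $s'$ only slightly, and I would carry this small perturbation through the resistance estimate below. Each of these steps is $\tilde O(m)$ time (one Laplacian solve via Theorem~\ref{thm:vanilla_SDD_solver} plus linear-time bookkeeping), giving the claimed runtime.

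The remaining and most delicate task is the lower bound on $r_e'$. Using $r_e' = \frac{1}{\widehat\mu}\cdot\frac{s_e'}{f_e'}$ (before correction) and the multiplicative forms above,
\[
\frac{r_e'}{r_e} = \frac{s_e'/s_e}{f_e'/f_e} = \frac{1 - \frac{\delta}{1-\delta}\,\mathrm{sign}(\widehat f_e)\rho_e}{1 + \delta(\mathrm{sign}(\widehat f_e)\rho_e - 1)} ,
\]
and I want to show $\frac{r_e}{r_e'} \le 1 + 4\delta\rho_e + \kappa_e$ with $\|\kappa\|_{\nu,2} \le 1$. Inverting, $\frac{r_e}{r_e'} = \frac{1 + \delta(\mathrm{sign}(\widehat f_e)\rho_e - 1)}{1 - \frac{\delta}{1-\delta}\mathrm{sign}(\widehat f_e)\rho_e}$; since $\frac{\delta}{1-\delta}\rho_e \le \tfrac18\cdot\tfrac87 < 1$ (in fact it is $O(\delta\rho_e)$ and small), I would expand $\frac{1}{1-x} \le 1 + x + 2x^2$ for $|x| \le \tfrac12$, collect the first-order terms (which sum to at most $4\delta\rho_e$ after absorbing the $\mathrm{sign}$ and the $\frac{1}{1-\delta}$ factors into the constant $4$), and dump the second-order terms into $\kappa_e$, which will be $O(\delta^2\rho_e^2)$. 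Then $\|\kappa\|_{\nu,2}^2 = \sum_e \nu_e \kappa_e^2 = O\bigl(\delta^4 \sum_e \nu_e \rho_e^4\bigr) = O(\delta^4\|\rho\|_{\nu,4}^4) = O(1/256) \le 1$ by the step-size bound, exactly as in \eqref{eq:centrality}. Finally I would add the $O(\delta^2\rho_e^2)$-scale perturbation coming from the centering-correction step into $\kappa_e$ as well, checking it too has $\nu$-weighted $\ell_2$ norm bounded; keeping all absolute constants small enough that the total stays $\le 1$ is the main bookkeeping obstacle, and it is why the step-size constraint carries the slightly conservative constant $\tfrac18$ rather than something larger. (The full constant-chasing is the content of Appendix~\ref{sec:proof-progress-step}.)
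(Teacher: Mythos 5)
Your treatment of the predictor step matches the paper's: the feasibility bounds for $f'$ and $s'$ via $\delta\rho_e\le 1/8$, and the expansion of $r_e/r_e'$ with the $1/256$ bound coming from the step-size constraint, are exactly the content of Proposition \ref{prop:resistances-predictor} in the appendix (the paper in fact obtains the cleaner bound $r_e/r_e'\le 1+4\delta\rho_e$ for the predictor step alone, with no second-order residue, but your variant of absorbing the second-order terms into $\kappa$ is equally workable).

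The gap is the centering correction. You invoke it as ``the correction procedure alluded to just before the theorem'' and assert that it perturbs resistances by $O(\delta^2\rho_e^2)$-scale amounts with bounded $\nu$-weighted $\ell_2$ norm, but no such procedure is available elsewhere in the paper --- constructing it and proving those properties is the bulk of the proof of this theorem. Concretely, the paper defines the recentered flow $f^\sharp_e=\mu_e/s_e$, routes the residual demand of $f-f^\sharp$ electrically with resistances $\tilde r_e=s_e^2/\mu_e$ to get $\tilde f$, and updates $f'_e=f^\sharp_e+\tilde f_e$ and $s'_e=s_e-(s_e/f^\sharp_e)\tilde f_e$. Two nontrivial facts must then be established: (i) an energy-minimization argument showing $\|\tilde f/f^\sharp\|_{\nu,2}\le\|fs/\mu-1\|_{\nu,2}$, which is what controls feasibility and the per-step change in resistance; and (ii) quadratic decay of the centrality defect, $\|f's'/\mu-1\|_{\nu,2}\le\|fs/\mu-1\|_{\nu,2}^2$, which is needed both to terminate the centering loop and to sum the resistance perturbations over all centering iterations into a single $\kappa$ with $\|\kappa\|_{\nu,2}\le 1$ (the paper sums the doubly-exponentially decaying series $\sum_i(1/16)^{2^i}$). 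Without (i) and (ii) the claim that the correction's contribution to $\kappa$ is small is an assertion rather than a proof; note also that the number of centering solves is $O(\log\log)$ in the target precision rather than a constant, though this is harmless for the $\tilde O(m)$ time bound.
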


\subsection{A Simple $O\left(\sqrt{m}\log W\right)$-iteration Bound}

Once the $\ell_{4}$ norm bound provided in Theorem \ref{thm:l_4_improvement_step}
is established we are already able to prove in a simple way that our
algorithm needs at most $O\left(\sqrt{m}\log W\right)$ iterations
to compute the optimal solution, making its total running time be
at most $O\left(m^{3/2}\log W\right)$. To achieve that, we just need
to argue that we always have that

\begin{equation}
\|\rho\|_{\nu,4}\leq O\left(\sqrt{m}\right).\label{eq:sqrt_bound_on_rho}
\end{equation}
Once this is established, by Theorem \ref{thm:l_4_improvement_step},
we know that we can always take $\delta=\Omega(m^{-1/2})$ and thus
make $\widehat{\mu}$ decrease by a factor of $(1-\delta$) in each
iteration. So, after $O\left(\sqrt{m}\log W\right)$ iterations, $\widehat{\mu}$
and thus the duality gap becomes small enough that a simple rounding
(see Section \ref{sec:repair}) will recover the optimal solution. 

Now, to argue that \ref{eq:sqrt_bound_on_rho} indeed holds, we first
notice that we can always upper bound $\ell_{4}$ norm $\|\rho\|_{\nu,4}$
by the $\ell_{2}$ norm $\|\rho\|_{\nu,2}$ and then bound the latter
norm instead. Next, as it turns out, we can tie the energy $\mathcal{E}(\elf)$
of the electrical flow $\elf$ associated with a given solution $(f,s,\nu)$
to the corresponding $\ell_{2}$ norm $\|\rho\|_{\nu,2}$. Specifically,
we have the following lemma. 
\begin{lem}
\label{lem:linking_rho_to_energy}For any centered solution $(f,s,\nu)$,
we have that

\[
\norm{\rho}_{\nu,2}^{2}=\mathcal{E}(\elf),
\]
where \textup{$\elf$ }\textup{\emph{is the electrical flow associated
with that solution and $\rho$ is its congestion vector.}}\end{lem}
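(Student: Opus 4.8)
The plan is to compute $\norm{\rho}_{\nu,2}^2$ directly from the definitions and recognize the result as the energy of $\elf$. First I would unfold the weighted $\ell_2$ norm: by \eqref{eq:def_nu_l_p_norm},
\[
\norm{\rho}_{\nu,2}^2 = \sum_e \nu_e \rho_e^2 = \sum_e \nu_e \frac{\elf_e^2}{f_e^2},
\]
using the definition \eqref{eq:def_congestion_vector} of the congestion vector $\rho_e = \abs{\elf_e}/f_e$. Next I would substitute the definition of the resistances. Recall from \eqref{eq:def_r_e} that $r_e = \nu_e/f_e^2$, so the summand $\nu_e \elf_e^2 / f_e^2$ is exactly $r_e \elf_e^2$. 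Hence
\[
\norm{\rho}_{\nu,2}^2 = \sum_e r_e \elf_e^2 = \energy r{\elf},
\]
where the last equality is just the definition of energy \eqref{eq:def_energy_flow} applied to the flow $\elf$ with respect to the resistances $r$. Since $\elf$ is by construction the electrical $\sigma$-flow determined by these very resistances $r$ (it is the flow "associated with" $(f,s,\nu)$), this is precisely $\mathcal{E}(\elf)$, which completes the argument.

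There is essentially no obstacle here: the lemma is a bookkeeping identity that is true by the deliberate choice of resistances in \eqref{eq:def_r_e}. The only thing worth being careful about is that the identity holds with an exact equality (not merely up to constants), which is why the resistances were defined as $\nu_e/f_e^2$ rather than $s_e/f_e$ — the extra $1/\hat\mu$ factor, as noted in the remark following \eqref{eq:def_r_e}, would otherwise propagate into this relation. I would make sure the write-up records the three substitutions ($\rho_e$ via \eqref{eq:def_congestion_vector}, $r_e$ via \eqref{eq:def_r_e}, energy via \eqref{eq:def_energy_flow}) explicitly, since the value of the lemma is precisely this clean translation between the analytic quantity $\norm{\rho}_{\nu,2}^2$ used to bound the step size and the electrical quantity $\mathcal{E}(\elf)$ that can be controlled by graph-theoretic tools.
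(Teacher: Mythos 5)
Your proposal is correct and follows exactly the same chain of substitutions as the paper's proof: unfold $\norm{\rho}_{\nu,2}^2$, insert $\rho_e=\abs{\elf_e}/f_e$, rewrite $\nu_e/f_e^2$ as $r_e$ via \eqref{eq:def_r_e} (which is where centrality is used), and recognize the sum $\sum_e r_e\elf_e^2$ as $\mathcal{E}(\elf)$. Nothing is missing, and your remark about the deliberate $1/\hat\mu$ normalization of the resistances correctly identifies why the identity is exact.
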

\begin{proof}
Observe that by definition \ref{eq:def_congestion_vector} and \ref{eq:def_r_e},
we have that

\[
\norm{\rho}_{\nu,2}^{2}=\sum_{e}\nu_{e}\rho_{e}^{2}=\sum_{e}\nu_{e}\left(\frac{\widehat{f_{e}}}{f_{e}}\right)^{2}=\sum_{e}r_{e}\elf_{e}^{2}=\mathcal{E}(\elf)
\]

Note that we used (\ref{eq:def_r_e}) to write $r_{e}=\frac{\nu_{e}}{f_{e}^{2}}$,
which assumes centrality.
\end{proof}
Due to this identity, we can view the $\ell_{2}$ norm $\norm{\rho}_{\nu,2}^{2}$
as energy. Finally, we can use the bound from Invariant 1 to show
that the energy $\mathcal{E}(\elf)$ and thus the norm $\norm{\rho}_{\nu,2}$
can be appropriately bounded as well. 
\begin{lem}
\label{lem:rho_l_2_norm_bound}For a centered solution,$\norm{\rho}_{\nu,2}^{2}\leq\sum_{e}\nu_{e}=\norm{\nu}_{1}$.\end{lem}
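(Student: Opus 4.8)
The plan is to combine the energy characterization from Lemma~\ref{lem:linking_rho_to_energy} with a direct comparison of energies: the electrical $\sigma$-flow $\elf$ minimizes energy among all $\sigma$-flows, so to upper bound $\energy r{\elf}$ it suffices to exhibit \emph{any} single $\sigma$-flow and compute its energy. The obvious candidate is the current primal flow $f$ itself, which by assumption is a $\sigma$-flow (it satisfies the flow conservation constraints with demand $\sigma$). Thus I would write
\[
\norm{\rho}_{\nu,2}^{2} = \energy r{\elf} \leq \energy rf = \sum_{e} r_{e} f_{e}^{2}.
\]

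Next I would substitute the definition of the resistances. By~(\ref{eq:def_r_e}), centrality gives $r_{e} = \nu_{e}/f_{e}^{2}$, so $r_{e} f_{e}^{2} = \nu_{e}$ for every arc $e$. Plugging this in yields
\[
\sum_{e} r_{e} f_{e}^{2} = \sum_{e} \nu_{e} = \norm{\nu}_{1},
\]
which is exactly the claimed bound. So the whole argument is: (1) invoke Lemma~\ref{lem:linking_rho_to_energy} to rewrite $\norm{\rho}_{\nu,2}^{2}$ as the energy of $\elf$; (2) use the energy-minimality of the electrical flow together with the fact that $f$ is a competing $\sigma$-flow; (3) evaluate $\energy rf$ using the centrality identity $r_{e}f_{e}^{2}=\nu_{e}$.

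There is no real obstacle here — the only thing to be careful about is making sure every invocation of centrality (once in Lemma~\ref{lem:linking_rho_to_energy}, once in evaluating $\energy rf$) is legitimate, which it is since the lemma is stated for centered solutions. One might also remark that this immediately gives $\norm{\rho}_{\nu,2}\le\sqrt{3m}$ via Invariant~1, establishing~(\ref{eq:sqrt_bound_on_rho}) since $\norm{\rho}_{\nu,4}\le\norm{\rho}_{\nu,2}$; but the bound as stated in the lemma is the clean intermediate fact, and the $O(\sqrt m)$ consequence follows by feeding in $\norm{\nu}_1 \le 3m$.
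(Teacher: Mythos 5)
Your proof is correct and is essentially identical to the paper's: both invoke Lemma \ref{lem:linking_rho_to_energy} to identify $\norm{\rho}_{\nu,2}^{2}$ with $\mathcal{E}(\elf)$, bound it by $\mathcal{E}(f)$ via energy-minimality of the electrical $\sigma$-flow against the competing $\sigma$-flow $f$, and evaluate $\sum_{e}r_{e}f_{e}^{2}=\sum_{e}\nu_{e}$ using the centrality identity $r_{e}=\nu_{e}/f_{e}^{2}$. Nothing is missing.
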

\begin{proof}
By Lemma \ref{lem:linking_rho_to_energy} and (\ref{eq:def_r_e}),
we have that 
\begin{align*}
\norm{\rho}_{\nu,2}^{2} & =\mathcal{E}(\hat{f})\leq\mathcal{E}(f)=\sum_{e}r_{e}f_{e}^{2}=\sum_{e}\nu_{e}\left(\frac{f_{e}}{f_{e}}\right)^{2}=\sum_{e}\nu_{e}=\norm{\nu}_{1},
\end{align*}

where the inequality follows as $f$ is a $\sigma$-flow and, by the
virtue of being an electrical $\sigma$-flow, $\widehat{f}$ has to
have minimum among all the $\sigma$-flows. 
\end{proof}
Now, since by Invariant 1 , $\norm{\nu}_{1}\leq3m$, we can conclude
that 

\[
\|\rho\|_{\nu,4}^{2}\leq\|\rho\|_{\nu,2}^{2}\leq\norm{\nu}_{1}\leq3m
\]

and the bound \ref{eq:sqrt_bound_on_rho} follows.

With this upper bound on the $\norm{\rho}_{\nu,4}$ we can immediately
derive a bound on the running time required to obtain an exact solution.
This is summarized in the following theorem.
\begin{thm}
\label{thm:min-cost-flow-m32}We can produce an exact solution to
the unit-capacitated minimum cost $\sigma$-flow problem in $\tilde{O}\left(m^{3/2}\log W\right)$
time.\end{thm}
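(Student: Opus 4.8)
The plan is to assemble the pieces already developed in the excerpt into a complete algorithm and bound its running time. First, given a unit-capacity minimum-cost flow instance $G = (V,E,c)$ with demand $\sigma$, I would apply the reductions of Lemmas~\ref{lem:reduction-to-balance} and~\ref{lem:reduction-to-bmatching} in sequence: the first produces an instance where $f = \tfrac12\cdot\vec 1$ is feasible, and the second converts it to a bipartite $b$-matching instance $G'$, reinterpreted as an uncapacitated $\sigma'$-flow instance, in which $\tfrac12\cdot\vec 1$ remains feasible and the vertex set satisfies the degree/demand hypotheses of Lemma~\ref{lem:initial-centering}. Both reductions run in linear time and blow up $m$ and $n$ only by constant factors, and (checking costs) the new arcs incident to $v_{\textnormal{aux}}$ have cost $\|c\|_1 = O(mW)$, so $\|c'\|_\infty = O(mW)$ and $\log\|c'\|_\infty = O(\log(mW)) = O(\log W + \log m)$.

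Next, I would invoke Lemma~\ref{lem:initial-centering} to obtain an explicit initial $\widehat\mu$-centered primal-dual solution $(f,s,\nu)$ with $\widehat\mu = \|c'\|_\infty$ and $\|\nu\|_1 \le m$, so Invariant~1 holds (with room to spare). Then I run the improvement step of Theorem~\ref{thm:l_4_improvement_step} repeatedly. The bound~\eqref{eq:sqrt_bound_on_rho}, established via Lemmas~\ref{lem:linking_rho_to_energy} and~\ref{lem:rho_l_2_norm_bound} together with Invariant~1, guarantees $\|\rho\|_{\nu,4} \le \|\rho\|_{\nu,2} \le \sqrt{3m}$ at every centered solution, so Theorem~\ref{thm:l_4_improvement_step} lets us take $\delta = \Omega(m^{-1/2})$ in each iteration; one caveat I must address is that the theorem also requires $\|\kappa\|_{\nu,2}\le 1$ and produces $\nu' = \nu$, so Invariant~1 is preserved automatically and the $\sqrt m$ bound on $\|\rho\|_{\nu,4}$ remains valid for the next iteration — this is what makes the iteration count argument go through cleanly rather than degrading over time. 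Each iteration multiplies $\widehat\mu$ by at most $(1-\delta)$, and each iteration costs $\tilde O(m)$ time (the electrical flow is computed approximately via Theorem~\ref{thm:vanilla_SDD_solver} with inverse-polynomial accuracy, and Theorem~\ref{thm:l_4_improvement_step} already folds this into its $\tilde O(m)$ running time).

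It remains to count iterations until a rounding step recovers the exact optimum. Since the duality gap is at most $3m\widehat\mu$ by~\eqref{eq:average_duality_gap_bound}, and the optimal cost is an integer (costs are integral) while the primal-dual solution sandwiches it, once $3m\widehat\mu < 1$ — equivalently after $O(\sqrt m \cdot \log(m\widehat\mu_{\mathrm{init}})) = O(\sqrt m \log(mW))$ iterations, since $\widehat\mu$ starts at $O(mW)$ and shrinks geometrically at rate $1-\Omega(m^{-1/2})$ — the integral flow nearby is forced to be optimal and the rounding procedure of Section~\ref{sec:repair} (applied in the $b$-matching view) extracts it in $\tilde O(m)$ time. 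Reversing the two reductions, again in linear time, then yields an exact min-cost $\sigma$-flow in $G$; and since the cost of this flow is below $\|c'\|_1$, Lemma~\ref{lem:reduction-to-balance} confirms the $v_{\textnormal{aux}}$-arcs are unused, so it is genuinely a $\sigma$-flow in the original $G$. Multiplying $O(\sqrt m \log(mW))$ iterations by the $\tilde O(m)$ per-iteration cost, and absorbing $\log m$ factors into the $\tilde O(\cdot)$, gives the claimed $\tilde O(m^{3/2}\log W)$ total.

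I expect the main obstacle to be the rounding/recovery argument: Theorem~\ref{thm:l_4_improvement_step} only produces an \emph{approximately} central, approximately feasible fractional solution, and one must argue carefully that when $\widehat\mu$ (hence the duality gap) drops below the integrality granularity, a nearby integral $b$-matching is not merely feasible but optimal, and that it can be extracted efficiently — this is precisely what Section~\ref{sec:repair} is for, so for this theorem I would simply cite it. The remaining steps are bookkeeping: verifying that the reduction chain preserves the structural hypotheses of Lemma~\ref{lem:initial-centering} (the degree-$2$ condition on $Q = E$ and the half-degree demand condition on $P = V$), and tracking the $\log W$ versus $\log(mW)$ distinction, which is harmless inside $\tilde O(\cdot)$.
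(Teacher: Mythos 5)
Your proposal is correct and follows essentially the same route as the paper's proof: reduce via Section~\ref{sec:reduction}, initialize with Lemma~\ref{lem:initial-centering}, use the $\|\rho\|_{\nu,4}\le\sqrt{3m}$ bound to take $\delta=\Omega(m^{-1/2})$ in Theorem~\ref{thm:l_4_improvement_step}, and finish with the rounding of Section~\ref{sec:repair}. The only minor discrepancy is your stopping threshold: the repair procedure of Section~\ref{sec:repair} as stated takes a primal-dual pair with duality gap at most $m^{-2}$ (not merely $<1$), so one should iterate until $\widehat\mu=O(m^{-3})$ as the paper does, which costs only an extra $O(\sqrt m\log m)$ iterations and does not change the asymptotics.
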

\begin{proof}
Given an instance of the unit-capacitated minimum cost $\sigma$-flow,
we can apply the reduction from Section \ref{sec:reduction} in linear
time. Then, Lemma \ref{lem:initial-centering} establishes the initial
centering with $\hat{\mu}=\norm c_{\infty}\leq W$. We previously
saw that $\norm{\rho}_{\nu,4}=O(\sqrt{m})$. Therefore, according
to Theorem \ref{thm:l_4_improvement_step}, we can set $\delta=1/O(\sqrt{m})$,
and reduce $\hat{\mu}$ by a factor of $1-\frac{1}{O(\sqrt{m})}$
with every interior-point iteration. Therefore, in $O\left(m^{1/2}\left(\log m+\log W\right)\right)$
iterations we reduce $\hat{\mu}$ to $O(m^{-3})$. Using the fact
that $\norm{\nu}_{1}\leq3m$, the duality gap of this solution will
be $\sum_{e}\nu_{e}\hat{\mu}\leq m^{-2}$. Note that each iteration
requires $\tilde{O}(1)$ electrical flow computations, and each of
them can be implemented in near-linear time, according to Theorem
\ref{thm:vanilla_SDD_solver}. 

Therefore in $\tilde{O}\left(m^{3/2}\log W\right)$ time, we obtain
a feasible primal-dual solution with duality gap less than $m^{-2}$.
This can easily be converted to an integral solution in nearly-linear
time using the method described in Section \ref{sec:repair}. Hence
the total running time of the algorithm is $\tilde{O}(m^{3/2}\log W)$.\end{proof}

\section{Taking Longer Steps}

\label{sec:longer_steps}

In the previous section, we established that the amount of progress
we can make in one iteration of our framework is limited by the $\ell_{4}$
norm of the congestion, $\norm{\rho}_{\nu,4}$ - see Theorem \ref{thm:l_4_improvement_step}.
We then showed (cf. (\ref{eq:sqrt_bound_on_rho})) that this norm
is always upper bounded by $O(\sqrt{m})$ , which resulted in our
overall $\widetilde{O}\left(m^{3/2}\log W\right)$ time bound. 

Unfortunately, a priori, this upper bound is tight, i.e., it indeed
can happen that $\norm{\rho}_{\nu,4}$=$\norm{\rho}_{\nu,2}$$=\Omega(\sqrt{m}).$
In fact, this is exactly the reason why all classic analyses of interior-point
algorithms are able to obtain only an $O(\sqrt{m})$ iteration bound. 

To circumvent this problem, \cite{Madry13} introduced a perturbation
technique into the framework. These perturbations target arcs that
contribute a significant fraction of the norm $\norm{\rho}_{\nu,4}$
, by increasing their resistance (and thus the corresponding energy),
in order to discurage the emergence of such high contributing arcs
in the future. A careful analysis done in \cite{Madry13} shows that
such perturbations indeed ensure that there are sufficiently many
iterations with relatively small $\norm{\rho}_{\nu,4}$ norm to guarantee
ability to take longer steps, and thus converge faster. Unfortunately
the changes to the underlying optimization problem that these perturbations
introduced, although sufficiently mild to enable obtaining a result
for unit-capacity maximum flow, were too severe to enable solving
any of the weighted variants that we aim to tackle here. 

Our approach will also follow the same general outline. The crucial
difference though is that we use a different perturbation technique,
along with a somewhat simpler analysis. This technique still achieves
the desired goal of increasing the resistance of the perturbed arcs.
However, in big contrast to the technique used by \cite{Madry13},
our technique does not affect the costs of those arcs -- it affects
only their measure. Also, as an added benefit, our perturbation treatment
simplifies the analysis significantly. 

We describe our preconditioning technique in Section \ref{sub:Perturbing-the-problem}.
Also, in the table below we present a general outline of our algorithm.
(This algorithm will be later modified further to include a preconditioning
step.) Observe that this algorithm uses a stronger, $\ell_{3}$ norm
criterion for whether to make a perturbation or a progress step, instead
of the $\ell_{4}$ norm criterion that Theorem \ref{thm:l_4_improvement_step}
suggests. As we will see, the reason for that is that maintaining
such an $\ell_{3}$ norm condition will enable us to have a sufficiently
tight control over the change after each progress step of our potential
function: the energy of electrical flows $\elf$ associated with our
primal dual solutions.

Our goal is to obtain an $\tilde{O}(m^{1/2-\eta})$ bound on the overall
number of iterations, where we fix $\eta$ to be

\[
\eta=\frac{1}{14}
\]

\begin{algorithm}[H]
\begin{enumerate}
\item initialize primal and dual variables $(f,y)$ (as shown in Section
\ref{sec:basic_framework});
\item repeat $\cT\cdot m^{1/2-\eta}$ times
\item $\quad$while $\Vert\rho\Vert_{\nu,3}>\crho\cdot m^{1/2-\eta}$
\item $\quad\quad$perform perturbation (as shown in Section \ref{sub:Perturbing-the-problem})
\item $\quad$perform progress steps (as shown in Section \ref{sub:Making-Progress})
\end{enumerate}
\caption{Perturbed interior-point method (parameters: $\protect\crho=400\sqrt{3}\cdot\log^{1/3}W$,
$\protect\cT=3\protect\crho\log W$)\label{tab:proto-algo}}
\end{algorithm}

\subsection{Our Perturbation Technique\label{sub:Perturbing-the-problem}}

Let us start by describing our perturbation technique, which heavily
uses the structure of the $b$-matching instance obtained after applying
the reduction from Section \ref{sec:reduction}. We first show how
a perturbation is applied to an arc, then we define the set of arcs
that get perturbed in each iteration. As we will see, whenever we
perturb a particular arc (to increase its resistance) we always make
sure to perturb its ``partner'' arc, i.e., the unique arc sharing
a common vertex from the set $Q$ of the bipartition, as well. 
\begin{defn}
\label{def:partner}Given an arc $e=(u,v)$, with $u\in P$, $v\in Q$,
the \textit{partner arc} of $e$ is the unique arc $\bar{e}=(\bar{u},v)$,
$\bar{u}\in P$ sharing vertex $v$ with $e$.
\end{defn}

\subsubsection{\label{sub:Perturbing-an-arc}Perturbing an Arc.}

Let $e=(u,v)$ be an arc with cost $c_{uv}$ and vertex potentials
$y_{u}$, respectively $y_{v}$, and slack $s_{uv}=c_{uv}+y_{u}-y_{v}.$
Note that due to the structure of our $b$-matching instance (see
Section \ref{sec:reduction}), vertex $v$ is of degree 2. Let $\overline{e}=(\overline{u},v)$
be the partner arc that shares with $e$ this vertex $v$. We first
modify our dual solution by setting $y_{v}\leftarrow y_{v}-s_{uv}$.
This effectively doubles the resistance of $e$, defined as in (\ref{eq:def_r_e}),
which is our desired effect. 

Unfortunately, this update breaks centrality of \emph{both }arc $e$
and its partner arc $\bar{e}$. To counteract that, we first double
the measure $\nu_{e}$ of \textbf{$e$ }- this immediately restores
the centrality of that arc. Now, it remains to fix the centrality
of the partner arc $\overline{e}=(\overline{u},v)$. Specifically,
we need to deal with the fact that the slack $s_{\overline{e}}$ of
that partner arc gets increased by $s_{e}$. To fix this problem,
recall that the centrality condition for $\overline{e}$ guaranteed
that $s_{\overline{e}}f_{\overline{e}}=\nu_{\overline{e}}\hat{\mu}$.
So, we need to set the new measure $\nu_{\overline{e}}'$ such that
$(s_{\overline{e}}+s_{e})f_{\overline{e}}=\nu_{\overline{e}}'\hat{\mu}$.
Therefore we just set the new measure to be

\[
\nu_{\overline{e}}'=\frac{(s_{\overline{e}}+s_{e})f_{\overline{e}}}{\hat{\mu}}=\nu_{\overline{e}}+\frac{s_{e}f_{\overline{e}}}{\hat{\mu}}=\nu_{\overline{e}}+\nu_{e}\cdot\frac{f_{\overline{e}}}{f_{e}}
\]
Consequently, the total change in measure of that arc is 
\begin{equation}
\nu_{e}\left(1+\frac{f_{\overline{e}}}{f_{e}}\right)\leq\nu_{e}\left(1+\frac{1}{f_{e}}\right)\label{eq:measure-increase}
\end{equation}
 as in our instance we have that $f_{\overline{e}}\leq1$, since the
arcs are unit capacitated, and $f$ is always feasible.

We remark that we may want to perturb both an arc $e=(u,v)$ and its
partner $\bar{e}=(\bar{u},v)$. In this case, we can perturb the arcs
simultaneously by setting $y_{v}\leftarrow y_{v}-s_{uv}-s_{\bar{u}v}$,
and updating the measures: $\nu_{\bar{e}}\leftarrow2\nu_{\bar{e}}+\nu_{e}\cdot\frac{f_{\bar{e}}}{f_{e}}$,
$\nu_{e}\leftarrow2\nu_{e}+\nu_{\bar{e}}\cdot\frac{f_{e}}{f_{\bar{e}}}$.
This maintains centrality, and the bound from (\ref{eq:measure-increase})
still holds.

So, to summarize, one effect of the above operation is that it made
the resistance of the perturbed arc \textbf{\emph{$e$ }}double. As
we will see, similarily as it was the case in \cite{Madry13}, this
will enable us to ensure that the total number of perturbation steps
is not too large. Also, note that the above operation does not change
any vertex demands or costs. It only affects the dual solution and
the arcs' measure. Therefore, the only undesirable long term effect
of it is the measure increase, since it might lead to violation of
Invariant 1. \footnote{In fact, if we were to formulate our problem as a primal interior-point
method, one could think of these perturbations on arcs and their partners
as acting on both the lower and the upper barriers. In that formulation,
the barrier would be of the form $-\sum_{e}\nu_{e}\log f_{e}+\nu_{\bar{e}}\log(1-f_{e})$.
The reduction from Section \ref{sec:reduction} essentially eliminates
the upper barrier, in order to be make our problem amenable to a primal-dual
approach, which we preferred to use here.}

\subsubsection{Which Arcs to Perturb?}

As we have just seen, while perturbing an arc doubles its resistance,
this operation has the side effect of increasing the total measure.
To control the latter, undesirable effect, we show that every time
we need to pertrub the problem, we can actually select a subset of
arcs with the property that perturbing all of them increases the energy
by a lot while keeping the measure increase bounded. Ultimately, the
entire goal of the analysis will be to show that:
\begin{enumerate}
\item We do not need to perturb the problem more than $\tilde{O}(m^{1/2-\eta})$
times
\item The total increase in measure caused by the perturbations is at most
$2m$, thus maintaing Invariant 1.
\end{enumerate}
Below we define the subset of arcs that will get perturbed. Intuitively,
we only want to perturb arcs $e$ with high congestion $\rho_{e}$.
Furthermore, we choose to perturb them only if the amount of flow
they carry is not very small. This extra restriction enforces an upper
bound on the amount by which the measure of the perturbed arc increases,
as per equation \ref{eq:measure-increase}.

As we will see in Corollary \ref{cor:measure-increase-bounded-by-energy},
perturbing edge $e$ will increase total energy by at least a quantity
that is proportional to the amount of contribution to total energy
of that edge; therefore the total measure increase will be upper bounded
by a quantity proportional the total energy increase due to perturbations.

We will soon see that the total energy increase across iterations
is, as a matter of fact, bounded by $O\left(\cT\crho^{2}\cdot m^{3/2-3\eta}\right)$,
where $\cT$ and $\crho$ are some appropriately chosen constants,
which immediately yields the desired bound on the total measure increase.
\begin{defn}
An arc $e$ is \textit{perturbable} if $\frac{1}{f_{e}}\leq m^{1/2-3\eta}$
and $\rho_{e}\geq\sqrt{40\cT\crho^{2}}\cdot m^{1/2-3\eta}$. An arc
that is not perturbable is called \textit{unperturbable.} Denote by
$\mathcal{S}$ the set of all perturbable arcs.\label{def:perturbable}
\end{defn}
A useful property of perturbable arcs the way they are chosen enforces
a small increase in measure compared to that in energy during each
perturbation. We will make this property precise below, and it is
what we will be using for the remainder of the section.
\begin{cor}
A perturbable arc satisfies $1+\frac{1}{f_{e}}\leq C\rho_{e}^{2}$,
where $C=\frac{1}{20\cT\crho^{2}}\cdot m^{-1/2+3\eta}$.
\end{cor}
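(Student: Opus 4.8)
The plan is to derive the claimed bound directly from the two defining inequalities of perturbability in Definition~\ref{def:perturbable}, together with the feasibility of the flow $f$. First I would observe that since $f$ is always feasible and our $b$-matching instance is unit-capacitated, we have $f_e \le 1$, hence $1/f_e \ge 1$ and therefore $1 + \frac{1}{f_e} \le \frac{2}{f_e}$. Combining this with the first condition for perturbability, namely $\frac{1}{f_e} \le m^{1/2-3\eta}$, yields
\[
1 + \frac{1}{f_e} \;\le\; \frac{2}{f_e} \;\le\; 2\, m^{1/2-3\eta}.
\]

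Next I would use the second condition, $\rho_e \ge \sqrt{40\cT\crho^2}\cdot m^{1/2-3\eta}$, which upon squaring gives $\rho_e^2 \ge 40\cT\crho^2\cdot m^{1-6\eta}$. Multiplying through by $C = \frac{1}{20\cT\crho^2}\cdot m^{-1/2+3\eta}$ gives
\[
C\,\rho_e^2 \;\ge\; \frac{40\cT\crho^2}{20\cT\crho^2}\cdot m^{-1/2+3\eta}\cdot m^{1-6\eta} \;=\; 2\, m^{1/2-3\eta}.
\]
Chaining the two displays yields $1 + \frac{1}{f_e} \le 2m^{1/2-3\eta} \le C\rho_e^2$, which is exactly the claim. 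There is no real obstacle here: the corollary is a pure bookkeeping consequence of how the thresholds in Definition~\ref{def:perturbable} and the constant $C$ were chosen, and the only mild point worth spelling out is the use of $f_e \le 1$ to pass from $1 + 1/f_e$ to $2/f_e$.
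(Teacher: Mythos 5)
Your proof is correct and is exactly the intended (and omitted) justification in the paper: the corollary is stated there without proof as an immediate consequence of the two thresholds in Definition~\ref{def:perturbable}, and your chaining $1+\tfrac{1}{f_e}\le 2m^{1/2-3\eta}\le C\rho_e^2$ is the right bookkeeping, with the factor $40/20=2$ matching precisely. The use of $f_e\le 1$ is also consistent with how the paper itself argues in equation~(\ref{eq:measure-increase}).
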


\subsection{Runtime Analysis}

The analysis of our algorithm is based on two major parts. 

The first part shows that throughout the execution of the algorithm,
the total energy increase caused by perturbations can not be too large.
This will automatically imply that total measure increase will be
bounded by $2m$, and therefore Invariant 1 is preserved. The key
idea is that, since they are applied only when the $\ell_{3}$ norm
of the congestion is ``small'' (i.e. $\crho\cdot m^{1/2-\eta}$),
progress steps do not decrease the energy by a lot (i.e. $O(\crho^{2}\cdot m^{1-2\eta})$
, as we will soon see). However, since total measure, and hence energy,
was $O(m)$ to begin with, perturbations could not have increased
energy by more than progress steps have decreased it overall. Over
the $O\left(\cT\cdot m^{1/2-\eta}\right)$ iterations, progress steps
decrease energy by at most $O(\cT\crho^{2}\cdot m^{3/2-3\eta})$;
therefore this is also a bound on the total increase in energy. 

The second part use the invariant that perturbable edges consume most
of the energy in the graph, in order to argue that the number of perturbations
is small. While a priori we only had a bound on the time required
for progress steps, with no guarantee on how many iterations the algorithm
spends performing perturbations, this argument provides a bound on
the number of perturbations, and hence on the running time of the
algorithm. Showing that, every time we perform a perturbation, energy
increases by at least $\Omega\left(\crho\cdot m^{1-2\eta}\right)$
implies, together with the bound proven in the first part, that throughout
the execution of the algorithm we perform only $O\left(\cT\crho\cdot m^{1/2-\eta}\right)$
perturbations. This bounds the running time by $\tilde{O}\left(\cT\crho\cdot m^{3/2-\eta}\right)$,
since perturbing the problem takes only $\tilde{O}(m)$ time.

The invariant that the second part relies on is motivated by the fact
that, whenever we have to perturb the problem, the $\ell_{3}$ norm
of the congestion vector is large, so the energy of the system is
also large (at least $\crho^{2}\cdot m^{1-2\eta}$). Since perturbable
arcs are highly congested, we expect them to contribute most of the
energy; so perturbing those should increase the energy of the system
by a quantity proportional to their current contribution to energy.
Maintaining this invariant requires a finer control over how the electrical
flows behave, and will be guaranteed via a modification of the algorithm,
which will be carefully analyzed in Section \ref{sec:preconditioning}.
However, the future modification will not affect any of the analysis
described in this section. Therefore this section will be concerned
only with proving the runtime guarantee, assuming validity of the
invariant.

\subsubsection{Bounding the Total Increase in Measure and Energy}

We formalize the intuition described at the beginning of the section.
First, we show that Theorem \ref{thm:l_4_improvement_step} provides
a bound on how much energy can decrease during one progress step.
This relies on the following lemma, which allows us to lower bound
the energy of an electrical flow.
\begin{lem}
\label{lem:energy-lb}Let $\mathcal{E}_{r}$ be the energy of the
electrical flow in a graph with demands $\sigma$ and resistances
$r$. Then 
\begin{equation}
\mathcal{E}_{r}=\max_{\phi}\left(2\sigma^{\top}\phi-\sum_{e=(u,v)}\frac{\left(\phi_{u}-\phi_{v}\right)^{2}}{r_{e}}\right)\label{eq:energy-as-maximization}
\end{equation}
\end{lem}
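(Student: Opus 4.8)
The plan is to prove the variational identity $\mathcal{E}_r = \max_\phi \left(2\sigma^\top \phi - \sum_{e=(u,v)} \frac{(\phi_u - \phi_v)^2}{r_e}\right)$ by exhibiting the maximizer explicitly and using the characterization of electrical flows via potentials. First I would recall from equation \eqref{eq:elec_flow_lin_system} that the electrical $\sigma$-flow is induced by potentials $\vphi^*$ solving the Laplacian system $L\vphi^* = \sigma$, where $L = BR^{-1}B^\top$, and that by \eqref{eq:def_energy_potentials} its energy equals $\mathcal{E}_r = \sum_{e=(u,v)} \frac{(\phi^*_v - \phi^*_u)^2}{r_e} = (\vphi^*)^\top L \vphi^*$. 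Using $L\vphi^* = \sigma$, this can be rewritten as $\mathcal{E}_r = (\vphi^*)^\top \sigma = 2\sigma^\top \vphi^* - (\vphi^*)^\top L \vphi^*$, which shows that the claimed maximum is \emph{attained} at $\phi = \vphi^*$ with value exactly $\mathcal{E}_r$.

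Next I would show that $\vphi^*$ is in fact the maximizer, i.e., that for all $\phi$ we have $2\sigma^\top \phi - \phi^\top L \phi \le \mathcal{E}_r$. Define $g(\phi) := 2\sigma^\top \phi - \phi^\top L \phi$. This is a concave quadratic in $\phi$ (since $L \succeq 0$), so any critical point is a global maximum. Taking the gradient, $\nabla g(\phi) = 2\sigma - 2L\phi$, which vanishes exactly when $L\phi = \sigma$ — precisely the defining equation of $\vphi^*$. Since the underlying graph is connected, any two solutions of $L\phi = \sigma$ differ by a multiple of the all-ones vector, and both $\sigma^\top \phi$ (because $\sigma \perp \vones$) and $\phi^\top L\phi$ (because $L\vones = 0$) are invariant under such translations, so $g$ takes the same value at every critical point. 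Hence $\max_\phi g(\phi) = g(\vphi^*) = \mathcal{E}_r$, completing the proof.

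Alternatively — and perhaps cleaner to present — one can complete the square directly: for any $\phi$, writing $\phi = \vphi^* + \psi$ and expanding, $g(\vphi^* + \psi) = g(\vphi^*) + 2(\sigma - L\vphi^*)^\top \psi - \psi^\top L \psi = g(\vphi^*) - \psi^\top L\psi \le g(\vphi^*)$, using $L\vphi^* = \sigma$ and $L \succeq 0$. This makes the inequality direction completely transparent and avoids any appeal to calculus or connectivity arguments about the null space. I would likely write it this way.

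I do not anticipate a serious obstacle here; the only mild subtlety is bookkeeping the two orientation conventions — the sign of $(\phi_u - \phi_v)$ versus $(\phi_v - \phi_u)$ in \eqref{eq:potential_flow_def} versus \eqref{eq:def_energy_potentials} — but since every term enters squared, the quadratic form $\sum_{e=(u,v)} \frac{(\phi_u - \phi_v)^2}{r_e} = \phi^\top L \phi$ is unambiguous and the orientation is irrelevant. One should just note at the outset that $\sum_{e=(u,v)} \frac{(\phi_u-\phi_v)^2}{r_e} = \phi^\top L \phi$ regardless of how edges are oriented, and then the whole argument is two lines of linear algebra.
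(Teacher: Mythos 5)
Your proposal is correct and follows essentially the same route as the paper: rewrite the objective as $2\sigma^{\top}\phi-\phi^{\top}L\phi$, observe that first-order optimality gives $L\phi=\sigma$, and plug in to recover $\mathcal{E}_{r}$. Your completion-of-the-square and null-space remarks simply make explicit the concavity and translation-invariance that the paper's terser proof leaves implicit.
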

\begin{proof}
The result can be derived by letting $L$ be the Laplacian corresponding
to the graph with resistances $r$, and rewriting the above maximization
problem as $\max_{\phi}2\sigma^{\top}\phi-\phi^{\top}L\phi$ . By
first order optimality conditions we get that the maximizer satisfies
$L\phi=\sigma$, hence $\phi=L^{+}\sigma$. Plugging in makes the
expression equal to $\sigma^{T}L^{+}\sigma$, which is precisely the
energy $\mathcal{E}_{r}$.
\end{proof}
In our context, this lemma enables us to provide a more convenient
formula for lower bounding the new value of energy after resistances
change. 
\begin{lem}
\label{lem:better-energy-lb}Let $\mathcal{E}_{r}$ be the energy
of the electrical flow corresponding to a centered instance with resistnaces
$r$, and let $\mathcal{E}_{r'}$ be the energy of the electrical
flow corresponding to the new centered instance with resistances $r'$,
obtained after applying one progress step or one perturbation. Then
the change in energy can be lower bounded by:

\begin{equation}
\mathcal{E}_{r'}-\mathcal{E}_{r}\geq\sum_{e=(u,v)}\nu_{e}\rho_{e}^{2}(1-r_{e}/r_{e'})\label{eq:energy-change-resistances}
\end{equation}
\end{lem}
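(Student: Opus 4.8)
The plan is to apply the variational characterization of energy from Lemma \ref{lem:energy-lb} twice, once with the old resistances $r$ and once with the new resistances $r'$, and to use a \emph{suboptimal} choice of potentials for the new energy in order to get a clean lower bound. Specifically, let $\pot$ be the optimal potentials for the electrical flow with resistances $r$, so that by Lemma \ref{lem:energy-lb} we have $\mathcal{E}_{r}=2\sigma^{\top}\pot-\sum_{e=(u,v)}(\pot_{u}-\pot_{v})^{2}/r_{e}$. The key observation is that for the new instance, the $\sigma$-demands are unchanged under both a progress step and a perturbation (this is exactly the point emphasized in Section~\ref{sec:longer_steps}: perturbations do not alter demands, and progress steps keep $f'$ a $\sigma$-flow). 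Therefore the \emph{same} potentials $\pot$ remain feasible test vectors for the maximization in \eqref{eq:energy-as-maximization} applied with resistances $r'$, and hence
\[
\mathcal{E}_{r'}\geq 2\sigma^{\top}\pot-\sum_{e=(u,v)}\frac{(\pot_{u}-\pot_{v})^{2}}{r_{e}'}.
\]

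Subtracting the exact expression for $\mathcal{E}_{r}$ from this inequality, the linear term $2\sigma^{\top}\pot$ cancels, leaving
\[
\mathcal{E}_{r'}-\mathcal{E}_{r}\geq\sum_{e=(u,v)}(\pot_{u}-\pot_{v})^{2}\left(\frac{1}{r_{e}}-\frac{1}{r_{e}'}\right)=\sum_{e=(u,v)}\frac{(\pot_{u}-\pot_{v})^{2}}{r_{e}}\left(1-\frac{r_{e}}{r_{e}'}\right).
\]
It remains to identify $(\pot_{u}-\pot_{v})^{2}/r_{e}$ with $\nu_{e}\rho_{e}^{2}$. But $\pot$ are precisely the potentials inducing the electrical flow $\elf$ associated with the old (centered) solution, so by \eqref{eq:potential_flow_def} we have $\elf_{e}=(\pot_{v}-\pot_{u})/r_{e}$, hence $(\pot_{u}-\pot_{v})^{2}/r_{e}=r_{e}\elf_{e}^{2}$. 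Using $r_{e}=\nu_{e}/f_{e}^{2}$ from \eqref{eq:def_r_e} (valid by centrality of the old instance) and the definition $\rho_{e}=|\elf_{e}|/f_{e}$ from \eqref{eq:def_congestion_vector}, we get $r_{e}\elf_{e}^{2}=\nu_{e}(\elf_{e}/f_{e})^{2}=\nu_{e}\rho_{e}^{2}$, which is exactly \eqref{eq:energy-change-resistances}.

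I do not expect a genuine obstacle here; the only subtlety to be careful about is making sure the demand vector really is the same for the "before" and "after" instances, so that $\pot$ is a legitimate test vector on both sides — this is what lets the linear terms cancel and is the whole reason the lemma holds uniformly for both progress steps and perturbations. (A small notational point: the right-hand side of \eqref{eq:energy-change-resistances} involves the \emph{old} $\nu_{e}$ and $\rho_{e}$, which is consistent, since those are the quantities tied to the old potentials $\pot$ and resistances $r$ via centrality; no appeal to the new instance's centrality is needed for the bound itself.)
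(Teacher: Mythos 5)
Your proposal is correct and is essentially identical to the paper's own proof: both certify the new energy via Lemma \ref{lem:energy-lb} using the old optimal potentials as a suboptimal test vector, cancel the linear term $2\sigma^{\top}\pot$, and identify $(\pot_{u}-\pot_{v})^{2}/r_{e}=r_{e}\elf_{e}^{2}=\nu_{e}\rho_{e}^{2}$ via centrality of the old instance. The only difference is that you make explicit the (correct) observation that the demand vector is unchanged, which the paper leaves implicit.
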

\begin{proof}
Let $\hat{\phi}$ be the potentials that maximize the expression from
(\ref{eq:energy-as-maximization}) for resistances $\mathcal{E}_{r}$.
Therefore we have

\[
\mathcal{E}_{r}=2\sigma^{\top}\hat{\phi}-\sum_{e=(u,v)}\frac{\left(\hat{\phi}_{u}-\hat{\phi}_{v}\right)^{2}}{r_{e}}
\]

Using the same set of potentials in order to certify a lower bound
on the new energy, we obtain:

\begin{align*}
\mathcal{E}_{r'} & \geq2\sigma^{\top}\hat{\phi}-\sum_{e=(u,v)}\frac{\left(\hat{\phi}_{u}-\hat{\phi}_{v}\right)^{2}}{r_{e}'}\\
 & =2\sigma^{\top}\hat{\phi}-\sum_{e=(u,v)}\frac{\left(\hat{\phi}_{u}-\hat{\phi}_{v}\right)^{2}}{r_{e}}+\sum_{e=(u,v)}\frac{\left(\hat{\phi}_{u}-\hat{\phi}_{v}\right)^{2}}{r_{e}}\left(1-\frac{r_{e}}{r_{e}'}\right)\\
 & =\mathcal{E}_{r}+\sum_{e=(u,v)}\frac{\left(\hat{\phi}_{u}-\hat{\phi}_{v}\right)^{2}}{r_{e}}\left(1-\frac{r_{e}}{r_{e}'}\right)\\
 & =\mathcal{E}_{r}+\sum_{e=(u,v)}\nu_{e}\rho_{e}^{2}\left(1-\frac{r_{e}}{r_{e}'}\right)
\end{align*}

For the last identity we used the fact that if $\hat{f}$ is the eletrical
flow corresponding to potentials $\hat{\phi}$, then $\frac{\left(\hat{\phi}_{u}-\hat{\phi}_{v}\right)^{2}}{r_{e}}=r_{e}\hat{f}_{e}^{2}=\frac{\nu_{e}}{f_{e}^{2}}\hat{f}_{e}^{2}=\nu_{e}\rho_{e}^{2}$.
\end{proof}
A first application of this lemma is that it enables us to lower bound
the increase in energy when perturbing arcs.
\begin{lem}
\label{lem:energy-increase-lb}After perturbing arcs in $\mathcal{S}$,
energy increases by at least $\frac{1}{2}\norm{\rho_{\mathcal{S}}}_{\nu,2}^{2}$.\end{lem}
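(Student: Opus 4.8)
The plan is to apply Lemma~\ref{lem:better-energy-lb} with $r$ the resistances before the perturbation and $r'$ the resistances after perturbing all arcs of $\mathcal{S}$. Recall that perturbing an arc $e$ doubles its resistance, so for $e\in\mathcal{S}$ we have $r_e'=2r_e$ and hence $1-r_e/r_e'=\tfrac12$, while for $e\notin\mathcal{S}$ the resistance is unchanged, so $1-r_e/r_e'=0$. Plugging this into \eqref{eq:energy-change-resistances} gives
\[
\mathcal{E}_{r'}-\mathcal{E}_r\;\geq\;\sum_{e=(u,v)}\nu_e\rho_e^2\Bigl(1-\frac{r_e}{r_e'}\Bigr)\;=\;\frac12\sum_{e\in\mathcal{S}}\nu_e\rho_e^2\;=\;\frac12\norm{\rho_{\mathcal{S}}}_{\nu,2}^2,
\]
which is exactly the claimed bound.

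There is, however, one subtlety that must be addressed before this is clean: Lemma~\ref{lem:better-energy-lb} is stated for \emph{one} progress step or \emph{one} perturbation, whereas here we perturb a whole set $\mathcal{S}$ of arcs at once. The fix is either to observe that Lemma~\ref{lem:better-energy-lb}'s proof only uses the relation $r_e'\geq r_e$ edgewise together with the identity $\frac{(\hat\phi_u-\hat\phi_v)^2}{r_e}=\nu_e\rho_e^2$ at the \emph{pre-perturbation} centered solution --- and all of this holds verbatim when many resistances are changed simultaneously --- or to peel off the perturbations one arc at a time, noting the energy is monotone nondecreasing under each individual resistance increase, so the total gain is at least the sum of the per-arc gains computed against the original potentials. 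I would simply note that the argument of Lemma~\ref{lem:better-energy-lb} applies unchanged with $r'$ obtained by an arbitrary edgewise increase of resistances, since $\hat\phi$ certifies a lower bound on $\mathcal{E}_{r'}$ regardless of how many entries of $r$ were modified.

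A second point to keep track of is that the measure-doubling that accompanies a perturbation also changes $\nu_e$ (and $\nu_{\bar e}$), which could in principle affect the quantity $\nu_e\rho_e^2 = \frac{(\hat\phi_u-\hat\phi_v)^2}{r_e}$ appearing in the lemma. But in the formula \eqref{eq:energy-change-resistances} the coefficients $\nu_e\rho_e^2$ are evaluated at the \emph{original} centered instance (that is how the last line of the proof of Lemma~\ref{lem:better-energy-lb} is derived), so the right-hand side is $\tfrac12\sum_{e\in\mathcal{S}}\nu_e\rho_e^2$ with the \emph{old} $\nu$ and $\rho$ --- precisely $\tfrac12\norm{\rho_{\mathcal{S}}}_{\nu,2}^2$ with the notation of the statement. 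The only thing I expect to require a sentence of care, and the main (minor) obstacle, is making the ``perturb one arc at a time, energy only goes up'' reduction rigorous enough that the simultaneous perturbation of $\mathcal{S}$ is justified; everything else is a direct substitution into Lemma~\ref{lem:better-energy-lb}.
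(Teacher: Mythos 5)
Your proof is correct and follows the paper's argument exactly: apply Lemma \ref{lem:better-energy-lb}, note that perturbed arcs have their resistance exactly doubled, and drop the remaining (nonnegative) terms. One small correction: resistances of arcs outside $\mathcal{S}$ are \emph{not} necessarily unchanged --- the partner arc of a perturbed arc also has its slack, and hence its resistance, increased --- so your middle equality should really be an inequality; since Lemma \ref{lem:better-energy-lb} only needs $1-r_e/r_e'\ge 0$ for those arcs the bound still holds, and the paper phrases this as ``the others can only increase.''
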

\begin{proof}
According to the effects of the perturbation described in Section
\ref{sub:Perturbing-an-arc}, all resistances of arcs in $\mathcal{S}$
get doubled, while the others can only increase. Therefore, applying
Lemma \ref{lem:better-energy-lb}, we obtain a lower bound on the
energy increase:
\begin{align*}
\mathcal{E}_{r'}-\mathcal{E}_{r} & \geq\sum_{e\in\mathcal{S}}\nu_{e}\rho_{e}^{2}\left(1-\frac{r_{e}}{2r_{e}}\right)\\
 & =\frac{1}{2}\sum_{e\in\mathcal{S}}\nu_{e}\rho_{e}^{2}\\
 & =\frac{1}{2}\norm{\rho_{\mathcal{S}}}_{\nu,2}^{2}
\end{align*}

\end{proof}
An immediate corollary is that the increase in energy during a perturbation
upper bounds the increase in measure.
\begin{cor}
\label{cor:measure-increase-bounded-by-energy}If a perturbation increases
energy by $\Delta$, then the total measure increases by at most $2C\cdot\Delta$.\end{cor}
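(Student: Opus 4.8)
The plan is to chain three facts that are already in hand: the per-arc bound on how much a single perturbation inflates the measure, the defining property of perturbable arcs, and the energy lower bound from Lemma~\ref{lem:energy-increase-lb}. No genuinely new estimate is needed; the argument is pure bookkeeping.

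First I would recall from Section~\ref{sub:Perturbing-an-arc} that perturbing an arc $e\in\mathcal{S}$ -- together with the accompanying fix to the measure of its partner $\bar e$ -- increases the total measure by at most $\nu_e\left(1+\frac{1}{f_e}\right)$, by~(\ref{eq:measure-increase}); as noted there, this bound holds whether we perturb $e$ by itself or perturb $e$ and $\bar e$ simultaneously, so summing over $e\in\mathcal{S}$ accounts for the entire measure increase caused by one perturbation step, namely it is at most $\sum_{e\in\mathcal{S}}\nu_e\left(1+\frac{1}{f_e}\right)$. Next I would substitute the bound $1+\frac{1}{f_e}\le C\rho_e^2$, valid for every perturbable arc by the Corollary following Definition~\ref{def:perturbable}, with $C=\frac{1}{20\cT\crho^2}\cdot m^{-1/2+3\eta}$. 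This turns the previous sum into $C\sum_{e\in\mathcal{S}}\nu_e\rho_e^2 = C\,\norm{\rho_{\mathcal{S}}}_{\nu,2}^2$.

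Finally I would invoke Lemma~\ref{lem:energy-increase-lb}, which says that the energy increase $\Delta$ produced by the perturbation satisfies $\Delta\ge\tfrac12\norm{\rho_{\mathcal{S}}}_{\nu,2}^2$, i.e.\ $\norm{\rho_{\mathcal{S}}}_{\nu,2}^2\le 2\Delta$. Combining this with the previous bound gives that the total measure increase is at most $2C\Delta$, which is exactly the claim. The only place where a little care is warranted is the very first step: one must make sure the per-arc bound~(\ref{eq:measure-increase}) is applied in a way that does not double-count a partner pair when both $e$ and $\bar e$ lie in $\mathcal{S}$, but this is precisely the point addressed in the remark at the end of Section~\ref{sub:Perturbing-an-arc}, so there is no real obstacle.
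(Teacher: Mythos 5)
Your proposal is correct and follows exactly the paper's argument: bound the per-arc measure increase by $\nu_e\left(1+\frac{1}{f_e}\right)$ via (\ref{eq:measure-increase}), convert this to $C\nu_e\rho_e^2$ using the defining property of perturbable arcs, sum to get $C\norm{\rho_{\mathcal{S}}}_{\nu,2}^2$, and finish with the energy lower bound $\Delta\geq\frac{1}{2}\norm{\rho_{\mathcal{S}}}_{\nu,2}^2$ from Lemma~\ref{lem:energy-increase-lb}. Your extra remark about not double-counting partner pairs is a reasonable point of care that the paper's own proof leaves implicit.
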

\begin{proof}
By definition, the arcs we perturb satisfy $1+\frac{1}{f_{e}}\leq C\rho_{e}^{2}$.
According to (\ref{eq:measure-increase}), the measure increase cause
by perturbing an arc $e$ is at most $\nu_{e}\left(1+\frac{1}{f_{e}}\right)$.
Therefore, perturbing all the arcs in $\hat{\mathcal{S}}$, increases
measure by at most $\sum_{e\in\hat{\mathcal{S}}}\nu_{e}\cdot C\rho_{e}^{2}=C\cdot\norm{\rho_{\mathcal{S}}}_{\nu,2}^{2}$.
But Lemma \ref{lem:energy-increase-lb} shows that $\Delta\geq\frac{1}{2}\norm{\rho_{\mathcal{S}}}_{\nu,2}^{2}$.
Combining these two bounds yields the result.
\end{proof}
While Lemma \ref{lem:energy-increase-lb} tells us that perturbations
increase energy, we can show that progress steps do not decrease it
by too much, using another application of Lemma \ref{lem:better-energy-lb}.
\begin{lem}
\label{lem:progress-energy-decrease}Let $\mathcal{E}_{r}$ be the
energy of an electrical flow corresponding to a centered solution
with congestion vector $\rho$, and let $\mathcal{E}_{r'}$ be the
new energy after applying one progress step. Then $\mathcal{E}_{r'}\geq\mathcal{E}_{r}-5\cdot\norm{\rho}_{\nu,3}^{2}$.\end{lem}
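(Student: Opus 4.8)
The plan is to combine Theorem~\ref{thm:l_4_improvement_step} with the energy lower bound of Lemma~\ref{lem:better-energy-lb}. Theorem~\ref{thm:l_4_improvement_step} tells us that after one progress step $r_e' \ge (1 + 4\delta\rho_e + \kappa_e)^{-1} r_e$ for some $\kappa$ with $\|\kappa\|_{\nu,2} \le 1$; since $\nu_e \ge 1$ this forces $|\kappa_e| \le 1$ for every arc, so $1 + 4\delta\rho_e + \kappa_e \ge 4\delta\rho_e \ge 0$ (and is strictly positive wherever $\rho_e > 0$), which lets us rearrange to $1 - r_e/r_e' \ge -(4\delta\rho_e + \kappa_e)$, a bound that is anyway trivial on arcs with $\rho_e = 0$. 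Feeding this into $\mathcal{E}_{r'} - \mathcal{E}_r \ge \sum_{e=(u,v)} \nu_e \rho_e^2 (1 - r_e/r_e')$ from Lemma~\ref{lem:better-energy-lb} gives
\[
\mathcal{E}_{r'} - \mathcal{E}_r \;\ge\; -4\delta \sum_e \nu_e \rho_e^3 \;-\; \sum_e \nu_e \rho_e^2 \kappa_e \;=\; -4\delta\,\|\rho\|_{\nu,3}^{3} \;-\; \sum_e \nu_e \rho_e^2 \kappa_e ,
\]
so it remains to bound these two error terms by a small multiple of $\|\rho\|_{\nu,3}^2$.

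For the $\kappa$ term I would use Cauchy--Schwarz with the weights $\nu_e$: $\sum_e \nu_e \rho_e^2 \kappa_e \le \big(\sum_e \nu_e \rho_e^4\big)^{1/2}\big(\sum_e \nu_e \kappa_e^2\big)^{1/2} = \|\rho\|_{\nu,4}^2\,\|\kappa\|_{\nu,2} \le \|\rho\|_{\nu,4}^2 \le \|\rho\|_{\nu,3}^2$, where the last step is the monotonicity $\|\rho\|_{\nu,q} \le \|\rho\|_{\nu,p}$ for $q \ge p$, valid because $\nu_e \ge 1$ (it reduces, after substituting $y_e := \nu_e^{1/4}\rho_e$, to the usual $\ell_p$-norm monotonicity for the counting measure). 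For the first term, a progress step uses a step length $\delta$ with $\delta \le 1/(8\crho m^{1/2-\eta})$, which is admissible in Theorem~\ref{thm:l_4_improvement_step} because a progress step is taken only when $\|\rho\|_{\nu,3}\le \crho m^{1/2-\eta}$ and hence $\|\rho\|_{\nu,4}\le\|\rho\|_{\nu,3}\le\crho m^{1/2-\eta}$; then $4\delta\,\|\rho\|_{\nu,3}^{3} \le \|\rho\|_{\nu,3}^{3}/(2\crho m^{1/2-\eta}) \le \tfrac12\|\rho\|_{\nu,3}^2$, again using $\|\rho\|_{\nu,3}\le\crho m^{1/2-\eta}$. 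Adding the two bounds yields $\mathcal{E}_{r'} - \mathcal{E}_r \ge -\tfrac32\|\rho\|_{\nu,3}^2 \ge -5\|\rho\|_{\nu,3}^2$; the constant $5$ is loose on purpose, so there is no need to track the estimates tightly.

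The only genuinely delicate point -- and precisely the reason the algorithm monitors $\|\rho\|_{\nu,3}$ instead of the $\|\rho\|_{\nu,4}$ that Theorem~\ref{thm:l_4_improvement_step} natively hands us -- is controlling the first-order term $\delta\,\|\rho\|_{\nu,3}^3$: if the step length were tied only to $\|\rho\|_{\nu,4}$ one would be left with $\|\rho\|_{\nu,3}^3/\|\rho\|_{\nu,4}$, which need not be bounded by a multiple of $\|\rho\|_{\nu,3}^2$. Capping $\delta$ against $\crho m^{1/2-\eta}$ (and exploiting that this also upper bounds $\|\rho\|_{\nu,3}$ during any progress step) is what collapses the cube to the desired square; the remainder is the routine Cauchy--Schwarz and $\ell_p$-monotonicity bookkeeping, together with the sign/positivity check on $1 + 4\delta\rho_e + \kappa_e$ noted above.
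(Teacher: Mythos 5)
Your proof is correct, and up to one step it is the same argument as the paper's: combine Lemma \ref{lem:better-energy-lb} with the resistance bound of Theorem \ref{thm:l_4_improvement_step}, split off the $\kappa$ term by Cauchy--Schwarz, and finish using $\norm{\kappa}_{\nu,2}\le 1$ together with the monotonicity $\norm{\rho}_{\nu,4}\le\norm{\rho}_{\nu,3}$ (valid since $\nu_e\ge 1$). The one place you genuinely diverge is the first-order term $4\delta\norm{\rho}_{\nu,3}^{3}$. The paper bounds it by $4\norm{\rho}_{\nu,3}^{3}/\norm{\rho}_{\nu,4}$ via $\delta\le 1/(8\norm{\rho}_{\nu,4})$ and then asserts this is at most $4\norm{\rho}_{\nu,3}^{2}$; that last step needs $\norm{\rho}_{\nu,3}\le\norm{\rho}_{\nu,4}$, which is the reverse of the inequality that actually holds for measures with $\nu_e\ge1$, so as literally written it is not justified. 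You instead invoke the algorithmic facts that a progress step is only taken when $\norm{\rho}_{\nu,3}\le\crho\, m^{1/2-\eta}$ and that the step size is capped at $\delta\le 1/(8\crho\, m^{1/2-\eta})$ (admissible in Theorem \ref{thm:l_4_improvement_step} since $\norm{\rho}_{\nu,4}\le\norm{\rho}_{\nu,3}\le\crho\, m^{1/2-\eta}$), which yields $4\delta\norm{\rho}_{\nu,3}^{3}\le\tfrac12\norm{\rho}_{\nu,3}^{2}$. This is the cleaner and more defensible route, and it is consistent with how the lemma is actually applied in Lemma \ref{lem:total-energy-measure-increase}; your closing remark correctly identifies why the $\ell_3$ threshold, rather than the $\ell_4$ quantity the improvement theorem natively provides, is what collapses the cube to a square. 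The only caveat is that your argument uses hypotheses beyond the literal statement of the lemma (the $\ell_3$ threshold and the capped $\delta$), but since the lemma is only ever invoked in that regime, this is harmless.
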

\begin{proof}
Combining Lemma \ref{lem:better-energy-lb} with Theorem \ref{thm:l_4_improvement_step}
we obtain:

\begin{align*}
\mathcal{E}_{r'}-\mathcal{E}_{r} & \geq\sum_{e=(u,v)}\nu_{e}\rho_{e}^{2}\left(1-\frac{r_{e}}{r_{e}'}\right)\\
 & \geq\sum_{e=(u,v)}\nu_{e}\rho_{e}^{2}\left(-4\delta\rho_{e}-\kappa_{e}\right)\\
 & =-4\delta\cdot\sum_{e=(u,v)}\nu_{e}\rho_{e}^{3}-\sum_{e=(u,v)}\nu_{e}\rho_{e}^{2}\kappa_{e}\\
 & =-4\delta\cdot\norm{\rho}_{\nu.3}^{3}-\sum_{e=(u,v)}\sqrt{\nu_{e}}\rho_{e}^{2}\cdot\sqrt{\nu_{e}}\kappa_{e}\\
 & \geq-4\frac{\norm{\rho}_{\nu,3}^{3}}{\norm{\rho}_{\nu,4}}-\sqrt{\left(\sum_{e=(u,,v)}\nu_{e}\rho_{e}^{4}\right)\left(\sum_{e=(u,v)}\nu_{e}\kappa_{e}^{2}\right)} & \textnormal{(by Cauchy-Schwarz)}\\
 & \geq-4\norm{\rho}_{\nu,3}^{2}-\sqrt{\norm{\rho}_{\nu,4}^{4}\cdot\norm{\kappa}_{\nu,2}^{2}}\\
 & =-4\norm{\rho}_{\nu,3}^{2}-\norm{\rho}_{\nu,4}^{2} & \textnormal{(using \ensuremath{\norm{\kappa}_{\nu,2}^{2}\leq1})}\\
 & \geq-5\norm{\rho}_{\nu,3}^{2}
\end{align*}

With this tool in hand we can now upper bound the total energy increase
caused by perturbations.\end{proof}
\begin{lem}
\label{lem:total-energy-measure-increase}The total energy increase
due to perturbations is at most $16\cT\crho^{2}\cdot m^{3/2-3\eta}$.
Furthermore, the total measure always satisfies $\norm{\nu}_{1}\leq3m$,
i.e. Invariant 1 is preserved.\footnote{While the constants provided here are worse than those seen in the
proof, we will use these loose bounds in order to accomodate some
future changes in the algorithm.}\end{lem}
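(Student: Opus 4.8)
The plan is to combine the two facts established above---that each progress step can decrease energy by at most $5\norm{\rho}_{\nu,3}^{2}$ (Lemma~\ref{lem:progress-energy-decrease}), and that during every iteration in which the while-loop of Algorithm~\ref{tab:proto-algo} exits we have $\norm{\rho}_{\nu,3}\leq\crho\cdot m^{1/2-\eta}$---into a global accounting over the entire run. First I would bound the total energy decrease attributable to progress steps. Each iteration of the outer loop performs $\otilde(1)$ progress steps (this is the content of Theorem~\ref{thm:l_4_improvement_step}: a bounded number of electrical-flow corrections restores centrality), and at the moment each such step is taken the congestion $\ell_3$-norm satisfies $\norm{\rho}_{\nu,3}^{2}\leq\crho^{2}m^{1-2\eta}$ because the while-loop has already driven it below the threshold $\crho m^{1/2-\eta}$. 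Hence one iteration's worth of progress steps decreases energy by at most $O(\crho^{2}m^{1-2\eta})$, and over the $\cT m^{1/2-\eta}$ iterations the total energy decrease caused by progress steps is at most $O(\cT\crho^{2}m^{3/2-3\eta})$.

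Next I would invoke the conservation-type argument: energy starts at $\mathcal{E}_{r}=\norm{\rho}_{\nu,2}^{2}\leq\norm{\nu}_{1}=O(m)$ by Lemmas~\ref{lem:linking_rho_to_energy} and~\ref{lem:rho_l_2_norm_bound} and the initialization (Lemma~\ref{lem:initial-centering}), and energy is always nonnegative. Every perturbation only increases energy (Lemma~\ref{lem:energy-increase-lb}), and every progress step changes it by the amount analyzed above; no other operation touches the resistances. Therefore, writing $\Delta_{\mathrm{pert}}$ for the total energy increase due to perturbations and $\Delta_{\mathrm{prog}}$ for the total (signed) energy change due to progress steps,
\[
0\;\leq\;\mathcal{E}_{\mathrm{final}}\;=\;\mathcal{E}_{\mathrm{initial}}+\Delta_{\mathrm{pert}}+\Delta_{\mathrm{prog}}\;\leq\;O(m)+\Delta_{\mathrm{pert}}-\bigl(\text{energy lost to progress steps}\bigr),
\]
so $\Delta_{\mathrm{pert}}\leq O(m)+O(\cT\crho^{2}m^{3/2-3\eta})=O(\cT\crho^{2}m^{3/2-3\eta})$, the $m$ term being absorbed since $\cT\crho^{2}m^{3/2-3\eta}\gg m$ for the chosen $\eta=\tfrac{1}{14}$ and $\crho,\cT\geq 1$. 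Tracking the constants through Lemma~\ref{lem:progress-energy-decrease} (the factor $5$), the $\otilde(1)$ bound on progress steps per iteration, and the slack absorbed to hide lower-order terms, yields the stated clean bound $16\cT\crho^{2}m^{3/2-3\eta}$.

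For the second assertion I would apply Corollary~\ref{cor:measure-increase-bounded-by-energy}: each perturbation that increases energy by $\Delta$ increases total measure by at most $2C\Delta$, where $C=\frac{1}{20\cT\crho^{2}}m^{-1/2+3\eta}$. Summing over all perturbations, the total measure increase is at most $2C$ times the total energy increase due to perturbations, i.e. at most
\[
2\cdot\frac{1}{20\cT\crho^{2}}m^{-1/2+3\eta}\cdot 16\cT\crho^{2}m^{3/2-3\eta}\;=\;\frac{32}{20}\,m\;<\;2m .
\]
Since the initial measure satisfies $\norm{\nu}_{1}\leq m$ by Lemma~\ref{lem:initial-centering}, and progress steps leave $\nu$ unchanged (Theorem~\ref{thm:l_4_improvement_step}, $\nu'=\nu$), the total measure never exceeds $m+2m=3m$, so Invariant~1 is preserved.

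The main obstacle I anticipate is the \emph{circularity} in the bound on $\Delta_{\mathrm{prog}}$: Lemma~\ref{lem:progress-energy-decrease} controls the energy loss of a progress step in terms of $\norm{\rho}_{\nu,3}$ \emph{at that step}, but that quantity is only guaranteed small \emph{after} the while-loop terminates, and the while-loop's termination in turn depends on perturbations having the energy-increasing effect claimed---which presupposes the perturbable set consumes most of the energy, an invariant deferred to Section~\ref{sec:preconditioning}. So the honest statement is conditional: \emph{assuming} the invariant that justifies entering ``perform progress steps'' with small $\norm{\rho}_{\nu,3}$, the accounting closes. I would make this assumption explicit (as the excerpt's surrounding text does) and note that the bound on the \emph{number} of perturbations---needed to know the algorithm actually terminates---is handled separately via Lemma~\ref{lem:energy-increase-lb} together with this energy bound, not within this lemma. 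The remaining work is purely bookkeeping of constants, which the footnote explicitly loosens.
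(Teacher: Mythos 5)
Your overall strategy matches the paper's: a telescoping energy account that combines the per-step loss bound of Lemma \ref{lem:progress-energy-decrease} with Corollary \ref{cor:measure-increase-bounded-by-energy}. But your central inequality is written in the wrong direction. You invoke $\mathcal{E}_{\mathrm{final}}\geq 0$; from $0\leq \mathcal{E}_{\mathrm{initial}}+\Delta_{\mathrm{pert}}+\Delta_{\mathrm{prog}}$ one only extracts a \emph{lower} bound on $\Delta_{\mathrm{pert}}$ (the perturbations must have supplied at least whatever energy the progress steps burned beyond the initial stock), not the upper bound the lemma asserts. The upper bound needs the energy to be capped from \emph{above} at every time via Lemma \ref{lem:rho_l_2_norm_bound}, i.e.\ $\mathcal{E}^{t}\leq\norm{\nu^{t}}_{1}$: writing $\Delta^{t}\leq\mathcal{E}^{t}-\mathcal{E}^{t-1}+5\crho^{2}m^{1-2\eta}$ and telescoping gives $\sum_{t}\Delta^{t}\leq\mathcal{E}^{T}+T\cdot5\crho^{2}m^{1-2\eta}\leq\norm{\nu^{T}}_{1}+T\cdot5\crho^{2}m^{1-2\eta}$, which is the inequality that actually closes.

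This exposes the second, more substantive gap: $\norm{\nu^{T}}_{1}$ is not $O(m)$ a priori --- by Corollary \ref{cor:measure-increase-bounded-by-energy} it is at most $\norm{\nu^{0}}_{1}+2C\sum_{t}\Delta^{t}$, so the energy cap itself grows with the very quantity being bounded. Treating the cap as a fixed $O(m)$ silently assumes the measure half of the lemma while proving the energy half. The paper resolves this by substituting and solving the self-referential inequality $\sum_{t}\Delta^{t}\leq m+2C\sum_{t}\Delta^{t}+T\cdot5\crho^{2}m^{1-2\eta}$, using $2C<1/2$ to absorb the feedback into a factor $\frac{1}{1-2C}\leq2$; an equivalent fix is a ``first time the measure exceeds $3m$'' contradiction argument. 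Finally, the circularity you do flag --- involving Invariant 2 and termination of the while loop --- is not an obstacle for this lemma: the accounting only uses that progress steps are executed when $\norm{\rho}_{\nu,3}\leq\crho\cdot m^{1/2-\eta}$, which the algorithm enforces syntactically; Invariant 2 is needed only later, to bound the \emph{number} of perturbations.
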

\begin{proof}
We start by introducing some notation. Let $\mathcal{E}^{t}$ and
$\nu^{t}$ be the energy, respectively the vectore of mesures at the
end of the $t^{\mbox{th}}$ iteration. Also, let $\Delta^{t}$ be
the total amount of energy increases during that iteration.\footnote{Remember that energy can decrease during progress steps, as per Lemma
\ref{lem:progress-energy-decrease}; $\Delta^{t}$ measures the total
amount of all increases, without accounting for the lost energy due
to progress steps. }

Note that, since we only perform progress steps when $\norm{\rho}_{\nu,3}\leq\crho m^{1/2-\eta}$,
one progress step decreases energy by at most $5\cdot\crho^{2}m^{1-2\eta}$,
according to Lemma \ref{lem:progress-energy-decrease}. Therefore
the amount by which energy increases during an iteration can be bounded
by
\begin{equation}
\Delta^{t}\leq\mathcal{E}^{t}-\mathcal{E}^{t-1}+5\crho^{2}\cdot m^{1-2\eta}\label{eq:6.4}
\end{equation}

At any point, the energy is capped by the total measure (Lemma \ref{lem:rho_l_2_norm_bound}).
Therefore 
\begin{equation}
\mathcal{E}^{t}\leq\norm{\nu^{t}}_{1}\label{eq:6.5}
\end{equation}

Also, using Corollary \ref{cor:measure-increase-bounded-by-energy}
we get that every increase in energy by $\Delta^{t}$ increases the
total measure by at most $2C\cdot\Delta^{t}$. Hence 
\begin{equation}
\norm{\nu^{t}}_{1}\leq\norm{\nu^{t-1}}_{1}+2C\cdot\Delta^{t}\label{eq:6.6}
\end{equation}

Using (\ref{eq:6.4}) and summing over all $T=\cT\cdot m^{1/2-\eta}$
iterations of the algorithm we obtain: 
\begin{align}
\sum_{t=1}^{T}\Delta^{t} & \leq\sum_{t=1}^{T}\left(\mathcal{E}^{t}-\mathcal{E}^{t-1}+5\crho^{2}\cdot m^{1-2\eta}\right)\nonumber \\
 & \leq\mathcal{E}^{T}+T\cdot5\crho^{2}\cdot m^{1-2\eta}\nonumber \\
 & \leq\norm{\nu^{T}}_{1}+T\cdot5\crho^{2}\cdot m^{1-2\eta}\nonumber \\
 & \leq\left(\norm{\nu^{0}}_{1}+\sum_{t=1}^{T}2C\cdot\Delta^{t}\right)+T\cdot5\crho^{2}\cdot m^{1-2\eta}\nonumber \\
 & =m+2C\cdot\sum_{t=1}^{T}\Delta^{t}+T\cdot5\crho^{2}\cdot m^{1-2\eta}\label{eq:energy-increase}
\end{align}

where for the last two inequalities we applied (\ref{eq:6.5}) and
(\ref{eq:6.6}).

Hence we obtain
\begin{equation}
\sum_{t=1}^{T}\Delta^{t}\leq\frac{1}{1-2C}\left(m+T\cdot5\crho^{2}\cdot m^{1-2\eta}\right)\leq2\cdot\cT m^{1/2-\eta}\cdot5\crho^{2}m^{1-2\eta}=10\cT\crho^{2}\cdot m^{3/2-3\eta}\label{eq:total-energy-increase}
\end{equation}

and the measure increase is upper bounded by 
\begin{equation}
2C\cdot\left(\sum_{t=1}^{T}\Delta^{t}\right)=2C\cdot10\cT\crho^{2}\cdot m^{3/2-3\eta}=m
\end{equation}

So Invariant 1 is satisfied. 
\end{proof}

\subsubsection{Bounding the Number of Perturbations}

We have just seen that the energy increase suffered due to perturbations
is $O(\cT\crho^{2}\cdot m^{3/2-3\eta})$, which should intuitively
enable us to bound the number of perturbations, and thus wrap up the
analysis of the algorithm. The reason is that whenever we have to
perturb the problem, the $\ell_{3}$ norm of the congestion vector
is large (i.e. $\norm{\rho}_{\nu,3}\geq\crho\cdot m^{1/2-\eta}$),
so the energy of the system is large: $\mathcal{E}=\norm{\rho}_{\nu.2}^{2}\geq\norm{\rho}_{\nu,3}^{2}\geq\crho^{2}\cdot m^{1-2\eta}$.
Since perturbable arcs are highly congested (see Definition \ref{def:perturbable}),
we expect them to contribute most of the energy. This feature of perturbable
args is highlighted by the following invariant:

\vspace{10bp}

\textbf{Invariant 2.} Whenever we perform a perturbation, $\norm{\rho_{\mathcal{S}}}_{\nu,2}^{2}\geq\frac{1}{2}\crho\cdot m^{1-2\eta}$.

\vspace{10bp}

This guarantees that every perturbation increases energy by at least
$\Omega\left(m^{1-2\eta}\right)$, which automatically implies that
the number of perturbations is bounded by $O\left(\cT\crho^{2}\cdot m^{1/2-\eta}\right)$.
Indeed, as seen in Lemma \ref{lem:energy-increase-lb}, with every
perturbation energy increases by $\frac{1}{2}\norm{\rho_{\mathcal{S}}}_{\nu,2}^{2}$.
Therefore, assuming Invariant 2, we get that each perturbation increases
energy by at least $\frac{1}{4}\crho\cdot m^{1-2\eta}$. Since we
know from Lemma \ref{lem:total-energy-measure-increase} that total
energy increase is bounded by $16\cT\crho^{2}\cdot m^{3/2-3\eta}$,
we get that the number of perturbations performed during the execution
of the algorithm is at most $64\cT\crho\cdot m^{1/2-\eta}$. 

Enforcing the validity of this invariant will be done in Section \ref{sec:preconditioning},
where we introduce a preconditioning technique which enables us to
gain more control over the behavior of electrical flows.

Hence we have proved the following Lemma:
\begin{lem}
\label{lem:number-of-perturbations}Assuming Invariant 2 is valid,
the number of perturbations is at most $64\cT\crho\cdot m^{1/2-\eta}$.
\end{lem}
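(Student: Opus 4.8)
The plan is to combine a per-perturbation lower bound on the energy gain with the global upper bound on the cumulative energy gain that was already established in Lemma~\ref{lem:total-energy-measure-increase}. First I would invoke Lemma~\ref{lem:energy-increase-lb}: each time we perturb the set $\mathcal{S}$ of perturbable arcs, the electrical energy increases by at least $\frac{1}{2}\norm{\rho_{\mathcal{S}}}_{\nu,2}^{2}$. Then I would feed in Invariant~2, which guarantees $\norm{\rho_{\mathcal{S}}}_{\nu,2}^{2}\geq\frac{1}{2}\crho\cdot m^{1-2\eta}$ whenever a perturbation is performed. Chaining these two facts, every single perturbation raises the energy by at least $\frac{1}{4}\crho\cdot m^{1-2\eta}$.

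Next I would recall from Lemma~\ref{lem:total-energy-measure-increase} that the total energy increase accumulated over all perturbation steps throughout the entire run is at most $16\cT\crho^{2}\cdot m^{3/2-3\eta}$ (the deliberately loose constant there is precisely what makes this step painless). Since each of the $N$ perturbations contributes at least $\frac{1}{4}\crho\cdot m^{1-2\eta}$ to this budget, we get $N\cdot\frac{1}{4}\crho\cdot m^{1-2\eta}\leq 16\cT\crho^{2}\cdot m^{3/2-3\eta}$; dividing through and using $m^{3/2-3\eta}/m^{1-2\eta}=m^{1/2-\eta}$ yields $N\leq 64\cT\crho\cdot m^{1/2-\eta}$, which is exactly the claimed bound.

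There is essentially no genuine obstacle here, since both ingredients are available as black boxes: the only point requiring care is the bookkeeping, namely checking that the ``total energy increase due to perturbations'' quantity from Lemma~\ref{lem:total-energy-measure-increase} (which sums only the positive contributions of perturbation steps, matching the definition of $\Delta^{t}$) is measured in the same way as the per-perturbation lower bound we apply, so that the division is legitimate even though progress steps can decrease energy in between perturbations. The one thing worth flagging is that Invariant~2 is not yet proved at this stage — it is only enforced later, in Section~\ref{sec:preconditioning} — which is why the lemma is stated conditionally; as long as we invoke Invariant~2 as a hypothesis, the argument above goes through verbatim.
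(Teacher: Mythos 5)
Your proof is correct and is essentially identical to the paper's argument: both combine Lemma \ref{lem:energy-increase-lb} with Invariant 2 to show each perturbation gains at least $\frac{1}{4}\crho\cdot m^{1-2\eta}$ in energy, then divide the total energy-increase budget $16\cT\crho^{2}\cdot m^{3/2-3\eta}$ from Lemma \ref{lem:total-energy-measure-increase} by this per-perturbation gain. Your bookkeeping remark about $\Delta^{t}$ counting only the positive contributions is exactly the right point of care, and it checks out.
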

This immediately concludes the running time analysis. Indeed, both
progress steps and perturbations can be implemented in $\tilde{O}(m)$
time by computing electrical flows using a fast Laplacian solver (see
Theorem \ref{thm:vanilla_SDD_solver}). The number of progress steps
is precisely $\cT\cdot m^{1/2-\eta}$, since this is hard coded in
the description of the algorithm. Also, the number of perturbations
is $O(\cT\crho\cdot m^{1/2-\eta})$, according to Lemma \ref{lem:number-of-perturbations}.
Therefore the total running time is $\tilde{O}\left(\cT\crho\cdot m^{3/2-\eta}\right)$.
\begin{thm}
\label{thm:Asumming-Invariant-2}Asumming Invariant 2 is valid, we
can produce an exact solution to the unit-capacitated minimum cost
$\sigma$-flow problem in $\tilde{O}\left(m^{10/7}\log^{4/3}W\right)$
time.\end{thm}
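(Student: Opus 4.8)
The goal is to assemble all the pieces developed so far—Theorem~\ref{thm:l_4_improvement_step} (the progress step), Lemma~\ref{lem:number-of-perturbations} (the bound on perturbations under Invariant~2), Lemma~\ref{lem:total-energy-measure-increase} (Invariant~1 is maintained), and the parameter choices $\eta=1/14$, $\crho=400\sqrt3\cdot\log^{1/3}W$, $\cT=3\crho\log W$—and combine them with an accounting of (i) how small the average duality gap $\widehat\mu$ becomes after $\cT\cdot m^{1/2-\eta}$ progress steps and (ii) the per-iteration running time. The key numeric point is that $1/2-\eta = 1/2 - 1/14 = 3/7$, so $m^{3/2-\eta}=m^{3/2-1/14}=m^{10/7}$, and the extra $\cT\crho=O(\log^{4/3}W)$ factor together with the $\tilde O(1)$ overhead of the Laplacian solver yields the claimed $\tilde O(m^{10/7}\log^{4/3}W)$ bound.

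First I would verify the running time. By Algorithm~\ref{tab:proto-algo}, the algorithm performs exactly $\cT\cdot m^{1/2-\eta}$ progress steps. In each progress step we perform $\tilde O(1)$ electrical flow computations, each implementable in $\tilde O(m)$ time via Theorem~\ref{thm:vanilla_SDD_solver}, and the bookkeeping of Theorem~\ref{thm:l_4_improvement_step} also takes $\tilde O(m)$ time. The inner \textbf{while} loop of perturbations is bounded by Lemma~\ref{lem:number-of-perturbations} (which we are allowed to assume, since we work under Invariant~2): at most $64\cT\crho\cdot m^{1/2-\eta}$ perturbations occur over the entire execution, and each perturbation takes $\tilde O(m)$ time (it touches one arc, its partner, and requires one electrical flow computation to recompute $\rho$). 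Summing, the total running time is
\[
\tilde O\!\left(\cT\crho\cdot m^{1/2-\eta}\cdot m\right) \;=\; \tilde O\!\left(\cT\crho\cdot m^{3/2-\eta}\right) \;=\; \tilde O\!\left(m^{10/7}\log^{4/3}W\right),
\]
using $\cT\crho = 3\crho^2\log W = O(\log^{5/3}W)$, which is absorbed into $\tilde O(\cdot)$ up to the explicit $\log^{4/3}W$ we choose to display (one may either display $\log^{5/3}W$ or simply note $\cT\crho=\polylog(W)$ and keep $\log^{4/3}W$ as in the statement; the paper's convention hides lower-order $\log$ factors in $\tilde O$, so the stated exponent $4/3$ is a choice of emphasis, not a tight constant).

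Next I would check correctness of the output. Lemma~\ref{lem:total-energy-measure-increase} guarantees Invariant~1 holds throughout, so at every iteration the duality gap is at most $3m\widehat\mu$ by~\eqref{eq:average_duality_gap_bound}. Lemma~\ref{lem:initial-centering} (via the reduction of Section~\ref{sec:reduction}) initializes with $\widehat\mu = \norm c_\infty \le W$. Whenever a progress step is taken we have $\norm\rho_{\nu,3}\le\crho\cdot m^{1/2-\eta}$, hence $\norm\rho_{\nu,4}\le\norm\rho_{\nu,3}\le\crho\cdot m^{1/2-\eta}$, so Theorem~\ref{thm:l_4_improvement_step} permits step size $\delta = \Omega\!\left(1/(\crho m^{1/2-\eta})\right)$ and multiplies $\widehat\mu$ by a factor $\le (1-\delta)$. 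After all $T=\cT\cdot m^{1/2-\eta}$ progress steps, $\widehat\mu$ has been reduced by a factor
\[
(1-\delta)^{T} \;\le\; \exp\!\left(-\delta T\right) \;=\; \exp\!\left(-\Omega(\cT/\crho)\right) \;=\; \exp\!\left(-\Omega(\log W)\right),
\]
and since $\cT/\crho = 3\log W$, by choosing the constant in $\cT$ large enough this drives $\widehat\mu$ below, say, $m^{-4}W^{-1}\cdot W = m^{-4}$ (more carefully: $\widehat\mu$ starts at $\le W$ and we need it below $\approx m^{-3}$, which costs $O(\log(mW))$ effective progress, absorbed by taking $\cT$ a sufficiently large multiple of $\crho\log W$). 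Then the duality gap is $\le 3m\widehat\mu \le m^{-2}$. Finally, invoking the rounding procedure of Section~\ref{sec:repair}, a feasible primal–dual solution with duality gap $< m^{-2}$ is converted in $\tilde O(m)$ time to an exact integral optimal solution of the unit-capacitated minimum-cost $\sigma$-flow instance (and undoing the linear-time reductions of Lemmas~\ref{lem:reduction-to-bmatching} and~\ref{lem:reduction-to-balance} recovers the solution to the original instance).

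\textbf{Main obstacle.} The genuinely substantive content—ensuring Invariant~2, i.e.\ that whenever a perturbation is triggered the perturbable arcs actually carry a constant fraction $\tfrac12\crho\cdot m^{1-2\eta}$ of the energy—has been explicitly deferred to Section~\ref{sec:preconditioning}, and the statement of the theorem assumes Invariant~2. So within the scope of \emph{this} proof there is no real obstacle: it is pure assembly. The one place requiring a little care is the logarithmic bookkeeping: confirming that the constants $\cT = 3\crho\log W$ and $\crho = 400\sqrt3\log^{1/3}W$ are large enough that $\delta T = \Theta(\cT/\crho) = \Theta(\log W)$ suffices to bring $\widehat\mu$ from $\le W$ down past $m^{-3}$ (needing $\Omega(\log(mW))$ halvings, which is $O(\log W + \log m)$ and hence swallowed by the $\tilde O$ in front of the $m^{1/2-\eta}$), and tracking the exact power of $\log W$ that survives—the paper elects to report $\log^{4/3}W$ as the headline dependence while the honest accounting gives $\cT\crho = \Theta(\log^{5/3}W)$, the discrepancy being exactly the kind of lower-order logarithmic slack that $\tilde O$ is designed to hide.
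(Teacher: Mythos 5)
Your proof is correct and follows essentially the same route as the paper's: count the $\cT\cdot m^{1/2-\eta}$ progress steps and the $O(\cT\crho\cdot m^{1/2-\eta})$ perturbations from Lemma~\ref{lem:number-of-perturbations}, each costing $\tilde O(m)$ via the Laplacian solver, track $\widehat\mu$ from its initial value $\norm c_\infty\le W$ down past $m^{-3}$ so that Invariant~1 gives duality gap at most $m^{-2}$, and finish with the rounding of Section~\ref{sec:repair}. Your side remark that the honest accounting of the dominant term gives $\cT\crho=\Theta(\log^{5/3}W)$ rather than the displayed $\log^{4/3}W$ is a fair observation about the paper's bookkeeping, but it is immaterial since the subsequent scaling step reduces the dependence to $\log W$ anyway.
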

\begin{proof}
The proof is similar to the one for Theorem \ref{thm:min-cost-flow-m32}. 

The algorithm performs a progress step only when $\norm{\rho}_{\nu,4}\leq\norm{\rho}_{\nu,3}\leq m^{1/2-\eta}$,
therefore $\hat{\mu}$ decreases by a factor of $1-\frac{1}{\crho\cdot m^{1/2-\eta}}$
with every iteration. Therefore, in $\crho m^{1/2-\eta}\left(2\log m+\log\tilde{W}\right)\leq\cT\cdot m^{1/2-\eta}$
iterations we reduce $\hat{\mu}$ to $O(m^{-3})$, and by Invariant
1 the duality gap of this solution will be $\sum_{e}\nu_{e}\hat{\mu}\leq m^{-2}$.
Each of the $\cT\cdot m^{1/2-\eta}$ progress steps requires $\tilde{O}(1)$
electrical flow computations, and each of them can be implemented
in near-linear time, according to Theorem \ref{thm:vanilla_SDD_solver}.
Furthermore, assuming Invariant 2, we have that the number of perturbations
is at most $64\cT\crho\cdot m^{1/2-\eta}$, by Lemma \ref{lem:number-of-perturbations}.
Similarly, each perturbation can be implemented in nearly-linear time.
Therefore the total running time required for obtaining a duality
gap of $m^{-2}$ is $\tilde{O}\left(\cT m^{3/2-\eta}+\cT\crho\cdot m^{3/2-\eta}\right)=\tilde{O}\left(m^{10/7}\log^{4/3}W\right)$.
Then, using the repairing algorithm from Section \ref{sec:repair},
we can round the solution to to an optimal one in nearly-linear time.
So the total time is $\tilde{O}\left(m^{10/7}\log^{4/3}W\right)$.
\end{proof}
One should note that the $\tilde{O}\left(m^{10/7}\log^{4/3}W\right)$
running time can be reduced to $\tilde{O}\left(m^{10/7}\log W\right)$
by employing the scaling technique of \cite{gabow-85}. Thus, we can
reduce our problem to solving $O(\log W)$ instances of our problem
where the costs are polynomially bounded. This enables us to change
the $\textnormal{poly}\log W$ factors from the running time to $\textnormal{poly}\log n$
(and thus have them absorbed by the $\tilde{O}$ notation) at the
cost of paying only an extra factor of $\log W$.

However, ensuring that Invariant 2 always holds is a bit more subtle.
Obtaining a provable guarantee will actually be done by adding a preconditioner,
which is carefully analyzed in Section \ref{sec:preconditioning}.

\section{\label{sec:preconditioning}Preconditioning the Graph}

Our analysis from the previous section was crucially relying on the
assumption that perturbable arcs contribute most of the energy. Unfortunately,
this assumption is not always valid. To cope with this problem, we
develop a modification of our algorithm that ensures that this assumption
holds after all. Roughly speaking, we achieve that by an appropriate
preconditioning of our graph. This preconditioning is based on augmenting
the graph with additional, auxiliary edges which make the computed
electrical flows better behaved. These edges \textit{should not} be
thought of as being part of the graph we are computing our $\sigma$-flows
on. Their sole effect is to guide the electrical flow computation
in order to obtain a better electrical flow in the original graph
at the cost of slightly changing the demand we are routing.

These edges achieve the optimal trade-off between providing good connectivity
in the augmented graph (which lowers the congestion of arcs with low
residual capacity) and preventing too much flow from going through
them (because of their sufficiently high resistance).

One difficulty posed by this scheme is that we need to control how
much the routed demand gets modified. This is easily controled by
setting the resistances of the auxiliary edges to be sufficiently
high; in contrast, the magnitude of these resistances needs to be
traded against the effect they have on the computed electrical flow.
At the end, we fix the demand using a combinatorial procedure (see
Section \ref{sec:repair}) whose running time is proportional to the
$\ell_{1}$ difference between the initial demand and the one routed
by the algorithm. Therefore we need to simultaneously ensure that
preconditioner edges have sufficiently high resistance such that the
change in demand is not significant, and guarantee that the graph
has good enough connectivity for Invariant 2 to hold. This trade-off
will ultimately determine the choice of the parameter $\eta=1/14$.

\subsection{Using Auxiliary Edges for Electrical Flow Computations}

In order to properly describe preconditioning, we need to partition
the iterations of the algorithm into \textit{phases} (each of them
consisting of a sequence of $m^{2\eta}$ iterations), and show that
a finer version of Lemma \ref{lem:total-energy-measure-increase}
holds, for each of these phases. The reason is that the resistances
on the auxiliary edges need to depend on the set of measures in the
graph. But measures increase over time, so the resistances need to
be updated accordingly. One should be careful about this aspect, since
changing the resistances of auxiliary edges during every iteration
of the algorithm would destroy the potential based argument we described
in Section \ref{sec:longer_steps}. Therefore, instead of adjusting
the resistances every iteration, we do this only at the beginning
of a phase. Over the course of a phase, measures can not change too
much, so the preconditioning argument will still be valid. 
\begin{defn}
Partition the $\cT\cdot m^{1/2-\eta}$ iterations of the algorithm
into consecutive blocks of size $m^{2\eta}$. Such a block is called
a phase. Hence the algorithm consists of $\cT\cdot m^{1/2-3\eta}$
phases.
\end{defn}
Preconditioning consists of adding an extra vertex $v_{0}$ along
with undirected edges $(v_{0},v)$ for each vertex $v\in P$ (recall
that vertices in $P$ correspond to vertices from only one side of
the bipartition in the $b$-matching instance). We will call these
newly added edges \textit{auxiliary edges}. Each of these auxiliary
edges will have resistance set to

\begin{equation}
r_{v_{0}v}=\frac{m^{1+2\eta}}{a(v)}\label{eq:prec-resist}
\end{equation}

where we define
\begin{equation}
a(v)=\sum_{u\in Q:e=(v,u)\in E}\nu_{e}+\nu_{\bar{e}}
\end{equation}

Recall that by $\bar{e}$ we denote the partner arc of $e$ (introduced
in Definition \ref{def:partner}), and that the quantities above are
defined with respect to the measures existing at the beginning of
the phase.

Also, remember that these auxiliary edges exist only in order to provide
a mildly different demand for which electrical flows are better behaved.
Once we are done perturbing, we perform a progress step on the graph
without auxiliary edges, but with the modified demand (i.e. the one
that gets routed on the original graph, after removing auxiliary edges).

The inclusion of auxiliary edges in the electrical flow computations
requires the contribution of these edges to the energy of the system
to be included in the potential based argument from Theorem \ref{lem:total-energy-measure-increase},
when analyzing the preconditioned steps.\footnote{The $\ell_{3}$ norm of the congestion vector will still be measured
only with respect to the arcs in the original graph.} 
\begin{rem}
Even though we include additional edges for electrical flow computations,
the energy bound from Lemma \ref{lem:rho_l_2_norm_bound} still holds
(since including additional edges can only decrease energy).
\end{rem}
This motivates partitioning the iterations into phases, since changing
the resistances of auxiliary edges too often could potentially make
the energy vary wildly.

The following lemma shows that we can individually bound the energy
and measure increase over any single phase. What is crucial about
the new proof is that it does not require any control over how energy
changes between iterations belonging to different phases. Therefore,
resetting the resistances of auxiliary edges at the beginning of a
phase will have no effect on the result described in Lemma \ref{lem:total-energy-measure-increase}. 

The precise statement concerning energy and measure increase during
a phase is summarized in the following lemma, whose proof we defer
to Appendix \ref{sec:precon-proofs}.
\begin{lem}
\label{lem:total-energy-measure-increase-1}During a single phase,
the total energy increase due to perturbations is at most $16\crho^{2}\cdot m$.
Furthermore, the measure increase during a single phase is at most
$\frac{2}{\cT}\cdot m^{1/2+3\eta}$. Also, the total mesure always
satisfies $\norm{\nu}_{1}\leq3m$, i.e. Invariant 1 is preserved.
\end{lem}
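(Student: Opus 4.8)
The plan is to mirror the global argument of Lemma~\ref{lem:total-energy-measure-increase}, but restricted to a single phase of $m^{2\eta}$ iterations and with the auxiliary (preconditioner) edges included in all electrical-flow computations. First I would set up the phase-local bookkeeping: let $\mathcal{E}^{t}$ and $\nu^{t}$ denote the energy and measure vector after the $t$-th iteration \emph{within the phase}, and let $\Delta^{t}$ be the total energy increase during iteration $t$ (counting only increases, as in the proof of Lemma~\ref{lem:total-energy-measure-increase}). The key structural observation is that the three ingredients we need all survive the presence of auxiliary edges. (i) Lemma~\ref{lem:better-energy-lb} and hence Lemma~\ref{lem:progress-energy-decrease} still apply, since the potential-certificate argument is indifferent to which edges are in the graph; moreover the $\ell_3$-norm in the progress-step criterion is measured only over the original arcs, so a progress step still decreases energy by at most $5\crho^{2}m^{1-2\eta}$. (ii) Corollary~\ref{cor:measure-increase-bounded-by-energy} is unchanged: perturbations only touch original arcs and their partners, so a perturbation increasing energy by $\Delta$ raises the measure by at most $2C\Delta$ with $C=\frac{1}{20\cT\crho^{2}}m^{-1/2+3\eta}$. (iii) By the Remark following~(\ref{eq:prec-resist}), including auxiliary edges only lowers energy, so the cap $\mathcal{E}^{t}\le\norm{\nu^{t}}_{1}\le 3m$ from Lemma~\ref{lem:rho_l_2_norm_bound} and Invariant~1 still holds.

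Next I would run the same telescoping computation as in~(\ref{eq:energy-increase}), but now over only the $T_{\text{ph}}=m^{2\eta}$ iterations of the phase, and with the initial energy $\mathcal{E}^{0}$ of the phase bounded by $3m$ rather than by $\norm{\nu^0}_1=m$ (this is why the stated constant is $16$ rather than $10$ — the looser bound is deliberately built in to absorb the cross-phase behavior). Concretely, $\Delta^{t}\le \mathcal{E}^{t}-\mathcal{E}^{t-1}+5\crho^{2}m^{1-2\eta}$, and $\norm{\nu^{t}}_1\le\norm{\nu^{t-1}}_1+2C\Delta^{t}$, so summing over the phase,
\[
\sum_{t=1}^{T_{\text{ph}}}\Delta^{t}\ \le\ \mathcal{E}^{T_{\text{ph}}}-\mathcal{E}^{0}+T_{\text{ph}}\cdot 5\crho^{2}m^{1-2\eta}+\cdots
\]
and feeding in $\mathcal{E}^{T_{\text{ph}}}\le\norm{\nu^{T_{\text{ph}}}}_1\le\norm{\nu^{0}}_1+2C\sum_t\Delta^t$ together with $\norm{\nu^0}_1\le 3m$ gives $(1-2C)\sum_t\Delta^t\le 3m+m^{2\eta}\cdot 5\crho^{2}m^{1-2\eta}=3m+5\crho^{2}m$. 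Since $2C\le\tfrac12$ for $m$ large (as $\eta=\tfrac1{14}$ makes $C=o(1)$), we get $\sum_t\Delta^t\le 2(3+5\crho^{2})m\le 16\crho^{2}m$, which is the claimed energy bound. The measure increase over the phase is then at most $2C\sum_t\Delta^t\le 2C\cdot 16\crho^{2}m = \tfrac{16}{20\cT}m^{-1/2+3\eta}\cdot\crho^2 m / \crho^2 \le \tfrac{2}{\cT}m^{1/2+3\eta}$, after simplifying with the definition of $C$. Finally, to confirm Invariant~1 I would sum the per-phase measure increases over all $\cT\cdot m^{1/2-3\eta}$ phases: the total is at most $(\cT m^{1/2-3\eta})\cdot\frac{2}{\cT}m^{1/2+3\eta}=2m$, so $\norm{\nu}_1\le\norm{\nu^{\text{init}}}_1+2m\le m+2m=3m$.

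The step I expect to be the main obstacle is not the arithmetic above but verifying that the potential-based lower bound of Lemma~\ref{lem:better-energy-lb} is genuinely applicable \emph{within} a phase despite the auxiliary edges and despite the fact that the demand routed on the original graph is modified after each preconditioned perturbation. One must check that each progress/perturbation step inside the phase still produces a "new centered instance with resistances $r'$" in the precise sense Lemma~\ref{lem:better-energy-lb} requires — in particular that the resistances of the auxiliary edges are held fixed throughout the phase (they are reset only at phase boundaries, by construction), so that the same potential vector $\hat\phi$ certifies the lower bound on $\mathcal{E}_{r'}$, and that the contribution of auxiliary edges to the sum $\sum_{e=(u,v)}\nu_e\rho_e^2(1-r_e/r_e')$ is nonnegative (it is, since auxiliary-edge resistances never decrease within a phase and original-arc perturbations only increase resistances). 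Once this is in place the rest is the routine telescoping already carried out for Lemma~\ref{lem:total-energy-measure-increase}, just with the initial-energy bound relaxed from $m$ to $3m$.
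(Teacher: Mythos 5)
Your proposal is correct and follows essentially the same route as the paper's proof in Appendix~\ref{sec:app-proof-total-energy-incr}: the same phase-local telescoping of $\Delta^{t}$ against the energy cap $\mathcal{E}\le\norm{\nu}_{1}$ and the measure-to-energy coupling via $C$, with the phase-initial measure bounded by $3m$ through an induction over phases that simultaneously establishes Invariant~1. Your side observation that the argument needs no control of energy across phase boundaries (so resetting auxiliary-edge resistances at phase starts is harmless) is exactly the point the paper's proof emphasizes.
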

It immediately follows that this is simply a refinement of Lemma \ref{lem:total-energy-measure-increase}:
\begin{cor}
The total energy increase due to perturbations is at most $16\cT\crho^{2}\cdot m^{3/2-3\eta}$,
and the result described in Lemma \ref{lem:total-energy-measure-increase}
is still valid.
\end{cor}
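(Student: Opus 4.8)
The plan is to simply aggregate the per-phase bounds of Lemma~\ref{lem:total-energy-measure-increase-1} over all phases and observe that the resulting totals match (up to the loose constants already built into Lemma~\ref{lem:total-energy-measure-increase}) the global bounds we need. Recall that the algorithm consists of $\cT\cdot m^{1/2-3\eta}$ phases, each phase being a block of $m^{2\eta}$ iterations. First I would sum the energy-increase bound: by Lemma~\ref{lem:total-energy-measure-increase-1}, each phase contributes at most $16\crho^{2}\cdot m$ to the total energy increase due to perturbations, so over all $\cT\cdot m^{1/2-3\eta}$ phases the total is at most $16\cT\crho^{2}\cdot m^{3/2-3\eta}$, which is exactly the bound asserted in Lemma~\ref{lem:total-energy-measure-increase}. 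This gives the first half of the statement.

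Next I would handle the measure. By Lemma~\ref{lem:total-energy-measure-increase-1} each phase increases the total measure by at most $\frac{2}{\cT}\cdot m^{1/2+3\eta}$, so summing over the $\cT\cdot m^{1/2-3\eta}$ phases gives a total measure increase of at most $\cT\cdot m^{1/2-3\eta}\cdot\frac{2}{\cT}\cdot m^{1/2+3\eta}=2m$. Since the initial measure satisfies $\norm{\nu^{0}}_{1}\le m$ (by Lemma~\ref{lem:initial-centering}, since the reduction produces an instance with $\norm{\nu}_{1}\le m$), the total measure at any point is at most $m+2m=3m$, so Invariant~1 is preserved. One subtlety to note here is that the per-phase bound of Lemma~\ref{lem:total-energy-measure-increase-1} is itself proved under the standing assumption that Invariant~1 holds throughout (the resistances of the auxiliary edges and the perturbation machinery rely on it); so strictly speaking this is an induction on phases, where the inductive hypothesis ``$\norm{\nu}_{1}\le 3m$ at the start of phase $j$'' lets us apply Lemma~\ref{lem:total-energy-measure-increase-1} to phase $j$ and conclude the same bound holds at the start of phase $j+1$. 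But since Lemma~\ref{lem:total-energy-measure-increase-1} already carries the ``Invariant~1 is preserved'' conclusion inside it, this induction is essentially packaged for us.

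The only thing to be careful about is that Lemma~\ref{lem:total-energy-measure-increase} was stated and used with deliberately loose constants (as its footnote warns, ``while the constants provided here are worse than those seen in the proof, we will use these loose bounds''), so I would make sure the aggregated bounds here are no worse than those loose statements — which they are not, since $16\cT\crho^{2}\cdot m^{3/2-3\eta}$ is literally the energy bound in Lemma~\ref{lem:total-energy-measure-increase} and $\norm{\nu}_{1}\le 3m$ is literally Invariant~1. I do not expect any real obstacle: the entire content of the corollary is the observation that Lemma~\ref{lem:total-energy-measure-increase-1} is phrased so that it telescopes cleanly over phases, and the phase count $\cT\cdot m^{1/2-3\eta}$ was chosen precisely so the per-phase measure bound $\frac{2}{\cT}\cdot m^{1/2+3\eta}$ multiplies out to the clean $2m$. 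The work of actually controlling the electrical flows inside a single phase — which is where the preconditioner and the delicate $\eta=1/14$ trade-off enter — is exactly what is deferred to Lemma~\ref{lem:total-energy-measure-increase-1} and Appendix~\ref{sec:precon-proofs}, so here there is nothing beyond the two summations above.
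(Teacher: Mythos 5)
Your proposal is correct and follows essentially the same route the paper intends: the corollary is just the summation of the per-phase bounds of Lemma \ref{lem:total-energy-measure-increase-1} over the $\cT\cdot m^{1/2-3\eta}$ phases, and the induction-on-phases subtlety you flag is exactly what the paper handles inside the appendix proof of that lemma.
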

The new version of the algorithm which includes the effect of the
auxiliary edges is described below.

\begin{algorithm}[H]
\begin{enumerate}
\item initialize primal and dual variables $(f,y)$ (as shown in Section
\ref{sec:basic_framework})
\item repeat for $\cT\cdot m^{1/2-3\eta}$ phases
\item $\quad$reset auxiliary edge resistances (as described in (\ref{eq:prec-resist}))
\item $\quad$repeat $m^{2\eta}$ times
\item $\quad\quad$while $\Vert\rho\Vert_{\nu,3}>\crho\cdot m^{1/2-\eta}$
\item $\quad\quad\quad$perform perturbation (as shown in Section \ref{sub:Perturbing-the-problem})
\item $\quad\quad$perform progress steps on the original graph (as shown
in Section \ref{sub:Making-Progress})
\end{enumerate}
\caption{Perturbed interior-point method with preconditioning edges (parameters:
$\protect\crho=400\sqrt{3}\cdot\log^{1/3}\tilde{W}$, $\protect\cT=3\protect\crho\log\tilde{W}$)\label{tab:precond-algo}}
\end{algorithm}

Before proving that this version of the algorithm forces Invariant
2 to stay valid, we first bound the change in demand caused by the
auxiliary edges. We first show that, due to Invariant 1, the total
amount of flow that gets routed electrically through auxiliary edges
is small.
\begin{prop}
\label{prop:precond-electrical-demand}Let $\mathcal{P}$ be the set
of auxiliary edges. Then the total amount of electrical flow on these
edges during any progress step satisfies $\norm{\hat{f}_{\mathcal{P}}}_{1}\leq5\cdot m^{1/2-\eta}$.\end{prop}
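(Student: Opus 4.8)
The plan is to bound the total electrical flow on the auxiliary edges $\mathcal{P}$ by a standard energy-comparison argument: exhibit a concrete feasible routing in the preconditioned graph and use the energy-minimality of the electrical flow, together with the fact that each auxiliary edge $(v_0,v)$ has resistance $r_{v_0v} = m^{1+2\eta}/a(v)$. First I would recall that by Lemma \ref{lem:rho_l_2_norm_bound} and Invariant 1 the energy of the electrical flow $\hat f$ (now computed in the augmented graph) satisfies $\mathcal{E}(\hat f) = \norm{\rho}_{\nu,2}^2 \le \norm{\nu}_1 \le 3m$; including auxiliary edges only decreases energy, so this bound is still valid. In particular the portion of the energy coming from the auxiliary edges is at most $3m$, i.e. $\sum_{v \in P} r_{v_0 v}\hat f_{v_0 v}^2 \le 3m$.

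Next I would convert this energy bound into an $\ell_1$ bound on $\hat f_{\mathcal{P}}$ via Cauchy--Schwarz, weighting by the conductances $1/r_{v_0v}$. Concretely,
\[
\norm{\hat f_{\mathcal{P}}}_1 = \sum_{v\in P} \abs{\hat f_{v_0 v}}
= \sum_{v\in P} \sqrt{r_{v_0 v}\hat f_{v_0 v}^2}\cdot \sqrt{\frac{1}{r_{v_0 v}}}
\le \left(\sum_{v\in P} r_{v_0 v}\hat f_{v_0 v}^2\right)^{1/2}\left(\sum_{v\in P}\frac{1}{r_{v_0 v}}\right)^{1/2}.
\]
The first factor is at most $\sqrt{3m}$ by the previous paragraph. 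For the second factor, $\sum_{v\in P} 1/r_{v_0 v} = \sum_{v\in P} a(v)/m^{1+2\eta}$, and $\sum_{v\in P} a(v) = \sum_{v\in P}\sum_{u:e=(v,u)\in E} (\nu_e + \nu_{\bar e})$ telescopes into a constant times $\norm{\nu}_1$ (each arc $e$ and its partner $\bar e$ are counted a bounded number of times), which by Invariant 1 is $O(m)$. Hence $\sum_{v\in P} 1/r_{v_0v} = O(m / m^{1+2\eta}) = O(m^{-2\eta})$, and multiplying the two factors gives $\norm{\hat f_{\mathcal{P}}}_1 = O(\sqrt{m}\cdot m^{-\eta}) = O(m^{1/2-\eta})$; tracking the constants (using $\norm{\nu}_1 \le 3m$ and the exact multiplicity in the definition of $a(v)$) yields the claimed bound $5\cdot m^{1/2-\eta}$.

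The main subtlety I expect is making sure the energy bound $\mathcal{E}(\hat f)\le 3m$ genuinely applies in the \emph{augmented} graph during a phase: one must check that $\hat f$ here is the electrical flow for the (phase-fixed) resistances including auxiliary edges, that Invariant 1 is maintained throughout (which is exactly the content of Lemma \ref{lem:total-energy-measure-increase-1} / Lemma \ref{lem:total-energy-measure-increase}), and that the comparison flow used to bound energy from above is itself a valid $\sigma$-flow — but since $f = \frac12 \vec 1$ is always feasible in the original graph (which sits inside the augmented graph), $\mathcal{E}(\hat f) \le \mathcal{E}(f)$ goes through exactly as in Lemma \ref{lem:rho_l_2_norm_bound}. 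The only genuinely new computation is the counting argument bounding $\sum_{v\in P} a(v)$ by $O(\norm{\nu}_1)$, which is routine once one notes that every arc contributes its own measure once and its partner's measure once.
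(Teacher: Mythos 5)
Your proposal is correct and follows essentially the same argument as the paper: bound the auxiliary edges' energy contribution by $3m$ via Invariant 1 and Lemma \ref{lem:rho_l_2_norm_bound}, then apply Cauchy--Schwarz weighted by the conductances $a(v)/m^{1+2\eta}$, using $\sum_{v\in P} a(v)=2\norm{\nu}_{1}\le 6m$ (each arc's measure and its partner's are each counted once per incident $P$-vertex). The constants work out exactly as you indicate: $\sqrt{3m}\cdot\sqrt{6m/m^{1+2\eta}}=\sqrt{18}\,m^{1/2-\eta}\le 5m^{1/2-\eta}$.
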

\begin{proof}
From Invariant 1 and Lemma \ref{lem:rho_l_2_norm_bound} we have that
the total energy satisfies $\mathcal{E}\leq3m$. The energy contributed
by the auxiliary edges is
\begin{equation}
\sum_{(v_{0},v)\in\mathcal{P}}r_{v_{0}v}\cdot f_{v_{0}v}^{2}=\sum_{e\in\mathcal{P}}\frac{m^{1+2\eta}}{a(v)}\cdot f_{v_{0}v}^{2}\leq3m\label{eq:energy-precond}
\end{equation}

Applying Cauchy-Schwarz we obtain a bound for the $\ell_{1}$ norm
of $f_{\mathcal{P}}$:

\begin{align*}
\norm{f_{\mathcal{P}}}_{1} & =\sum_{(v_{0},v)\in\mathcal{P}}\sqrt{a(v)}\cdot\frac{\abs{f_{v_{0}v}}}{\sqrt{a(v)}}\\
 & \leq\sqrt{\left(\sum_{(v_{0},v)\in\mathcal{P}}a(v)\right)\left(\sum_{(v_{0},v)\in\mathcal{P}}\frac{f_{v_{0}v}^{2}}{a(v)}\right)}\\
 & \leq\sqrt{2\norm{\nu}_{1}\cdot\frac{3m}{m^{1+2\eta}}}\\
 & \leq\sqrt{2\cdot3m\cdot\frac{3m}{m^{1+2\eta}}}\\
 & \leq5\cdot m^{1/2-\eta}
\end{align*}

We used the fact that summing over all $a(v)$'s we obtain precisely
twice the total measure, since each measure on an arc gets counted
exactly twice. Then we used Invariant 1, and (\ref{eq:energy-precond}).
\end{proof}
This proposition shows that the demand routed within a progress step
is off by at most $5m^{1/2-\eta}$ from the demand routed by the iterate
$f$ at that point. Using this fact, we can show that the flow obtained
in the end routes a demand that is off by at most $\cT\cdot m^{1/2-\eta}$
from the original demand. 
\begin{lem}
\label{lem:Consider-the-last}Consider the last flow iterate $f^{T}$,
and let $\sigma^{T}$ be the demand routed by this flow. Then the
difference between $\sigma^{T}$ and the original demand $\sigma$
satisfies $\norm{\sigma^{T}-\sigma}_{1}\leq\cT\cdot m^{1/2-\eta}$.\end{lem}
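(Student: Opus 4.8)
The plan is to track how the routed demand drifts across the algorithm's iterations. Perturbations, as well as the between‑phase resets of the auxiliary‑edge resistances, both leave the primal flow $f$ untouched, so the routed demand can change only at the $T:=\cT\cdot m^{1/2-\eta}$ progress steps; I would bound the per‑step change using Proposition~\ref{prop:precond-electrical-demand} and then sum.

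First I would pin down how a single progress step moves the routed demand. Write $f^{(t)}$ for the flow iterate after $t$ progress steps and $\sigma^{(t)}$ for the demand it routes, so $\sigma^{(0)}=\sigma$. At step $t$ we compute the electrical flow $\hat{f}$ on the \emph{augmented} graph carrying $\sigma^{(t)}$ (extended by $0$ at the preconditioner vertex $v_{0}$), and set $f^{(t+1)} := (1-\delta^{(t)})\,f^{(t)} + \delta^{(t)}\,\hat{f}|_{E}$, where $\hat{f}|_{E}$ is the restriction of $\hat{f}$ to the original arc set $E$. Since each vertex of $P$ meets exactly one auxiliary edge and $v_{0}$ carries no demand, $\hat{f}|_{E}$ is a genuine flow in the original graph whose demand equals $\sigma^{(t)}+d^{(t)}$, where $d^{(t)}$ is supported on $P$, each $d^{(t)}_{v}$ equals (up to sign) the flow of $\hat{f}$ on the auxiliary edge at $v$, $\sum_{v}d^{(t)}_{v}=0$, and $\|d^{(t)}\|_{1}=\|\hat{f}_{\mathcal{P}}\|_{1}$. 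Therefore $\sigma^{(t+1)}=\sigma^{(t)}+\delta^{(t)}d^{(t)}$, and Proposition~\ref{prop:precond-electrical-demand} gives
\[
\|\sigma^{(t+1)}-\sigma^{(t)}\|_{1}=\delta^{(t)}\,\|\hat{f}_{\mathcal{P}}\|_{1}\le 5\,\delta^{(t)}\cdot m^{1/2-\eta}.
\]

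Then I would sum over the $T$ progress steps: by the triangle inequality $\|\sigma^{T}-\sigma\|_{1}\le 5\,m^{1/2-\eta}\sum_{t=1}^{T}\delta^{(t)}$. A progress step is taken only when $\|\rho\|_{\nu,4}\le\|\rho\|_{\nu,3}\le\crho\cdot m^{1/2-\eta}$ (otherwise the while‑loop perturbs), so Theorem~\ref{thm:l_4_improvement_step} licenses the step size $\delta^{(t)}=\Theta(1/(\crho\,m^{1/2-\eta}))$ used in the proof of Theorem~\ref{thm:Asumming-Invariant-2}, whence $\sum_{t=1}^{T}\delta^{(t)}=\Theta(\cT/\crho)=\Theta(\log\tilde{W})$. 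Plugging this in and recalling $\cT=3\crho\log\tilde{W}$ yields $\|\sigma^{T}-\sigma\|_{1}=O(\log\tilde{W})\cdot m^{1/2-\eta}\le\cT\cdot m^{1/2-\eta}$, as claimed. (One can also sidestep the choice of step sizes and bound $\sum_{t}\delta^{(t)}\le\ln(\hat{\mu}^{(0)}/\hat{\mu}^{(T)})=O(\log(mW))$, using that $\hat{\mu}$ contracts by the factor $1-\delta^{(t)}$ at step $t$, from $\hat{\mu}^{(0)}=\|c\|_{\infty}$ down to $\hat{\mu}^{(T)}=\Theta(m^{-3})$.)

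The only point requiring care is the flow‑conservation bookkeeping for $d^{(t)}$: one must check that deleting the auxiliary edges converts the electrical flow on the augmented graph into a feasible flow on the original graph whose demand differs from $\sigma^{(t)}$ by precisely the vector of auxiliary‑edge flows, with no cancellation (each $v\in P$ touches a single auxiliary edge) and with total leakage zero (because $v_{0}$ has zero demand). Granting that, the rest is a one‑line contraction‑and‑sum; note also that the specific value $\eta=1/14$ plays no role here — it is fixed only so that $\cT\cdot m^{1/2-\eta}$, the magnitude of the demand correction, is small enough for the repair routine of Section~\ref{sec:repair} to stay within the target running time.
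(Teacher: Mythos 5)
Your proposal is correct and follows essentially the same route as the paper: Proposition~\ref{prop:precond-electrical-demand} bounds the demand leaked through the auxiliary edges in one electrical-flow computation, the convex-combination update scales that leak by $\delta$, and one sums over the $\cT\cdot m^{1/2-\eta}$ progress steps (the paper bounds each per-step increment $5\delta\, m^{1/2-\eta}\le 5/(8\crho)\le 1$ and concludes by induction, whereas you bound $\sum_t\delta^{(t)}=O(\log \tilde{W})$ directly — a slightly sharper but equivalent accounting). Your extra care with the flow-conservation bookkeeping when the auxiliary edges are removed is a point the paper's proof passes over implicitly, and it is handled correctly.
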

\begin{proof}
We show by induction that after $t$ iterations, the demand routed
by $f^{t}$ satisfies $\norm{\sigma^{t}-\sigma}_{1}\leq t$. The base
case is $t=0$ where $\sigma^{0}=\sigma$, and the hypothesis holds.
Now let us show that if the induction hypothesis holds after $t-1$
iterations, then it also holds after $t$. By Proposition \ref{prop:precond-electrical-demand}
we have that the progress step first produces an electrical flow $\hat{f}^{t}$
which routes a demand $\hat{\sigma}^{t}$ on the original graph satisfying
$\norm{\hat{\sigma}^{t}-\sigma^{t-1}}_{1}\leq5m^{1/2-\eta}$. Therefore,
noting that the flow gets updated by setting it to $(1-\delta)f^{t}+\delta\elf^{t}$
for $\delta\leq\frac{1}{8\norm{\rho}_{\nu,4}}$ (see Section \ref{sub:Making-Progress}),
and using the fact that progress steps are done only when $\norm{\rho}_{\nu,3}\leq\crho\cdot m^{1/2-\eta}$
(therefore $\delta\leq\frac{1}{8\cdot\crho\cdot m^{1/2-\eta}}$),
we have that the demand $\sigma^{t}$ routed by the averaged flow
satisfies:

\begin{eqnarray*}
\norm{\sigma^{t}-\sigma}_{1} & \leq & \norm{\sigma^{t-1}-\sigma}_{1}+\norm{\sigma^{t-1}-\sigma^{t}}\\
 & = & \norm{\sigma^{t-1}-\sigma}_{1}+\norm{\sigma^{t-1}-\left((1-\delta)\sigma^{t-1}+\delta\hat{\sigma}^{t}\right)}_{1}\\
 & = & \norm{\sigma^{t-1}-\sigma}_{1}+\delta\norm{\sigma^{t-1}-\hat{\sigma}^{t}}_{1}\\
 & \leq & t-1+\delta\cdot5m^{1/2-\eta}\\
 & \leq & t-1+\frac{1}{8\cdot\crho m^{1/2-\eta}}\cdot5m^{1/2-\eta}\\
 & \leq & t-1+1\\
 & = & t
\end{eqnarray*}

Centering the solution does not change the demand, so the newly obtained
flow $f^{t}$ has the same demand $\sigma^{t}$, which satisfies the
bound above. Therefore, after $T=\cT\cdot m^{1/2-\eta}$ iterations,
we have that $\norm{\sigma^{T}-\sigma}_{1}\leq\cT\cdot m^{1/2-\eta}$.
\end{proof}

\subsection{Proving Invariant 2}

We can finally proceed with proving that, for this version of the
algorithm, Invariant 2 holds. We do so by upper bounding the $\ell_{3}$
norm of the congestions of unperturbable arcs $\norm{\rho_{\bar{\mathcal{S}}}}_{\nu,3}$.
Showing that this quantity is significantly smaller than $\tilde{O}\left(\left(\cT\crho^{2}\right)^{1/6}\cdot m^{1/2-\eta}\right)$
whenever we perform a perturbation automatically implies our result;
this is because this lower bounds the $\ell_{3}$ norm of congestions
of perturbable arcs, and therefore also their energy.
\begin{prop}
Let $\bar{\mathcal{S}}$ be the set of unperturbable arcs. If, whenever
we perform a perturbation, $\norm{\rho_{\bar{\mathcal{S}}}}_{\nu,3}\leq10(\cT\crho^{2})^{1/6}\cdot m^{1/2-\eta}$,
then Invariant 2 holds.\end{prop}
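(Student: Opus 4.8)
The plan is to obtain Invariant 2 essentially for free from three ingredients: (i) the triggering condition for a perturbation in Algorithm~\ref{tab:precond-algo}, which says that a perturbation is performed only while $\norm{\rho}_{\nu,3}>\crho\cdot m^{1/2-\eta}$; (ii) the hypothesised upper bound on the $\ell_3$ congestion of the unperturbable arcs; and (iii) the elementary fact, already used to argue $\norm{\rho}_{\nu,4}\le\norm{\rho}_{\nu,2}$ in Section~\ref{sec:basic_framework}, that the weighted $\ell_p$-norms are non-increasing in $p$ whenever $\nu_e\ge 1$. So no new machinery about electrical flows is needed in this proposition; it is a bookkeeping argument that converts the given $\ell_3$ estimate on $\bar{\mathcal S}$ into the required $\ell_2$ estimate on $\mathcal S$.

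Concretely, first I would note that at the moment a perturbation is performed we have $\norm{\rho}_{\nu,3}^3>\crho^3\cdot m^{3/2-3\eta}$, and that $\norm{\rho}_{\nu,3}^3=\norm{\rho_{\mathcal S}}_{\nu,3}^3+\norm{\rho_{\bar{\mathcal S}}}_{\nu,3}^3$ since $\mathcal S$ and $\bar{\mathcal S}$ partition the original arcs (the footnote convention that $\rho$ is measured only on original-graph arcs). Cubing the hypothesis gives $\norm{\rho_{\bar{\mathcal S}}}_{\nu,3}^3\le 1000\,(\cT\crho^2)^{1/2}\,m^{3/2-3\eta}$, so that
\[
\norm{\rho_{\mathcal S}}_{\nu,3}^3 \;>\; \crho^3\,m^{3/2-3\eta}-1000\,(\cT\crho^2)^{1/2}\,m^{3/2-3\eta}.
\]
Here is the one place a small computation is required: with the chosen parameters $\crho=400\sqrt3\cdot\log^{1/3}W$ and $\cT=3\crho\log W$ one has $\cT\crho^2=3\crho^3\log W$, so $1000\,(\cT\crho^2)^{1/2}=1000\sqrt{3\log W}\,\crho^{3/2}\le\tfrac12\crho^3$ because $\crho^{3/2}=(400\sqrt3)^{3/2}\log^{1/2}W\ge 2000\sqrt{3\log W}$ comfortably. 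Hence $\norm{\rho_{\mathcal S}}_{\nu,3}^3>\tfrac12\crho^3\,m^{3/2-3\eta}$, and raising to the power $2/3$ (noting $(3/2-3\eta)\cdot\tfrac23=1-2\eta$) yields $\norm{\rho_{\mathcal S}}_{\nu,3}^2>2^{-2/3}\crho^2\,m^{1-2\eta}$.

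Finally I would pass from the $\ell_3$-norm to the $\ell_2$-norm exactly as in Section~\ref{sec:basic_framework}: since every $\nu_e\ge 1$ (Invariant~1), $\max_e\rho_e^2\le\max_e\nu_e\rho_e^2\le\norm{\rho_{\mathcal S}}_{\nu,2}^2$, which gives $\norm{\rho_{\mathcal S}}_{\nu,3}^3=\sum_{e\in\mathcal S}\nu_e\rho_e^3\le(\max_e\rho_e)\norm{\rho_{\mathcal S}}_{\nu,2}^2\le\norm{\rho_{\mathcal S}}_{\nu,2}^3$, hence $\norm{\rho_{\mathcal S}}_{\nu,2}^2\ge\norm{\rho_{\mathcal S}}_{\nu,3}^2>2^{-2/3}\crho^2\,m^{1-2\eta}\ge\tfrac12\crho\cdot m^{1-2\eta}$, where the last step uses $\crho\ge1$ (so $\crho^2\ge\crho$) and $2^{-2/3}>\tfrac12$. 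This is precisely the statement of Invariant~2. The step I expect to need the most care is only the constant accounting in the middle, and reconciling the threshold $10(\cT\crho^2)^{1/6}m^{1/2-\eta}$ in the hypothesis with the congestion threshold in Definition~\ref{def:perturbable}; the genuine difficulty lies entirely in \emph{establishing} the hypothesis $\norm{\rho_{\bar{\mathcal S}}}_{\nu,3}\le 10(\cT\crho^2)^{1/6}\cdot m^{1/2-\eta}$, which is where the preconditioning/auxiliary edges of Section~\ref{sec:preconditioning} are needed and which I would handle separately.
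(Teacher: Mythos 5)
Your proposal is correct and follows essentially the same route as the paper: both subtract the hypothesised bound on $\norm{\rho_{\bar{\mathcal{S}}}}_{\nu,3}$ from the perturbation-trigger lower bound on $\norm{\rho}_{\nu,3}$, check via the definitions of $\crho$ and $\cT$ that the remainder is at least $\tfrac12\crho\cdot m^{1/2-\eta}$ (up to your harmless $2^{-2/3}$ constant), and pass to $\norm{\rho_{\mathcal{S}}}_{\nu,2}^2$ by monotonicity of the weighted norms. The only cosmetic difference is that you use the exact additive decomposition of the cubed $\ell_3$-norms where the paper uses the triangle inequality, and you spell out the $\norm{\cdot}_{\nu,2}\ge\norm{\cdot}_{\nu,3}$ step that the paper merely asserts.
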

\begin{proof}
Recall that when we perform a perturbation, we have $\norm{\rho}_{\nu,3}\geq\crho\cdot m^{1/2-\eta}$.
If $\norm{\rho_{\bar{\mathcal{S}}}}_{\nu,3}\leq10(\cT\crho^{2})^{1/6}\cdot m^{1/2-\eta}$,
then 
\begin{eqnarray*}
\norm{\rho_{\mathcal{S}}}_{\nu,3} & \geq & \crho\cdot m^{1/2-\eta}-10(\cT\crho^{2})^{1/6}\cdot m^{1/2-\eta}\\
 & = & \crho\cdot m^{1/2-\eta}-10\left(3\log\tilde{W}\cdot\crho^{3}\right)^{1/6}\cdot m^{1/2-\eta}\\
 & = & \crho\cdot m^{1/2-\eta}-10\left(3\log\tilde{W}\right)^{1/6}\cdot\crho^{1/2}\cdot m^{1/2-\eta}\\
 & = & (400\sqrt{3}\log^{1/3}\tilde{W})m^{1/2-\eta}-10\cdot3^{1/6}\cdot\log^{1/6}\tilde{W}\cdot(400\sqrt{3}\log^{1/3}\tilde{W})^{1/2}\cdot m^{1/2-\eta}\\
 & = & 200\sqrt{3}\log^{1/3}\tilde{W}\cdot m^{1/2-\eta}\\
 & = & \frac{1}{2}\crho\cdot m^{1/2-\eta}
\end{eqnarray*}
 Hence $\norm{\rho_{\mathcal{S}}}_{\nu,2}^{2}\geq\norm{\rho_{\mathcal{S}}}_{\nu,3}^{2}\geq\frac{1}{2}\crho\cdot m^{1-2\eta}$.
\end{proof}
Upper bounding the $\ell_{3}$ norm of $\rho$ on unperturbable edges
is done by partitioning them into sets, and separately bounding their
$\ell_{3}^{3}$ norms. As a matter of fact, all the work we have to
do concerns arcs with congestions within the range $[\crho^{3}\cdot m^{1/2-3\eta},\sqrt{3}\cdot m^{1/2}]$,
since the energy upper bound enforced by Invariant 2 controls the
maximum congestion, while those with lower congestion immediately
satisfy the required $\ell_{3}$ norm bound.
\begin{lem}
\label{lem:bd-l3}$\norm{\rho_{\bar{\mathcal{S}}}}_{\nu,3}\leq10(\cT\crho^{2})^{1/6}\cdot m^{1/2-\eta}$.
\end{lem}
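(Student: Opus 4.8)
The plan is to split $\bar{\mathcal{S}}$ according to the magnitude of the congestion and to invoke the preconditioner only where a naive energy bound is insufficient. The starting point is that, by Invariant~1 together with Lemma~\ref{lem:rho_l_2_norm_bound}, the electrical energy is always $\norm{\rho}_{\nu,2}^{2}=\mathcal{E}(\elf)\le\norm{\nu}_{1}\le 3m$; in particular $\rho_{e}\le\sqrt{3m}$ for every arc, and $\sum_{e}\nu_{e}\rho_{e}^{2}\le 3m$. I would first dispose of the arcs of $\bar{\mathcal{S}}$ whose congestion lies below a suitable threshold $T_{0}$ of order $m^{1/2-3\eta}$ up to poly-logarithmic factors: these contribute at most $T_{0}\sum_{e}\nu_{e}\rho_{e}^{2}\le 3m\cdot T_{0}=\tilde O(m^{3/2-3\eta})$ to $\norm{\rho_{\bar{\mathcal{S}}}}_{\nu,3}^{3}$, which is within the cube of the target. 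It then remains to handle arcs of $\bar{\mathcal{S}}$ whose congestion lies in $[T_{0},\sqrt{3}\,m^{1/2}]$; note that any such arc whose congestion exceeds the perturbability threshold $\sqrt{40\cT\crho^{2}}\cdot m^{1/2-3\eta}$ must, by Definition~\ref{def:perturbable}, be unperturbable because of its flow being small, i.e. $f_{e}<m^{-1/2+3\eta}$.

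For the remaining range I would decompose the arcs into $O(\log m)$ dyadic buckets by congestion level, $\rho_{e}\in[2^{k}T_{0},2^{k+1}T_{0})$, and for each bucket bound $\sum_{e}\nu_{e}\rho_{e}^{3}$ by the top congestion $2^{k+1}T_{0}$ of the bucket times the electrical energy carried by the bucket. The trivial bound $3m$ on the latter is not good enough --- it only recovers the $\tilde O(m^{3/2})$ estimate of Section~\ref{sec:longer_steps}; the whole point of the preconditioner is to show that the energy carried by a bucket is much smaller than $3m$ (roughly decaying with the flow scale of its arcs), after which the $2^{k}$-weighting turns the sum over buckets into a geometric-type series, dominated by its top term ($\rho_{e}\approx m^{1/2}$), that lands inside $\tilde O(\cT\crho^{2}\cdot m^{3/2-3\eta})$, i.e. the cube of the claimed bound.

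The heart of the matter --- and the step I expect to be the real obstacle --- is this per-bucket energy estimate, which is where the auxiliary edges are used. Every arc $e=(u,v)$ with $u\in P$, $v\in Q$ admits the bypass $u\to v_{0}\to\bar{u}\to v$ consisting of two auxiliary edges and the partner arc $\bar{e}$; since $f_{\bar{e}}=1-f_{e}$ (partner arcs at a degree-two vertex of $Q$ carry complementary flow), this bypass uses the \emph{low}-resistance companion of $e$ and is therefore cheap relative to the highly resistive arc $e$ itself when $f_{e}$ is small. Comparing $\elf$ with the flow obtained from it by rerouting, through these bypasses, the currents that $\elf$ places on the small-flow arcs of the bucket, and using energy-minimality of $\elf$, one can charge the energy those arcs carry to the energy added by the reroute. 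The delicate point is that distinct arcs of the bucket may share a $P$-endpoint or a common partner arc, so the rerouted currents accumulate on the same auxiliary edges and partner arcs; verifying that the choice $r_{v_{0}v}=m^{1+2\eta}/a(v)$ in~(\ref{eq:prec-resist}), with $a(v)$ aggregating exactly the measures of the arcs at $v$ and their partners (so that $\sum_{v}a(v)=2\norm{\nu}_{1}\le 6m$, as in the proof of Proposition~\ref{prop:precond-electrical-demand}), absorbs precisely this accumulation is the technical core; the intermediate buckets, whose arcs need not be small-flow, require a parallel but slightly more careful argument. Feeding $f_{e}<m^{-1/2+3\eta}$ into the resulting estimate and summing the $O(\log m)$ buckets yields the bound $\norm{\rho_{\bar{\mathcal{S}}}}_{\nu,3}\le 10(\cT\crho^{2})^{1/6}\cdot m^{1/2-\eta}$.

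Finally, a word on why this pins $\eta=1/14$: the auxiliary-resistance exponent $1+2\eta$ must be large enough both for the bypass-rerouting energy to be charged to a small quantity and for the cumulative demand drift to stay $O(m^{1/2-\eta})$ --- needed so the combinatorial demand-repair of Section~\ref{sec:repair}, whose cost scales with that drift by Lemma~\ref{lem:Consider-the-last}, remains affordable --- yet small enough that the bypass still out-competes the highly resistive small-flow arcs across the whole flow range the dyadic sum has to cover. Balancing this two-sided constraint against wanting $\tilde O(m^{1/2-\eta})$ iterations of cost $\tilde O(m)$ to add up to the target $\tilde O(m^{10/7})$ running time is what fixes the exponent at $\eta=1/14$.
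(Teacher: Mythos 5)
Your outline gets the architecture right --- a trivial energy bound below a congestion threshold of order $m^{1/2-3\eta}$, the observation that unperturbable arcs above the perturbability threshold must satisfy $f_{e}<m^{-1/2+3\eta}$, and a level-by-level summation over congestion scales dominated by its top term --- but the entire quantitative content of the lemma sits in the step you explicitly defer as ``the technical core,'' and the mechanism you propose for it does not obviously close. Energy-minimality of $\elf$ against a competitor obtained by rerouting through the bypass $u\to v_{0}\to\bar{u}\to v$ controls only the \emph{total} energy; it does not by itself localize a bound to the energy carried by one bucket of arcs. Writing the competitor as $\elf+c$ for a circulation $c$ cancelling the bucket's currents, the comparison $\mathcal{E}(\elf)\le\mathcal{E}(\elf+c)$ leaves cross terms $2\sum_{e}r_{e}\elf_{e}c_{e}$ on the auxiliary edges and partner arcs, which already carry electrical flow; these are of the same order as the quantities being bounded and are not handled by the accumulation bookkeeping you describe. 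So the core estimate is both unproved and, as proposed, resting on an argument that would need substantial repair.

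The paper avoids rerouting altogether. It first decomposes $\elf=\elfp+\elfq$ (Lemma \ref{lem:orthog}) so that $\elfp$ has zero net flow at every vertex of $Q\cup\{v_{0}\}$, splits $\bar{\mathcal{S}}$ according to whether $\rhop_{e}\geq\frac{1}{2}\rho_{e}$, and bounds the $\elfq$ contribution pair-by-pair via $r_{e}(\elfq_{e})^{2}+r_{\bar{e}}(\elfq_{\bar{e}})^{2}\le\nu_{e}+\nu_{\bar{e}}$ (Lemma \ref{lem:partner-arc-q-bound}, where the bound $\nu_{\bar{e}}/\nu_{e}\le m^{1/2-3\eta}$ enters). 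It then collapses each partner pair into a single edge of a quotient network on $P\cup\{v_{0}\}$ and obtains the localization of energy to high-congestion arcs not from energy minimality but from the dual, cut-based statement of Lemma \ref{lem:partial-precon}: for every voltage level $\phi$ the net flow of $\elfn$ across the cut induced by $\phi$ equals that of the reference flow $\fn$, and integrating over $\phi$ charges the energy of arcs with voltage drop at least $V$ to arcs of $\fn$ having an endpoint at embedding distance at least $V/4$ from $\phi_{0}$, which is in turn charged to the energy of the auxiliary edges (this is where $r_{v_{0}v}=m^{1+2\eta}/a(v)$ is actually used). That potential/embedding argument, together with the $\elfp/\elfq$ decomposition needed because no auxiliary edges touch $Q$, is the ingredient your proposal is missing; without it or a worked-out substitute, the proof is not complete.
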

Proving Lemma \ref{lem:bd-l3} needs a careful analysis based on bounding
the contributions from edges at different scales. Doing so requires
extending the analysis from \cite{Madry13} in a slightly more complicated
fashion. That analysis looks at the line embedding of graph vertices
given by their potentials, and separately upper bounds the energy
contributed by sets of arcs according to how stretched each of them
is in the embedding. One particular obstacle posed by our setup is
that, while the analysis crucially relies on the existence of auxiliary
arcs, in our case there are no auxiliary arcs connected to vertices
in $Q$. This makes it difficult to prove statements about the amount
of energy on partner arcs $(u,v)$ and $(\bar{u},v)$ (connected to
the same vertex $v$ in $Q$), since the auxiliary edges only control
how far apart in the embedding $u$ and $\bar{u}$ are. Unfortunately,
they do not immediately say anything about $v$, which could potentially
be very far from both $u$ and $\bar{u}$, and thus the two partner
arcs could contribute a lot of energy.

We will see that our desired bound still holds. Our proof technique
relies on decomposing the electrical flow into a sum of two electrical
flows $\elfp$ and $\elfq$, one of which can be bounded directly
and the other of which has no net flow through any arc in $Q$. 

We additionally express $f=\fp+\elfq$, where we define 
\begin{equation}
\fp=f-\elf^{(Q)}
\end{equation}

The following lemma, whose proof can be found in Appendix \ref{sec:orthog-flows},
states the existence and properties of such electrical flows.
\begin{lem}
\label{lem:orthog}Let $\hat{f}$ be an electrical flow in the graph
with auxiliary edges. There exist electrical flows $\elfp$ and $\elfq$
such that:
\begin{enumerate}
\item $\elfp$ has no demand on vertices in $Q\cup\left\{ v_{0}\right\} $.
\item For any pair of partner edges $e$ and $\bar{e}$, and writing $\fp=f-\elfq$,
we have $r_{e}\left(\fp_{e}\right)^{2}+r_{\bar{e}}\left(\fp_{e}\right)^{2}\leq\nu_{e}+\nu_{\bar{e}}$
and $r_{e}\left(\elfq_{e}\right)^{2}+r_{\bar{e}}\left(\elfq\right)^{2}\leq\nu_{e}+\nu_{\bar{e}}$.
\end{enumerate}

\end{lem}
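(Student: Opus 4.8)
The plan is to obtain the decomposition by splitting not the flow $\hat{f}$ itself but the potentials $\pot$ that induce it, writing $\hat{f}=R^{-1}B^{\top}\pot$. The key structural feature to exploit is that $Q\cup\{v_{0}\}$ is an independent set all of whose neighbours lie in $P$: each $Q$-vertex has degree two with both endpoints in $P$, and $v_{0}$ is joined by auxiliary edges only to $P$. Hence the conductance-weighted average prescription determines a unique vector $\potp$ with $\potp_{w}=\pot_{w}$ for $w\in P$ and, for each $v\in Q\cup\{v_{0}\}$, $\potp_{v}=\big(\sum_{w\sim v}r_{vw}^{-1}\potp_{w}\big)/\big(\sum_{w\sim v}r_{vw}^{-1}\big)$ — i.e. $\potp$ is the harmonic extension of $\pot|_{P}$, computed in one shot with no fixed-point iteration. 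Set $\potq:=\pot-\potp$ (so $\potq$ vanishes on $P$), $\elfp:=R^{-1}B^{\top}\potp$, and $\elfq:=\hat{f}-\elfp=R^{-1}B^{\top}\potq$. Both are electrical flows by construction and sum to $\hat{f}$. Property~1 is then immediate: the demand of $\elfp$ at $v\in Q\cup\{v_{0}\}$ is $(BR^{-1}B^{\top}\potp)_{v}=(L\potp)_{v}=\sum_{w\sim v}r_{vw}^{-1}(\potp_{v}-\potp_{w})=0$ by the harmonicity of $\potp$.

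For Property~2 I would fix a $Q$-vertex $v$ with partner arcs $e=(u,v)$ and $\bar{e}=(\bar{u},v)$, both oriented into $v$ (which is the natural orientation coming from the reduction). Since $\potq$ vanishes on $P$, $\elfq_{e}=\potq_{v}/r_{e}$ and $\elfq_{\bar{e}}=\potq_{v}/r_{\bar{e}}$, so $\elfq$ has net flow $\potq_{v}(r_{e}^{-1}+r_{\bar{e}}^{-1})$ into $v$. Expanding $\potq_{v}=\pot_{v}-\potp_{v}$, substituting $\potp_{v}=(r_{e}^{-1}\pot_{u}+r_{\bar{e}}^{-1}\pot_{\bar{u}})/(r_{e}^{-1}+r_{\bar{e}}^{-1})$ and $r_{e}\hat{f}_{e}=\pot_{v}-\pot_{u}$, $r_{\bar{e}}\hat{f}_{\bar{e}}=\pot_{v}-\pot_{\bar{u}}$, gives $\potq_{v}=(\hat{f}_{e}+\hat{f}_{\bar{e}})/(r_{e}^{-1}+r_{\bar{e}}^{-1})$, hence the net flow of $\elfq$ into $v$ equals $\hat{f}_{e}+\hat{f}_{\bar{e}}=-\sigma_{v}$ by flow conservation of $\hat{f}$ at $v$ (valid since no auxiliary edge touches $v$). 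The primal iterate $f$ has exactly the same net flow $-\sigma_{v}$ into $v$, so $\fp=f-\elfq$ has zero net flow through $v$, i.e. $\fp_{\bar{e}}=-\fp_{e}$.

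The final step is a one-line Pythagorean identity on the two-element edge set $\{e,\bar{e}\}$. In the energy inner product restricted to these two edges, $\langle\fp,\elfq\rangle=r_{e}\fp_{e}\elfq_{e}+r_{\bar{e}}\fp_{\bar{e}}\elfq_{\bar{e}}=\fp_{e}\,(r_{e}\elfq_{e}-r_{\bar{e}}\elfq_{\bar{e}})=\fp_{e}(\potq_{v}-\potq_{v})=0$, using $r_{e}\elfq_{e}=\potq_{v}=r_{\bar{e}}\elfq_{\bar{e}}$ and $\fp_{\bar{e}}=-\fp_{e}$. Therefore $f=\fp+\elfq$ is an orthogonal decomposition on $\{e,\bar{e}\}$, so $r_{e}f_{e}^{2}+r_{\bar{e}}f_{\bar{e}}^{2}=(r_{e}\fp_{e}^{2}+r_{\bar{e}}\fp_{\bar{e}}^{2})+(r_{e}\elfq_{e}^{2}+r_{\bar{e}}\elfq_{\bar{e}}^{2})$; by centrality and $r_{e}=\nu_{e}/f_{e}^{2}$ the left-hand side is exactly $\nu_{e}+\nu_{\bar{e}}$, and dropping either nonnegative summand on the right yields both claimed inequalities.

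I expect the only real care to be in the bookkeeping around $v_{0}$ and in checking that $\potp$ is genuinely well-defined — which it is, precisely because of the independent-set structure, so that no harmonic iteration is needed. If one prefers to run the argument with the approximate electrical flow of Theorem~\ref{thm:vanilla_SDD_solver} rather than the exact one, the only change is that $\hat{f}_{e}+\hat{f}_{\bar{e}}$ may miss $-\sigma_{v}$ by the (negligible) demand error, perturbing $\fp$'s net flow at $v$ slightly without affecting the structure of the estimate. Conceptually the whole lemma is just the statement that on a two-edge series path the flow splits energy-orthogonally into the part carrying the demand at the middle vertex (this is $\elfq$) and the part passing straight through (this is $\fp$), each of energy at most that of $f$.
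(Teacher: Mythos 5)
Your proposal is correct and follows essentially the same route as the paper: split the potentials $\pot$ rather than the flow, take the conductance-weighted average at each degree-two vertex of $Q$ to get $\potp$, set $\potq=\pot-\potp$, and conclude via $\mr$-orthogonality of $\fp$ and $\elfq$ restricted to each partner pair (your explicit two-edge Pythagorean computation is exactly the paper's orthogonality argument spelled out). The only difference is immaterial: you harmonically extend $\pot|_{P}$ at $v_{0}$ as well, whereas the paper simply keeps $\potp_{v_{0}}=\pot_{v_{0}}$; both choices make $\elfp$ conserve flow at $v_{0}$ and neither affects Property~2, which concerns only partner arcs.
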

Now, we additionally define $\rhop_{e}=\frac{\abs{\elfp_{e}}}{f_{e}}$
and $\rhoq_{e}=\frac{\abs{\elfq_{e}}}{f_{e}}$. We split $\mathcal{\bar{S}}$
into two subsets: 
\[
\mathcal{A}_{1}=\left\{ e\in\mathcal{\bar{S}}\vert\rhop_{e}\geq\frac{1}{2}\rho_{e}\right\} 
\]
and 
\[
\mathcal{A}_{2}=\left\{ e\in\mathcal{\bar{S}}\vert\rhop_{e}<\frac{1}{2}\rho_{e}\right\} _{2}
\]
for which we bound the contributions to the $\ell_{3}$ norm separately.
Note that for any edge $e$ in $\mathcal{A}_{2}$, $\rhoq_{e}\geq\frac{1}{2}\rho_{e}$.
This implies that 
\begin{align*}
\norm{\rho_{\mathcal{\bar{S}}}}_{\nu,3}^{3} & \leq\norm{\rho_{\mathcal{A}_{1}}}_{\nu,3}^{3}+8\norm{\rhoq_{\mathcal{A}_{2}}}_{\nu,3}^{3}\\
 & \leq\norm{\rho_{\mathcal{A}_{1}}}_{\nu,3}^{3}+8\norm{\rhoq}_{\nu,3}^{3}
\end{align*}
First, we want to bound $\norm{\rhoq}_{\nu,3}^{3}$:
\begin{lem}
\label{lem:partner-arc-q-bound}Suppose that for all pairs of partner
edges $e$ and $\bar{e}$, $\frac{\max(\nu_{e},\nu_{\bar{e}})}{\min(\nu_{e},\nu_{\bar{e}})}\leq x$.
Then $\norm{\rhoq}_{\nu,3}^{3}\leq(3\sqrt{1+x})m$.\end{lem}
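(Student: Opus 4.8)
The plan is to exploit the structure of $\elfq$: by Lemma \ref{lem:orthog} it has \emph{no net flow through any vertex in $Q$}, which for a degree-$2$ vertex $v\in Q$ with incident partner arcs $e=(u,v)$, $\bar e=(\bar u,v)$ forces $\elfq_e = \pm\elfq_{\bar e}$ (the flow goes straight through), so $|\elfq_e|=|\elfq_{\bar e}|$ for every partner pair. I would start the computation by writing $\norm{\rhoq}_{\nu,3}^3 = \sum_{e}\nu_e\rhoq_e{}^3 = \sum_e \nu_e \frac{|\elfq_e|^3}{f_e^3}$, and then grouping terms over partner pairs. Using the energy bound from part (2) of Lemma \ref{lem:orthog}, $r_e(\elfq_e)^2 + r_{\bar e}(\elfq_e)^2 \le \nu_e + \nu_{\bar e}$ (using $|\elfq_e|=|\elfq_{\bar e}|$), together with $r_e = \nu_e/f_e^2$, this reads $\nu_e \rhoq_e{}^2 + \nu_{\bar e}\rhoq_{\bar e}{}^2 \le \nu_e+\nu_{\bar e}$, i.e.\ the weighted $\ell_2^2$ mass of $\rhoq$ on each partner pair is at most $\nu_e+\nu_{\bar e}$. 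Summing over all pairs and invoking Invariant~1, $\norm{\rhoq}_{\nu,2}^2 \le \norm{\nu}_1 \le 3m$.

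The next step is to convert the $\ell_2$ bound into an $\ell_3$ bound, which requires an $\ell_\infty$-type bound on $\rhoq$. Here is where the hypothesis $\max(\nu_e,\nu_{\bar e})/\min(\nu_e,\nu_{\bar e})\le x$ enters: from $\nu_e\rhoq_e{}^2 \le \nu_e\rhoq_e{}^2 + \nu_{\bar e}\rhoq_{\bar e}{}^2 \le \nu_e+\nu_{\bar e} \le (1+x)\nu_e$, we get $\rhoq_e \le \sqrt{1+x}$ for \emph{every} arc $e$. Then the standard interpolation $\norm{\rhoq}_{\nu,3}^3 = \sum_e \nu_e\rhoq_e{}^3 \le \big(\max_e \rhoq_e\big)\sum_e \nu_e\rhoq_e{}^2 \le \sqrt{1+x}\cdot 3m$ gives exactly $\norm{\rhoq}_{\nu,3}^3 \le (3\sqrt{1+x})m$, as claimed.

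The main obstacle — and the reason Lemma \ref{lem:orthog} was set up the way it was — is establishing that $\rhoq$ is pointwise bounded at all. An electrical flow's congestion on an individual arc can in principle be as large as $\Theta(\sqrt m)$ (that is precisely the difficulty driving the whole paper), so the pointwise bound $\rhoq_e\le\sqrt{1+x}$ is the crucial gain. It hinges entirely on the "no net flow through $Q$" property of $\elfq$: without it, $|\elfq_e|$ and $|\elfq_{\bar e}|$ are unrelated and the per-pair energy inequality only controls their sum, giving no pointwise control. So the plan is really: (i) use the structural property to get $|\elfq_e|=|\elfq_{\bar e}|$ on partner pairs; (ii) feed this into the per-pair energy bound of Lemma \ref{lem:orthog} to get both the global $\ell_{\nu,2}^2 \le 3m$ bound and the pointwise $\rhoq_e \le \sqrt{1+x}$ bound; (iii) interpolate. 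Everything after step (i) is routine; the content is in recognizing that the orthogonal decomposition is exactly what makes $\elfq$ well-behaved on the $Q$-side where no auxiliary edges exist.
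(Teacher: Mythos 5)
Your chain of inequalities is correct and, after unwinding, is essentially the same argument the paper gives: both proofs rest on the per-pair energy bound $\nu_{e}(\rhoq_{e})^{2}+\nu_{\bar{e}}(\rhoq_{\bar{e}})^{2}=r_{e}(\elfq_{e})^{2}+r_{\bar{e}}(\elfq_{\bar{e}})^{2}\leq\nu_{e}+\nu_{\bar{e}}$ from part (2) of Lemma \ref{lem:orthog}, extract the pointwise bound $\rhoq_{e}\leq\sqrt{1+x}$ using the measure-ratio hypothesis, and then pass from $\ell_{2}$ to $\ell_{3}$ by pulling out one factor of $\rhoq_{e}$ (the paper does this pair by pair, you do it globally after summing over pairs; the two are identical).

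However, your structural premise is wrong, and you flag it as the crucial point of the whole argument. It is $\elfp$, not $\elfq$, that has zero net flow at the vertices of $Q$: part (1) of Lemma \ref{lem:orthog} asserts that $\elfp$ has no demand on $Q\cup\{v_{0}\}$, and the appendix proof shows $\elfp_{uv}+\elfp_{\bar{u}v}=0$. The flow $\elfq$ is induced by the potentials $\potq=\pot-\potp$, which vanish on $P\cup\{v_{0}\}$ and are supported on $Q$; hence $\elfq_{uv}=\potq_{v}/r_{uv}$ and $\elfq_{\bar{u}v}=\potq_{v}/r_{\bar{u}v}$, which have the same sign but in general different magnitudes. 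So $|\elfq_{e}|=|\elfq_{\bar{e}}|$ is false, and $\elfq$ is precisely the piece that carries the demand at $Q$. Fortunately the false claim is not load-bearing: the per-pair inequality $\nu_{e}(\rhoq_{e})^{2}+\nu_{\bar{e}}(\rhoq_{\bar{e}})^{2}\leq\nu_{e}+\nu_{\bar{e}}$ is exactly what part (2) of Lemma \ref{lem:orthog} states (reading the evident typo $r_{\bar{e}}(\elfq)^{2}$ there as $r_{\bar{e}}(\elfq_{\bar{e}})^{2}$), and since both terms on the left are non-negative, the bound on their sum already controls each term individually and yields $\rhoq_{e}\leq\sqrt{1+\nu_{\bar{e}}/\nu_{e}}\leq\sqrt{1+x}$ --- contrary to your remark that bounding the sum ``gives no pointwise control.'' Strike the equality $|\elfq_{e}|=|\elfq_{\bar{e}}|$ and the surrounding discussion, cite part (2) directly, and the rest of your argument stands as written.
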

\begin{proof}
We look at the contribution of a single pair of partner arcs, $e$
and $\bar{e}$, to $\norm{\rhoq}_{\nu,3}^{3}$. This is equal to
\begin{align*}
\nu_{e}\abs{\rhoq_{e}}^{3}+\nu_{\bar{e}}\abs{\rhoq_{\bar{e}}}^{3} & =r_{e}\left(\elfq_{e}\right)^{2}\rhoq_{e}+r_{\bar{e}}\left(\elfq\right)^{2}\rhoq_{\bar{e}}\\
 & =r_{e}\left(\elfq_{e}\right)^{2}\sqrt{\frac{r_{e}\left(\elfq_{e}\right)^{2}}{\nu_{e}}}+r_{\bar{e}}\left(\elfq_{\bar{e}}\right)^{2}\sqrt{\frac{r_{\bar{e}}\left(\elfq_{\bar{e}}\right)^{2}}{\nu_{\bar{e}}}}\\
 & \leq\sqrt{\frac{\max\left(r_{e}\left(\elfq_{e}\right)^{2},r_{\bar{e}}\left(\elfq_{\bar{e}}\right)^{2}\right)}{\min(\nu_{e},\nu_{\bar{e}})}}\left(r_{e}\left(\elfq_{e}\right)^{2}+r_{\bar{e}}\left(\elfq_{\bar{e}}\right)^{2}\right)\\
 & \leq\sqrt{\frac{\nu_{e}+\nu_{\bar{e}}}{\min(\nu_{e},\nu_{\bar{e}})}}(\nu_{e}+\nu_{\bar{e}})\\
 & \leq\sqrt{1+x}(\nu_{e}+\nu_{\bar{e}}).
\end{align*}
Here, we applied property 2 from Lemma \ref{lem:orthog}. Summing
over all pairs of partner arcs and using Invariant 1, we get $\norm{\rhoq}_{\nu,3}^{3}\leq(3\sqrt{1+x})m$,
as desired.\end{proof}
\begin{cor}
$\norm{\rhoq}_{\nu,3}^{3}\leq5m^{5/4-(3/2)\eta}$\end{cor}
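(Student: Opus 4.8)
The plan is to derive the bound straight from Lemma~\ref{lem:partner-arc-q-bound}, which gives $\norm{\rhoq}_{\nu,3}^{3}\le 3\sqrt{1+x}\cdot m$ as soon as $x$ is a uniform upper bound on $\frac{\max(\nu_e,\nu_{\bar e})}{\min(\nu_e,\nu_{\bar e})}$ over all partner pairs. Everything therefore reduces to showing that this measure ratio stays $O(m^{1/2-3\eta})$: since $\frac{1}{4}-\frac{3}{2}\eta=\frac{1}{2}\big(\frac{1}{2}-3\eta\big)$, plugging such an $x$ into $3\sqrt{1+x}\cdot m$ produces a quantity of order $m^{5/4-(3/2)\eta}$, and with $\eta=\frac{1}{14}$ the constant comes out below $5$ for all but finitely many $m$. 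So I would prove, as an invariant of the preconditioned algorithm, that $\frac{\max(\nu_e,\nu_{\bar e})}{\min(\nu_e,\nu_{\bar e})}\le \frac{5}{2}\,m^{1/2-3\eta}$ for every partner pair $\{e,\bar e\}$.

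The invariant holds initially because Lemma~\ref{lem:initial-centering} puts every measure in $[\frac12,1]$, so the ratio is at most $2$. Progress steps leave $\nu$ unchanged (Theorem~\ref{thm:l_4_improvement_step}), and since every vertex of $Q$ has degree $2$ the partner relation partitions the arc set, so the only events that can move $\nu_e$ or $\nu_{\bar e}$ are perturbations of $e$ or of $\bar e$. I would also exploit two facts that hold permanently: each $v\in Q$ has demand $-1$, and this demand never changes because the auxiliary/preconditioner edges attach only to vertices of $P$; hence $f_e+f_{\bar e}=1$ and $\hat f_e+\hat f_{\bar e}=1$ (the latter for the electrical flow in the augmented graph, which routes the same demand at $v$).

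The heart of the argument is a bound on $\alpha:=f_e/f_{\bar e}$ at the instant a perturbation touches the pair. Then at least one of the two arcs, say $e$, is perturbable, so $f_e\ge m^{-(1/2-3\eta)}$ and $\rho_e\ge \sqrt{40\cT\crho^{2}}\cdot m^{1/2-3\eta}$ by Definition~\ref{def:perturbable}; hence $\abs{\hat f_e}=\rho_e f_e\ge \sqrt{40\cT\crho^{2}}\ge 2$, and $\hat f_e+\hat f_{\bar e}=1$ forces $\abs{\hat f_{\bar e}}\ge \abs{\hat f_e}-1\ge \frac12\abs{\hat f_e}=\frac12\rho_e f_e$. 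Since the energy on arc $\bar e$ alone is at most the total electrical energy in the augmented graph, which is at most $\norm{\nu}_{1}\le 3m$ (Lemma~\ref{lem:rho_l_2_norm_bound}, the remark that this bound survives adding auxiliary edges, and Invariant~1), and $r_{\bar e}=\nu_{\bar e}/f_{\bar e}^{2}\ge \frac{1}{2f_{\bar e}^{2}}$, we get $\frac{1}{2f_{\bar e}^{2}}\big(\frac12\rho_e f_e\big)^{2}\le r_{\bar e}\hat f_{\bar e}^{2}\le 3m$, i.e. $\alpha=f_e/f_{\bar e}\le \sqrt{24m}/\rho_e\le m^{3\eta}$ up to a tiny constant — which is at most $m^{1/2-3\eta}$ because $3\eta<\frac12-3\eta$ when $\eta=\frac{1}{14}$. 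Together with the trivial $1/\alpha=f_{\bar e}/f_e\le 1/f_e\le m^{1/2-3\eta}$, this shows $\alpha$ and $1/\alpha$ are both at most $m^{1/2-3\eta}$ whenever the pair is perturbed (symmetrically if $\bar e$ is the perturbable arc).

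To close, I would substitute $\alpha$ into the measure update of Section~\ref{sub:Perturbing-an-arc}: perturbing both arcs replaces $(\nu_e,\nu_{\bar e})$ by $(2\nu_e+\nu_{\bar e}\alpha,\,2\nu_{\bar e}+\nu_e/\alpha)$, and perturbing only $e$ replaces it by $(2\nu_e,\,\nu_{\bar e}+\nu_e/\alpha)$ (cf.~\eqref{eq:measure-increase}). Estimating each denominator from below by whichever of its two terms is convenient, the new value of $\nu_e/\nu_{\bar e}$, and symmetrically of $\nu_{\bar e}/\nu_e$, is at most $\frac{5}{2}\max(\alpha,1/\alpha)\le \frac{5}{2} m^{1/2-3\eta}$ in the two‑arc case; in the one‑arc case $\nu_e/\nu_{\bar e}$ becomes at most $2\alpha$ while $\nu_{\bar e}/\nu_e$ becomes at most $\frac{1}{2}\lambda+\frac{1}{2\alpha}$, where $\lambda\le\frac{5}{2} m^{1/2-3\eta}$ is the old ratio — still within the bound. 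This preserves the invariant, and feeding $x=\frac{5}{2} m^{1/2-3\eta}$ into Lemma~\ref{lem:partner-arc-q-bound} (together with $\eta=\frac1{14}$) yields $\norm{\rhoq}_{\nu,3}^{3}\le 5\,m^{5/4-(3/2)\eta}$. I expect the main obstacle to be the third paragraph: pinning down the flow‑conservation identities at $Q$‑vertices through the preconditioning machinery, and carrying the constant $\sqrt{40\cT\crho^{2}}$ carefully enough that it genuinely exceeds $2$ and hence forces $\abs{\hat f_{\bar e}}\gtrsim\abs{\hat f_e}$ — this is precisely where the degree‑$2$ structure of the reduced $b$‑matching instance and the choice $\eta=\frac{1}{14}$ get used.
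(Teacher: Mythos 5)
Your proof is correct and follows the same top-level route as the paper's: both reduce the corollary to Lemma \ref{lem:partner-arc-q-bound} applied with $x=\Theta(m^{1/2-3\eta})$. The difference is in how the hypothesis of that lemma is justified. The paper dispatches it in one clause (``by definition we never perturb arcs with $\frac{1}{f_e}>m^{1/2-3\eta}$''), which, taken literally, does not by itself bound $\frac{\max(\nu_e,\nu_{\bar e})}{\min(\nu_e,\nu_{\bar e})}$: each perturbation of $e$ doubles $\nu_e$ while possibly adding only $\nu_e f_{\bar e}/f_e$ to $\nu_{\bar e}$, so repeated perturbations could a priori compound the ratio. You correctly identify this and supply the missing inductive invariant. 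Your key extra ingredient --- the bound $f_e/f_{\bar e}\leq m^{3\eta}$ at the moment $e$ is perturbed, obtained from the conservation identities $f_e+f_{\bar e}=1$ and $\elf_e+\elf_{\bar e}=1$ at the degree-$2$ vertices of $Q$ (which survive preconditioning since auxiliary edges attach only to $P\cup\{v_0\}$) together with the $3m$ energy cap and $\nu_{\bar e}\geq\frac12$ --- is exactly what stops the compounding, because it makes the post-perturbation ratio reset to $O(f_e/f_{\bar e})$ independently of its previous value. The individual steps check out: the update formulas from Section \ref{sub:Perturbing-an-arc} give $\frac52\max(\alpha,1/\alpha)$ in the two-arc case and $\max(2\alpha,\frac12\lambda+\frac1{2\alpha})$ in the one-arc case, both within $\frac52 m^{1/2-3\eta}$ given $\eta=\frac1{14}$. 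The only cosmetic blemishes are that your constant $x=\frac52 m^{1/2-3\eta}$ makes the final inequality $3\sqrt{1+x}\,m\leq 5m^{5/4-(3/2)\eta}$ hold only for $m$ beyond a small threshold (which you flag), and that the whole invariant argument is really a proof of the claim the paper asserts without proof; as such your write-up is a strictly more complete version of the paper's argument rather than a different one.
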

\begin{proof}
We can apply Lemma \ref{lem:partner-arc-q-bound} with $x=m^{1/2-3\eta}$,
since by definition we never perturb arcs with $\frac{1}{f_{e}}>m^{1/2-3\eta}$.
So $\norm{\rhoq}_{\nu,3}^{3}\leq\sqrt{1+m^{1/2-3\eta}}\cdot3m\leq5m^{5/4-(3/2)\eta}$. 
\end{proof}
It remains to bound $\norm{\rho_{\mathcal{A}_{1}}}_{\nu,3}^{3}$.
First, note that for every edge $e$ in $\mathcal{A}_{1}$, if $\rho_{e}>\sqrt{40\cT\crho^{2}}\cdot m^{1/2-3\eta}$,
we have $r_{e}=\frac{\nu_{e}}{f_{e}^{2}}$, implying that 
\begin{align*}
r_{e}\abs{\elf_{e}} & =\frac{\nu_{e}}{f_{e}}\rho_{e}\\
 & \geq\frac{1}{f_{e}}\rho_{e}\\
 & \geq m^{1/2-3\eta}\rho_{e}.
\end{align*}

By the definition of $\mathcal{A}_{1},$ $r_{e}\abs{\elfp}$ is at
least half of $\frac{1}{2}m^{1/2-3\eta}\rho_{e}$. Similarly, $r_{e}\left(\elfp_{e}\right)^{2}\geq\frac{1}{4}r_{e}f_{e}^{2}$.
Thus for any choice of a threshold $T>\sqrt{40\cT\crho^{2}}\cdot m^{1/2-3\eta}$
on the $\rho$, we have
\[
\sum_{e\in\mathcal{A}_{1},\rho_{e}\geq T}r_{e}f_{e}^{2}\leq4\sum_{r_{e}\abs{\elfp_{e}}\geq\frac{1}{2}m^{1/2-3\eta}T}r_{e}\left(\elfp_{e}\right)^{2}.
\]

Next, we consider a ``quotient'' or ``Schur complement'' electrical
network $N$ on the vertices of $P\cup\left\{ v_{0}\right\} $ only,
replacing the pair of arcs $(u,v)$ and $(\bar{u},v)$ with one edge
$(u,\bar{u)}$ with resistance $r_{uv}+r_{\bar{u}v}$. We consider
this edge to have measure $\nu_{uv}+\nu_{\bar{u}v}$. Note that this
is in some sense undoing the $b$-matching reduction from Lemma \ref{lem:reduction-to-bmatching}.

Because they have 0 net flow on all vertices in $Q$, we can map flows
$\fp$ and $\elfp$ to flows in $N$, $\fn$ and $\elfn$, by setting
the flow from $u$ to $\bar{u}$ in the new flow to that from $u$
to $v$ (or equivalently, by flow conservation, $v$ to $\bar{u}$)
in the original flow. This mapping preserves the demands on the vertices
in $Q$, so $\fn$ still satisfies the same demands as $\elfn$. It
also preserves the electrical nature of $\elfn$, since $\fn$ can
be induced by the same voltages. Finally, since the resistance on
$e=(u,\bar{u})$ is set to $r_{uv}+r_{\bar{u}v}$, we have $r_{e}\left(\fn_{e}\right)^{2}=r_{uv}\left(\fp_{uv}\right)^{2}+r_{\bar{u}v}\left(\fp_{\bar{u}v}\right)^{2}$,
and in particular for any edge $e$ in $N$ we have $r_{e}\left(\fn_{e}\right)^{2}\leq\nu_{e}$.

Now we may apply our general preconditioning result, which bounds
the total energy of edges with high voltage in terms of the resistance
of the preconditioning edges:
\begin{lem}
\label{lem:full-precon}Let $N$ be an electrical network, on a set
of vertices $P$ plus a special vertex $v_{0}$, with each edge $e$
not incident to $v_{0}$ assigned a measure $\nu_{e}.$ Let $\nu'$
be another assignment of measures such that $\nu'\leq\nu$, with the
total ``missing measure'' $\sum_{e}\nu_{e}-\nu'_{e}=M$, and let
each vertex $v$ be connected to $v_{0}$ by an edge with resistance
$\frac{R}{a(v)}$, where $a(v)$ is the sum of $\nu'_{e}$ over edges
$e$ incident to $v$. Let $\fn$ be a flow on $N$, with no flow
on any edge incident to $v_{0}$ and with $r_{e}\left(\fn\right)^{2}\leq\nu_{e}$
for every edge not incident to $v_{0}$, and let $\elfn$ be the electrical
flow on $N$ satisfying the same demands as $\fn$. Then
\[
\sum_{r_{e}\abs{\elfn_{e}}\geq V}r_{e}\left(\elfn_{e}\right)^{2}\leq\frac{32R\sum_{e}\nu_{e}}{V^{2}}+2M.
\]

\end{lem}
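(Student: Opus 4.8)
The plan is to carry out a scale-by-scale (dyadic) decomposition of the edges of $N$ according to the absolute value of their potential drop $|\fn_e|$ (equivalently, by $r_e|\elfn_e|$), following the line-embedding argument of \cite{Madry13}, but adapted to the present setup where we have a genuine preconditioning hub $v_0$ attached to every vertex of $P$. Fix the electrical potentials $\phi$ that induce $\elfn$; since $\elfn$ satisfies the same demands as $\fn$, we have by the energy-minimization property (Lemma \ref{lem:energy-lb}) that $\mathcal{E}(\elfn)\le\mathcal{E}(\fn)=\sum_e r_e(\fn_e)^2\le\sum_e\nu_e$. We may normalize $\phi$ so that $\phi_{v_0}=0$. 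The key object is, for a threshold $t>0$, the set $L_t=\{v\in P:|\phi_v|\ge t\}$ of vertices that are ``far'' from the hub in the embedding, and the amount of energy carried by edges incident to $L_t$.

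The main steps, in order, are as follows. First I would bound the total energy on the auxiliary (hub) edges incident to vertices in $L_t$ from below: each such edge $(v_0,v)$ has resistance $R/a(v)$ and potential drop $|\phi_v|\ge t$, so it carries energy $\ge a(v)t^2/R$; summing, $\sum_{v\in L_t}a(v)\,t^2/R$ is at most the total electrical energy, which is $\le\mathcal{E}(\elfn)\le\sum_e\nu_e$ (using the normalization $\nu'\le\nu$ so $a(v)$ is computed with the smaller measures). Hence $\sum_{v\in L_t}a(v)\le R\sum_e\nu_e/t^2$. Second, I would observe that for an original edge $e=(u,\bar u)$ of $N$ with $r_e|\elfn_e|=|\phi_u-\phi_{\bar u}|\ge V$, at least one endpoint has $|\phi|\ge V/2$, so $e$ is incident to $L_{V/2}$; thus $\sum_{r_e|\elfn_e|\ge V}r_e(\elfn_e)^2$ is at most the electrical energy on edges incident to $L_{V/2}$. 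Third — and this is where the hypothesis $r_e(\fn_e)^2\le\nu_e$ and the ``missing measure'' $M$ enter — I would split that energy into the contribution of the \emph{reference} flow $\fn$ and the difference $\elfn-\fn$. On edges incident to $L_{V/2}$, the flow $\fn$ contributes energy at most $\sum_{e\text{ incident to }L_{V/2}}\nu_e$; the edges touching $v$ contribute $\nu'_e$-mass summing to $a(v)$ for the part accounted for in the preconditioner bound, plus the ``missing'' $\nu_e-\nu'_e$ which sums to at most $M$ globally. This is what produces the additive $2M$ term. The difference flow $\elfn-\fn$ is again an electrical flow with the same (zero extra) demands, and a standard argument (triangle inequality in the energy seminorm together with the energy bound) controls its contribution by a constant times the preconditioner bound, giving the constant $32$ after tracking the factors of $2$ from the endpoint split, the $4$ from comparing $\fn$ with $\elfn$, and Cauchy--Schwarz.

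The main obstacle I expect is the bookkeeping in the third step: cleanly separating the energy on edges incident to $L_{V/2}$ into a piece controlled by $\sum_{v\in L_{V/2}}a(v)$ (hence by the preconditioner bound $R\sum_e\nu_e/(V/2)^2 = 4R\sum_e\nu_e/V^2$) and a piece controlled by $M$, without double-counting the two endpoints of an edge and without losing more than a constant factor when passing between $\fn$ and $\elfn$. One must be careful that each edge $e=(u,\bar u)$ of $N$ with both endpoints in $L_{V/2}$ gets its $\nu_e$ charged to $a(u)+a(\bar u)$ appropriately, and that the decomposition $\nu'_e+(\nu_e-\nu'_e)$ is used so that exactly the $\nu'$-part feeds the summable-against-$a(v)$ bound while the total defect over all edges is the single quantity $M$. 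Once this accounting is set up, the displayed bound $\sum_{r_e|\elfn_e|\ge V}r_e(\elfn_e)^2\le 32R\sum_e\nu_e/V^2+2M$ follows by combining the three steps; the constant is deliberately loose to absorb the endpoint and flow-comparison factors.
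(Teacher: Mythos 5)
There is a genuine gap in your third step. Your first two steps (the hub-edge energy bound $\sum_{v\in L_t}a(v)\le R\sum_e\nu_e/t^2$, and the observation that an edge with potential drop at least $V$ must have an endpoint in $L_{V/2}$) match the paper's accounting, as does your bookkeeping of $\nu'$ versus $\nu$ and the missing measure $M$. But the crucial bridge is the one you leave unproved: you need to control the \emph{electrical} energy $\sum_{e\text{ incident to }L}r_e(\elfn_e)^2$ by the \emph{measure} mass $\sum_{e\text{ incident to }L}\nu_e$ (equivalently, by the reference-flow energy there, since $r_e(\fn_e)^2\le\nu_e$). Your proposed route -- write $\elfn=\fn+(\elfn-\fn)$ and control the circulation $\elfn-\fn$ by ``triangle inequality in the energy seminorm together with the energy bound'' -- does not work: the only global fact available is $\mathcal{E}(\elfn)\le\mathcal{E}(\fn)\le\sum_e\nu_e$, so the circulation's energy is bounded only by $\sum_e\nu_e$, and nothing prevents that entire energy from sitting on edges incident to $L_{V/2}$. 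Restricted to a subset of edges, the electrical flow's energy can vastly exceed the reference flow's energy on that subset (electrical flows concentrate on low-resistance routes even where the reference flow does not), so no constant-factor comparison of the two, localized to edges incident to $L_{V/2}$, follows from energy minimization alone.

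The paper closes exactly this gap with a separate cut-crossing (line-embedding) lemma, which is the heart of the argument and which your proposal gestures at in the introduction but never executes. For each potential level $\phi$, the total \emph{absolute} flow of $\elfn$ across the cut $\{v:\potn_v\le\phi\}$ equals the net flow (the electrical flow crosses each level monotonically in the direction of increasing potential), which in turn is a lower bound for the total absolute flow of $\fn$ across the same cut; integrating over levels $\phi$ outside $[\phi_0-V/4,\phi_0+V/4]$ and applying Cauchy--Schwarz to the resulting edge sums yields
\[
\sum_{\substack{e=(u,v):\ \max(|\potn_u-\phi_0|,|\potn_v-\phi_0|)\ge V/4}}r_e(\fn_e)^2\ \ge\ \frac{1}{2}\sum_{\substack{e:\ |\potn_u-\potn_v|\ge V}}r_e(\elfn_e)^2 .
\]
Note the direction: this is a \emph{lower} bound on the reference flow's energy near the far set in terms of the electrical energy on high-drop edges, which is precisely what lets one charge the high-drop electrical energy to $\sum_{v\in L_{V/4}}a(v)+M$ and then to the hub-edge bound. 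Without this lemma (or an equivalent substitute), your step 3 does not go through, and the claimed inequality is not established.
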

This is proved in Appendix \ref{sec:precon-proofs}. Here, we set
$\nu'$ to the measures from the beginning of the preconditioning
phase; by Lemma \ref{lem:total-energy-measure-increase-1} the missing
measure $M\leq\frac{2}{\cT}\cdot m^{1/2+3\eta}$. The preconditioning
edges were weighted with $R=m^{1+2\eta}$. This implies that for $T>\sqrt{40\cT\crho^{2}}\cdot m^{1/2-3\eta},$
\[
\sum_{e\in\mathcal{A}_{1},\rho_{e}\geq T}r_{e}f_{e}^{2}\leq4\cdot\left(\frac{32\cdot m^{1+2\eta}\cdot\sum_{e}\nu_{e}}{\left(\frac{1}{2}m^{1/2-3\eta}\cdot T\right)^{2}}+\frac{2}{\cT}\cdot m^{1/2+3\eta}\right)\leq\frac{512\cdot m^{8\eta}\cdot3m}{T^{2}}+\frac{8}{\cT}\cdot m^{1/2+3\eta}.
\]

On the other hand, for all $T$ we trivially have $\sum_{e\in\mathcal{A}_{1},\rho_{e}\geq T}r_{e}f_{e}^{2}\leq3m$,
by Invariant 1 combined with Lemma \ref{lem:rho_l_2_norm_bound}.
Now, we can write
\begin{align*}
\norm{\rho_{\mathcal{A}_{1}}}_{\nu,3}^{3} & =3\int_{0}^{\sqrt{3}\cdot m^{1/2}}\left(\sum_{e\in\mathcal{A}_{3},\rho_{e}\geq T}r_{e}f_{e}^{2}\right)\;dT\\
 & \leq3\left(\int_{0}^{\sqrt{40\cT\crho^{2}}\cdot m^{1/2-3\eta}}3m\;dT\right)+3\left(\int_{m^{1/2-3\eta}}^{\sqrt{3}\cdot m^{1/2}}\left(1536\cdot\frac{m^{1+8\eta}}{T^{2}}+\frac{8}{\cT}\cdot m^{1/2+3\eta}\right)\;dT\right)\\
 & \leq9\sqrt{40\cT\crho^{2}}\cdot m^{3/2-3\eta}+42m^{1+3\eta}+3\int_{m^{1/2-3\eta}}^{\infty}\left(1536\frac{m^{1+8\eta}}{T^{2}}\right)\;dT\\
 & \leq9\sqrt{40\cT\crho^{2}}\cdot m^{3/2-3\eta}+42m^{1+3\eta}+4608m^{1/2+11\eta}\\
 & \leq9\sqrt{40\cT\crho^{2}}\cdot m^{3/2-3\eta}+42m^{1+3\eta}+4608m^{1/2+11\eta}.
\end{align*}

This determines our choice of $\eta=\frac{1}{14}$--it ensures that
$m^{1/2+11\eta}=m^{3/2-3\eta}$ in the last term, which is essentially
the bound we need. Plugging that in it also ensures that $m^{1+3\eta}$
in the second term is less than $m^{3/2-3\eta}$. Finally, we had
$\norm{\rho_{\mathcal{\bar{S}}}}_{\nu,3}^{3}\leq\norm{\rho_{\mathcal{A}_{1}}}_{\nu,3}^{3}+8\norm{\rhoq}_{\nu,3}^{3}$,
with $\norm{\rhoq}\leq5m^{5/4-(3/2)\eta}$; with $\eta=\frac{1}{14}$
we also have $m^{5/4-(3/2)\eta}\leq m^{3/2-3\eta}$. Thus we have
$\norm{\rho_{\mathcal{\bar{S}}}}_{\nu,3}^{3}\leq1000\sqrt{\cT\crho^{2}}m^{3/2-3\eta}$,
or $\norm{\rho_{\mathcal{\bar{S}}}}_{\nu,3}\leq10(\cT\crho^{2})^{1/6}\cdot m^{1/2-\eta}$,
proving Lemma \ref{lem:bd-l3}.

\section{Repairing the Matching\label{sec:repair}}

In this section we assume the $b$-matching view on $\sigma$-flows.
Hence, summarizing Theorem \ref{thm:Asumming-Invariant-2}, Lemma
\ref{lem:Consider-the-last} and the proof of Invariant 2 in Section
\ref{sec:preconditioning} we obtain the following result. 
\begin{thm}
\label{thm:main} Consider a ternary instance $G=(V,E,c)$ of the
weighted perfect bipartite $b$-matching problem when $\norm b_{1}=O(m)$.
In $\tilde{O}(m^{10/7}\log^{4/3}W)$ time we can either conclude that
$G$ does not have a perfect $b$-matching or return a primal-dual
pair with duality gap at most $m^{-2}$ to the perfect $b^{+}$-matching
problem, where $\norm{b^{+}-b}_{1}\le c_{T}m^{3/7}$. 
\end{thm}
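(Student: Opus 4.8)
The plan is to read this statement off as the composition of results already in hand. Essentially no new argument is needed beyond invoking, in order: the reductions of Section~\ref{sec:reduction} together with the explicit centered start of Lemma~\ref{lem:initial-centering}; the conditional guarantee of Theorem~\ref{thm:Asumming-Invariant-2}; the discharge of its hypothesis (Invariant~2) carried out in Section~\ref{sec:preconditioning}, culminating in Lemma~\ref{lem:bd-l3}; and the demand-drift bound of Lemma~\ref{lem:Consider-the-last} specialized to $\eta=\tfrac1{14}$. The only things to check are that these pieces fit together and that the exponents line up.

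In detail: since $G$ is a ternary $b$-matching instance, we first pass to its $\sigma$-flow view via the sign-flip identification from the preliminaries --- this is the object the interior-point method manipulates --- and apply Lemma~\ref{lem:initial-centering} to obtain a feasible centered primal-dual pair $(f,s,\nu)$ with $\widehat\mu=\norm c_\infty\le W$ and $\norm\nu_1\le m$. We then run Algorithm~\ref{tab:precond-algo} on this instance. To make Theorem~\ref{thm:Asumming-Invariant-2} unconditional we invoke the proof of Invariant~2 from Section~\ref{sec:preconditioning}: Lemma~\ref{lem:bd-l3} bounds $\norm{\rho_{\bar{\mathcal S}}}_{\nu,3}$ --- via the orthogonal decomposition $\elf=\elfp+\elfq$ of Lemma~\ref{lem:orthog} and the scale-by-scale preconditioning estimate of Lemma~\ref{lem:full-precon} --- and the short computation preceding it then yields $\norm{\rho_{\mathcal S}}_{\nu,2}^{2}\ge\tfrac12\crho\,m^{1-2\eta}$ at every perturbation, i.e. Invariant~2 holds throughout the run. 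Consequently Lemma~\ref{lem:number-of-perturbations} and Theorem~\ref{thm:Asumming-Invariant-2} apply: after $\cT\cdot m^{1/2-\eta}$ progress steps and $O(\cT\crho\cdot m^{1/2-\eta})$ perturbations, each implemented as a $\otilde(m)$-time Laplacian solve, and in total $\otilde(m^{10/7}\log^{4/3}W)$ time, we obtain a feasible primal-dual pair $(f^T,s^T)$ with duality gap $\sum_e\nu_e\widehat\mu\le m^{-2}$ (here $\norm\nu_1\le 3m$ by Invariant~1).

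It remains to quantify how far the routed demand has moved and to return to the $b$-matching picture. The auxiliary preconditioning edges touch only vertices of $P$, so only the $P$-coordinates of the demand are ever perturbed, and Lemma~\ref{lem:Consider-the-last} gives $\norm{\sigma^T-\sigma}_1\le\cT\cdot m^{1/2-\eta}$ for the demand $\sigma^T$ routed by the final iterate. Substituting $\eta=\tfrac1{14}$, so that $\tfrac12-\eta=\tfrac37$, this reads $\norm{\sigma^T-\sigma}_1\le c_Tm^{3/7}$; since the $b\leftrightarrow\sigma$ dictionary is an isometry on the affected coordinates, setting $b^+_v=\abs{\sigma^T_v}$ gives $\norm{b^+-b}_1\le c_Tm^{3/7}$, and $(f^T,s^T)$ read back in the $b$-matching view is a primal-dual pair for the perfect $b^+$-matching problem with duality gap at most $m^{-2}$. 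For the infeasibility alternative, recall that the high-cost auxiliary arcs inserted in Lemmas~\ref{lem:reduction-to-bmatching}--\ref{lem:reduction-to-balance} guarantee the augmented instance is always feasible, while a genuine perfect $b$-matching of $G$ corresponds to a solution of cost below the explicit threshold used there; since our solution is within $m^{-2}$ of optimal, comparing its cost (equivalently, its dual lower bound) against that threshold certifies ``$G$ has no perfect $b$-matching'' whenever the threshold is exceeded.

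The one genuinely delicate ingredient --- and the one we merely quote here --- is the uniform validity of Invariant~2, i.e. Lemma~\ref{lem:bd-l3}; its proof is where the auxiliary edges, the flow decomposition of Lemma~\ref{lem:orthog}, and the energy accounting of Lemma~\ref{lem:full-precon} all enter, and it is also what pins down $\eta=\tfrac1{14}$ (through the balancing $m^{1/2+11\eta}=m^{3/2-3\eta}$ in that proof). Relative to that, the only obstacle specific to this theorem is purely arithmetic: checking that $\eta=\tfrac1{14}$ simultaneously makes the drift exponent equal to $3/7$, makes the iteration count $\cT\,m^{3/7}$ and hence the running time $\otilde(m\cdot m^{3/7})=\otilde(m^{10/7})$ up to the $\log W$ factors, and stays consistent with the estimates of Section~\ref{sec:preconditioning} --- and then keeping the $b$-matching/$\sigma$-flow translation straight so that both the returned pair and the demand discrepancy are correctly stated in the $b$-matching view.
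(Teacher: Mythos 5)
Your proposal matches the paper's own treatment: Theorem~\ref{thm:main} is obtained there precisely by composing Theorem~\ref{thm:Asumming-Invariant-2}, Lemma~\ref{lem:Consider-the-last}, and the discharge of Invariant~2 via Lemma~\ref{lem:bd-l3}, with the exponent arithmetic $\tfrac12-\eta=\tfrac37$ at $\eta=\tfrac1{14}$, which is exactly what you do (your added remarks on the $b\leftrightarrow\sigma$ translation and the infeasibility alternative are consistent elaborations rather than deviations). The proposal is correct and takes essentially the same route as the paper.
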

In this section we will be ``repairing'' the feasible primal-dual
solution given by the above theorem. Our repair procedure will consists
out of four steps. In the first step -- Lemma \ref{thm:rounding-first-step},
we will reduce the duality gap to 0. Next, in Lemma \ref{lem:lemma-integral-rounding}
we will round the solution to be integral. In the third step, we will
repair the perturbations done to demands in Theorem \ref{theorem:matchings}.
Finally, we will use scaling to reduce the dependence on $\log W$.

The main tool we are going to use is the directed version of the graph
bipartite graph that encodes alternating paths with respect to the
current $b$-matching. Let $G=(V_{1}\cup V_{2},E,c)$ be a bipartite
weighted graph in which we want to find a minimum weight perfect $b$-matching.
Given some fractional $b$-matching $x$ we define $\ora G_{x}=(V_{1}\cup V_{2},\ora E_{x},\ora c_{x})$
to be a directed version of the graph $G$ where all edges are directed
from $V_{1}$ to $V_{2}$ and additionally we add edges of $x$ that
are directed from $V_{2}$ to $V_{1}$ (edges of $x$ have two copies
in both directions). Moreover, the weights of edges in $x$ are negated.
Formally, 
\begin{eqnarray*}
\ora E_{x} & = & \left\{ (u,v)\vert uv\in E,u\in V_{1},v\in V_{2}\right\} \cup\left\{ (u,v)\vert x_{uv}\neq0,u\in V_{2},v\in V_{1}\right\} ,\\
\ora c_{x}(u,v) & = & \left\{ \begin{array}{ll}
c_{uv} & \textnormal{if }u\in V_{1},v\in V_{2},\\
-c_{uv} & \textnormal{if }u\in V_{2},v\in V_{1}.
\end{array}\right.
\end{eqnarray*}
We observe a path in $\ora G_{x}$ correspond to alternating paths
in $G$ with respect to $x$. Let $F_{x}$ denote the set of vertices
whose demand is not fully satisfied, i.e., $F_{x}=\left\{ v\in V\vert x(v)<b_{v}\right\} $.
We now observe that if a path $\pi$ starts in $V_{1}\cap F_{x}$
and ends in $V_{2}\cap F_{x}$ that it is an augmenting path with
respect to $x$ and can be used to enlarge the $b$-matching $x$.
When the $b$-matching $x$ is integral we can interpret it as a multiset
of edges which we denote by $M$. In such case we will use $\ora G_{M}$
to denote $\ora G_{x}$. The important property of $\ora G_{x}$ is
that given an optimal primal solution $x$ it allows to find optimal
dual solution. Let $\dist_{\ora G_{M}}(V_{1},u)$ denote the distance
from $V_{1}$ to $u$ in $\ora G_{M}$. The following property was
first observed by Iri \cite{iri60}.
\begin{lem}
Consider a ternary instance $G=(V,E,c)$ of the weighted perfect bipartite
$b$-matching problem if $x$ is an optimal $b$-matching then optimal
dual solution $y$ is given as:

\[
y_{v}=\begin{cases}
-\dist_{\ora G_{M}}(V_{1},v) & \textnormal{if }v\in V_{1},\\
\dist_{\ora G_{M}}(V_{1},v) & \textnormal{if }v\in V_{2}.
\end{cases}
\]
\end{lem}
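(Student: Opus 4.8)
The plan is to verify that the vector $y$ defined through the shortest-path distances is feasible for the dual ($b$-vertex packing) LP and that it satisfies complementary slackness with $x$; optimality of $y$ then follows from LP duality. Throughout I would think of $\dist_{\ora G_M}(V_1,\cdot)$ as shortest-path distances from a single dummy super-source $s$ joined to every vertex of $V_1$ by a zero-weight arc, so that $\dist_{\ora G_M}(V_1,v)=\dist_{\ora G_M}(s,v)=:d(v)$; this is just bookkeeping but keeps the triangle-inequality manipulations clean.

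First I would show that these distances are well defined, i.e. that $\ora G_M$ has no negative-weight cycle. Since all forward arcs (the edges of $E$ oriented $V_1\to V_2$, with weight $c_{uv}$) go one way across the bipartition and all backward arcs (copies of matched edges oriented $V_2\to V_1$, with weight $-c_{uv}$) go the other way, any directed cycle in $\ora G_M$ alternates between the two types and hence corresponds to an alternating cycle with respect to $M$. Augmenting $x$ along such a cycle — increasing $x$ by $1$ on the forward-arc edges and decreasing it by $1$ on the matched (backward-arc) edges — yields another perfect $b$-matching (every vertex on the cycle gains exactly one unit and loses exactly one unit, and the decreased edges are in the support of the integral $x$, so they stay nonnegative) whose cost differs from that of $x$ by exactly the weight of the cycle. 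A negative cycle would therefore contradict optimality of $x$, so no negative cycle exists and all $d(v)$ are finite (vertices with no incident edge must have $b_v=0$ and can be discarded).

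Next I would check the two LP conditions. Dual feasibility: for an edge $uv\in E$ with $u\in V_1$, $v\in V_2$, the forward arc $(u,v)$ of weight $c_{uv}$ gives $d(v)\le d(u)+c_{uv}$, i.e. $-d(u)+d(v)\le c_{uv}$, which is precisely $y_u+y_v\le c_{uv}$. Complementary slackness: if $x_{uv}>0$ then $uv$ is matched, so $\ora G_M$ also contains the backward arc $(v,u)$ of weight $-c_{uv}$, giving $d(u)\le d(v)-c_{uv}$, i.e. $y_u+y_v\ge c_{uv}$; together with feasibility this forces $y_u+y_v=c_{uv}$. Since $x$ is a \emph{perfect} $b$-matching all primal constraints are tight, so no further slackness condition needs checking.

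Finally I would conclude: using $x_{uv}>0\Rightarrow c_{uv}=y_u+y_v$ we get $\sum_{uv\in E}c_{uv}x_{uv}=\sum_{uv\in E}(y_u+y_v)x_{uv}=\sum_{v\in V}y_v\sum_{e\in E(v)}x_e=\sum_{v\in V}y_v b_v$, so the feasible primal $x$ and the feasible dual $y$ have equal objective value and both are optimal. The one step carrying real content is the no-negative-cycle claim — translating optimality of $x$ into a structural property of $\ora G_M$ via the cycle-exchange argument; everything else is routine manipulation of triangle inequalities. (The ``ternary'' hypothesis and the integrality of $x$ are used only to make $M$ a well-defined multiset and to keep the cycle augmentation integral; the argument is otherwise insensitive to them.)
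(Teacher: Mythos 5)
Your proposal is correct and follows essentially the same route as the paper: triangle inequalities on the forward arcs give dual feasibility, the backward arcs of matched edges give tightness on the support of $x$, and optimality follows. The only difference is that you spell out the cycle-exchange argument for why $\ora G_M$ has no negative cycle (which the paper simply asserts ``by optimality of $x$'') and the final complementary-slackness summation, both of which are fine.
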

\begin{proof}
First, observe that $\ora G_{x}$ does not contain negative length
cycles by optimality of $x$, so the above distances are well defined.
Second, observe that $y$ is feasible as for each $uv\in E$, where
$u\in V_{1}$ and $v\in V_{2}$ we have $\dist_{\ora G_{M}}(V_{1},v)\le\dist_{\ora G_{M}}(V_{1},u)+c_{uv}$,
so $y_{v}+y_{u}\le c_{uv}.$ Third, when $x_{\ensuremath{uv}}\neq0$
both edges $(u,v)$ and $(v,u)$ are present in $\ora G_{M}$. One
edge implies feasibility $y_{v}+y_{u}\le c_{uv}$, whereas in the
second one all signs are reversed $-y_{v}-y_{u}\le-c_{uv}$. Hence,
the equality $y_{v}+y_{u}=c_{uv}$ and $y$ is optimal.
\end{proof}
The above observation will be useful in the following lemma where
we will construct optimal primal-dual pair. We will first find an
optimal primal solution to the slightly perturbed instance and then
compute the corresponding optimal dual.
\begin{lem}
\label{thm:rounding-first-step} Consider a ternary instance $G=(V,E,c)$
of the weighted perfect bipartite $b$-matching problem. Given a feasible
primal-dual pair $(x,y)$ with duality gap at most $m^{-2}$ in $O(m+n\log n)$
time we can compute an optimal primal-dual pair $(x^{+},y^{+})$ to
the perfect $b^{+}$-matching problem, where $\norm{b^{+}-b}_{1}\le m^{-2}$. \end{lem}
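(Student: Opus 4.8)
The plan is to recast the given pair in flow language: regard $x$ as a nonnegative $\sigma$-flow and $y$ as potentials, with slacks $s_{e}:=c_{e}+y_{u}-y_{v}\ge 0$ for every arc $e=(u,v)$, so that the duality gap is $\sum_{e}x_{e}s_{e}\le m^{-2}$. Following the outline of the section, I would first turn $x$ into a nonnegative flow that is \emph{exactly} optimal for a slightly perturbed demand $b^{+}$, and only then read off the matching optimal dual $y^{+}$ from it, using the distance characterization of the Iri lemma just established.

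The first step is to discard the flow carried on ``expensive'' arcs: fix a small threshold $\theta>0$ and set $x^{+}_{e}:=x_{e}$ when $s_{e}\le\theta$ and $x^{+}_{e}:=0$ otherwise. Then $x^{+}$ is a nonnegative flow routing some demand $b^{+}$, and since deleting an arc $e$ alters the demand at its two endpoints by $x_{e}$ each, $\norm{b^{+}-b}_{1}\le 2\sum_{e:\,s_{e}>\theta}x_{e}\le \tfrac{2}{\theta}\sum_{e:\,s_{e}>\theta}x_{e}s_{e}\le \tfrac{2}{\theta}m^{-2}$; moreover $x^{+}$ is supported on arcs of slack at most $\theta$. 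The second step is to certify that $x^{+}$ is a minimum‑cost $b^{+}$-matching through its alternating/residual graph $\ora G_{x^{+}}$: a forward arc $e$ contributes reduced cost $s_{e}\ge 0$, while a backward arc $(v,u)$ — present only when $e=(u,v)$ is in the support of $x^{+}$, hence has $s_{e}\le\theta$ — contributes $-s_{e}\in[-\theta,0]$. Around any cycle the potentials telescope away, so the true cost of a simple cycle equals the sum of these reduced costs, which is at least $-n\theta$; choosing $\theta<1/n$ and invoking that all costs are integers (the ternary structure of the instance) forces every cycle cost to be $\ge 0$, so $\ora G_{x^{+}}$ has no negative‑cost cycle and $x^{+}$ is optimal for $b^{+}$ by the standard optimality characterization.

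With $x^{+}$ optimal, the Iri lemma supplies the optimal dual via $y^{+}_{v}=-\dist_{\ora G_{x^{+}}}(V_{1},v)$ for $v\in V_{1}$ and $y^{+}_{v}=\dist_{\ora G_{x^{+}}}(V_{1},v)$ for $v\in V_{2}$; these distances exist (no negative cycle) and are integers (integer costs). To compute them within $O(m+n\log n)$ I would reweight the arcs of $\ora G_{x^{+}}$ by the already available potentials $y$ (in the sign convention of the Iri lemma), which makes every reduced cost at least $-\theta>-1/n$, then add $1/n$ to every arc cost so that all costs become nonnegative, run a single multi-source Dijkstra from $V_{1}$, translate the resulting values back through the potentials, and round to the nearest integer — valid because the shift distorts the length of any simple path by less than $1$ while the true distances are integers. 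The pair $(x^{+},y^{+})$ is then an optimal primal-dual pair for the perfect $b^{+}$-matching instance, obtained with one linear scan plus one Dijkstra run, i.e. in $O(m+n\log n)$ time.

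The main obstacle is the quantitative balance hidden in Step~2: the slack threshold $\theta$ must be small enough (below $1/n$) for the integer-cost cycle argument to close, yet the induced $\ell_{1}$ change in demand, $2\theta^{-1}m^{-2}$, must be kept within the stated $m^{-2}$; reconciling these is where the ternary assumption and careful constant bookkeeping enter (and, if necessary, where one first drives the interior-point duality gap a polylogarithmic factor below $m^{-2}$). The optimality certificate and the dual recovery are otherwise routine combinatorics.
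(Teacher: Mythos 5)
Your overall route is recognizably parallel to the paper's (prune the primal to a nearby demand $b^{+}$, certify that the pruned flow is optimal for $b^{+}$, then recover the dual from shortest-path distances in $\ora G_{x^{+}}$ via one Dijkstra run and integer rounding), but there is a genuine quantitative gap in your pruning step. You threshold on the \emph{slack}, which forces two incompatible requirements on $\theta$: the negative-cycle argument needs $\theta<1/n$, while the demand bound you derive is $\norm{b^{+}-b}_{1}\le 2\theta^{-1}m^{-2}$, which is $\le m^{-2}$ only when $\theta\ge 2$. With $\theta<1/n$ you only get $\norm{b^{+}-b}_{1}=O(n\,m^{-2})$, so the lemma's conclusion is not established. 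You flag this tension, but your proposed remedy (driving the duality gap ``a polylogarithmic factor'' below $m^{-2}$) is quantitatively insufficient: you would need the incoming gap to be $O(m^{-2}/n)$, a \emph{polynomial} factor smaller, which also changes the hypothesis of the lemma as stated (though it costs only $O(\log n)$ extra interior-point iterations upstream, so the overall algorithm would survive).

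The paper sidesteps this tension by thresholding on the \emph{flow value} instead: it zeroes out edges with $x_{e}$ below a tiny threshold, so the discarded demand is controlled directly by that threshold, while the smallness of the slacks on the surviving support comes for free from the duality gap (if $x_{e}$ is not tiny and $x_{e}s_{e}\le m^{-2}$, then $s_{e}$ is small). It then needs no $1/n$-scale negative-cycle argument at all: it truncates the tiny reduced costs on the support to zero, runs Dijkstra on the resulting nonnegative lengths, and uses integrality of the true distances to absorb the cumulative $O(m\cdot m^{-2})$ error by rounding. Two further small points in your write-up: with your shift-by-$1/n$ trick the error in the computed distances is one-sided, $d(v)\le d'(v)<d(v)+1$, so you should take the floor after translating back, not round to the nearest integer; and you should note explicitly that the shift can change which path is shortest, so the argument must go through the sandwich inequality on path lengths rather than through identification of the optimal path.
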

\begin{proof}
First, we obtain $x^{+}$ from $x$ by rounding to $0$ all edges
that have value of $x$ smaller than $m^{-2}$, i.e., 

\[
x_{uv}^{+}=\begin{cases}
0 & \textnormal{if }x_{uv}\le m^{-2},\\
x_{uv} & \textnormal{otherwise.}
\end{cases}
\]

Moreover, we define $b^{+}(u)=\sum_{uv}x(v)$. Observe that all $uv\in E$
such that $x_{uv}^{+}\neq0$ we need to have $c_{uv}-y_{u}-y_{v}\le m^{-2}$,
as otherwise the duality gap would be bigger then $m^{-2}$. Consider
\emph{reduced weights} (or the slack) of edges that are defined as
$\tilde{c}_{uv}=c_{uv}-y_{u}-y_{v}$. Now define

\[
\tilde{c}_{uv}^{+}=\begin{cases}
0 & \textnormal{if }\tilde{c}_{uv}\le m^{-2},\\
\tilde{c}_{uv} & \textnormal{otherwise.}
\end{cases}
\]

We denote by $\tilde{\dist}_{\ora G_{M}}^{+}(V_{1},v)$ the distances
in $\ora G_{x}$ with respect to $\tilde{c}_{uv}^{+}$, whereas by
$\tilde{\dist}_{\ora G_{M}}(V_{1},v)$ the distances with respect
to $\tilde{c}_{uv}$. Observe that $\ora G_{x}$ with weights given
by $\tilde{c}_{uv}^{+}$ does not contain negative weight edges, so
distances with respect to $\tilde{c}_{uv}^{+}$ can be computed in
$O(m+n\log n)$ time using Dijkstra's algorithm. Moreover, we observe
that 

\[
\dist_{\ora G_{M}}(V_{1},v)=\begin{cases}
\tilde{\dist}_{\ora G_{M}}(V_{1},v)+y_{v} & \mathrm{\textnormal{if }v\in V_{1}},\\
\tilde{\dist}_{\ora G_{M}}(V_{1},v)-y_{v} & \mathrm{\textnormal{if }v\in V_{2}.}
\end{cases}
\]

We have that $\tilde{|\dist}_{\ora G_{M}}(V_{1},v)-\tilde{\dist}_{\ora G_{M}}^{+}(V_{1},v)|\le m^{-1}$
and we know that $\dist_{\ora G_{M}}(V_{1},v)$ is integral, so for 

\[
\dist_{\ora G_{M}}^{+}(V_{1},v)=\begin{cases}
\tilde{\dist}_{\ora G_{M}}^{+}(V_{1},v)+y_{v} & \mathrm{\textnormal{if }v\in V_{1}},\\
\tilde{\dist}_{\ora G_{M}}^{+}(V_{1},v)-y_{v} & \textnormal{if }\mathrm{v\in V_{2}.}
\end{cases}
\]

we have $\dist_{\ora G_{M}}(V_{1},v)=[\dist_{\ora G_{M}}^{+}(V_{1},v)]$,
where $[.]$ is the nearest integer function. 
\end{proof}
Now we are ready to round the above solution -- for the rounding step
it is essential that primal solution is optimal. The following result
that was proven recently in \cite{flowrounding} will become handy
for us.
\begin{thm}
\cite{flowrounding}\label{theorem:flowrounding} Let $N$ be a flow
network with integral capacities and edge costs. Let $f$ be a flow
in $N$ with integral total value then in $O(m\log n)$ one can compute
integral flow $f'$ with the same flow value and no worse cost. Moreover,
the support of $f'$ is a subset of the support of $f$. 
\end{thm}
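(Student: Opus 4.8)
The plan is to prove this by iterated cancellation of cycles on the arcs of $N$ that currently carry a fractional amount of flow, using at the very end a dynamic-tree data structure to meet the $O(m\log n)$ bound. The starting observation is a structural invariant: for any flow $g$ in $N$ with integral value, let $H(g)$ be the set of arcs $e$ with $g_e\notin\Z$; then, since flow conservation holds and the value is an integer, the net flow out of \emph{every} vertex of $N$ is an integer ($0$ internally, $\pm(\text{value})$ at the source and the sink). Hence at any vertex $v$ the signed fractional parts of the arcs of $H(g)$ at $v$ must sum to an integer, so $v$ cannot be incident to exactly one arc of $H(g)$. Every vertex therefore has $0$ or at least $2$ incident arcs of $H(g)$, so whenever $H(g)\neq\emptyset$ the undirected graph on the arcs $H(g)$ contains a simple cycle, which a depth-first search finds in $O(m)$ time.

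The main loop takes a simple cycle $C$ in $H(f)$, orients it arbitrarily, and partitions its arcs into the set $C^{+}$ traversed forward and the set $C^{-}$ traversed backward. For $\theta\ge 0$ the update that adds $\theta$ on $C^{+}$ and subtracts $\theta$ on $C^{-}$ preserves the net flow at every vertex --- hence the flow value --- and changes the cost by $\theta\big(\sum_{e\in C^{+}}c_e-\sum_{e\in C^{-}}c_e\big)$; the reversed update changes the cost by the negative of this, so at least one of the two senses does not increase the cost, and that is the one we apply. Since capacities are integral and $f$ is feasible, a fractional arc $e$ has $0\le\lfloor f_e\rfloor$ and $\lceil f_e\rceil\le u_e$, so as long as $e$ stays fractional it stays in $[\lfloor f_e\rfloor,\lceil f_e\rceil]\subseteq[0,u_e]$; we push $\theta$ up until the first arc of $C$ hits a bracketing integer. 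That arc becomes integral and leaves $H$, feasibility is preserved, no zero arc becomes nonzero (integral arcs are never touched, so $\supp(f')\subseteq\supp(f)$ throughout), the cost has not increased, and the structural invariant still holds because cancelling a circulation keeps all net flows integral and makes no integral arc fractional. Each iteration removes at least one arc from $H$, so after at most $m$ iterations $H=\emptyset$ and the current flow $f'$ is integral with the same value, cost at most $c^{\top}f$, and support contained in that of $f$.

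Implemented naively this costs $O(m)$ per iteration, i.e.\ $O(m^{2})$; to get down to $O(m\log n)$ the plan is to maintain the forest underlying $H$ in a link--cut/dynamic-tree structure and feed in the arcs of $H$ one at a time: an arc joining two trees is a \textsc{link}, while an arc joining a tree to itself closes a fundamental cycle, at which point one path query returns its bottleneck arc (the first to reach a bracketing integer), the push amount, and the signed cost along the cycle, after which we update the path, \textsc{cut} the bottleneck arc, and go on --- giving $O(\log n)$ amortized per arc and $O(m\log n)$ in total. I expect the main obstacle in a fully detailed write-up to be exactly this last step: expressing ``push around the fundamental cycle, stop the first arc that becomes integral, and take the cheaper of the two directions'' in terms of dynamic-tree path operations with the right lazy aggregates (per-arc slack up to the ceiling and down to the floor, plus the signed cost), and carrying out the \textsc{link}/\textsc{cut} amortization so that only $O(\log n)$ is charged to each of the $m$ arcs. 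The combinatorial content --- conservation, feasibility, cost non-increase, support containment --- is the routine part already covered above.
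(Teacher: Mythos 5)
Your proposal is correct and is exactly the argument the paper attributes to \cite{flowrounding}: the paper does not prove this theorem itself but notes it is ``obtained by performing fractional cycle cancelation using dynamic link-cut trees,'' which is precisely your scheme of repeatedly cancelling cycles of fractional arcs in the cost-non-increasing direction and amortizing via link/cut operations. The combinatorial invariants you state (every vertex meets $0$ or $\geq 2$ fractional arcs, feasibility within $[\lfloor f_e\rfloor,\lceil f_e\rceil]$, support monotonicity) are the right ones and are argued correctly.
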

The above result is obtained by performing fractional cycle cancelation
using dynamic link-cut trees~\cite{st}. 
\begin{lem}
\label{lem:lemma-integral-rounding}Let $x$ be the optimal primal
and $y$ be the optimal dual solution to the $b^{+}$-matching problem,
where $\norm{b^{+}-b}_{1}\le c_{T}\cdot m^{3/7}$. In $O(m\log n)$
time we can compute an integral $(2c_{T}\cdot m^{3/7}+1)$-near $b$-matching
$M\subseteq E$ such that $M$ is included in the support of $x$,
i.e., $M\subseteq\left\{ e\in E\vert x_{e}\neq0\right\} $. \end{lem}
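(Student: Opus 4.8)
The plan is to reduce the statement to a single call of the flow‑rounding procedure of Theorem~\ref{theorem:flowrounding}. First I would attach to $G=(V_1\cup V_2,E,c)$ the standard source--sink gadget and obtain a network $N$: a source $s$ with an arc $(s,u)$ of capacity $b_u$ and cost $0$ for every $u\in V_1$, a sink $t$ with an arc $(v,t)$ of capacity $b_v$ and cost $0$ for every $v\in V_2$, and, for every edge $uv$ with $x_{uv}\neq 0$, the arc $(u,v)$ oriented $u\to v$ with integral capacity $\lceil x_{uv}\rceil$ and cost $c_{uv}$ (edges outside $\supp(x)$ are dropped, so every flow in $N$ is automatically supported on $\supp(x)$). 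Since $N$ is layered $s\to V_1\to V_2\to t$, every feasible $s$--$t$ flow in $N$ yields a $b$-matching of $G$ of size equal to its value, with the degree constraints enforced by the source/sink capacities, and integral flows yield integral $b$-matchings. The optimal fractional $b^{+}$-matching $x$ induces the flow $g^{0}$ given by $g^{0}_{uv}=x_{uv}$, $g^{0}_{su}=b^{+}_{u}$, $g^{0}_{vt}=b^{+}_{v}$; it respects every middle-arc capacity by construction, but may exceed the source/sink capacities, which are $b(\cdot)$ rather than $b^{+}(\cdot)$.

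Next I would turn $g^{0}$ into a \emph{feasible} flow by only removing flow. Process $V_1$ first: for each $u$ with $b^{+}_{u}>b_{u}$, cancel $b^{+}_{u}-b_{u}$ units of flow routed through $u$ --- decreasing some $g_{uv}$ and the corresponding $g_{vt}$ so as to keep conservation --- until $g_{su}=b_{u}$; then do the same on $V_2$, which cannot recreate a $V_1$-violation because it only lowers values $g_{su}$. Each cancelled unit kills at least one unit of the total excess $\sum_{w\in V}(b^{+}_{w}-b_{w})_{+}\le\|b^{+}-b\|_{1}\le c_{T}m^{3/7}$, so the resulting flow $g$ is feasible, satisfies $g\le g^{0}$ pointwise, and has value at least $\|x\|_{1}-c_{T}m^{3/7}$. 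A final pass removes one fractional $s$--$t$ flow path to bring the value down to an integer $F'\ge\|x\|_{1}-c_{T}m^{3/7}-1$. Both passes and the flooring run in $O(m)$ time.

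Now $N$ has integral capacities and costs, and $g$ has integral total value $F'$, so Theorem~\ref{theorem:flowrounding} returns in $O(m\log n)$ an integral flow $g'$ with value $F'$, cost no larger than that of $g$, and $\supp(g')\subseteq\supp(g)\subseteq\supp(x)$. Let $M\subseteq E$ be the multiset of middle arcs carrying flow in $g'$. Conservation gives $\deg_{M}(u)=g'_{su}\le b_{u}$ for $u\in V_1$ and $\deg_{M}(v)=g'_{vt}\le b_{v}$ for $v\in V_2$, so $M$ is an integral $b$-matching; $\supp(g')\subseteq\supp(x)$ gives $M\subseteq\{e:x_{e}\neq0\}$ (and, by complementary slackness for the optimal pair $(x,y)$, each arc of $M$ is tight, $y_{u}+y_{v}=c_{uv}$, which will be handy for the later repair steps). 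Finally $\|x\|_{1}=b^{+}(V_1)\ge b(V_1)-\|b^{+}-b\|_{1}$, so $|M|=F'\ge b(V_1)-2c_{T}m^{3/7}-1$; since a perfect $b$-matching has size $b(V_1)$, $M$ is $(2c_{T}m^{3/7}+1)$-near. The total time is dominated by the call to Theorem~\ref{theorem:flowrounding}, hence $O(m\log n)$.

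The step needing the most care is the middle one: converting the fractional $x$, which satisfies the \emph{wrong} demand $b^{+}$, into a capacity-feasible, integral-value $s$--$t$ flow that is pointwise dominated by $x$ --- domination being exactly what delivers $M\subseteq\supp(x)$ --- while giving up only $O(\|b^{+}-b\|_{1})+1$ units of value, and arranging the cancellations in the $V_1$-then-$V_2$ order so that the whole surgery stays linear-time. Once this massaged flow is available, Theorem~\ref{theorem:flowrounding} can be used essentially as a black box.
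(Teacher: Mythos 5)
Your proof is correct and follows essentially the same route as the paper's: trim the fractional solution so every vertex degree is capped at $b$, losing at most $\norm{b^{+}-b}_{1}$ in size, force the total value to be integral, cast the result as an $s$--$t$ flow with the standard source/sink gadget, and invoke Theorem~\ref{theorem:flowrounding}, whose support-preservation guarantee yields $M\subseteq\supp(x)$. The only nitpick is that ``removing one fractional $s$--$t$ flow path'' may require cancelling along several paths (or vertex-by-vertex) when bottlenecks are small, but this is still $O(m)$ and does not affect the argument.
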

\begin{proof}
For graph $G$ let $V_{1}$ and $V_{2}$ be the bipartition of $V$.
For notational convenience in the remainder of this section we use
$V_{1}=P$ and $V_{2}=Q$. First, for each vertex $v\in V$ we define
$b^{\le}$ to be $b_{v}^{\le}=\min(b_{v},b_{v}^{+})$. Moreover, we
round down $b^{\le}(V_{1})$ and $b^{\le}(V_{2})$ to the nearest
integer. Observe that $\norm{b^{+}-b}_{1}\le2c_{T}\cdot m^{3/7}+1$.
Now, for each $x(E(v))>b_{v}^{\le}$ we round down $x(E(v))$ to $b_{v}^{\le}$,
i.e., we decrease $x$ on arbitrary edges incident to $v$, so that
the fraction of edges incident to $v$ becomes $b_{v}^{\le}$. Let
us denote the resulting vector by $x^{\le}$. Now, let us view the
$b^{\le}$-matching problem as a flow problem by: 
\begin{itemize}
\item directing edge in $e\in E$ from $V_{1}$ to $V_{2}$ -- the flow
on arc $e$ is equal $x_{e}^{\le}$, 
\item adding source $s$ and sink $t$, 
\item connecting $s$ to all vertices in $v\in V_{1}$ -- the flow on arc
$sv$ is equal to $x^{\le}(E(v))$, 
\item connecting all vertices in $v\in V_{2}$ to $t$ -- the flow on arc
$vt$ is equal to $x^{\le}(E(v))$. 
\end{itemize}
By applying Theorem~\ref{theorem:flowrounding} to the above fractional
flow we obtain integral flow $f'$ with value $\ge\norm b_{1}/2-2c_{T}\cdot m^{3/7}-1$.
This flow induces $(2c_{T}\cdot m^{3/7}+1)$-near $b$-matching $M$
in $G$. 
\end{proof}
We observe that complementary slackness conditions still hold between
$M$ and $y$ (i.e., for each $uv\in M$ we have $c_{uv}=y_{u}+y_{v}$)
because $M$ is contained in the support of $x$. We will exploit
this fact in the following.

In this moment we have executed two steps of our repair procedure
-- we have reduced the duality gap to 0 and rounded the solution to
be integral. Now, we are ready to repair the perturbations done to
demands. Our repair procedure is presented in Algorithm~\ref{algorithm:matchings}.
It finds a minimum weight perfect $b$-matching in $G$. In the procedure
we first apply Theorem~\ref{thm:main} then we round the primal solution
using Theorem~\ref{theorem:flowrounding}. Next, we repeatedly find
shortest path in $\ora G_{M}$ from $V_{1}\cap F_{M}$ to $V_{2}\cap F_{M}$
with respect to reduced weights $\tilde{c}$. \emph{Reduced weights}
(or the slack) are defined as $\tilde{c}_{uv}=c_{uv}-y_{u}-y_{v}$.
These paths are used to augment the matching. Augmentation of the
matching using shortest paths guarantees that $M$ is \emph{extremal},
i.e., $M$ is the minimum weight perfect $\deg_{M}$-matching. In
$\deg_{M}$-matching the demand for vertex $v$ is equal to $\deg_{M}(v),$
i.e., number of edges incident to $v$ in $M$.\footnote{Similar definition is used in Edmonds-Karp algorithm. However, there
extremal matching is defined to be maximum weight matching of given
size.} The following corollary states that this is true at the beginning
of the algorithm.
\begin{cor}
The matching $M$ constructed by Theorem~\ref{theorem:flowrounding}
is extremal. \end{cor}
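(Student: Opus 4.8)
The plan is to certify that $M$ is extremal by exhibiting a feasible dual solution for the $\deg_M$-matching problem whose value equals $c(M)$; the dual solution $y$ already in hand will do. Recall the two facts established just above. First, $y$ is an optimal, hence feasible, solution of the vertex packing dual for the $b^{+}$-matching problem, so $y_{u}+y_{v}\le c_{uv}$ for every $uv\in E$; this inequality constrains the graph and the costs only --- it is independent of the right-hand-side demand vector --- so $y$ is automatically a feasible dual for the $\deg_M$-matching LP as well. Second, because $M$ is contained in the support of the optimal primal $x$, complementary slackness holds on $M$: $c_{uv}=y_{u}+y_{v}$, equivalently $\tilde{c}_{uv}=0$, for every $uv\in M$. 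Finally, $M$ is trivially a perfect $\deg_M$-matching, since by definition each vertex $v$ is incident to exactly $\deg_M(v)$ edges of $M$.

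With these in place I would run the standard weak-duality comparison. Let $x'$ be the multiplicity vector of any perfect $\deg_M$-matching, i.e. $x'\ge 0$ with $\sum_{e\in E(v)}x'_{e}=\deg_M(v)$ for every $v\in V$. Then, using $y_{u}+y_{v}\le c_{uv}$ together with $x'\ge 0$ and rearranging the double sum,
\[
c(x')=\sum_{uv\in E}c_{uv}x'_{uv}\;\ge\;\sum_{uv\in E}(y_{u}+y_{v})x'_{uv}=\sum_{v\in V}y_{v}\sum_{e\in E(v)}x'_{e}=\sum_{v\in V}y_{v}\deg_M(v).
\]
On the other hand, the complementary-slackness equalities on $M$ give
\[
c(M)=\sum_{uv\in M}c_{uv}=\sum_{uv\in M}(y_{u}+y_{v})=\sum_{v\in V}y_{v}\deg_M(v),
\]
so $c(M)\le c(x')$ for every perfect $\deg_M$-matching $x'$. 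Hence $M$ is a minimum-weight perfect $\deg_M$-matching, i.e. extremal.

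I do not expect any real obstacle here; this is essentially a one-line complementary-slackness argument. The two points that merely require care are (i) that the dual $y$ produced by Lemma~\ref{lem:lemma-integral-rounding} is genuinely dual-feasible, which is immediate since feasibility of a vertex packing does not see the demand vector, and (ii) that $M\subseteq\supp(x)$, which is exactly the support-preservation guarantee of Theorem~\ref{theorem:flowrounding} and is what turns the dual-feasibility inequalities into equalities on $M$. As a byproduct, the certificate above also shows that $\ora G_{M}$ contains no negative-weight cycle with respect to $\ora c_{x}$ (equivalently, with respect to the reduced weights $\tilde{c}$), which is precisely the property making the subsequent shortest-path augmentations in the repair algorithm well defined.
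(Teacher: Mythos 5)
Your proof is correct and is essentially the paper's argument, just written out in full: the paper's one-line proof likewise observes that the edges of $M$ are tight with respect to $y$ (because $M$ lies in the support of the optimal $x$), so the dual objective for the $\deg_M$-demands equals $c(M)$, and weak duality gives extremality. The only difference is that you make explicit the (correct) observation that dual feasibility of $y$ is independent of the demand vector, which the paper leaves implicit.
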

\begin{proof}
Observe that edges of $M$ are tight in $G$ with respect to the corresponding
dual solution $y$, so $\sum_{v\in V(M)}y_{v}b_{v}=c(M)$ and this
proves that $M$ is optimal $\deg_{M}$-matching. 
\end{proof}
In order to be able to efficiently find these shortest paths in $\ora G_{M}$
we need to make sure that during the execution of the algorithm reduced
weights are nonnegative. At the beginning, by the definition of $b$-vertex
packing, the reduced weights are nonnegative. However, when augmenting
the matching using some path $\pi$ we change direction of some edges
on $\pi$ in $\ora G_{M}$ and flip the sign of their weights. The
crucial part is to first reweigh the dual to make sure that all edges
on $\pi$ have reduced cost equal to $0$. If this is the case reversing
edges does not introduce negative weights. Let us denote by $R_{M}$
the set of vertices reachable from $V_{1}\cap F_{M}$ in $\ora G_{M}$
and by $\dist_{\ora G_{M}}(V_{1},u)$ the distance from $V_{1}$ to
$u$ in $\ora G_{M}$.

\begin{algorithm}
\begin{enumerate}
\item Apply Theorem~\ref{thm:main} to $G$ and let $x$ and $y$ be the
resulting primal and dual solution to the perturbed $b^{+}$.
\item Apply Theorem~\ref{theorem:flowrounding} to $x$ to obtain $(2c_{T}\cdot m^{3/7}+1)$-near
$b$-matching $M$. 
\item while $M$ is not a perfect $b$-matching repeat
\item $\quad$\emph{(Invariant: for all edges $e\in\ora G_{M}[R_{M}]$ we
have $\tilde{c}_{e}\ge0$.)}
\item $\quad$Construct $\ora G_{M}$ using $G,M,\tilde{c}$. 
\item $\quad$Find a shortest path $\pi$ from $V_{1}\cap F_{M}$ to $V_{2}\cap F_{M}$
in $\ora G_{M}$. 
\item $\quad$for all $u\in V_{1}\cup V_{2}$ do 
\item $\quad\quad$if $u$ is reachable from $V_{1}$ in $\ora G_{M}$ then
\item $\quad\quad\quad$if $u\in V_{1}$ then $y_{u}:=y_{u}-\dist_{\ora G_{M}}(V_{1},u)$
\item $\quad\quad\quad\quad\quad\quad\quad$else $y_{u}:=y_{u}+\dist_{\ora G_{M}}(V_{1},u)$ 
\item $\quad$Enlarge $M$ using augmenting path $\pi.$
\item Return $M$
\end{enumerate}
\caption{Algorithm for computing minimum weight perfect $b$-matching. }
\label{algorithm:matchings}
\end{algorithm}

First, we need to prove that the invariant in the while loop of the
algorithm holds. 
\begin{lem}
During the execution of the wile loop in Algorithm~\ref{algorithm:matchings}
for all edges $e\in\ora G_{M}[R_{M}]$ we have $\tilde{c}_{e}\ge0$.\end{lem}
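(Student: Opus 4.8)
The plan is to prove the invariant by induction on the iterations of the while loop of Algorithm~\ref{algorithm:matchings}; this is the classical successive-shortest-paths / Johnson-reweighting argument, adapted to the residual graph $\ora G_M$ and to the fact that the reweighting in the algorithm is applied only on the reachable set $R_M$. For the base case, right after Step~2 the dual $y$ is the optimal dual of the perturbed $b^{+}$-matching instance produced by Lemma~\ref{thm:rounding-first-step}, hence feasible for the $b$-vertex packing LP: $y_u+y_v\le c_{uv}$, so $\tilde c_{uv}=c_{uv}-y_u-y_v\ge 0$ for every $uv\in E$, which handles all forward arcs of $\ora G_M$. For the backward arcs, i.e.\ the edges of $M$, the observation following Lemma~\ref{lem:lemma-integral-rounding} (using $M\subseteq\supp(x)$ and complementary slackness) gives $c_{uv}=y_u+y_v$, so $\tilde c_{uv}=0$, and the reverse arc, whose reduced cost is $-\tilde c_{uv}$, is also $0$. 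Thus every arc of $\ora G_M$, a fortiori every arc of $\ora G_M[R_M]$, has nonnegative reduced cost.

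For the inductive step, suppose the invariant holds at the start of an iteration. I would first record two structural facts: (a) by definition of reachability, $R_M$ is closed under out-arcs of $\ora G_M$ -- any arc leaving a vertex of $R_M$ has its head in $R_M$; and (b) since all arcs inside $R_M$ have $\tilde c\ge 0$, the distances $d_u:=\dist_{\ora G_M}(V_1,u)$ for $u\in R_M$ are well-defined (Dijkstra applies) and every arc $(a,b)$ of the shortest path $\pi$ is tight, $d_b=d_a+\tilde c_{(a,b)}$. Next, observe that the dual update (setting $y_u\leftarrow y_u\mp\dist_{\ora G_M}(V_1,u)$ on $R_M$, with the sign depending on the side of the bipartition) is exactly Johnson reweighting: writing $\phi_u=-y_u$ for $u\in V_1$ and $\phi_u=y_u$ for $u\in V_2$, so that $\tilde c_e$ is precisely the length of arc $e$ under $\phi$, the update amounts to $\phi_u\mapsto\phi_u+d_u$ for $u\in R_M$ and leaves $\phi$ unchanged elsewhere. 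Hence after the update any arc $(a,b)$ with $a,b\in R_M$ has new reduced cost $\tilde c_{(a,b)}+d_a-d_b\ge 0$ by the triangle inequality, and every arc of $\pi$ has new reduced cost exactly $0$.

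Finally I would analyze the augmentation along $\pi$. It modifies $\ora G_M$ only by reversing the arcs of $\pi$ (and possibly deleting some matched-arc copies), so the only arcs present in $\ora G_{M'}$ but not in $\ora G_M$ are reversals of $\pi$-arcs: these have both endpoints in $R_M$ and, being reversals of now-tight arcs, reduced cost $0$. Moreover $F_{M'}\subseteq F_M$, so $V_1\cap F_{M'}\subseteq R_M$; combining this with fact~(a) and the fact that the newly added arcs stay inside $R_M$, one gets that $R_M$ is still closed under out-arcs of $\ora G_{M'}$, whence $R_{M'}\subseteq R_M$. Therefore every arc of $\ora G_{M'}[R_{M'}]$ has both endpoints in $R_M$ and is either an old arc of $\ora G_M$ (nonnegative reduced cost after the update, by the previous paragraph) or a reversal of a $\pi$-arc (reduced cost $0$); in either case $\tilde c\ge 0$, which re-establishes the invariant. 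I expect this last bookkeeping step -- tracking exactly how $\ora G_M$ and the reachable set evolve under the augmentation, and verifying that no arc with negative reduced cost enters $\ora G_{M'}[R_{M'}]$ -- to be the main obstacle; the reweighting itself is routine once the potential $\phi$ above is identified.
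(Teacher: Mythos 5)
Your proposal is correct and follows essentially the same route as the paper's proof: reinterpret the dual update as a reweighting by shortest-path distances, show that arcs of $\pi$ become tight and all arcs inside $R_M$ keep nonnegative reduced cost (via the triangle inequality for distances, which is exactly the paper's case analysis on $uv\in M$ versus $uv\notin M$), and observe that $R_M$ can only shrink because $\pi$ lies entirely inside $R_M$. Your explicit base case (dual feasibility for forward arcs, complementary slackness for arcs of $M\subseteq\supp(x)$) is handled in the paper by the surrounding discussion rather than inside the lemma, but the substance is identical.
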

\begin{proof}
First, we observe that set of reachable vertices $R_{M}$ decreases
during the execution of the algorithm. The only step that alters set
$R_{M}$ is the matching augmentation that changes the direction of
some edges on $\pi$. Hence, $R_{M}$ would increase when there would
be an edge of $\pi$ entering $R_{M}$, but this is impossible as
$\pi$ needs to be contained in $R_{M}$ by definition.

Now, we need to consider the reweighing done in the algorithm. We
will prove that when it is done we have that $\tilde{c}_{e}=0$ for
all $e\in\pi$ and $\tilde{c}_{e}\ge0$ for all $e\in\ora G_{M}[R_{M}]$.
In such case augmenting $M$ using $\pi$ will not introduce negative
weights to $\ora G_{M}$ as all weights on $\pi$ will be zero.

As for the fact that $\tilde{c}_{e}=0$ for all $e\in\pi$ we need
to consider two cases. Let $uv\in\pi$ where $u\in V_{1}$ and $v\in V_{2}$. 
\begin{itemize}
\item $uv\in M$ -- Observe that in $\ora G_{M}$ arc $vu$ enters $u$
and lies on the shortest path $\pi$, so we have $\dist_{\ora G_{M}}(V_{1},u)=\dist_{\ora G_{M}}(V_{1},v)-\tilde{c}_{uv}$,
by definition of $\ora c_{M}$. By our reweighing rule the new reduced
weight of $uv$ is equal to $\tilde{c}(uv)+\dist_{\ora G_{M}}(V_{1},u)-\dist_{\ora G_{M}}(V_{1},v)=0.$
\item $uv\not\in M$ -- As $\pi$ is the shortest path, for arc $uv$ we
have that $\dist_{\ora G_{M}}(V_{1},v)=\dist_{\ora G_{M}}(V_{1},u)+\tilde{c}_{uv}$
by definition of $\ora c_{M}$. This in turn means that the new reduced
weight of $uv$ is $\tilde{c}_{uv}+\dist_{\ora G_{M}}(V_{1},u)-\dist_{\ora G_{M}}(V_{1},v)=0$.
\end{itemize}
Let us now consider edges $e\not\in E(\pi)$ but $e\in\ora G[R_{M}]$.
We have two cases here as well. Let $uv\in E(\pi)$ where $u\in V_{1}$
and $v\in V_{2}$. 
\begin{itemize}
\item $uv\in M$ -- We have $\dist_{\ora G_{M}}(V_{1},u)\le\dist_{\ora G_{M}}(V_{1},v)-\tilde{c}_{uv}$
as well as $\dist_{\ora G_{M}}(V_{1},v)\le\dist_{\ora G_{M}}(V_{1},u)+\tilde{c}_{uv}$,
because edges of $M$ are bidirected. Hence, $\dist_{\ora G_{M}}(V_{1},v)=\dist_{\ora G_{M}}(V_{1},u)+\tilde{c}_{uv}$
and the new weight of $uv$ is $\tilde{c}_{uv}+\dist_{\ora G_{M}}(V_{1},u)-\dist_{\ora G_{M}}(V_{1},v)=0$.
\item $uv\not\in M$ -- By the properties of the distance function we have
that $\dist_{\ora G_{M}}(V_{1},v)\le\dist_{\ora G_{M}}(V_{1},u)+\tilde{c}_{uv}$,
so the new reduced weight of $uv$ is $\tilde{c}_{uv}(+\dist_{\ora G_{M}}(V_{1},u)-\dist_{\ora G_{M}}(V_{1},v)\ge0$. 
\end{itemize}
\end{proof}
We are now ready to prove the correctness of the algorithm and bound
its running time.
\begin{thm}
\label{theorem:matchings} Assuming $\norm b_{1}=O(m)$, we can find
a minimum weight perfect $b$-matching in $\tilde{O}(m^{10/7}\log^{4/3}W)$
time. \end{thm}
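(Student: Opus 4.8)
The plan is to combine the interior-point result of Theorem~\ref{thm:main} with the three combinatorial post-processing steps of this section, and then to show that the augmentation loop of Algorithm~\ref{algorithm:matchings} halts quickly on a minimum-weight perfect $b$-matching.

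First I would run the algorithm of Theorem~\ref{thm:main}: in $\tilde{O}(m^{10/7}\log^{4/3}W)$ time it either certifies that $G$ admits no perfect $b$-matching---in which case we are done---or it returns a feasible primal-dual pair $(x,y)$ with duality gap at most $m^{-2}$ for a perfect $b^{+}$-matching instance with $\norm{b^{+}-b}_{1}\le\cT\cdot m^{3/7}$. Plugging this pair into Lemma~\ref{thm:rounding-first-step} produces, in $O(m+n\log n)$ time, an \emph{exact} optimal primal-dual pair $(x^{+},y^{+})$ to a perfect $b'$-matching with $\norm{b'-b^{+}}_{1}\le m^{-2}$, hence $\norm{b'-b}_{1}\le\cT\cdot m^{3/7}+m^{-2}$. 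Feeding the now-optimal primal $x^{+}$ into Lemma~\ref{lem:lemma-integral-rounding} rounds it, via the flow-rounding result of Theorem~\ref{theorem:flowrounding}, to an integral $(2\cT\cdot m^{3/7}+1)$-near $b$-matching $M$ supported on $\{e : x^{+}_{e}\neq 0\}$. Since $M$ lies in the support of the optimal $x^{+}$, complementary slackness holds between $M$ and $y^{+}$, so every edge of $M$ is tight and, as observed in the corollary following Lemma~\ref{lem:lemma-integral-rounding}, $M$ is \emph{extremal}, i.e.\ a minimum-weight perfect $\deg_{M}$-matching.

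Then I would analyze the while loop, whose correctness rests on two facts. The first is the loop invariant, established just above, that every arc of $\ora G_{M}[R_{M}]$ has nonnegative reduced cost $\tilde c_{e}$; this is what lets each shortest augmenting path $\pi$ from $V_{1}\cap F_{M}$ to $V_{2}\cap F_{M}$ be computed with Dijkstra's algorithm in $O(m+n\log n)$ time, and, together with the reweighting step (which makes every arc of $\pi$ tight and keeps all reduced costs in $R_{M}$ nonnegative), it guarantees that reversing the arcs of $\pi$ introduces no negative reduced cost. The second is that augmenting an extremal $M$ along a shortest (with respect to $\tilde c$) augmenting path yields an $M\triangle\pi$ that is again extremal for $\deg_{M\triangle\pi}$: all arcs used are tight and $y^{+}$ is dual-feasible, so $c(M\triangle\pi)=\sum_{v}y^{+}_{v}\deg_{M\triangle\pi}(v)$, certifying optimality. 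Since a perfect $b$-matching exists, an imperfect extremal $M$ supported inside a graph that admits one always has an augmenting path in $\ora G_{M}$ (a standard flow/Hall-type argument), and each augmentation lowers the total deficiency $\sum_{v}\big(b_{v}-\deg_{M}(v)\big)$ by exactly $2$. As this deficiency starts at $O(\cT\cdot m^{3/7})$, after $O(\cT\cdot m^{3/7})$ augmentations $M$ is a perfect $b$-matching, and by extremality it has minimum weight.

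Finally I would add up the costs: $\tilde{O}(m^{10/7}\log^{4/3}W)$ for Theorem~\ref{thm:main}; $\tilde{O}(m)$ for Lemmas~\ref{thm:rounding-first-step} and~\ref{lem:lemma-integral-rounding}; and $O(\cT\cdot m^{3/7})$ iterations of the loop, each costing $O(m+n\log n)=\tilde{O}(m)$ to build $\ora G_{M}$, run Dijkstra, reweight the dual and augment, for a total of $\tilde{O}(\cT\cdot m^{10/7})$. Since $\cT=3\crho\log\tilde W=O(\log^{4/3}W)$, the whole procedure runs in $\tilde{O}(m^{10/7}\log^{4/3}W)$ time, and the Gabow scaling reduction noted after Theorem~\ref{thm:Asumming-Invariant-2} further trims the dependence to $\log W$. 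The step I expect to be the main obstacle is the correctness of the augmentation loop---checking that the reweighting preserves nonnegativity of reduced costs on $R_{M}$, that extremality is maintained through each augmentation, and that an augmenting path always exists while $M$ is imperfect---since, given $\norm b_{1}=O(m)$ and the bound $\norm{b^{+}-b}_{1}\le\cT\cdot m^{3/7}$, the running-time accounting is then routine.
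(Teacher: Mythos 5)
Your proposal is correct and follows essentially the same route as the paper: run Theorem~\ref{thm:main}, repair the duality gap via Lemma~\ref{thm:rounding-first-step}, round to an integral near-$b$-matching via Lemma~\ref{lem:lemma-integral-rounding}, and then close the $O(\cT\cdot m^{3/7})$ deficiency with shortest augmenting paths in $\ora G_{M}$, using the nonnegative-reduced-cost invariant to run Dijkstra and extremality to certify optimality of the final matching. Your write-up is somewhat more explicit than the paper's (e.g.\ on the existence of augmenting paths and the deficiency decrement), but the decomposition, the key lemmas invoked, and the running-time accounting are the same.
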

\begin{proof}
We apply Algorithm~\ref{algorithm:matchings}.The correctness of
the algorithm follows by the fact that after each augmentation $M$
is extremal. Hence, when $M$ is perfect it needs to be minimum cost
perfect $b$-matching.

The execution of Theorem~\ref{thm:main} requires $\tilde{O}(m^{10/7}\log^{4/3}W)$
time. In order to round $x$ to $M$ using Theorem~\ref{theorem:flowrounding}
we need $O(m\log n)$ time. Finally, as $M$ is a $\tilde{O}(m^{3/7})$-near
$b$-matching we will find at most $\tilde{O}(m^{3/7})$ augmenting
paths with respect to it. Finding each augmenting path takes $O(m+n\log n)$
time using Dijkstra's algorithm, so this part takes $\tilde{O}(m^{10/7})$
time. Therefore the total running time is $\tilde{O}\left(m^{10/7}\log^{4/3}W\right)$.\end{proof}
\begin{thm}
\label{theorem:matchings-dual} Assuming $\norm b_{1}=O(m)$, we can
find a maximum weight vertex $b$-packing in $\tilde{O}(m^{10/7}\log^{4/3}W)$
time. \end{thm}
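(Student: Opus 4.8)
The plan is to read off the optimal vertex $b$-packing essentially for free from the procedure already used to prove Theorem~\ref{theorem:matchings}: the $b$-vertex packing LP is exactly the LP dual of the weighted perfect bipartite $b$-matching LP, and Algorithm~\ref{algorithm:matchings} maintains a dual vector $y$ alongside the matching it builds. So I would run Algorithm~\ref{algorithm:matchings} on $G$ and return $y$ together with $M$ (if Theorem~\ref{thm:main} instead reports that $G$ has no perfect $b$-matching, then the matching LP is infeasible, the packing LP is unbounded, and we report that). Updating $y$ only happens in the reweighting steps, which cost nothing beyond the shortest-path computations already charged for, so the running time is the same $\tilde O(m^{10/7}\log^{4/3}W)$ as in Theorem~\ref{theorem:matchings}.

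To see that the terminal $y$ is an optimal packing I would verify two things. First, the \emph{objective value}: at termination $M$ is a perfect $b$-matching, so $\deg_M(v)=b_v$ for all $v$, and complementary slackness between $M$ and $y$ holds (every $uv\in M$ has $c_{uv}=y_u+y_v$; as noted in Section~\ref{sec:repair} this holds for the matching of Theorem~\ref{theorem:flowrounding} because it lies in the support of the optimal primal of Lemma~\ref{thm:rounding-first-step}, and it is preserved by each shortest-path augmentation, whose accompanying reweighting zeroes the reduced cost of every edge of the augmenting path). Hence
\begin{equation}
\sum_{v\in V} b_v\,y_v \;=\; \sum_{uv\in M}(y_u+y_v) \;=\; c(M).\label{eq:packing-equals-matching}
\end{equation}
Second, \emph{feasibility}: $y_u+y_v\le c_{uv}$, i.e.\ $\tilde c_{uv}\ge0$, for every $uv\in E$ at termination. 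Granting both, \eqref{eq:packing-equals-matching} exhibits a feasible dual whose objective equals the weight of a minimum-weight perfect $b$-matching (optimality of $M$ being Theorem~\ref{theorem:matchings}), so weak LP duality makes $y$ an optimal $b$-vertex packing.

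The main obstacle is that feasibility claim: the invariant proved for Algorithm~\ref{algorithm:matchings} only gives $\tilde c_e\ge 0$ on the induced subgraph $\ora G_M[R_M]$, whereas the packing LP constrains all edges. I would close this gap with the standard Johnson/Edmonds--Karp potential accounting: read the reweighting step as shifting \emph{every} vertex by its shortest-path distance from $V_1\cap F_M$ in $\ora G_M$ (the reduced weights are nonnegative, so this is a Dijkstra computation), with vertices outside $R_M$ assigned the largest finite such distance attained in $R_M$ rather than being left fixed. No arc ever leaves $R_M$ in the forward direction, so the only frontier-crossing arcs run from outside $R_M$ into it, and under this shift their reduced cost can only increase; since $R_M$ shrinks monotonically and is empty when the loop exits, an induction over augmentations gives $\tilde c\ge 0$ on all of $E$ at termination, which together with \eqref{eq:packing-equals-matching} finishes the proof. (Alternatively, with $M$ fixed one could recompute the dual from scratch via the Iri distance formula recalled earlier, but that needs a negative-weight shortest-path call, so reusing the potentials $y$ that Algorithm~\ref{algorithm:matchings} already carries is preferable.)
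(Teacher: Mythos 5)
Your proof is correct, and you put your finger on exactly the issue the paper's own proof addresses: Algorithm~\ref{algorithm:matchings} only maintains $\tilde c_e\ge 0$ on $\ora G_{M}[R_{M}]$, while dual feasibility for the $b$-packing requires it on all of $E$. The two arguments diverge only in how they close that gap. The paper modifies the \emph{graph}: it picks an arbitrary free vertex $v\in V_1$ and adds edges of cost $n\norm{w}_{\infty}$ from $v$ to every vertex of $V_2$, so that $R_M=V_1\cup V_2$ throughout and the already-proved invariant covers every edge (the heavy edges are never used by an optimal matching, so the primal is unaffected). You instead modify the \emph{potential update}: unreachable vertices are shifted by the maximum finite distance $D$ attained in $R_M$. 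Your case analysis is sound --- forward arcs never leave $R_M$ (if $u\in V_1\cap R_M$ then every $v$ with $uv\in E$ lies in $R_M$), matched edges have both endpoints on the same side of the frontier, arcs crossing into $R_M$ gain $D-\dist_{\ora G_{M}}(V_1,v)\ge 0$ in reduced cost, and arcs entirely outside $R_M$ are unchanged --- so an induction starting from feasibility of the initial dual of Lemma~\ref{thm:rounding-first-step} gives $\tilde c\ge 0$ on all of $E$ after every iteration, hence at termination. Both fixes cost nothing asymptotically; yours has the mild advantage of not perturbing the instance (no need to argue the auxiliary heavy edges are never used), while the paper's requires no re-proof of the invariant since it literally forces the stated invariant to cover every edge. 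One small imprecision on your side: global feasibility at termination follows from the per-iteration induction, not from ``$R_M$ shrinks and is empty when the loop exits'' --- an empty $R_M$ merely means no further reweighting occurs.
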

\begin{proof}
One would like to apply Theorem~\ref{thm:main} to $G'$ to produce
the dual, but Algorithm~\ref{algorithm:matchings} does not compute
a dual solution. The problem is that $y$ in the algorithm is defined
only on the set $R_{M}$, i.e., nodes reachable in $\ora G_{M}$ from
free vertices in $V_{1}$. There is, however, an easy fix to this.
Let $v\in V_{1}\setminus V(M)$ be arbitrary free vertex in $V_{1}$.
We define $\ora G'_{M}$ to be a graph obtained from $\ora G_{M}$
by connecting $v$ with all vertices in $V_{2}$ with an edge of cost
$n\norm w_{\infty}$. Such heavy edges will never be used by a minimum
cost perfect $1$-matching, but now we always have $R_{M}=V_{1}\cup V_{2}$.
Hence, by the invariant, $y$ computed during the execution of Algorithm~\ref{algorithm:matchings}
needs to form a maximum vertex $b$-packing.
\end{proof}
In order to reduce the dependency on $\log W$ in the running time,
we apply the scaling technique of Gabow (see \cite{gabow-85}). This
enables us to reduce our problem in a black-box manner to solving
$O(\log W)$ instances where the weights are polynomially bounded,
i.e., $W\leq\norm b_{1}$. So the running time will be reduced to
$\tilde{O}(m^{10/7}\log W)$. This is described in the following theorem
that shows how to execute scaling on the dual problem.
\begin{thm}
\label{theorem:matchings-improved}Assuming $\norm b_{1}=O(m)$, we
can find a maximum weight vertex $b$-packing in $\tilde{O}(m^{10/7}\log W)$
time.
\begin{proof}
Consider Algorithm \ref{algorithm:matching-scaling} that executes
cost scaling for the dual solution. It first recursively computes
the dual problem for a graph with costs that are smaller by a factor
of $2$. Then it uses the obtained solution as a starting point for
the computation of the dual solution in the original graph. 

\begin{algorithm}
\begin{enumerate}
\item If $c_{uv}=0$ for all $uv\in E$ then return $y_{v}=0$ for all $v\in V_{1}\cup V_{2}$.
\item Recursively find the optimal dual solution $\overline{y}$ for a graph
$\overline{G}=(V_{1}\cup V_{2},E,\overline{c})$ where $\overline{c}_{uv}=\lfloor\frac{c_{uv}}{2}\rfloor$
for all $uv\in E$.
\item Set $y_{v}=2\overline{y}_{v}$ for all $v\in V_{1}\cup V_{2}.$
\item Set $\tilde{c}_{uv}=c_{uv}-y_{u}-y_{v}$ for all $uv\in E$.
\item Set $\tilde{c}_{uv}^{n}=\min(\tilde{c}_{uv},\norm b_{1}).$
\item Find the optimal dual solution $\tilde{y}^{n}$ for the graph $\tilde{G}^{n}=(V_{1}\cup V_{2},E,\tilde{c}^{n})$.
\item Return $y+\tilde{y}^{n}$ as the optimal dual solution for $G$.
\end{enumerate}
\caption{Scaling algorithm for computing maximum weight b-packing in $G=(V_{1}\cup V_{2},E,c)$. }
\label{algorithm:matching-scaling}
\end{algorithm}

The only step of the algorithm that requires some explanation is step
5 -- if we would have removed this step the algorithm would compute
the dual solution as it only works with reduced weights. When this
step is present we just need to argue that $\tilde{y}^{n}$ is the
optimal dual solution for $\tilde{G}=(V_{1}\cup V_{2},E,\tilde{c})$.
This can be shown by arguing that the optimal primal solution in $\tilde{G}$
never uses edges of weight higher than $\norm b_{1}$. Take an optimal
primal solution $\tilde{M}$ for $\tilde{G}$ and take an optimal
primal solution $\overline{M}$ for $\overline{G}$. By the scaling
procedure edges in $\overline{M}$ have weights either $0$ or $1$,
as they were tight for $\overline{y}$. By optimality of $\tilde{M}$
we know that$\tilde{c}(\tilde{M})\le\tilde{c}(\overline{M})\le\norm b_{1},$
so no single edge of $\tilde{M}$ can have weight higher than $\norm b_{1}$. 
\end{proof}
\end{thm}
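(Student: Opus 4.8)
The plan is to reduce the problem, in a black-box fashion, to $O(\log W)$ instances in which all edge costs are polynomially bounded, and to invoke Theorem~\ref{theorem:matchings-dual} on each of them; since for such an instance the $\log W$ factor collapses to $\log\norm b_{1}=O(\log m)$ and is absorbed by the $\tilde{O}(\cdot)$ notation, each call costs only $\tilde{O}(m^{10/7})$, so the total is $\tilde{O}(m^{10/7}\log W)$. Concretely I would use Gabow's bit-scaling recursion~\cite{gabow-85}: write each cost $c_{uv}$ in binary, recursively compute an optimal dual solution $\overline{y}$ for the graph $\overline{G}=(V_{1}\cup V_{2},E,\overline{c})$ with $\overline{c}_{uv}=\lfloor c_{uv}/2\rfloor$ (the base case $c\equiv 0$ returning $y\equiv 0$), then \emph{lift} it by setting $y_{v}:=2\overline{y}_{v}$, so that the residual reduced costs $\tilde{c}_{uv}:=c_{uv}-y_{u}-y_{v}$ encode only the next bit of information and are therefore small.

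The key structural step is to show that the reduced-cost instance $\tilde{G}=(V_{1}\cup V_{2},E,\tilde{c})$ may be replaced by the \emph{capped} instance $\tilde{G}^{n}$ with costs $\tilde{c}^{n}_{uv}:=\min(\tilde{c}_{uv},\norm b_{1})$ without changing the optimal dual. For this, take an optimal primal $b$-matching $\overline{M}$ for $\overline{G}$: because its edges are tight for $\overline{y}$, one has $\tilde{c}_{uv}=2(\overline{c}_{uv}-\overline{y}_{u}-\overline{y}_{v})+(\text{next bit})\in\{0,1\}$ for $uv\in\overline{M}$, hence $\tilde{c}(\overline{M})\le\norm b_{1}$. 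Then any optimal primal $\tilde{M}$ for $\tilde{G}$ satisfies $\tilde{c}(\tilde{M})\le\tilde{c}(\overline{M})\le\norm b_{1}$, so $\tilde{M}$ uses no edge of reduced cost exceeding $\norm b_{1}$; consequently the primal optima of $\tilde{G}$ and $\tilde{G}^{n}$ coincide, and an optimal dual $\tilde{y}^{n}$ of $\tilde{G}^{n}$ is optimal for $\tilde{G}$ as well. Since $\tilde{G}^{n}$ has weights bounded by $\norm b_{1}=O(m)$, Theorem~\ref{theorem:matchings-dual} produces $\tilde{y}^{n}$ in $\tilde{O}(m^{10/7})$ time, and I return $y+\tilde{y}^{n}$ as the dual for $G$ — legitimate because the entire computation operates only on reduced weights, and translating a dual by the costs preserves both feasibility and complementary slackness.

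For the running time, the recursion has depth $O(\log W)$, each level performing one call to Theorem~\ref{theorem:matchings-dual} on a polynomially bounded instance at cost $\tilde{O}(m^{10/7})$ plus $O(m)$ of bookkeeping, so the total is $\tilde{O}(m^{10/7}\log W)$; correctness follows by induction on the recursion depth, the inductive step being exactly the structural claim above. I expect the main obstacle to be the bound $\tilde{c}(\overline{M})\le\norm b_{1}$ — that is, verifying that after the lift $y:=2\overline{y}$ the reduced costs of $\overline{G}$'s optimal matching edges are $O(1)$ each, which is where one must carry along the bit-scaling invariant that at every scale the relevant optimal matching is tight for the current dual; everything else is routine scaling bookkeeping.
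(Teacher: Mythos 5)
Your proposal is correct and follows essentially the same route as the paper: the same Gabow-style bit-scaling recursion with the lift $y:=2\overline{y}$, the same capping of reduced costs at $\norm b_{1}$, and the same key argument that edges of the optimal matching $\overline{M}$ for $\overline{G}$ have reduced cost in $\{0,1\}$ after the lift, whence $\tilde{c}(\tilde{M})\le\tilde{c}(\overline{M})\le\norm b_{1}$ and the capped instance has the same optimal dual. The step you flag as the main obstacle is exactly the step the paper's proof spends its effort on, and your justification of it is the paper's.
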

The above theorem can be easily extended to computing the minimum
weight perfect $b$-matching. We only need to take the optimal dual
solution and restrict our attention to edges which are tight with
respect to it. By finding any perfect $b$-matching in this unweighted
tight graph we obtain the minimum weight perfect $b$-matching in
the original graph.
\begin{cor}
\label{theorem:matchings-improved-1}Assuming $\norm b_{1}=O(m)$,
we can find a minimum weight perfect $b$-matching in $\tilde{O}(m^{10/7}\log W)$
time.
\end{cor}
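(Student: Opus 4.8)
The plan is to recover a primal optimum from the dual optimum already produced by Theorem~\ref{theorem:matchings-improved}, so that the only genuinely weighted computation is the one that runs in $\tilde{O}(m^{10/7}\log W)$ time; passing to the tight subgraph reduces everything else to an \emph{unweighted} matching problem, in which the $\log W$ dependence disappears.

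Concretely, I would first run Algorithm~\ref{algorithm:matching-scaling} to obtain, in $\tilde{O}(m^{10/7}\log W)$ time, an optimal dual solution $y$ to the $b$-vertex packing LP; if $G$ has no perfect $b$-matching this dual is unbounded and we report that, so assume one exists. Form the \emph{tight subgraph} $G_y=(V,E_y)$ with $E_y=\{uv\in E:\ c_{uv}=y_u+y_v\}$, and view it as an unweighted perfect $b$-matching instance (equivalently, replace $c$ by the reduced costs $c_{uv}-y_u-y_v$, which vanish on $E_y$). The key point is the complementary-slackness dichotomy for the bipartite $b$-matching LP $\min\{c^\top x:\ \sum_{e\ni v}x_e=b_v,\ x\ge 0\}$ and its dual $\max\{b^\top y:\ y_u+y_v\le c_{uv}\}$: if $x$ is any feasible perfect $b$-matching supported on $E_y$ then
\[
c^\top x=\sum_{uv\in E_y}x_{uv}(y_u+y_v)=\sum_{v}y_v\sum_{e\ni v}x_e=b^\top y,
\]
so by weak duality $x$ is minimum-weight; conversely, since the constraint matrix is the (unsigned) incidence matrix of a bipartite graph and $b$ is integral, the $b$-matching polytope is integral, so an integral minimum-weight perfect $b$-matching exists, and by complementary slackness with $y$ it is supported on $E_y$. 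Hence it suffices to find \emph{any} perfect $b$-matching in the unweighted graph $G_y$. This last step carries no $W$-dependence: it is a feasibility question, equivalently a maximum-flow computation of value $\norm b_1/2=O(m)$ (attach a source to one side of the bipartition and a sink to the other, with demand-arc capacities $b_v$), which, after splitting those $O(\norm b_1)=O(m)$ demand arcs into unit-capacity arcs -- and using that on the instances coming from the reduction of Section~\ref{sec:reduction} the arcs between the two sides are themselves unit-capacity, since one side has unit demands -- is a unit-capacity maximum-flow instance on $O(m)$ arcs, solvable in $\tilde{O}(m^{10/7})$ time (by \cite{Madry13}, or simply by applying Theorem~\ref{theorem:matchings} to this $W=O(1)$ instance). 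If the flow value falls short, $G$ has no perfect $b$-matching. Adding the $\tilde{O}(m^{10/7}\log W)$ cost of computing $y$ to the $\tilde{O}(m^{10/7})$ cost of the unweighted step gives the claimed bound; the scaling remark of \cite{gabow-85} absorbs any residual $\operatorname{poly}\log W$ into $\tilde{O}$ at the price of one $\log W$ factor already accounted for.

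I expect the one point needing care to be the runtime of the matching computation on $G_y$: one must ensure it really fits in the $\tilde{O}(m^{10/7})$ budget without smuggling a $\log W$ factor back in, which is precisely why it is essential to pass to the tight subgraph first -- there every edge is ``free'', the problem collapses to pure feasibility, and the scaling cost vanishes. The complementary-slackness correctness argument is routine given integrality of the bipartite $b$-matching polytope, and the optimal dual is handed to us by Theorem~\ref{theorem:matchings-improved}.
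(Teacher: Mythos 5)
Your proposal is correct and follows exactly the paper's route: compute the optimal dual via the scaling algorithm of Theorem~\ref{theorem:matchings-improved}, pass to the tight subgraph, and find any perfect $b$-matching there as an unweighted feasibility problem, with complementary slackness certifying optimality. You supply more detail than the paper (which states this in three sentences), notably the integrality/complementary-slackness justification and the reduction of the unweighted step to a unit-capacity maximum-flow computation, but the underlying argument is the same.
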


\section{\label{sec:Shortest-Paths-with}Shortest Paths with Negative Weights}

We are given a directed graph $G(V,E,c)$ together with the edge weight
function $c:E\to\left\{ -W,\ldots,0,\ldots,W\right\} $ and a source
vertex $s$. Our goal is to compute shortest paths from $s$ to all
vertices in $V$. We will start by reducing this shortest paths problem
to the weighted perfect $1$-matching problem using the reduction
that was given by Gabow~\cite{gabow-85}. The main step of this reduction
is a construction of a bipartite graph $G_{12}=(V_{1}\cup V_{2},E_{12},c_{12})$
such that the vertex packing problem in $G_{12}$ induces a valid
potential function on $G$. Using this potential function we can reweigh
$w$ to remove negative weights.

We define a bipartite graph $G_{12}=(V_{1}\cup V_{2},E_{12},c_{12})$
in the following way 
\begin{eqnarray*}
V_{1} & = & \left\{ v_{1}\vert v\in V\right\} ,\\
V_{2} & = & \left\{ v_{2}\vert v\in V\right\} ,\\
E_{12} & = & \left\{ u_{2}v_{1}\vert uv\in E\right\} \cup\left\{ v_{1}v_{2}\vert v\in V\right\} .
\end{eqnarray*}
The weight function $w_{12}$ is defined as follows 
\begin{eqnarray*}
c_{12}(u_{i}v_{j}) & = & \left\{ \begin{array}{ll}
c_{uv} & \textnormal{if }uv\in E,\\
0 & \textnormal{otherwise}
\end{array}\right.
\end{eqnarray*}

Let us observe that a perfect $1$-matching in $G_{12}$ corresponds
to a set of cycles in the graph $G$. This leads to the following
observation.
\begin{lem}
\label{lemma:reduction-cycle} The graph $G$ contains a negative
length cycle if and only if the weight of minimum cost perfect $1$-matching
in $G_{12}$ is negative. 
\end{lem}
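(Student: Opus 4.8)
The plan is to set up a weight-preserving correspondence between perfect $1$-matchings of $G_{12}$ and permutations of $V$, under which the cost of a matching equals the total $c$-length of the nontrivial cycles of the corresponding permutation; both implications then follow in one line. To begin, I would record that $G_{12}$ always contains the perfect $1$-matching $M_0=\{\,v_1v_2:v\in V\,\}$, which (we may assume $G$ has no loops) has cost $0$; hence a minimum-cost perfect $1$-matching exists and always has cost $\le 0$, and the claim to prove is precisely that this minimum is strictly negative iff $G$ has a negative cycle.

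For the correspondence, fix a perfect $1$-matching $M$ of $G_{12}$. In $G_{12}$ the vertex $v_2$ is adjacent only to $v_1$ and to $\{\,w_1:vw\in E\,\}$, so $M$ matches $v_2$ to $w_1$ for a unique $w$ that is $v$ itself or an out-neighbor of $v$; set $\pi_M(v):=w$. Since each $v_1$ is likewise adjacent only to $v_2$ and to $\{\,u_2:uv\in E\,\}$, perfectness of $M$ makes $\pi_M$ a bijection of $V$ (its inverse sends $v$ to the $u$ with $u_2$ matched to $v_1$). Decomposing $\pi_M$ into cycles, a fixed point $v=\pi_M(v)$ corresponds to the edge $v_1v_2\in M$, contributing cost $0$, while a nontrivial cycle $v^0\to v^1\to\cdots\to v^{k-1}\to v^0$ of $\pi_M$ (so $v^iv^{i+1}\in E$ for indices mod $k$, and the $v^i$ pairwise distinct since $\pi_M$ is a permutation) corresponds to the matched edges $v^i_2v^{i+1}_1$, of total $c_{12}$-cost $\sum_i c_{v^iv^{i+1}}$, which is exactly the $c$-length of this vertex-simple cycle of $G$. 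This construction is plainly reversible, so summing over all cycles yields $c_{12}(M)=\sum_{\gamma}\bigl(c\text{-length of }\gamma\bigr)$, the sum ranging over the nontrivial cycles $\gamma$ of $\pi_M$, each a simple cycle in $G$.

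Given this identity the equivalence is immediate. If a minimum-cost perfect $1$-matching $M$ has negative cost, then the sum over the cycles of $\pi_M$ is negative, so some $\gamma$ has negative $c$-length and $G$ contains a negative cycle. Conversely, if $G$ contains a negative cycle, then decomposing a negative closed walk into simple cycles (at least one of which is still negative) gives a vertex-simple negative cycle $C$; letting $\pi$ cyclically permute $V(C)$ along $C$ and fix every other vertex produces a perfect $1$-matching whose cost equals the $c$-length of $C$, which is negative, so the minimum cost is negative. The only step requiring any care --- and hence the main, rather mild, obstacle --- is precisely this reduction from a negative closed walk to a negative vertex-simple cycle; everything else is routine bookkeeping with the index conventions of the $V_1/V_2$ construction.
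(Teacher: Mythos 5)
Your proof is correct and follows exactly the route the paper intends: the paper states this lemma without a formal proof, justifying it only by the observation that perfect $1$-matchings of $G_{12}$ correspond to sets of (vertex-disjoint directed) cycles of $G$, which is precisely the permutation correspondence you make explicit. Your additional care with the zero-cost matching $\{v_1v_2\}$, the bijectivity of $\pi_M$, and the reduction of a negative closed walk to a negative simple cycle fills in the details the paper leaves implicit.
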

Hence, if the weight of minimum cost perfect $1$-matching in $G_{12}$
is negative we conclude that there is a negative weight cycle in the
graph $G$. Hence, we assume that the weight of minimum cost $1$-matching
is equal to $0$. Moreover, the dual solution to $1$-matchings, i.e.,
$1$-vertex packing in $G_{12}$ induces a potential function on $G$.
Let $y:V_{1}\cup V_{2}\to\mathcal{\mathbb{R}}$ be the maximum weight
vertex \textbf{$1$}-packing.
\begin{lem}
\label{lemma:reduction-potential} Let $y$ be maximum vertex $1$-packing
in $G_{12}$. If $y(V_{1}\cup V_{2})=0$ then $p_{v}:=y(v_{1})$ is
a potential function on $G$, i.e., $c_{uv}+p_{u}-p_{v}\ge0$. \end{lem}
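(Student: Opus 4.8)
The plan is to simply unpack the vertex $1$-packing (dual feasibility) constraints for $y$ on the two kinds of edges of $G_{12}$ and combine them with the hypothesis $y(V_1\cup V_2)=0$. Note that we will only need that $y$ is \emph{feasible} for the packing LP, i.e. $y(a)+y(b)\le c_{12}(ab)$ for every edge $ab\in E_{12}$; the optimality of $y$ is used only outside this lemma (via LP duality with the zero-cost perfect $1$-matching, cf. Lemma \ref{lemma:reduction-cycle}) to \emph{guarantee} that $y(V_1\cup V_2)=0$ can be taken as a hypothesis.

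First I would look at the ``internal'' edges. For each $v\in V$ the edge $v_1v_2\in E_{12}$ has weight $c_{12}(v_1v_2)=0$, so feasibility gives $y(v_1)+y(v_2)\le 0$ for every $v$. Summing over all $v\in V$,
\[
y(V_1\cup V_2)=\sum_{v\in V}\big(y(v_1)+y(v_2)\big)\le 0,
\]
and since the hypothesis says the left-hand side is exactly $0$, each summand must vanish: $y(v_1)+y(v_2)=0$, that is $y(v_2)=-y(v_1)=-p_v$, for every $v\in V$.

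Next I would use feasibility on the remaining edges. For any arc $uv\in E$ the edge $u_2v_1\in E_{12}$ has weight $c_{12}(u_2v_1)=c_{uv}$, so $y(u_2)+y(v_1)\le c_{uv}$. Substituting $y(u_2)=-p_u$ (from the previous paragraph) and $y(v_1)=p_v$ gives $-p_u+p_v\le c_{uv}$, i.e. $c_{uv}+p_u-p_v\ge 0$, which is precisely the asserted reduced-cost nonnegativity; this finishes the proof.

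There is essentially no obstacle here; the only conceptual point worth stressing is the role of the weight-$0$ edges $v_1v_2$: they are exactly what lets the global balance condition $y(V_1\cup V_2)=0$ be pushed down to the per-vertex identities $y(v_2)=-y(v_1)$, which in turn is what makes $y(u_2)$ expressible through $p_u=y(u_1)$. Without those edges in the reduction the argument would not go through.
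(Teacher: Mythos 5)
Your proof is correct and follows essentially the same route as the paper: both arguments establish that the weight-$0$ edges $v_1v_2$ must be tight (giving $y(v_2)=-y(v_1)$) and then apply dual feasibility on the edges $u_2v_1$. Your summing argument for tightness is just a slightly more explicit rendering of the paper's appeal to the fact that the edges $v_1v_2$ form a perfect $1$-matching of weight $0$.
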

\begin{proof}
As $y(V_{1}\cup V_{2})=c(M)=0$, i.e., the minimum weight perfect
matching in $G_{12}$ has weight zero. This means that all edges $v_{1}v_{2}$
for $v\in V$ are tight as they form a perfect $1$-matching of weight
$0$. This in turn implies that $y(v_{1})=-y(v_{2})$. Thus for an
edge $uv\in E$ we have 
\begin{eqnarray*}
y(u_{2})+y(v_{1}) & \le & c_{12}(u_{2}v_{1})\\
-y(u_{1})+y(v_{1}) & \le & c_{12}(u_{2}v_{1})\\
-p_{u}+p_{v} & \le & c_{uv}.
\end{eqnarray*}

\end{proof}
By combining Theorem\ref{theorem:matchings-dual} with the above lemma
we obtain the following, i.e., we use the potential function to obtain
a reweighed non-negative instance that can be solved using Dijkstra's
algorithm. 
\begin{cor}
\label{theorem:paths} Single source shortest paths in a graph with
negative weights can be computed in $\tilde{O}(m^{10/7}\log W)$ time.\end{cor}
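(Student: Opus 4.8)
The plan is to combine the reduction already set up in this section with the dual-solution algorithm of Theorem~\ref{theorem:matchings-dual}, and then finish with Johnson's reweighting. First I would build the bipartite graph $G_{12}=(V_1\cup V_2,E_{12},c_{12})$ defined above; this takes linear time, $G_{12}$ has $2n$ vertices and $O(m+n)$ edges, its edge costs lie in $\{-W,\dots,W\}$, and the associated $1$-matching instance has $\norm{b}_1=2n=O(m)$ (we may assume $m=\Omega(n)$ by first restricting to the weakly connected component of $s$). Then I would run the algorithm behind Theorem~\ref{theorem:matchings-dual}, or better Theorem~\ref{theorem:matchings-improved} to obtain a $\log W$ rather than $\log^{4/3}W$ dependence, to compute a maximum weight vertex $1$-packing $y$ of $G_{12}$ in $\otilde(m^{10/7}\log W)$ time; note that $G_{12}$ always admits a perfect $1$-matching, namely $\{v_1v_2:v\in V\}$, so the algorithm does not report infeasibility, and LP duality gives that $y(V_1\cup V_2)$ equals the weight of a minimum cost perfect $1$-matching in $G_{12}$.

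Next I would observe that, since every edge $v_1v_2$ has cost $0$, the perfect matching $\{v_1v_2:v\in V\}$ has weight $0$, so a minimum cost perfect $1$-matching has weight at most $0$ and hence $y(V_1\cup V_2)\le 0$. If $y(V_1\cup V_2)<0$, then by Lemma~\ref{lemma:reduction-cycle} the graph $G$ contains a negative-length cycle, and we report this and halt. Otherwise $y(V_1\cup V_2)=0$, and Lemma~\ref{lemma:reduction-potential} tells us that $p_v:=y(v_1)$ is a valid potential on $G$, i.e.\ $c_{uv}+p_u-p_v\ge 0$ for every arc $uv\in E$.

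Finally I would apply Johnson's technique: set $c'_{uv}:=c_{uv}+p_u-p_v\ge 0$ and run Dijkstra's algorithm from $s$ on $(V,E,c')$ in $O(m+n\log n)$ time. For any $s$--$v$ path $\pi$, its $c'$-length equals its $c$-length plus the telescoping amount $p_s-p_v$, so shortest paths are unchanged and $\dist_c(s,v)=\dist_{c'}(s,v)-p_s+p_v$ recovers the true distances. The running time is dominated by the $\otilde(m^{10/7}\log W)$ matching computation. I do not expect a genuine obstacle here, since the hard part is already encapsulated in Theorem~\ref{theorem:matchings-dual}; the only mildly delicate points are verifying that the optimal matching value is nonpositive --- so that it is either strictly negative, certifying a negative cycle, or exactly zero, which validates the potentials --- and correctly detecting and reporting the negative-cycle case.
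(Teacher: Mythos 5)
Your proposal is correct and follows essentially the same route as the paper: reduce to the bipartite instance $G_{12}$, compute the maximum weight vertex $1$-packing via Theorem~\ref{theorem:matchings-improved} (to get the $\log W$ rather than $\log^{4/3}W$ dependence), use Lemmas~\ref{lemma:reduction-cycle} and~\ref{lemma:reduction-potential} to either detect a negative cycle or extract a valid potential, and finish with Johnson-style reweighting and Dijkstra. You actually spell out more of the details (nonpositivity of the optimal matching value, the negative-cycle branch, the telescoping in the reweighted instance) than the paper's very terse one-line argument does.
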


\appendix

\section{\label{sec:app-proof-total-energy-incr}Proof of Lemma \ref{lem:total-energy-measure-increase-1}}
\begin{proof}
We require a simple refinement of the proof of Lemma \ref{lem:total-energy-measure-increase}.
The crucial aspect of this proof is that it does not require any control
over how the energy changes between two iterations belonging to different
phases.  Just as we did in (\ref{eq:energy-increase}), let us bound
the energy increase during a phase that starts at iteration $t_{0}+1$.
Let $\hat{T}=m^{2\eta}$ represent the length of a phase.

\begin{align*}
\sum_{t=t_{0}+1}^{t_{0}+\hat{T}}\Delta^{t} & \leq\sum_{t=t_{0}+1}^{t_{0}+\hat{T}}\left(\mathcal{E}^{t}-\mathcal{E}^{t-1}+5\crho^{2}\cdot m^{1-2\eta}\right)\\
 & \leq\mathcal{E}^{t_{0}+\hat{T}}+\hat{T}\cdot5\crho^{2}\cdot m^{1-2\eta}\\
 & \leq\norm{\nu^{t_{0}+\hat{T}}}_{1}+\hat{T}\cdot5\crho^{2}\cdot m^{1-2\eta}\\
 & \leq\left(\norm{\nu^{t_{0}}}_{1}+\sum_{t=t_{0}+1}^{t_{0}+\hat{T}}2C\cdot\Delta^{t}\right)+\hat{T}\cdot5\crho^{2}\cdot m^{1-2\eta}\\
 & =\norm{\nu^{t_{0}}}_{1}+2C\cdot\sum_{t=t_{0}+1}^{t_{0}+\hat{T}}\Delta^{t}+\hat{T}\cdot5\crho^{2}\cdot m^{1-2\eta}
\end{align*}

Therefore 
\begin{equation}
\sum_{t=t_{0}+1}^{t_{0}+\hat{T}}\Delta^{t}\leq\frac{1}{1-2C}\left(\norm{\nu^{t_{0}}}_{1}+\hat{T}\cdot5m^{1-2\eta}\right)=2\cdot\left(\norm{\nu^{t_{0}}}_{1}+5\crho^{2}\cdot m\right)\label{eq:phase-energy-increase}
\end{equation}

Also, the increase in measure during this phase satisfies 
\begin{equation}
\norm{\nu^{t_{0}+\hat{T}}}_{1}-\norm{\nu^{t_{0}}}_{1}\leq2C\cdot\left(\sum_{t=t_{0}+1}^{t_{0}+\hat{T}}\Delta^{t}\right)\leq4C\left(\norm{\nu^{t_{0}}}_{1}+5\crho^{2}\cdot m\right)=4C\norm{\nu^{t_{0}}}_{1}+m^{1/2+3\eta}/\cT\label{eq:measure-increase-phase}
\end{equation}

Using the fact that initially$\norm{\nu^{0}}=m$, and applying (\ref{eq:measure-increase-phase})
we obtain by induction that, for the first $K=\cT\cdot m^{1/2-3\eta}$
phases, the total measure increase during a phase satisfies

\[
\norm{\nu^{t_{0}+\hat{T}}}_{1}-\norm{\nu^{t_{0}}}_{1}\leq\frac{2}{\cT}\cdot m^{1/2+3\eta}
\]

This can be easily verified for the base case: $4Cm+m^{1/2+3\eta}/\cT=4\cdot\frac{m^{-1/2+3\eta}}{20\cT\crho^{2}}\cdot m+\frac{m^{1/2+3\eta}}{\cT}\leq\frac{2}{\cT}\cdot m^{1/2+3\eta}$.
For the induction step, we have by the induction hypothesis that after
$K$ phases $\norm{\nu^{t_{0}}}_{1}\leq m+K\cdot\frac{2}{\cT}\cdot m^{1/2+3\eta}$.
Therefore the measure increase in the new phase is at most $4C\left(m+K\cdot\frac{2}{\cT}\cdot m^{1/2+3\eta}\right)+m^{1/2+3\eta}/\cT\leq4\cdot\frac{m^{-1/2+3\eta}}{20\cT\crho^{2}}\cdot\left(m+2m\right)+\frac{m^{1/2+3\eta}}{\cT}\leq\frac{2}{\cT}\cdot m^{1/2+3\eta}$. 

Therefore, when the phases are over, we have

\begin{equation}
\norm{\nu}_{1}\leq m+\cT\cdot m^{1/2-3\eta}\cdot\frac{2}{\cT}\cdot m^{1/2+3\eta}\leq3m\label{eq:phase-final-measure}
\end{equation}

which shows that Invariant 1 holds.

Also, plugging in (\ref{eq:phase-energy-increase}) we obtain that
during a phase the total energy increase is at most 
\[
2\cdot\left(3m+5\crho^{2}\cdot m\right)\leq16\crho^{2}\cdot m
\]

\end{proof}

\section{\label{sec:orthog-flows}Proof of Lemma \ref{lem:orthog}}
\begin{proof}
Let $\elf$ be an electrical flow in the graph with auxiliary arcs,
and let $\pot$ be the corresponding vertex potentials. Let $\potp$
be a different set of vertex potentials satisfying

\[
\hat{\phi}_{v}^{(P)}=\begin{cases}
\pot_{v} & v\in P\\
\frac{r_{\bar{u}v}\pot_{u}+r_{uv}\pot_{\bar{u}}}{r_{uv}+r_{\bar{u}v}} & v\in Q,\ (u,v),(\bar{u},v)\in E\\
\pot_{v_{0}} & v=v_{0}
\end{cases}
\]

Let $\potq=\pot-\potp$. Also let $\elfp$ and $\elfq$ be the electrical
flows corresponding to potentials $\potp$, respectively $\potq$with
the same set of resistances $r$. So these flows can be constructed
directly from the vertex potentials $\pot$.

Now let us verify that $\elfp$ satisfies flow conservation on all
vertices in $Q\cup\left\{ v_{0}\right\} $. Given any vertex $v\in Q$,
let $u$ and $\bar{u}$ be the vertices corresponding to partner arcs
$(u,v)$ and $(\bar{u},v)$. By Ohm's law, we have 
\[
\hat{f}_{uv}^{(P)}=\frac{\hat{\phi}_{v}^{(P)}-\hat{\phi}_{u}^{(P)}}{r_{uv}}=\frac{\frac{r_{\bar{u}v}\pot_{u}+r_{uv}\pot_{\bar{u}}}{r_{uv}+r_{\bar{u}v}}-\hat{\phi}_{u}}{r_{uv}}=\frac{r_{uv}\left(\pot_{\bar{u}}-\hat{\phi}_{u}\right)}{r_{uv}\left(r_{uv}+r_{\bar{u}v}\right)}=\frac{\pot_{\bar{u}}-\hat{\phi}_{u}}{r_{uv}+r_{\bar{u}v}}
\]

and similarly

\[
\hat{f}_{\bar{u}v}^{(P)}=\frac{\hat{\phi}_{v}^{(P)}-\hat{\phi}_{\bar{u}}^{(P)}}{r_{\bar{u}v}}=\frac{\frac{r_{\bar{u}v}\pot_{u}+r_{uv}\pot_{\bar{u}}}{r_{uv}+r_{\bar{u}v}}-\hat{\phi}_{\bar{u}}}{r_{uv}}=\frac{r_{\bar{u}v}\left(\hat{\phi}_{u}-\pot_{\bar{u}}\right)}{r_{\bar{u}v}\left(r_{uv}+r_{\bar{u}v}\right)}=\frac{\hat{\phi}_{u}-\pot_{\bar{u}}}{r_{uv}+r_{\bar{u}v}}
\]

Hence $\hat{f}_{uv}^{(P)}+\hat{f}_{\bar{u}v}^{(P)}=0$, so flow is
conserved at $v$. The fact that flow is also conserved at $v_{0}$
is immediate, since potentials on $P\cup\left\{ v_{0}\right\} $ are
identical to those in $\hat{\phi}$, and the flow $\hat{f}$ corresponding
to these is always conserved at $v_{0}$.

Furthermore, since $f$ satisfies the same demands as $\elf$, $\fp$
also obeys flow conservation on $Q\cup\left\{ v_{0}\right\} $. Now,
consider restricting the flows $\fp$ and $\elfq$ to a pair of partner
edges, $e=(u,v)$ and $\bar{e}=(\bar{u},v)$ (i.e. zeroing out the
flows on all other edges). This restriction of $\elfq$ is an electrical
flow induced by a voltage vector nonzero only at $v$, so the $\mr$-inner
product with it simply measures net flow to $v$. As $\fp$ obeys
flow conservation at $v$, this implies that the restrictions of $\fp$
and $\elfq$ are $\mr$-orthogonal. That in turn implies that the
energy of each of these restricted flows is at most the energy of
the same restriction of $f$, which is equal to $\nu_{e}+\nu_{\bar{e}}.$
\end{proof}

\section{\label{sec:proof-progress-step}Proof of Theorem \ref{thm:l_4_improvement_step}}

Proving this theorem is done in two parts. In the first one we analyze
the predictor step, which produces a new iterate that is close to
the central path $\ell_{2}$ norm, and track the change in resistance.
In the second part we analyze the centering steps, which - starting
with a solution that is close to the central path - produce a centered
solution. In both cases, we will also show that the iterates stay
feasible at all times.

We start with analyzing the predictor step, described in Section \ref{par:Taking-an-Improvement}:

\begin{eqnarray*}
f_{e}' & := & (1-\delta)f_{e}+\delta\hat{f}_{e},\\
s_{e}' & := & s_{e}-\frac{\delta}{1-\delta}\left(\widehat{\phi}_{v}-\widehat{\phi}_{u}\right)
\end{eqnarray*}
for all arcs $e=(u,v)$.

We verify feasibility of $f'$ and $s'$, and bound the change in
resistance.
\begin{prop}
\label{prop:resistances-predictor}Both $f'$ and $s'$ are feasible,
i.e. $f',s'>0$. Furthermore, $\frac{r_{e}}{r_{e}'}=\frac{s_{e}}{f_{e}}\cdot\frac{f_{e}'}{s_{e}'}\leq1+4\delta\rho_{e}$.\end{prop}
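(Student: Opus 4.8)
The plan is to verify the two feasibility claims directly from the update formulas and then estimate the resistance ratio. First I would handle primal feasibility: since $f'_e = (1-\delta)f_e + \delta\hat f_e$ is a convex-type combination but with $\hat f_e$ possibly negative, I would write $f'_e = f_e\big(1-\delta + \delta \hat f_e/f_e\big) = f_e\big(1 - \delta(1 - \hat f_e/f_e)\big)$, and note $|\hat f_e/f_e| = \rho_e$, so $f'_e \geq f_e(1 - \delta(1+\rho_e))$. Using $\delta \leq \frac{1}{8\|\rho\|_{\nu,4}} \leq \frac{1}{8\rho_e}$ (since $\nu_e \geq 1$ forces $\rho_e \leq \|\rho\|_{\nu,4}$) together with $\delta \leq \frac18$, we get $\delta(1+\rho_e) < 1$, hence $f'_e > 0$.

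For dual feasibility I would argue similarly. The slack update is $s'_e = s_e - \frac{\delta}{1-\delta}(\widehat\phi_v - \widehat\phi_u)$; by Ohm's law $\widehat\phi_v - \widehat\phi_u = r_e \hat f_e$, so $s'_e = s_e - \frac{\delta}{1-\delta} r_e \hat f_e$. Recalling $r_e = \frac{s_e}{\hat\mu f_e}$ from (\ref{eq:def_r_e}) — wait, actually from the identity in (\ref{eq:def_r_e}) we have $r_e f_e = s_e/\hat\mu$, but more usefully $r_e = \nu_e/f_e^2$ and $s_e f_e = \mu_e = \nu_e\hat\mu$, so $r_e \hat f_e = \frac{\nu_e}{f_e^2}\hat f_e = \frac{s_e}{\hat\mu f_e}\cdot\frac{\hat f_e}{f_e}\cdot\hat\mu\cdot\frac{1}{?}$ — the clean route is $r_e \hat f_e = \frac{s_e}{f_e}\cdot\frac{1}{\hat\mu}\cdot\hat f_e$ using $r_e = \frac{1}{\hat\mu}\frac{s_e}{f_e}$? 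That introduces a spurious $\hat\mu$; instead use $r_e = \nu_e/f_e^2$ directly, and $s_e = \nu_e\hat\mu/f_e$, giving $\frac{r_e\hat f_e}{s_e} = \frac{\nu_e \hat f_e/f_e^2}{\nu_e\hat\mu/f_e} = \frac{\hat f_e}{\hat\mu f_e}$, which still has a $\hat\mu$. The correct relation must be that in the standard scaling the factor cancels; I would instead directly compute $s'_e = s_e\big(1 - \frac{\delta}{1-\delta}\cdot\frac{r_e\hat f_e}{s_e}\big)$ and use the fact (to be pinned down) that $\frac{r_e\hat f_e}{s_e} = \frac{\hat f_e}{f_e} = \pm\rho_e$ — this holds precisely because $r_e = \frac{1}{\hat\mu}\frac{s_e}{f_e}$ is the \emph{standard} resistance, and the rescaled resistance in the paper differs by $\hat\mu$, but the potentials $\widehat\phi$ are computed with the rescaled resistances, so $\widehat\phi_v - \widehat\phi_u = r_e^{\text{paper}}\hat f_e = \frac{\nu_e}{f_e^2}\hat f_e$ and then $\frac{r_e^{\text{paper}}\hat f_e}{s_e} = \frac{\nu_e \hat f_e}{f_e^2 s_e} = \frac{\hat f_e}{f_e\hat\mu}\cdot\frac{\nu_e\hat\mu}{f_e s_e} = \frac{\hat f_e}{f_e\hat\mu}$. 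Hmm, this genuinely leaves $1/\hat\mu$, so I suspect the slack update in the paper is implicitly using the standard-scaled potentials, or $\widehat\phi$ absorbs it; I would reconcile this carefully, arriving at $s'_e = s_e(1 - \frac{\delta}{1-\delta}\rho_e \cdot \mathrm{sgn})$, and then the same bound $\delta \leq \frac{1}{8\rho_e}$, $\delta\leq\frac18$ gives $\frac{\delta}{1-\delta}\rho_e \leq \frac{1/8}{7/8} = \frac17 < 1$, so $s'_e > 0$.

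For the resistance ratio, I would use $\frac{r_e}{r'_e} = \frac{s_e/f_e}{s'_e/f'_e} = \frac{s_e}{f_e}\cdot\frac{f'_e}{s'_e}$ (valid since all quantities are positive by the feasibility just established). Writing $f'_e = f_e(1 + \delta(\hat f_e/f_e - 1))$ and $s'_e = s_e(1 - \frac{\delta}{1-\delta}\cdot(\hat f_e/f_e))$ — pushing through with $\hat f_e/f_e \in [-\rho_e,\rho_e]$ — I would bound $\frac{f'_e}{f_e} \leq 1 + \delta(\rho_e - 1) \leq 1 + \delta\rho_e$ in the worst direction and $\frac{s_e}{s'_e} \leq \frac{1}{1 - \frac{\delta}{1-\delta}\rho_e}$. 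Then $\frac{r_e}{r'_e} \leq \frac{1+\delta\rho_e}{1 - \frac{\delta}{1-\delta}\rho_e}$, and I would expand: with $\frac{\delta}{1-\delta}\rho_e \leq \frac17$ we have $\frac{1}{1 - \frac{\delta}{1-\delta}\rho_e} \leq 1 + \frac{7}{6}\cdot\frac{\delta}{1-\delta}\rho_e \leq 1 + \frac{7}{6}\cdot\frac87\delta\rho_e = 1 + \frac43\delta\rho_e$, so $\frac{r_e}{r'_e} \leq (1+\delta\rho_e)(1+\tfrac43\delta\rho_e) \leq 1 + \tfrac73\delta\rho_e + \tfrac43\delta^2\rho_e^2 \leq 1 + 4\delta\rho_e$, the last step using $\delta\rho_e \leq \tfrac18$. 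The main obstacle I anticipate is getting the sign conventions and the $\hat\mu$ scaling in the slack update exactly right — the paper's Remark about scaling resistances by $\hat\mu$ is precisely the delicate point, and I would need to track whether $\widehat\phi$ is defined relative to the rescaled or the standard resistances to land the clean identity $\frac{r_e}{r'_e} = \frac{s_e}{f_e}\cdot\frac{f'_e}{s'_e}$; once that bookkeeping is settled, the inequalities are routine.
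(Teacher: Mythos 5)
Your proof is correct and follows essentially the same route as the paper's: positivity of $f'_e$ and $s'_e$ from $\delta\le\min\{\tfrac{1}{8\|\rho\|_{\nu,4}},\tfrac18\}$ together with $\rho_e\le\|\rho\|_{\nu,4}$, and then a direct expansion of $\frac{r_e}{r'_e}=\frac{s_e}{f_e}\cdot\frac{f'_e}{s'_e}$ using $\delta\rho_e\le\tfrac18$ (the paper bounds the single fraction $\frac{1-\delta+\delta\rho_e}{1-\delta-\delta\rho_e}\le 1+\frac{8}{3}\delta\rho_e$ rather than multiplying two separate estimates, but this is immaterial). The $\hat\mu$ bookkeeping you worried about resolves exactly as you surmised: the potentials in the dual update are those determined by the unscaled resistances $s_e/f_e$ (cf.\ the Remark following the update step), so the paper's own proof simply substitutes $\widehat\phi_v-\widehat\phi_u=\frac{s_e}{f_e}\hat f_e$, which is the identity you landed on.
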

\begin{proof}
We obtain 
\begin{align*}
f_{e'} & =(1-\delta)f_{e}+\delta\hat{f}_{e}\geq(1-\delta)f_{e}-\delta\abs{\hat{f}_{e}}=\left(1-\delta-\delta\rho_{e}\right)f_{e}\geq\left(1-\delta-\frac{1}{8\norm{\rho}_{\nu,4}}\rho_{e}\right)f_{e}\\
 & \geq\left(1-\frac{1}{8}-\frac{1}{8\norm{\rho}_{\infty}}\rho_{e}\right)f_{e}\geq\left(1-\frac{1}{8}-\frac{1}{8}\right)f_{e}>0\\
s_{e'} & =s_{e}-\frac{\delta}{1-\delta}\left(\widehat{\phi}_{v}-\widehat{\phi}_{u}\right)=s_{e}-\frac{\delta}{1-\delta}\cdot\frac{s_{e}}{f_{e}}\cdot\hat{f}_{e}=s_{e}\left(1-\frac{\delta}{1-\delta}\cdot\frac{\hat{f}_{e}}{f_{e}}\right)\\
 & \geq s_{e}\left(1-\frac{\delta}{1-\delta}\cdot\rho_{e}\right)\geq s_{e}\left(1-\delta\cdot\rho_{e}\right)\geq s_{e}\left(1-\frac{1}{8\norm{\rho}_{\nu,4}}\rho_{e}\right)\\
 & \geq s_{e}\left(1-\frac{1}{8\norm{\rho}_{\infty}}\rho_{e}\right)\geq s_{e}\left(1-\frac{1}{8}\right)>0
\end{align*}

Next, we analyze the change in resistance:

\begin{align*}
\frac{r_{e}}{r_{e}'} & =\frac{s_{e}}{f_{e}}\cdot\frac{f_{e}'}{s_{e}'}=\frac{s_{e}}{f_{e}}\cdot\frac{(1-\delta)f_{e}+\delta\hat{f}_{e}}{s_{e}\left(1-\frac{\delta}{1-\delta}\cdot\frac{\hat{f}_{e}}{f_{e}}\right)}=\frac{1-\delta+\delta\frac{\hat{f}_{e}}{f_{e}}}{1-\frac{\delta}{1-\delta}\cdot\frac{\hat{f}_{e}}{f_{e}}}\leq(1-\delta)\frac{1-\delta+\delta\rho_{e}}{1-\delta-\delta\rho_{e}}\\
 & \leq\frac{1-\delta+\delta\rho_{e}}{1-\delta-\delta\rho_{e}}=1+\frac{2\delta\rho_{e}}{1-\delta-\delta\rho_{e}}\leq1+\frac{2\delta\rho_{e}}{1-\frac{1}{8}-\frac{1}{8\norm{\rho}_{\nu,4}}\cdot\rho_{e}}\\
 & \leq1+\frac{2\delta\rho_{e}}{1-\frac{1}{8}-\frac{1}{8\norm{\rho}_{\infty}}\cdot\rho_{e}}\leq1+\frac{2\delta\rho_{e}}{1-\frac{1}{8}-\frac{1}{8}}\leq1+4\delta\rho_{e}
\end{align*}

\end{proof}
As we saw in Section \ref{par:Congestion-Vector.}, taking the predictor
step produces a solution that is close to the central path in $\ell_{2}$
norm. We will now show that once this happens, only a few centering
steps are required.

For simplicity, we will overload notation for the rest of the section,
in order to measure centrality with respect to the following quantity:
\begin{defn}
$\norm{\frac{fs}{\mu}-1}_{\nu,2}:=\sqrt{\sum_{e}\nu_{e}\left(\frac{f_{e}s_{e}}{\mu_{e}}-1\right)^{2}}$
\end{defn}
Now let us describe a centering step. Given an instance $(f,s,\nu)$
such that $\norm{\frac{fs}{\mu}-1}_{\nu,2}^{2}=\sum_{e}\nu_{e}\left(\frac{f_{e}s_{e}}{\mu_{e}}-1\right)^{2}\leq\frac{1}{256}$,
we show how to produce a new instance with better centrality. In order
to do so, consider a flow $f^{\#}$ such that $f_{e}^{\sharp}=\frac{\mu_{e}}{s_{e}}$.
This is clearly central, but it routes a different demand. In order
to produce a flow that routes the correct demand, we take the demand
of $f-f^{\sharp}$ and route it electrically with resistances $\tilde{r}_{e}=\frac{s_{e}}{f_{e}^{\sharp}}=\frac{s_{e}^{2}}{\mu_{e}}$
producing an electrical flow $\tilde{f}$. Then we set the new flow
\begin{align*}
f_{e}' & =f_{e}^{\sharp}+\tilde{f}_{e}\\
s_{e}' & =s_{e}-\frac{s_{e}}{f_{e}^{\sharp}}\cdot\tilde{f}_{e}
\end{align*}

We first verify that these flows stay feasible. In order to do so,
we will require a proposition that we will use in upper bounding the
ratio $\abs{\frac{\tilde{f}_{e}}{f_{e}^{\sharp}}}$.
\begin{prop}
\label{prop:bound-flow-ratio}$\norm{\frac{\tilde{f}_{e}}{f_{e}^{\sharp}}}_{\nu,2}\leq\norm{\frac{fs}{\mu}-1}_{\nu,2}$\end{prop}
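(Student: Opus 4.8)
The statement to prove is Proposition~\ref{prop:bound-flow-ratio}: $\norm{\tilde f_e / f_e^{\sharp}}_{\nu,2}\leq\norm{fs/\mu-1}_{\nu,2}$. The right-hand side is assumed to be at most $1/16$ by the centering-step hypothesis, and the point of the proposition is to convert the ``distance from centrality'' into a pointwise-in-$\ell_2$ bound on how much the electrical correction flow $\tilde f$ perturbs the central flow $f^\sharp$. The natural route is an energy argument: $\tilde f$ is the electrical flow (with resistances $\tilde r_e = s_e/f_e^\sharp = s_e^2/\mu_e$) routing the same demand as $f-f^\sharp$, so by the energy-minimality of electrical flows, $\mathcal{E}_{\tilde r}(\tilde f)\leq \mathcal{E}_{\tilde r}(f-f^\sharp)$, where the right side is evaluated on the ``obvious'' feasible flow $f-f^\sharp$ that routes that demand.

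First I would write out the left-hand side as an energy. Since $\tilde r_e f_e^\sharp = s_e$ and $f_e^\sharp = \mu_e/s_e$, we get $\tilde r_e (f_e^\sharp)^2 = s_e f_e^\sharp = \mu_e = \nu_e\widehat\mu$ (using centrality of the target, $s_e f_e^\sharp = \mu_e$). Hence
\[
\norm{\tilde f/f^\sharp}_{\nu,2}^2 = \sum_e \nu_e\left(\frac{\tilde f_e}{f_e^\sharp}\right)^2 = \frac{1}{\widehat\mu}\sum_e \tilde r_e (f_e^\sharp)^2 \cdot \frac{\tilde f_e^2}{(f_e^\sharp)^2} = \frac{1}{\widehat\mu}\sum_e \tilde r_e \tilde f_e^2 = \frac{1}{\widehat\mu}\,\mathcal{E}_{\tilde r}(\tilde f).
\]
Then apply energy minimality: $\mathcal{E}_{\tilde r}(\tilde f)\leq \mathcal{E}_{\tilde r}(f-f^\sharp) = \sum_e \tilde r_e (f_e - f_e^\sharp)^2$. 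Now I substitute $\tilde r_e = s_e^2/\mu_e$ and $f_e^\sharp = \mu_e/s_e$:
\[
\tilde r_e (f_e - f_e^\sharp)^2 = \frac{s_e^2}{\mu_e}\left(f_e - \frac{\mu_e}{s_e}\right)^2 = \frac{1}{\mu_e}\left(f_e s_e - \mu_e\right)^2 = \mu_e\left(\frac{f_e s_e}{\mu_e}-1\right)^2 = \nu_e\widehat\mu\left(\frac{f_e s_e}{\mu_e}-1\right)^2.
\]
Summing and dividing by $\widehat\mu$ gives exactly $\norm{fs/\mu-1}_{\nu,2}^2$, which closes the chain of inequalities.

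The two things to be careful about are: (i) that $f-f^\sharp$ is genuinely a valid flow routing the demand of the electrical flow $\tilde f$ — this is immediate by construction since $\tilde f$ is \emph{defined} to route the demand of $f-f^\sharp$, and both are flows on the same graph; and (ii) that I consistently use centrality of the \emph{target} pair, i.e. $f_e^\sharp s_e = \mu_e$, which holds by the very definition $f_e^\sharp = \mu_e/s_e$ — so this is really just an algebraic identity, not an assumption about the current iterate $(f,s,\nu)$. I do not expect any real obstacle here; the only ``trick'' is recognizing that the weighted $\ell_2$ norm of $\tilde f/f^\sharp$ is literally an energy in the resistances $\tilde r$, after which energy minimality does all the work. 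The whole argument is three or four lines of algebra plus one invocation of the variational characterization of electrical flows (or equivalently Lemma~\ref{lem:rho_l_2_norm_bound}'s proof technique).
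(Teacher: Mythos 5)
Your proposal is correct and follows exactly the paper's argument: rewrite $\norm{\tilde f/f^\sharp}_{\nu,2}^2$ as $\tfrac{1}{\hat\mu}\sum_e \tilde r_e \tilde f_e^2$ via the identity $\tilde r_e (f_e^\sharp)^2 = \mu_e = \nu_e\hat\mu$, invoke energy minimality of $\tilde f$ against the feasible flow $f - f^\sharp$, and simplify $\tilde r_e(f_e - f_e^\sharp)^2$ to $\nu_e\hat\mu\left(\tfrac{f_e s_e}{\mu_e}-1\right)^2$. No differences of substance.
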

\begin{proof}
This is a straightforward energy minimization argument. We first express
$\norm{\frac{\tilde{f}}{f^{\sharp}}}_{\nu,2}^{2}$ in terms of the
energy of the electrical flow $\tilde{f}$. Then, using the fact that
$f-f^{\sharp}$ routes the same demand as the electrical flow $\tilde{f}$,
the energy of $f-f^{\sharp}$ can only be larger than that of $\tilde{f}$.
This yields:\\
\begin{align*}
\norm{\frac{\tilde{f}}{f^{\sharp}}}_{\nu,2}^{2} & =\sum_{e}\nu_{e}\left(\frac{\tilde{f}_{e}}{f_{e}^{\sharp}}\right)^{2}=\sum_{e}\frac{\nu_{e}}{s_{e}f_{e}^{\sharp}}\cdot\tilde{r}_{e}\tilde{f}_{e}^{2}=\sum_{e}\frac{\nu_{e}}{\mu_{e}}\cdot\tilde{r}_{e}\tilde{f}_{e}^{2}=\frac{1}{\hat{\mu}}\cdot\sum_{e}\tilde{r}_{e}\tilde{f}_{e}^{2}\\
 & \leq\frac{1}{\hat{\mu}}\sum_{e}\tilde{r}_{e}\left(f_{e}-f_{e}^{\sharp}\right)^{2}=\sum_{e}\frac{\nu_{e}}{\left(f_{e}^{\sharp}\right)^{2}}\cdot(f_{e}-f_{e}^{\sharp})^{2}\\
 & =\sum_{e}\nu_{e}\left(\frac{f_{e}}{f_{e}^{\sharp}}-1\right)^{2}=\sum_{e}\nu_{e}\left(\frac{f_{e}s_{e}}{\mu_{e}}-1\right)^{2}\\
 & =\norm{\frac{fs}{\mu}-1}_{\nu.2}^{2}
\end{align*}
 
\end{proof}
Now we can prove the following proposition:
\begin{prop}
\label{prop:resistances-corrector}After taking a centering step,
$f'$ and $s'$ stay feasible, i.e. $f'$,$s'>0$. Furthermore $\norm{\frac{f's'}{\mu}-1}_{\nu,2}\leq\norm{\frac{fs}{\mu}-1}_{\nu,2}^{2}$.
Also, $\frac{r_{e}}{r_{e'}}=\frac{s_{e}}{f_{e}}\cdot\frac{f_{e}'}{s_{e}'}\leq\left(1+2\abs{\frac{s_{e}f_{e}}{\mu}-1}\right)\left(1+4\abs{\frac{\tilde{f}_{e}}{f_{e}^{\sharp}}}\right)$.\end{prop}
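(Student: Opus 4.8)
The plan is to work with the two dimensionless error quantities $a_e := \tilde f_e/f_e^\sharp$ and $b_e := f_e s_e/\mu_e - 1$. First I would record the two facts that drive everything: Proposition~\ref{prop:bound-flow-ratio} gives $\norm{a}_{\nu,2} \le \norm{b}_{\nu,2} = \norm{fs/\mu - 1}_{\nu,2} \le \tfrac{1}{16}$ by the standing hypothesis, and since every $\nu_e \ge 1$ this upgrades to the pointwise bounds $|a_e|, |b_e| \le \tfrac{1}{16}$ for every arc $e$. Everything else is elementary manipulation built on these bounds.

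For feasibility, I would rewrite the centering update in the compact form $f_e' = f_e^\sharp(1 + a_e)$ and $s_e' = s_e(1 - a_e)$, where the second identity uses $f_e^\sharp = \mu_e/s_e$ to recognize $\tfrac{s_e}{f_e^\sharp}\tilde f_e = a_e s_e$. Since $|a_e| \le \tfrac{1}{16} < 1$, both factors $1 \pm a_e$ are positive, and as $f_e^\sharp, s_e > 0$ we conclude $f_e', s_e' > 0$.

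Next, for the resistance ratio I would compute directly $\tfrac{r_e}{r_e'} = \tfrac{s_e/f_e}{s_e'/f_e'} = \tfrac{f_e^\sharp}{f_e}\cdot\tfrac{1+a_e}{1-a_e}$, and then simplify $\tfrac{f_e^\sharp}{f_e} = \tfrac{\mu_e}{f_e s_e} = \tfrac{1}{1+b_e}$. Plugging in the two elementary inequalities $\tfrac{1}{1+b_e} \le 1 + 2|b_e|$ (valid whenever $|b_e| \le \tfrac12$) and $\tfrac{1+a_e}{1-a_e} = 1 + \tfrac{2a_e}{1-a_e} \le 1 + 4|a_e|$ (valid whenever $|a_e| \le \tfrac12$) then yields precisely $\tfrac{r_e}{r_e'} \le (1 + 2|b_e|)(1 + 4|a_e|)$, which is the stated bound with $b_e = \tfrac{s_e f_e}{\mu} - 1$ and $a_e = \tfrac{\tilde f_e}{f_e^\sharp}$.

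The estimate worth isolating is the quadratic-convergence bound, though it too is short. The clean identity is $f_e' s_e' = f_e^\sharp s_e(1-a_e^2) = \mu_e(1 - a_e^2)$, so that $\tfrac{f_e' s_e'}{\mu_e} - 1 = -a_e^2$ \emph{exactly}; hence $\norm{f's'/\mu - 1}_{\nu,2}^2 = \sum_e \nu_e a_e^4 = \norm{a}_{\nu,4}^4$. To finish I would bound $\norm{a}_{\nu,4}^4 = \sum_e (\nu_e a_e^2)\,a_e^2 \le (\max_e a_e^2)\sum_e \nu_e a_e^2 \le \norm{a}_{\nu,2}^2\cdot\norm{a}_{\nu,2}^2$, again using $a_e^2 \le \nu_e a_e^2 \le \norm{a}_{\nu,2}^2$ from $\nu_e \ge 1$; taking square roots and invoking Proposition~\ref{prop:bound-flow-ratio} once more gives $\norm{f's'/\mu - 1}_{\nu,2} \le \norm{a}_{\nu,2}^2 \le \norm{fs/\mu - 1}_{\nu,2}^2$. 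I do not anticipate a real obstacle here; the only thing to stay disciplined about is systematically using $\nu_e \ge 1$ to move between the measure-weighted $\ell_p$-norms and pointwise estimates.
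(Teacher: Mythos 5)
Your proposal is correct and follows essentially the same route as the paper: it uses Proposition \ref{prop:bound-flow-ratio} together with $\nu_e\ge 1$ to get the pointwise bounds, the same factorizations $f_e'=f_e^\sharp(1+a_e)$, $s_e'=s_e(1-a_e)$ for feasibility and the resistance ratio, and the same quadratic-decay computation (your exact identity $\tfrac{f_e's_e'}{\mu_e}-1=-a_e^2$ and the bound $\norm{a}_{\nu,4}^4\le\norm{a}_{\nu,2}^4$ are just a slightly cleaner packaging of the paper's expansion and its Cauchy--Schwarz/energy-minimization step).
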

\begin{proof}
To lower bound $f_{e}'$, we can simply use Proposition \ref{prop:bound-flow-ratio}
to upper bound $\abs{\frac{\tilde{f}_{e}}{f_{e}^{\sharp}}}\leq\norm{\frac{fs}{\mu}-1}_{\nu,2}\leq\frac{1}{16}$.
This yields $\abs{\tilde{f}_{e}}\leq f_{e}^{\sharp}\cdot\frac{1}{16}$.
Plugging in, we obtain:

\begin{align*}
f_{e}' & =f_{e}^{\sharp}+\tilde{f}_{e}\geq f_{e}^{\sharp}-f_{e}^{\sharp}\cdot\norm{\frac{fs}{\mu}-1}_{\nu,2}\geq f_{e}^{\sharp}-f_{e}^{\sharp}\cdot\frac{1}{16}>0
\end{align*}

In order to prove feasibility for the new slack iterate, we observe
that

\begin{align*}
s_{e}' & =s_{e}\left(1-\frac{\tilde{f}_{e}}{f_{e}^{\sharp}}\right)\geq s_{e}\left(1-\frac{1}{16}\right)>0
\end{align*}

Then we verify that this step improves centrality.

\begin{align*}
\sum_{e}\nu_{e}\left(\frac{f_{e}'s_{e}'}{\mu_{e}}-1\right)^{2} & =\sum_{e}\nu_{e}\left(\frac{\left(f_{e}^{\sharp}+\tilde{f}_{e}\right)\left(s_{e}-\frac{s_{e}}{f_{e}^{\sharp}}\cdot\tilde{f}_{e}\right)}{\mu_{e}}-1\right)^{2}=\sum_{e}\nu_{e}\left(\frac{f_{e}^{\sharp}s_{e}-\frac{s_{e}}{f_{e}^{\sharp}}\cdot\tilde{f}_{e}^{2}}{\mu_{e}}-1\right)^{2}\\
 & =\sum_{e}\nu_{e}\left(\frac{\mu_{e}-\mu_{e}\left(\frac{\tilde{f}_{e}}{f_{e}^{\sharp}}\right)^{2}}{\mu_{e}}-1\right)^{2}=\sum_{e}\nu_{e}\left(\frac{\tilde{f}_{e}}{f_{e}^{\sharp}}\right)^{4}
\end{align*}

Applying Cauchy-Schwarz, we see that this is upper bounded by 
\begin{eqnarray*}
\left(\sum_{e}\nu_{e}\cdot\left(\frac{\tilde{f}_{e}}{f_{e}^{\sharp}}\right)^{2}\right)^{2} & = & \left(\sum_{e}\nu_{e}\cdot\tilde{r}_{e}\frac{f_{e}^{\sharp}}{s_{e}}\cdot\left(\frac{\tilde{f}_{e}}{f_{e}^{\sharp}}\right)^{2}\right)^{2}=\left(\sum_{e}\nu_{e}\cdot\frac{\tilde{r}_{e}}{\mu_{e}}\cdot\tilde{f}_{e}^{2}\right)^{2}\\
 & = & \left(\sum_{e}\nu_{e}\cdot\frac{\tilde{r}_{e}}{\hat{\mu}\cdot\nu_{e}}\cdot\tilde{f}_{e}^{2}\right)^{2}=\left(\frac{1}{\hat{\mu}}\cdot\sum_{e}\tilde{r}_{e}\tilde{f}_{e}^{2}\right)^{2}
\end{eqnarray*}

Now note that the last term contains the energy of the electrical
flow $\tilde{f}$ with respect to resistances $\tilde{r}$. Since
$\tilde{f}$ and $f-f^{\sharp}$ route the same demand, but $\tilde{f}$
is an electrical flow, so it minimizes energy, we know that this is
upper bounded by 
\begin{eqnarray*}
\left(\frac{1}{\hat{\mu}}\cdot\sum_{e}\tilde{r}_{e}\left(f_{e}-f_{e}^{\sharp}\right)^{2}\right)^{2} & = & \left(\frac{1}{\hat{\mu}}\cdot\sum_{e}\frac{s_{e}^{2}}{\mu_{e}}\left(f_{e}-\frac{\mu_{e}}{s_{e}}\right)^{2}\right)^{2}\\
 & = & \left(\frac{1}{\hat{\mu}}\cdot\sum_{e}\mu_{e}\left(\frac{f_{e}s_{e}}{\mu_{e}}-1\right)^{2}\right)^{2}\\
 & = & \left(\sum_{e}\nu_{e}\left(\frac{f_{e}s_{e}}{\mu_{e}}-1\right)^{2}\right)^{2}
\end{eqnarray*}

This shows that during every such step, centrality improves at a quadratic
rate.

Next we bound how much resistances get changed by a single centering
step:

\begin{align*}
\frac{r_{e}}{r_{e}'} & =\frac{s_{e}}{f_{e}}\cdot\frac{f_{e}'}{s_{e}'}=\frac{s_{e}}{f_{e}}\cdot\frac{f_{e}^{\sharp}+\tilde{f}_{e}}{s_{e}\left(1-\frac{\tilde{f}_{e}}{f_{e}^{\sharp}}\right)}=\frac{f_{e}^{\sharp}}{f_{e}}\cdot\frac{1+\frac{\tilde{f}_{e}}{f_{e}^{\sharp}}}{1-\frac{\tilde{f}_{e}}{f_{e}^{\sharp}}}=\frac{\mu_{e}}{s_{e}f_{e}}\cdot\frac{1+\frac{\tilde{f}_{e}}{f_{e}^{\sharp}}}{1-\frac{\tilde{f}_{e}}{f_{e}^{\sharp}}}\leq\frac{\mu_{e}}{s_{e}f_{e}}\cdot\left(1+2\abs{\frac{\tilde{f}_{e}}{f_{e}^{\sharp}}}\right)\\
 & \leq\left(1+2\abs{\frac{s_{e}f_{e}}{\mu_{e}}-1}\right)\left(1+4\abs{\frac{\tilde{f}_{e}}{f_{e}^{\sharp}}}\right)
\end{align*}

We used the inequalities $\frac{1}{1+x}\leq1+2\abs x$, and $\frac{1+x}{1-x}\leq1+4\abs x$,
for $\abs x\leq1/2$.
\end{proof}
Now we can track how resistances change after restoring to near-perfect
centrality. This is highlighted by the following proposition.
\begin{prop}
After a predictor step, followed by any number of centering steps,
the new set of resistances $r^{(k)}$ satisfy $\frac{r_{e}}{r_{e}^{(k)}}\leq1+4\delta\rho_{e}+\kappa_{e}$,
for some $\kappa$ such that $\norm{\kappa}_{\nu,2}\leq1$.\end{prop}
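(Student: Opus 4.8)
The plan is to chain together the resistance-tracking bounds already established for the predictor step (Proposition~\ref{prop:resistances-predictor}) and for a single centering step (Proposition~\ref{prop:resistances-corrector}), and then control the accumulated error using the quadratic convergence of centrality. First I would set up notation: write $r^{(0)}$ for the resistances right after the predictor step, and $r^{(1)}, r^{(2)}, \dots$ for the resistances after each successive centering step, with $r^{(k)}$ the final value once centrality is restored to the desired tolerance. By Proposition~\ref{prop:resistances-predictor} we have $\frac{r_e}{r_e^{(0)}} \le 1 + 4\delta\rho_e$, and by Proposition~\ref{prop:resistances-corrector} applied at the $j$-th centering step we have $\frac{r_e^{(j-1)}}{r_e^{(j)}} \le \left(1 + 2\left|\frac{s_e^{(j-1)}f_e^{(j-1)}}{\mu_e} - 1\right|\right)\left(1 + 4\left|\frac{\tilde f_e^{(j)}}{f_e^{(j,\sharp)}}\right|\right)$, where the superscripts denote the quantities at that step.

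Next I would multiply these bounds together telescopically to get $\frac{r_e}{r_e^{(k)}} \le (1 + 4\delta\rho_e)\prod_{j=1}^{k}\left(1 + 2\left|\tfrac{s_e^{(j-1)}f_e^{(j-1)}}{\mu_e}-1\right|\right)\left(1 + 4\left|\tfrac{\tilde f_e^{(j)}}{f_e^{(j,\sharp)}}\right|\right)$. The key quantitative input is that centrality converges quadratically: writing $c_j := \left\|\frac{f^{(j)}s^{(j)}}{\mu}-1\right\|_{\nu,2}$, we have $c_0 \le \frac{1}{16}$ after the predictor step (this is exactly the condition~(\ref{eq:centrality}), rearranged, guaranteed by $\delta \le \frac{1}{8\|\rho\|_{\nu,4}}$), and $c_j \le c_{j-1}^2$ by Proposition~\ref{prop:resistances-corrector}. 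Hence $c_j \le 16^{-2^j}$, and moreover $\left\|\frac{\tilde f^{(j)}}{f^{(j,\sharp)}}\right\|_{\nu,2} \le c_{j-1}$ by Proposition~\ref{prop:bound-flow-ratio}. Since each per-edge factor is of the form $(1+\text{small})$, I would take logarithms (using $\log(1+x) \le x$) so that the product becomes, up to the $(1+4\delta\rho_e)$ factor, $\exp\left(\sum_j \left(2\left|\tfrac{s_e^{(j-1)}f_e^{(j-1)}}{\mu_e}-1\right| + 4\left|\tfrac{\tilde f_e^{(j)}}{f_e^{(j,\sharp)}}\right|\right)\right)$; call the exponent $\kappa_e'$. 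The $\ell_2$ norm $\|\kappa'\|_{\nu,2}$ is bounded by $\sum_j (2 c_{j-1} + 4 c_{j-1}) = 6\sum_j c_{j-1} \le 6\sum_{j\ge 0} 16^{-2^j}$, which is a small absolute constant (well under $\tfrac{1}{2}$). Then $e^{\kappa_e'} \le 1 + 2\kappa_e'$ on the relevant range, and using $(1+4\delta\rho_e)(1+2\kappa_e') \le 1 + 4\delta\rho_e + \kappa_e$ with $\kappa_e := 2\kappa_e' + 8\delta\rho_e\kappa_e'$; the norm $\|\kappa\|_{\nu,2}$ is then bounded by combining $\|\kappa'\|_{\nu,2}$ with the crude bound $\delta\rho_e \le \tfrac{1}{8}$ (from $\delta \le \tfrac{1}{8}\|\rho\|_{\nu,4}^{-1} \le \tfrac{1}{8}\|\rho\|_\infty^{-1}$), giving $\|\kappa\|_{\nu,2} \le 1$.

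The main obstacle I anticipate is bookkeeping the $\ell_2$ (i.e. $\nu$-weighted) norm of the \emph{sum over iterations} of the per-step congestion-type vectors: each summand $\left|\tfrac{\tilde f_e^{(j)}}{f_e^{(j,\sharp)}}\right|$ is a vector whose $\ell_2$ norm is controlled, but the cross terms when one squares the sum and sums over $e$ must be handled, e.g. by the triangle inequality in $\|\cdot\|_{\nu,2}$ before squaring, which is clean. A subtler point is that Proposition~\ref{prop:resistances-corrector}'s per-edge bound mixes two different small quantities ($\left|\tfrac{sf}{\mu}-1\right|$ at step $j-1$ and $\left|\tfrac{\tilde f}{f^\sharp}\right|$ at step $j$), and one must be careful that the first of these is evaluated at the \emph{start} of the centering step, not the predictor output, so the indices line up; but since $c_{j}$ is non-increasing (indeed quadratically decreasing) this only helps. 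Finally I would note that only finitely many centering steps are actually taken — in fact $O(\log\log \epsilon^{-1})$ suffice to drive $c_j$ below any target — so all the geometric sums are genuinely finite and the constants are harmless; the whole argument costs $\tilde O(m)$ time as claimed in Theorem~\ref{thm:l_4_improvement_step}.
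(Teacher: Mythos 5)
Your proposal follows essentially the same route as the paper's proof: telescope the per-step resistance bounds from Propositions \ref{prop:resistances-predictor} and \ref{prop:resistances-corrector}, pass through $\prod_i(1+x_i)\le\exp\left(\sum_i x_i\right)\le 1+2\sum_i x_i$, and control the resulting per-edge sum via the quadratic decay of centrality ($c_j\le c_{j-1}^2$ with $c_0\le\tfrac{1}{16}$), Proposition \ref{prop:bound-flow-ratio}, and the triangle inequality in $\norm{\cdot}_{\nu,2}$. Your explicit constant tracking lands marginally above $1$ (about $1.2$), but the paper's own final step has comparable looseness, and the slack is immaterial since it can be absorbed by a slightly smaller choice of $c_0$ or a constant in the proposition's statement.
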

\begin{proof}
Let $f^{(i)}$ and $s^{(i)}$ be the flow and slack iterates obtained
after applying a predictor step to $(f,s,\nu)$, followed by $i$
centering steps. Let $\tilde{f}^{(i)}$ and $f^{\sharp(i)}$ be defined
similarly. Combining Proposition \ref{prop:resistances-predictor}
with Proposition \ref{prop:resistances-corrector} we can upper bound
the resistance $r^{(k)}=\frac{s^{(k)}}{f^{(k)}}$:
\end{proof}
\begin{align*}
\frac{r_{e}}{r_{e}^{(k)}} & \leq(1+4\delta\rho_{e})\cdot\prod_{i=0}^{k-1}\left(1+2\abs{\frac{f_{e}^{(i)}s_{e}^{(i)}}{\mu}-1}\right)\left(1+4\abs{\frac{\tilde{f}_{e}^{(i)}}{f_{e}^{\sharp(i)}}}\right)\\
 & \leq(1+4\delta\rho_{e})\cdot\exp\left(2\left(\sum_{i=0}^{k-1}2\abs{\frac{f_{e}^{(i)}s_{e}^{(i)}}{\mu}-1}+4\abs{\frac{\tilde{f}_{e}^{(i)}}{f_{e}^{\sharp(i)}}}\right)\right)\\
 & \leq(1+4\delta\rho_{e})\cdot\left(1+\left(\sum_{i=0}^{k-1}8\abs{\frac{f_{e}^{(i)}s_{e}^{(i)}}{\mu}-1}+16\abs{\frac{\tilde{f}_{e}^{(i)}}{f_{e}^{\sharp(i)}}}\right)\right)\\
 & \leq1+4\delta\rho_{e}+\frac{1}{2}\cdot\sum_{i=0}^{k-1}\left(8\abs{\frac{f_{e}^{(i)}s_{e}^{(i)}}{\mu}-1}+16\abs{\frac{\tilde{f}_{e}^{(i)}}{f_{e}^{\sharp(i)}}}\right)\\
 & =1+4\delta\rho_{e}+\sum_{i=0}^{k-1}\left(4\abs{\frac{f_{e}^{(i)}s_{e}^{(i)}}{\mu}-1}+8\abs{\frac{\tilde{f}_{e}^{(i)}}{f_{e}^{\sharp(i)}}}\right)\\
 & =1+4\delta\rho_{e}+\kappa_{e}
\end{align*}

We used the fact that $\sum_{i=0}^{k-1}2\abs{\frac{f_{e}^{(i)}s_{e}^{(i)}}{\mu}-1}+4\abs{\frac{\tilde{f}_{e}^{(i)}}{f_{e}^{\sharp(i)}}}\leq\sum_{i=0}^{k-1}6\norm{\frac{f^{(i)}s^{(i)}}{\mu}-1}_{\nu,2}\leq6\cdot\frac{1}{16}\leq1$,
and $\exp(x)\leq1+2\abs x$ for $\abs x\leq1$.

Now to bound the norm $\norm{\kappa}_{\nu,2}$, we use triangle inequality
and bound the contribution of each centering step separately:
\begin{align*}
\norm{\kappa}_{\nu,2} & \leq\sum_{i=0}^{k-1}\left(4\norm{\frac{f^{(i)}s^{(i)}}{\mu}-1}_{\nu,2}+8\norm{\frac{\tilde{f}^{(i)}}{f^{\sharp(i)}}}_{\nu,2}\right)\\
 & \leq12\sum_{i=0}^{k-1}\norm{\frac{f^{(i)}s^{(i)}}{\mu}-1}_{\nu,2}\\
 & \leq12\sum_{i=0}^{k-1}\left(\frac{1}{16}\right)^{2^{i}}\\
 & \leq12\cdot\frac{1}{15}\\
 & \le1
\end{align*}

\begin{rem}
In Proposition \ref{prop:resistances-corrector} we saw that centrality
decreases at a quadratic rate each iteration. Therefore centrality
decreases to machine precision in $\tilde{O}(\log\log U)$ iterations.
For this reason, we perform centering steps until we restore centrality
to machine precision $O(\epsilon_{\textnormal{mach}})$ (the number
of such iterations will be absorbed by the $\tilde{O}$ in the running
time). Although throughout the paper we work with exact centrality,
rather than having it set to $O(\epsilon_{\textnormal{mach}})$, fixing
this can easily be done by perturbing the measures at the end of the
progress step in order to guarantee exact centrality. This changes
all measures by $O(\epsilon_{\textnormal{mach}})$; all the analysis
we do is still valid, but this extra change needs to be carried over
throughout the computations. We chose to ignore it in order to simplify
the presentation.\end{rem}

\section{\label{sec:precon-proofs}Preconditioning proofs.}

The purpose of this section is to prove the core result, Lemma \ref{lem:full-precon},
on the guarantees of preconditioning. However, we begin with a simpler,
more general statement:
\begin{lem}
\label{lem:partial-precon}Let $N$ be an electrical network, let
$\fn$ be a flow on $N$, and let $\elfn$ be the electrical flow
on $N$ satisfying the same demands as $\fn$, induced by voltages
$\potn$. Let $\phi_{0}$ be any fixed real number (representing an
absolute voltage). Then
\[
\sum_{\substack{e=(u,v),\\
\max(|\potn_{u}-\phi_{0}|,|\potn_{v}-\phi_{0}|)\geq V/4
}
}r_{e}(\fn_{e})^{2}\geq\frac{1}{2}\sum_{\substack{e=(u,v),\\
|\potn_{u}-\potn_{v}|\geq V
}
}r_{e}(\elfn_{e})^{2}.
\]
\end{lem}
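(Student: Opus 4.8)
The plan is to combine a co-area (``sweep'') argument that bounds the energy of $\elfn$ on stretched edges by a cut-flux integral, with an energy-minimization argument that transfers that integral back onto the fixed flow $\fn$.

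\textbf{The sweep.} Write $p_v:=\potn_v-\phi_0$ and, for $\theta\in\mathbb{R}$, set $S_\theta:=\{v:p_v\ge\theta\}$ and $F(\theta):=\sigma(S_\theta)$, where $\sigma=B\fn=B\elfn$ is the common demand of the two flows. Since $\potn$ induces $\elfn$, $S_\theta$ is a super-level set of the potential, so every edge crossing the cut $(S_\theta,V\setminus S_\theta)$ carries its flow out of $S_\theta$; hence $\sum_{e\text{ crossing }S_\theta}\abs{\elfn_e}=F(\theta)$, while the same cut also has flux $F(\theta)$ for $\fn$ (it only depends on $\sigma$). For a stretched edge $e=(u,v)$, i.e.\ one with $\abs{\potn_u-\potn_v}=\abs{p_u-p_v}\ge V$, the set of levels at which $e$ crosses is the interval between $p_u$ and $p_v$, of length $\abs{p_u-p_v}$, at most $V/2\le\tfrac12\abs{p_u-p_v}$ of which lies in $(-V/4,V/4)$. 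Therefore $r_e\elfn_e^2=\abs{\elfn_e}\cdot\abs{p_u-p_v}\le 2\abs{\elfn_e}\cdot\bigl(\text{length of crossing levels }\theta\text{ with }\abs{\theta}\ge V/4\bigr)$; summing over stretched edges and exchanging sum and integral (everything nonnegative) gives
\[
\sum_{e:\,\abs{\potn_u-\potn_v}\ge V}r_e\elfn_e^2\ \le\ 2\int_{\abs{\theta}\ge V/4}F(\theta)\,d\theta .
\]

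\textbf{The overflow potential.} Next I would introduce the clamp $\psi_v:=\max\{-V/4,\min\{V/4,p_v\}\}$ and the overflow $\tau_v:=p_v-\psi_v$; both are monotone $1$-Lipschitz functions of $p_v$, and $\tau_v=0$ exactly when $\abs{p_v}\le V/4$. A one-line computation (vertex $v$ contributes to $\int_{V/4}^\infty F$ over $\theta\in(V/4,p_v]$, and dually to the lower tail, using $\mathbf 1^\top\sigma=0$) yields $\int_{\abs{\theta}\ge V/4}F(\theta)\,d\theta=\sigma^\top\tau$. Let $\elf^{(\tau)}:=R^{-1}B^\top\tau$ be the electrical flow induced by the potentials $\tau$. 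Then $\sigma^\top\tau=(B\fn)^\top\tau=\langle\fn,\elf^{(\tau)}\rangle_R$, and since $\fn-\elfn$ is a circulation and $\elf^{(\tau)}$ is electrical (hence $R$-orthogonal to circulations), also $\sigma^\top\tau=\langle\elfn,\elf^{(\tau)}\rangle_R$.

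\textbf{Closing the loop.} I would then prove two facts about $\elf^{(\tau)}$. (i) $\langle\elfn,\elf^{(\tau)}\rangle_R\ge\mathcal{E}(\elf^{(\tau)})$: indeed $\elfn-\elf^{(\tau)}=R^{-1}B^\top\psi$, so $\langle\elfn-\elf^{(\tau)},\elf^{(\tau)}\rangle_R=\psi^\top L\tau=\sum_e r_e^{-1}(\psi_u-\psi_v)(\tau_u-\tau_v)$, and each summand is $\ge0$ because $\psi$ and $\tau$ are comonotone functions of $p$. (ii) $\langle\fn,\elf^{(\tau)}\rangle_R\le\sqrt{\mathcal{E}_{\mathrm{far}}(\fn)}\cdot\sqrt{\mathcal{E}(\elf^{(\tau)})}$, where $\mathcal{E}_{\mathrm{far}}(\fn)$ denotes the left-hand side of the lemma: since $(\elf^{(\tau)})_e=(\tau_u-\tau_v)/r_e$ vanishes unless some endpoint has $\abs{p_v}>V/4$, $\elf^{(\tau)}$ is supported on the edge set appearing in $\mathcal{E}_{\mathrm{far}}(\fn)$, and (ii) is Cauchy--Schwarz in $\langle\cdot,\cdot\rangle_R$ restricted to that support. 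Chaining (i), the identity $\langle\elfn,\elf^{(\tau)}\rangle_R=\langle\fn,\elf^{(\tau)}\rangle_R$, and (ii) gives $\mathcal{E}(\elf^{(\tau)})\le\sqrt{\mathcal{E}_{\mathrm{far}}(\fn)\,\mathcal{E}(\elf^{(\tau)})}$, hence $\mathcal{E}(\elf^{(\tau)})\le\mathcal{E}_{\mathrm{far}}(\fn)$ (the degenerate case $\mathcal{E}(\elf^{(\tau)})=0$ being trivial), and therefore $\int_{\abs{\theta}\ge V/4}F\,d\theta=\langle\fn,\elf^{(\tau)}\rangle_R\le\mathcal{E}_{\mathrm{far}}(\fn)$. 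Combining this with the displayed inequality from the sweep step yields exactly $\mathcal{E}_{\mathrm{far}}(\fn)\ge\tfrac12\sum_{e:\,\abs{\potn_u-\potn_v}\ge V}r_e\elfn_e^2$.

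\textbf{Where the difficulty lies.} The main obstacle is isolating the right auxiliary object. The overflow potential $\tau$ is the gadget that simultaneously (a) rewrites $\int_{\abs{\theta}\ge V/4}F$ as a \emph{linear} functional $\sigma^\top\tau$ of the demand, (b) makes its electrical flow $\elf^{(\tau)}$ live only on the ``far'' edges, so that the Cauchy--Schwarz step lands on $\mathcal{E}_{\mathrm{far}}(\fn)$ rather than on the full energy, and (c) is comonotone with the complementary clamp $\psi$, which is precisely what forces $\langle\elfn,\elf^{(\tau)}\rangle_R\ge\mathcal{E}(\elf^{(\tau)})$. A secondary point to watch is that the constant $2$ in the sweep step is tight (take $p_v=-V/2$, $p_u=V/2$), so the remaining steps must be lossless; they are, since together they give $\int_{\abs{\theta}\ge V/4}F\,d\theta\le\mathcal{E}_{\mathrm{far}}(\fn)$ with no extra constant.
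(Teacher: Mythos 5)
Your proof is correct, and its second half takes a genuinely different route from the paper's. Both arguments open with the same co-area observation: the energy of $\elfn$ on edges stretched by at least $V$ is at most twice the integral, over levels $\theta$ with $|\theta-\phi_0|\geq V/4$, of the flux across the level cut, because each such edge spends at least half of its potential interval outside the middle band. From there the paper compares fluxes pointwise -- the absolute flux of $\elfn$ across any level cut equals the net flux, which lower-bounds the absolute flux of $\fn$ across the same cut -- and then runs a weighted Cauchy--Schwarz over edges (its $X\leq Y\leq\sqrt{XZ}$ with the weights $t_e$) to transfer the tail integral onto $\fn$. You instead collapse the tail integral into the single linear functional $\sigma^{\top}\tau$ of the demand, where $\tau$ is the overflow of the potential beyond the band, and certify $\sigma^{\top}\tau\leq\mathcal{E}_{\mathrm{far}}(\fn)$ via the auxiliary electrical flow $\elf^{(\tau)}=R^{-1}B^{\top}\tau$: it is supported on the far edges (which localizes the Cauchy--Schwarz step), and comonotonicity of the clamp and the overflow gives $\langle\elfn,\elf^{(\tau)}\rangle_{R}\geq\mathcal{E}(\elf^{(\tau)})$. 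Your route avoids the pointwise flux comparison entirely and isolates a clean reusable certificate object, at the cost of the clamp/overflow bookkeeping and one orientation convention you must pin down (with $\elfn=R^{-1}B^{\top}\potn$ and $L\potn=\sigma$ the level-cut flux is nonnegative and all your identities hold as written); the paper's version is more elementary and needs no auxiliary flow. Both yield the same constant, and as you note the factor $2$ is forced by the sweep step in either approach.
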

\begin{proof}
Given a set of vertex potentials $\potn$, a flow $f$ and an absolute
voltage $\phi$, we define
\[
F(\potn,f,\phi)=\sum_{\substack{e=(u,v)\\
\min(\potn_{u},\potn_{v})\leq\phi\leq\max(\potn_{u},\potn_{v})
}
}|f_{e}|.
\]

That is, $F(\potn,f,\phi)$ is the sum of the absolute values of the
flows of edges crossing the cut induced by the voltage $\phi$. Notably,
$\int_{-\infty}^{\infty}F(\potn,\elfn,\phi)\;d\phi$ is equal to the
energy of the electrical flow $\elfn$, since the contribution of
each edge $e=(u,v)$ is $|\elfn_{e}||\potn_{v}-\potn_{u}|=r_{e}|\elfn_{e}|^{2}$.
Furthermore, for all $\phi$ we have $F(\potn,\elfn,\phi)\leq F(\potn,\fn,\phi)$.
This is because $\elfn$ satisfies the same demands as $\fn$, and
thus they have the same \emph{net} flow across each cut, including
the cuts induced by the voltage $\phi$. In the electrical flow, all
of the flow is in the direction of increasing $\potn$ by definition;
thus the sum of the absolute values of the edge flows, $F(\potn,\elfn,\phi)$,
is equal to that net flow value. In $\fn$, on the other hand, that
net flow value is a lower bound for $F(\potn,\fn,\phi)$.

We now define
\begin{align*}
X & =\left(\int_{-\infty}^{\phi_{0}-\frac{V}{4}}F(\potn,\elfn,\phi)\;d\phi\right)+\left(\int_{\phi_{0+\frac{V}{4}}}^{\infty}F(\potn,\elfn,\phi)\;d\phi\right)\\
Y & =\left(\int_{-\infty}^{\phi_{0}-\frac{V}{4}}F(\potn,\fn,\phi)\;d\phi\right)+\left(\int_{\phi_{0+\frac{V}{4}}}^{\infty}F(\potn,\fn,\phi)\;d\phi\right).
\end{align*}

Because the inequality holds for the integrands everywhere, we have
that $X\leq Y$. Alternatively, we may define $t_{e}$ for an edge
$(u,v)$ to be the fraction of the interval $[\min(\potn_{u},\potn_{v}),\max(\potn_{u},\potn_{v})]$
that is not contained in the interval $[\phi_{0}-\frac{V}{4},\phi_{0}+\frac{V}{4}]$.
We can then expand out $X$ and $Y$ as sums over edges, and define
a new $Z$:
\begin{align*}
X & =\sum_{e}t_{e}\abs{\elfn_{e}}\abs{\potn_{v}-\potn_{u}}\\
 & =\sum_{e}t_{e}r_{e}\abs{\elfn_{e}}^{2}\\
Y & =\sum_{e}t_{e}\abs{\fn_{e}}\abs{\potn_{v}-\potn_{u}}\\
 & =\sum_{e}t_{e}r_{e}\abs{\elfn_{e}}\abs{\fn_{e}}\\
Z & =\sum_{e}t_{e}r_{e}\abs{\fn_{e}}^{2}.
\end{align*}

Then by Cauchy-Schwarz on the sum expressions, we have $Y\leq\sqrt{XZ}$;
on the other hand, since $X\leq Y$ this means that we have $X\leq\sqrt{XZ}$
and so $Z\geq X$. Finally,
\[
\sum_{\substack{e=(u,v),\\
\max(|\potn_{u}-\phi_{0}|,|\potn_{v}-\phi_{0}|)\geq V/4
}
}r_{e}(\fn_{e})^{2}\geq Z
\]

since only those edges with an endpoint outside of $[\phi_{0}-\frac{V}{4},\phi_{0}+\frac{V}{4}]$
have $t_{e}>0$, while we have
\[
X\geq\frac{1}{2}\sum_{\substack{e=(u,v),\\
|\potn_{u}-\potn_{v}|\geq V
}
}r_{e}(\elfn_{e})^{2}
\]

since each of the edges in the latter sum, with their interval of
length $\abs{\potn_{u}-\potn_{v}}\geq V$, has $t_{e}\geq\frac{1}{2}$.
\end{proof}
We can now prove Lemma \ref{lem:full-precon}:
\begin{proof}[Proof of Lemma \ref{lem:full-precon}]
Apply Lemma \ref{lem:partial-precon} to $N$ and $\fn$, with $\phi_{0}=\potn_{v_{0}}$.
Now, note that since $r_{e}(\fn_{e})^{2}\leq\nu_{e}$, the total energy
of $\fn$, $\sum_{e}r_{e}(\fn_{e})^{2}$, is at most $\sum_{e}\nu_{e}$.
As an electrical flow satisfying the same demands, the total energy
of $\elfn$ is also at most $\sum_{e}\nu_{e}$.

On the other hand, every node $v$ is connected to $v_{0}$ by an
edge with resistance $\frac{R}{a(v)}$, which contributes $\frac{(\potn_{v}-\potn_{v_{0}})^{2}a(v)}{R}$.
Thus the total energy is at least
\begin{align*}
\sum_{v}\frac{(\potn_{v}-\potn_{v_{0}})^{2}a(v)}{R} & =\sum_{v}\frac{(\potn_{v}-\phi_{0})^{2}a(v)}{R}\\
 & \geq\frac{V^{2}}{16R}\sum_{v,|\potn_{v}-\phi_{0}|\geq\frac{V}{4}}a(v)\\
 & \geq\frac{V^{2}}{16R}\sum_{\substack{e=(u,v),\\
\max(|\potn_{u}-\phi_{0}|,|\potn_{v}-\phi_{0}|)\geq\frac{V}{4}
}
}\nu'_{e}\\
 & \geq\frac{V^{2}}{16R}\left(\left(\sum_{\substack{e=(u,v),\\
\max(|\potn_{u}-\phi_{0}|,|\potn_{v}-\phi_{0}|)\geq\frac{V}{4}
}
}\nu{}_{e}\right)-\sum_{e}(\nu_{e}-\nu'_{e})\right)\\
 & \geq\frac{V^{2}}{16R}\left(\left(\sum_{\substack{e=(u,v),\\
\max(|\potn_{u}-\phi_{0}|,|\potn_{v}-\phi_{0}|)\geq\frac{V}{4}
}
}r_{e}(\fn_{e})^{2}\right)-M\right)\\
 & \geq\frac{V^{2}}{16R}\left(\frac{1}{2}\left(\sum_{\substack{e=(u,v),\\
|\potn_{u}-\potn_{v}|\geq V
}
}r_{e}(\elfn_{e})^{2}\right)-M\right)\\
 & =\frac{V^{2}}{16R}\left(\frac{1}{2}\left(\sum_{r_{e}\abs{\elfn_{e}}\geq V}r_{e}(\elfn_{e})^{2}\right)-M\right).
\end{align*}

We therefore have
\begin{align*}
\frac{V^{2}}{16R}\left(\frac{1}{2}\left(\sum_{r_{e}\abs{\elfn_{e}}\geq V}r_{e}(\elfn_{e})^{2}\right)-M\right) & \leq\sum_{e}\nu_{e}\\
\sum_{r_{e}\abs{\elfn_{e}}\geq V}r_{e}(\elfn_{e})^{2} & \leq\frac{32R\sum_{e}\nu_{e}}{V^{2}}+2M.
\end{align*}

This completes the proof.\end{proof}

\bibliographystyle{plain}
\bibliography{ref}

\end{document}